\DeclareMathOperator{\Tr}{\mathrm{Tr}}
\newcommand{\argmax}{\mathop{\rm argmax}\limits}
\newcommand{\argmin}{\mathop{\rm argmin}\limits}
\def\Ker{\mathop{\rm Ker}}
\def\im{\mathop{\rm Im}}
\def\uni{{\rm uni}}
\def\Z{{\mathbb Z}}
\def\X{{\mathcal{X}}}
\def\Y{{\mathcal{Y}}}
\def\Z{{\mathcal{Z}}}
\def\P{{\mathcal{P}}}
\def\Pro{\mathop{\Gamma}\nolimits}
\newcommand\saa{\mathsf{a}}
\newcommand\sbb{\mathsf{b}}
\newcommand\scc{\mathsf{c}}
\newcommand\sdd{\mathsf{d}}
\newcommand\see{\mathsf{e}}
\newcommand\sff{\mathsf{f}}
\newcommand\sgg{\mathsf{g}}
\newcommand\shh{\mathsf{h}}
 \newenvironment{proofof}[1]{\vspace*{5mm} \par \noindent
{\it Proof of #1:\hspace{2mm}}}{\qed
}
\def\Label#1{\label{#1}\ [\ \text{#1}\ ]\ }
\def\Label{\label}
\begin{document}
\title{Reverse em-problem based on Bregman divergence
and its application to classical and quantum information theory
}
\titlerunning{Reverse em-problem based on Bregman divergence}
\author{Masahito~Hayashi}

\institute{M. Hayashi \at
School of Data Science, The Chinese University of Hong Kong, Shenzhen, Longgang District, Shenzhen, 518172, China,
International Quantum Academy (SIQA), Futian District, Shenzhen 518048, China,
and
Graduate School of Mathematics, Nagoya University, Chikusa-ku, Nagoya 464-8602, Japan.
\\
              \email{e-mail: hmasahito@cuhk.edu.cn, hayashi@iqasz.cn}           
}
\date{Received: date / Accepted: date}

\maketitle

\begin{abstract}
The recent paper (IEEE Trans. IT 69, 1680) introduced an analytical method for calculating the channel capacity without the need for iteration. 
This method has certain limitations that restrict its applicability. Furthermore, the paper does not provide an explanation as to why the channel capacity can be solved analytically in this particular case.
In order to broaden the scope of this method and address its limitations, we turn our attention to the reverse em-problem, proposed by 
Toyota (Information Geometry, 3, 1355 (2020)). 
This reverse em-problem involves iteratively applying the inverse map of the em iteration to calculate the channel capacity, which represents the maximum mutual information. However, several open problems remained unresolved in Toyota's work.
To overcome these challenges, we formulate the reverse em-problem based on Bregman divergence and provide solutions to these open problems. 
Building upon these results, we transform the reverse em-problem into em-problems and derive a non-iterative formula for the reverse em-problem. 
This formula can be viewed as a generalization of the aforementioned analytical calculation method. Importantly, this derivation sheds light on the information geometrical structure underlying this special case.
By effectively addressing the limitations of the previous analytical method and providing a deeper understanding of the underlying information geometrical structure, our work significantly expands the applicability of the proposed method for calculating the channel capacity without iteration.
\end{abstract}

\keywords{maximization
\and 
Bregman divergence
\and 
information geometry
\and 
channel capacity}

\section{Introduction}\Label{S1}
The em-algorithm is widely recognized as a valuable tool in various domains, including machine learning and neural networks \cite{Amari,Fujimoto,Allassonniere}. 
This algorithm is typically formulated within the framework of information geometry, which encompasses important concepts such as exponential families and mixture families \cite{Amari-Nagaoka,Amari-Bregman}. 
This algorithm aims to solve the em-problem, i.e., the minimization of 
the divergence between an exponential family and a mixture family.
In other words, the goal is to identify an element in the mixture family that minimizes the divergence from the given exponential family. The algorithm achieves this by iteratively performing projections onto the exponential family and the mixture family.

Recently, Toyota \cite{Shoji} addressed the opposite problem related to the calculation of classical channel capacity, as depicted in Fig \ref{rev-fig}. 
Specifically, he aimed to find an element in the mixture family that maximizes the minimum divergence from the given exponential family. 
He observed that if the inverse operation of the combined projection exists, repeating it leads to the maximization mentioned above in the case of classical channel capacity \cite{Shannon}. 
Consequently, he proposed an alternative method for calculating the channel capacity, which has been extensively studied in existing literature 
\cite{Arimoto,Blahut,Matz,Yu,SSML,NWS}. 
This problem is referred to as the reverse em-problem. However, Toyota did not establish the existence or uniqueness of the inverse map, nor did he provide a method for computing the inverse of the map. Furthermore, his analysis was limited to the specific scenario of classical channel capacity. These issues remain open challenges in the field.

\begin{figure}[htbp]
\begin{center}
  \includegraphics[width=0.7\linewidth]{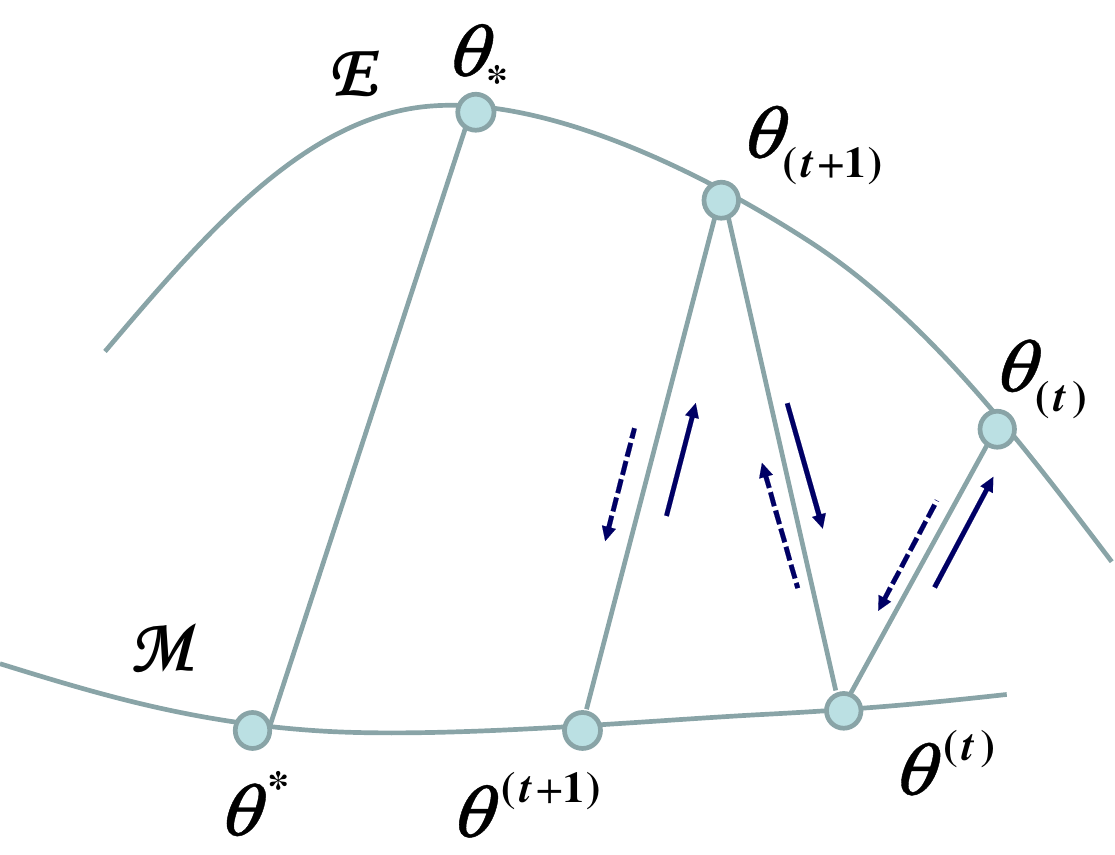}
  \end{center}
\caption{Brief idea of our maximization problem:
${\cal E}$ is an exponential family.
${\cal M}$ is a mixture family.
The solid line expresses the direction of the em-algorithm.
The dashed line expresses the direction of the reverse em-algorithm.
The pair of $\theta_* \in {\cal E}$ and $\theta^*\in {\cal M}$
realized the maximum.}
\Label{rev-fig}
\end{figure}   

Furthermore, a recent paper \cite{exact} introduced an analytical method for calculating the channel capacity without the need for iteration. 
However, this method has certain restrictions that limit its applicability. 
Additionally, the paper does not provide an explanation for why the channel capacity can be solved analytically in this specific case. 
Consequently, 
to expand the applicable range of the method proposed in the paper \cite{exact}, 
this paper aims to generalize this method and 
explore the information geometrical background for the algorithm by the paper \cite{exact}.

Surprisingly, these two problems can be resolved
by addressing the open problems in the reverse em-problem.
By leveraging the framework of Bregman divergence, we can effectively tackle these open problems. 
In this study, we formulate the maximization problem within the framework of Bregman divergence, 
following a similar approach as in the papers 
\cite{em-only,Fujimoto}, which is given in Section \ref{S4-1}.
Moreover, as Theorem \ref{theo:conv:BBem1},
we establish the uniqueness and existence of the inverse map under certain conditions in this general setting. 
Notably, the case of classical channel capacity satisfies these conditions, allowing us to successfully address the problem initially proposed by Toyota \cite{Shoji}. 
In this approach, we introduce a specific parameterization condition for the reverse em-problem and present the iteration process for each step. 
Additionally, we evaluate the convergence speed within this general framework.

In the subsequent step, using the aforementioned results, we convert the reverse em-problem into an em-problem. In Section \ref{S47}, we derive equivalent conditions that determine when an element of the mixture family becomes a fixed point for the iteration function. 
These equivalent conditions transform the reverse em-problem into a problem of finding the intersection between an exponential family and a mixture family, which can be effectively solved through an em-problem. 
Notably, in Section \ref{S47B}, we demonstrate that under certain conditions, the reverse em-problem can be further simplified into a non-iterative form, 
minimizing a particular convex function. 
This reduction results in a problem with fewer free parameters compared to the original reverse em-problem. Importantly, when the reverse em-problem satisfies specific conditions, 
it can be solved analytically without resorting to a minimization problem.
In summary, our approach not only generalizes the analytical calculation method proposed in the paper \cite{exact} but also provides insights into the information geometrical structure underlying the algorithm.
By addressing the open problems in the reverse em-problem, we make significant advancements in the field, enabling more efficient and comprehensive solutions for calculating the channel capacity without iteration.

In the case of the classical channel capacity \cite{Shannon,Arimoto,Blahut,Matz,Yu,SSML,NWS},
the above conditions are satisfied. 
Consequently, the calculation of the channel capacity can be transformed into a minimization problem of a specific convex function. 
This transformation yields a new calculation algorithm for the classical channel capacity. 
Notably, this algorithm can be viewed as a generalization of the analytical algorithm proposed in the paper \cite{exact} because it coincides with the analytical algorithm when the classical channel satisfies the same condition as described in \cite{exact}. 
Moreover, this reduction to the result presented in \cite{exact} provides insight into the information geometrical background explaining why the channel capacity can be solved analytically in this special case.
Furthermore, even when the condition from \cite{exact} does not hold, 
our calculation algorithm still exhibits advantages. Specifically, under certain conditions, the obtained algorithm has a reduced number of free parameters compared to the original problem of the classical channel capacity. 
It is worth noting that a similar method was previously derived by Muroga \cite{Muroga}. However, our approach offers slight improvements over Muroga's method, as elucidated in Remark \ref{Rem3}.
Additionally, we extend the application of our results to two other scenarios: the capacity of classical wire-tap channels \cite{Wyner,CK79} and the capacity of classical-quantum channels \cite{Holevo,SW}. These maximization problems have been explored in numerous papers 
\cite{Yasui,Nagaoka,Dupuis,Sutter,Li,RISB}. 

The remaining part of this paper is organized as follows.
Section \ref{Section.IG-structure} formulates general basic properties for Bregman divergence.
Section \ref{S3} explains how the set of probability distributions and the set of quantum states 
satisfy the condition for Bregman divergence.
We omit the proofs of statements in Sections \ref{Section.IG-structure} and \ref{S3}, and their proofs are given in the paper \cite{em-only}.
Section \ref{Sec:BBem} formulates the reverse em-problem, and studies its various properties. 
Section \ref{S10} applies these results to the capacity of a classical channel.
Section \ref{S11} applies these results to the secrecy capacity of a degraded wiretap channel.
Section \ref{S12} applies these results to the capacity of a classical-quantum channel.

\section{Bregman divergence system}\Label{Section.IG-structure}
In this section, we formulate the Bregman divergence system
as a preparation for our maximization problem.
We omit the proofs of statements in this section and their proofs are given in the paper \cite{em-only}.
The contents of this section will be used in the main body and the appendices.

\subsection{Legendre transform}\Label{S2-0}
In this paper, 
a sequence $a= (a^i)_{i=1}^k$ with an upper index expresses
a vertical vector 
and 
a sequence $b= (b_i)_{i=1}^k$ with a lower index expresses
a horizontal vector as
\begin{align}
a= \left(
\begin{array}{c}
a^1 \\
a^2 \\
\vdots \\
a^k
\end{array}
\right), \quad
b= (b_1, b_2,\ldots, b_k).
\end{align}

We choose an open convex $\Theta$ set in $\mathbb{R}^d$ and 
a $C^\infty$-class strictly convex function $F:\Theta \rightarrow \mathbb{R}$.
Using the convex function $F$, we introduce another parametrization 
$\eta =(\eta_1, \ldots, \eta_d)\in \mathbb{R}^d$ as
\begin{align}
\eta_j := \partial_j F(\theta),\Label{du1}
\end{align}
where $\partial_j$ expresses the partial derivative for the $j$-th variable $\partial_j$.
We also use the notation for the vector $\nabla^{(e)} [F](\theta):=(\partial_j F(\theta))_{j=1}^d$.
Hence, the relation \eqref{du1} is rewritten as
\begin{align}
\eta = \nabla^{(e)} [F](\theta).\Label{M1}
\end{align}
Therefore, $\nabla^{(e)}$ can be considered as a horizontal vector.

Since $F$ is a $C^\infty$-class strictly convex function,
this conversion is one-to-one.
The parametrization 
$\eta_j$ is called the mixture parameter
while the original parameter  ${\theta}=({\theta}^1, \ldots, {\theta}^d)$
is called the natural parameter.
In the following, $\Xi$ expresses the open set of vectors $\eta(\theta) =(\eta_1, \ldots, \eta_d)$ given in \eqref{du1}.
For $\eta \in \Xi$,
we define the {\it Legendre transform} $F^*={\cal L}[F]$ of $F$ 
\begin{align}
F^*(\eta)=\sup _{\theta \in \Theta} \langle \eta,\theta\rangle -F(\theta).
\Label{MN1}
\end{align}

We denote the partial derivative for the $j$-th variable under the mixture parameter
by $\partial^j$.
The partial derivative of $F^*$ is given as
\cite[Section 3]{Fujimoto}\cite[Section 2.2]{hayashi}
\begin{align}
\partial^j F^*
(\eta(\theta) )=\theta^j. \Label{du2}
\end{align}
In the same way as the above, we use the notation
$\nabla^{(m)} [F^*](\eta):=(\partial^j F^*(\eta))_{j=1}^d$.
The relation \eqref{du2} is rewritten as
\begin{align}
\theta = \nabla^{(m)} [F^*] (\eta(\theta) ).\Label{M2}
\end{align}

In the following discussion, we address subfamilies related to
$m$ vectors $v_1, \ldots, v_m \in \mathbb{R}^d$.
For preparation for such cases, we prepare 
the following two equations, which  will be used for calculations based on mixture parameters.
The $d \times m $ matrix $V$ is defined as $(v_1 \ldots v_m )$.
The multiplication function of V from the left (right) hand side  
is denoted by $L[V]$ ($R[V]$).
The relation
 \begin{align}
\partial_j (F\circ L[V])(\theta) =
\frac{\partial F}{\partial \theta^j}(V \theta) =
\sum_{i} v_{j}^i \partial_i F(V \theta) =
(R[V]\circ (\nabla^{(e)} [F] )\circ L[V] (\theta))_j,
\end{align}
implies that
\begin{align}
\nabla^{(e)} [F\circ L[V]] &=   R[V]\circ (\nabla^{(e)} [F] )\circ L[V].\Label{E12}
\end{align}

Similarly, the relation 
\begin{align}
\nabla^{(m)} [ F^* \circ R[V]]
=L[V] \circ \nabla^{(m)} [ F^*] \circ R[V]
\Label{MDF}
\end{align}
holds. Also, we have
\begin{align}
& (F^* \circ R[V])^* (\theta')
=
\sup_{\eta} \Big(\langle \eta ,\theta'\rangle
- \sup _{\theta \in \Theta} \Big(\langle \eta V,\theta\rangle -F( \theta) \Big)
\Big)\nonumber\\
=&
\sup_{\eta} \inf_{\theta \in \Theta}
\big(\langle \eta ,\theta'- V\theta\rangle
 +F( \theta) \big)
=\inf_{\theta: \theta'= V\theta} F( \theta) .
\Label{VCA}
\end{align}

\subsection{Exponential subfamily}\Label{S2-1}
Next, we introduce an exponential subfamily, and discuss its properties.
We say that a subset $\mathcal{E} \subset \Theta$ is an {\it exponential subfamily} 
generated by $l$ linearly independent vectors $v_1,\ldots, v_l \in \mathbb{R}^d$
at $\theta_0 \in \Theta$ 
when the subset $\mathcal{E}$ is given as
\begin{align}
    \mathcal{E} &= \left\{ \phi_{{\cal E}}^{(e)}(\bar{\theta})  \in \Theta \left| 
    \bar{\theta} \in \Theta_{{\cal E}}    \right. \right\}.
\end{align}
In the above definition,
$\phi_{{\cal E}}^{(e)} (\bar{\theta})$ is defined 
for $ \bar{\theta} =( \bar{\theta} ^1, \ldots,  \bar{\theta}^l ) \in 
\mathbb{R}^l$ as
\begin{align}
\phi_{{\cal E}}^{(e)}(\bar{\theta}) &:= \theta_0 + \sum_{j=1}^l \bar{\theta}^j v_j  
\end{align}
and the set $\Theta_{{\cal E}}$ is defined as 
\begin{align}
\Theta_{{\cal E}} := \{ \bar{\theta} \in \mathbb{R}^l |  
\phi_{{\cal E}}^{(e)}(\bar{\theta}) \in \Theta\}.
\end{align}
The set $\Theta_{{\cal E}}$ is an open set because $\Theta$ is an open set.
In the following, we restrict the domain of $\phi_{{\cal E}}^{(e)}$
to $\Theta_{{\cal E}}$.
We define the inverse map 
$\psi_{{\cal E}}^{(e)}:=(\phi_{{\cal E}}^{(e)})^{-1}: {\cal E}\to \Theta_{{\cal E}}$.

For an exponential subfamily ${\cal E}$, we define 
the function $F_{{\cal E}}$ as
\begin{align}
F_{{\cal E}}(\bar{\theta}):= F (\phi_{{\cal E}}^{(e)}(\bar{\theta})).
\Label{CXO}
\end{align}
In fact, even in an exponential subfamily ${\cal E}$,
we can employ the mixture parameter
${\psi}_{{\cal E},j}^{(m)} (\phi_{{\cal E}}^{(e)}(\bar{\theta})):= 
\partial_j F_{{\cal E}} (\bar{\theta})$
because the map $\bar{\theta}\mapsto F_{{\cal E}} (\bar{\theta})$ is also
a $C^\infty$-class strictly convex function.
For the latter discussion, 
we prepare the set $\Xi_{{\cal E}}:= \{ (\partial_j F_{{\cal E}} (\bar{\theta}))_{j=1}^l \}_{\bar{\theta} \in \Theta_{{\cal E}}}$, and the inverse map 
$\phi_{{\cal E}}^{(m)}:=(\psi_{{\cal E}}^{(m)})^{-1}: 
\Xi_{{\cal E}}\to {\cal E}$.

\subsection{Mixture subfamily}\Label{S2-2}
Next, we introduce a mixture subfamily, and discuss its properties.
For $d$ linearly independent vectors $u_1, \ldots, u_d \in \mathbb{R}^d$, and 
a vector $a=(a_1, \ldots, a_{d-k} )^T \in \mathbb{R}^{d-k}$, we say that
a subset $\mathcal{M} \subset \Theta$ is a {\it mixture subfamily} 
generated by the constraint 
\begin{align}
\sum_{i=1}^d u^i_{k+j} \partial_i F(\theta) =a_j\Label{const1}
\end{align}
 for $j=1, \ldots,d-k$
when the subset $\mathcal{M}$ is written as
\begin{align}
    \mathcal{M} = \left\{ \theta  \in \Theta \left| \hbox{ Condition \eqref{const1} holds.} \right. \right\} .
\end{align}
The $d \times d $ matrix $U$ is defined as $(u_1 \ldots u_d )$.
To make a parametrization in the above mixture subfamily ${\cal M}$,
we set the new natural parameter $\bar{\theta}=(\bar{\theta}^1, \ldots, \bar{\theta}^d)$ as
$\theta=U \bar{\theta} $,
and introduce the new mixture parameter
\begin{align}
\bar{\eta}_i
=\partial_j (F \circ U) (\bar{\theta}).
\Label{Dif}
\end{align}
Since the relation $\bar{\eta}_{k+i}= a_i$ holds for $i=1, \ldots, d-k$ in ${\cal M}$,
the initial $k$ elements $\bar{\eta}_1, \ldots, \bar{\eta}_{k}$
give a parametrization for ${\cal M}$.
To make the parametrization, we define the map $\psi_{\cal M}^{(m)}$ 
as $\psi_{\cal M}^{(m)} ( U \bar{\theta}):= (\partial_j (F \circ U) (\bar{\theta}))_{j=1}^k$.
The set $\Xi_{{\cal M}}:= 
\{ \psi_{\cal M}^{(m)} (\theta) |  {\theta} \in {\cal M}\}$
works as the range of the new mixture parameters,
and we also employ the inverse map 
$\phi_{{\cal M}}^{(m)}:=(\psi_{{\cal M}}^{(m)})^{-1}: 
\Xi_{{\cal M}}\to {\cal M}$.
Since $\Theta$ is an open set, 
the set $\Xi_{{\cal M}}$ is an open subset of $\mathbb{R}^k$.
When an element $\bar{\eta} \in \Xi_{{\cal M}}$ satisfies 
$\bar{\eta}_j =\partial_j (F \circ U) (\bar{\theta})$ for $j=1, \ldots, k$,
we have
\begin{align}
\partial^i (F\circ U)^*( \bar{\eta},a) = \bar{\theta}^i\Label{CO1}
\end{align}
 for $i=1, \ldots, d$.
The strict convexity of the map $\bar{\eta} \mapsto(F\circ U)^*( \bar{\eta},a) $ guarantees that
the map $\bar{\eta} \mapsto (\partial^i (F\circ U)^*( \bar{\eta},a))_{i=1}^k$
is one-to-one. 
Hence, the initial $k$ elements $\bar{\theta}^1, \ldots, \bar{\theta}^{k}$
form a parametrization for ${\cal M}$.
In other words, the relation 
\begin{align}
( (U^{-1} \theta)^i )_{i=1}^k
= (\partial^i (F\circ U)^*( \psi_{\cal M}^{(m)} ( {\theta}) ,a))_{i=1}^k \Label{NAY}
\end{align}
holds.
We define the set $\Theta_{\cal M}:= \{ ( (U^{-1} \theta)^i )_{i=1}^k | {\theta} \in {\cal M} \}$, which is rewritten as
\begin{align}
\Theta_{\cal M}= 
\left\{(\theta^1, \ldots, \theta^k) \in \mathbb{R}^k \left|
\begin{array}{l}
\exists (\theta^{k+1}, \ldots, \theta^d) \in \mathbb{R}^{d-k} \hbox{ such that}  \\
\sum_{i=1}^d u^i_{k+j} \partial_i F( U(\theta^1, \ldots, \theta^d)) =a_j \\
\hbox{ for } j=1, \ldots,d-k.
\end{array}
\right.\right\}.
\Label{const1-U}
\end{align}

When the mixture subfamily ${\cal M}$ forms an exponential subfamily generated by $u_1, \ldots, u_k$,
it is possible to retake $\theta_0$ such that $(U^{-1} \theta_0)^i=0$ for $i=1, \ldots, k$.
Therefore, the subsets $\Theta_{{\cal M}}$ and $\Xi_{{\cal M}}$ are the same subsets defined in Subsection \ref{S2-1}.

\subsection{Bregman Divergence and $m$- and $e$- projections}
Next, we introduce the concept of  Bregman Divergence, which is a generalization of the conventional divergence.
\begin{definition}[Bregman Divergence]
We choose an open set $\Theta$ in $\mathbb{R}^d$ and 
a $C^\infty$-class strictly convex function $F:\Theta \rightarrow \mathbb{R}$.
We define the Bregman divergence $D^F$ as
\begin{equation}
    D^{F}(\theta_1 \| \theta_2):= 
    \langle \nabla^{(e)}[F](\theta_1), \theta_1 - \theta_2\rangle 
    - F(\theta_1)+F(\theta_2)      
    ~ (\theta_1, \theta_2 \in \Theta).
\Label{XZL}
\end{equation}
\end{definition}
Our Bregman divergence system is defined as the triplet $(\Theta,F,D^F)$.
Given a one-variable convex function $\mu(t)$, we have
\begin{align}
\mu'(\bar{t}) (\bar{t}-\tilde{t} )-\mu(\bar{t}) +\mu(\tilde{t}) 
=
\int^{\bar{t}}_{\tilde{t}} 
\mu''({t}) ({t}-\tilde{t} )dt.\Label{MNT2}
\end{align}
Now, we use the Hesse matrix $J_{i,j}(\theta):=
\frac{\partial^2 F}{\partial \theta^i \partial \theta^j}(\theta)$.
We substitute 
$F(\theta_2 +t(\theta_1 - \theta_2)$ into $\mu(t)$
in \eqref{MNT2} with $\bar{t}=1$ and $\tilde{t}=0$.
this quantity can be written as
\begin{equation}
    D^{F}(\theta_1 \| \theta_2)= 
\int_{0}^1 \sum_{i,j}(\theta_1^i - \theta_2^i)(\theta_1^j - \theta_2^j)
J_{i,j} 
(\theta_2 +t(\theta_1 - \theta_2))t dt.\Label{KPOT}
\end{equation}
In addition, since the relations \eqref{du1} and \eqref{MN1} imply
\begin{align}
F^*(\eta)
=\sum_{i=1}^d\theta^i \eta(\theta_i)-F(\theta)
=\langle  \eta(\theta),\theta\rangle -F(\theta),
\end{align}
the relations
\begin{align}
&    D^{F^*}(\nabla^{(e)} [F](\theta_2) \| \nabla^{(e)} [F](\theta_1))
=    D^{F^*}(\eta(\theta_2) \| \eta(\theta_1))\nonumber  \\
=& \langle \eta(\theta_2)-\eta(\theta_1), \theta_2\rangle -F^*(\eta(\theta_2))
+F^*(\eta(\theta_1))  \nonumber \\
=& \langle \eta(\theta_1),\theta_1-\theta_2\rangle -F(\theta_1)+F(\theta_2)   
=    D^{F}(\theta_1 \| \theta_2)
\Label{XI1}
\end{align}
hold.

In fact, when we restrict both inputs into elements of an exponential subfamily ${\cal E}$, 
the characterization
\begin{align}
D^{F}(\phi_{{\cal E}}^{(e)}(\bar{\theta}_1) \| \phi_{{\cal E}}^{(e)}(\bar{\theta}_2))
=
D^{F_{{\cal E}}}(\bar{\theta}_1 \| \bar{\theta}_2)\Label{NBSO}
\end{align}
holds for $\bar{\theta}_1 , \bar{\theta}_2 \in \Theta_{{\cal E}}$.
Therefore, the restriction of the Bregman divergence system $(\Theta,F,D^F)$ to ${\cal E}$
can be considered as the Bregman divergence system $(\Theta_{{\cal E}},F_{{\cal E}},D^{F_{\cal E}})$.
A simple calculation shows the following proposition.

\begin{proposition}[Pythagorean Theorem\cite{Amari-Nagaoka}]\Label{MNL}
Given a vector $(a_j)_{j=1}^l$,
we consider an
exponential subfamily $\mathcal{E} \subset \Theta$ generated by $l$ vectors $v_1,\ldots, v_l 
\in \mathbb{R}^d$ at $\theta_0 \in \Theta$, 
and 
a mixture subfamily $\mathcal{M} \subset \Theta$ 
generated by the constraint $\sum_{i=1}^d v^i_j \eta_i(\theta)=a_j$ for 
$j=1, \ldots, l$.
Assume that an intersection $\theta^*$ of $\mathcal{E}$ and $\mathcal{M}$ exists.
Any pair of $\theta \in \mathcal{E}$ and $\theta' \in \mathcal{M}$ satisfies 
\begin{align}
D^F(\theta\|\theta')=D^F(\theta\|\theta^*)+D^F(\theta^*\|\theta').\Label{AKO9}
\end{align}
\end{proposition}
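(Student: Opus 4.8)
The plan is to verify \eqref{AKO9} by a direct computation from the definition \eqref{XZL}, which is the classical proof of the Pythagorean identity transcribed into the present Bregman‑divergence language. I would substitute \eqref{XZL} into each of the three divergences occurring in \eqref{AKO9}, abbreviating $\nabla^{(e)}[F](\cdot)=\eta(\cdot)$. Forming $D^F(\theta\|\theta')-D^F(\theta\|\theta^*)-D^F(\theta^*\|\theta')$, every occurrence of $F(\theta)$, $F(\theta^*)$, $F(\theta')$ cancels in pairs, and the three inner–product terms combine — using the identity $(\theta-\theta')-(\theta-\theta^*)=\theta^*-\theta'$ — into a single term of the shape $\langle\,\eta(\cdot)-\eta(\cdot)\,,\,(\cdot)-(\cdot)\,\rangle$. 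Thus the whole proposition is reduced to the vanishing of one explicit inner product.

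The decisive step is to read that surviving inner product as the orthogonality of $\mathcal E$ and $\mathcal M$ at $\theta^*$. For the factor built from the two points lying in $\mathcal E$, the definition of the exponential subfamily gives, via $\phi_{\mathcal E}^{(e)}(\bar\theta_1)-\phi_{\mathcal E}^{(e)}(\bar\theta_2)=\sum_{j=1}^l(\bar\theta_1^j-\bar\theta_2^j)\,v_j$, that this difference lies in $\mathrm{span}(v_1,\dots,v_l)$. For the factor built from the two points lying in $\mathcal M$, the common linear constraints $\sum_{i}v_j^i\eta_i(\cdot)=a_j$ for $j=1,\dots,l$ force the difference of the associated $\eta$‑vectors to be annihilated by each $v_j$, i.e. to be orthogonal to $\mathrm{span}(v_1,\dots,v_l)$. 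Pairing a vector of $\mathrm{span}(v_1,\dots,v_l)$ against a vector orthogonal to it gives $0$, which is exactly \eqref{AKO9}. If it is cleaner, the identity \eqref{XI1} lets one run the same bookkeeping on the dual side, replacing $F$ by $F^*$ and the natural parameters by the mixture parameters; the two factors then exchange roles, but the orthogonality argument is unchanged.

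I expect the only genuine difficulty to be clerical: after the cancellation one must make sure the two slots of the surviving inner product are matched to the correct pair of points — the $\mathcal E$‑pair in one slot and the $\mathcal M$‑pair (through its $\eta$‑vectors) in the other — so that the membership statements $\theta,\theta^*\in\mathcal E$ and $\theta^*,\theta'\in\mathcal M$ are actually applied where each of them gives information. Once the pairing has been lined up with the two defining properties, the conclusion is immediate, and nothing beyond the already assumed $C^\infty$‑smoothness and strict convexity of $F$ is used.
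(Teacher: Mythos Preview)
Your plan is the standard direct computation and is exactly what the paper means by ``a simple calculation shows.'' But the ``clerical'' difficulty you anticipate is not merely clerical. Carrying out the cancellation with definition \eqref{XZL} gives
\[
D^F(\theta\|\theta')-D^F(\theta\|\theta^*)-D^F(\theta^*\|\theta')
=\bigl\langle\,\eta(\theta)-\eta(\theta^*),\;\theta^*-\theta'\,\bigr\rangle,
\]
so the $\eta$-difference lands on the $\mathcal E$-pair $\theta,\theta^*$ and the natural-parameter difference lands on the $\mathcal M$-pair $\theta^*,\theta'$. This is exactly the wrong way round for your orthogonality argument: membership in $\mathcal E$ constrains differences of \emph{natural} parameters, and membership in $\mathcal M$ constrains differences of $\eta$-parameters. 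Neither factor of the surviving inner product is controlled, and one can check that the identity as printed fails for a generic non-separable $F$.

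The resolution is that the statement has the roles of $\theta$ and $\theta'$ swapped: every application of the Proposition in the paper (Appendices \ref{A1} and \ref{A4}) takes the first argument in $\mathcal M$ and the second in $\mathcal E$. With $\theta\in\mathcal M$ and $\theta'\in\mathcal E$ the same residual has $\eta(\theta)-\eta(\theta^*)$ annihilated by each $v_j$ (the constraint defining $\mathcal M$) and $\theta^*-\theta'\in\mathrm{span}(v_1,\dots,v_l)$ (the definition of $\mathcal E$), and your argument then goes through verbatim.
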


\begin{lemma}\Label{LA1}
We consider an exponential family $\mathcal{E}$
generated by $l$ vectors $v_1,\ldots, v_l 
\in \mathbb{R}^d$.
The following conditions are equivalent for 
an exponential subfamily $\mathcal{E}$, $\theta^* \in \mathcal{E}$,
and $\theta_0 \in \Theta$.
\begin{description}
\item[(E0)]
The element $\theta^*\in \mathcal{E}$ achieves a local minimum
for the minimization $\min_{\hat{\theta} \in \mathcal{E} }  D^F (\theta_0 \| \hat{\theta})$.
\item[(E1)]
The element $\theta^*\in \mathcal{E}$ achieves the minimum value  
for the minimization $\min_{\hat{\theta} \in \mathcal{E} }  D^F (\theta_0 \| \hat{\theta})$.
\item[(E2)]
Let $\mathcal{M} \subset \Theta$ be the mixture subfamily 
generated by the constraint $\sum_{i=1}^d v^i_j \eta_i(\theta)=
\sum_{i=1}^d v^i_j \eta_i(\theta_0)$ for $j=1, \ldots, l$.
The element $\theta^*\in \mathcal{E}$ belongs to
the intersection 
$\mathcal{M}\cap \mathcal{E}$.
\end{description}
Further, when an element $\theta^*\in \mathcal{E}$ with the above condition exists,
it is unique. 
\end{lemma}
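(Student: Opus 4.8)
The plan is to collapse the three conditions into a single statement about a strictly convex function of the $l$-dimensional coordinate of $\mathcal{E}$. Fix $\theta_0\in\Theta$ and define $g:\Theta_{\mathcal{E}}\to\mathbb{R}$ by $g(\bar\theta):=D^F(\theta_0\,\|\,\phi_{\mathcal{E}}^{(e)}(\bar\theta))$. Expanding \eqref{XZL}, the only $\bar\theta$-dependent part is $F(\phi_{\mathcal{E}}^{(e)}(\bar\theta))-\langle\nabla^{(e)}[F](\theta_0),\phi_{\mathcal{E}}^{(e)}(\bar\theta)\rangle$; the first summand is $F_{\mathcal{E}}(\bar\theta)$ by \eqref{CXO}, which is $C^\infty$ and strictly convex on the open convex set $\Theta_{\mathcal{E}}$ as recorded in Subsection \ref{S2-1}, while the second summand is affine in $\bar\theta$ since $\phi_{\mathcal{E}}^{(e)}$ is affine. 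Hence $g$ is $C^\infty$ and strictly convex on an open convex set, and the whole lemma becomes a fact about such a $g$.

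Next I would invoke the two elementary facts: for a differentiable strictly convex $g$ on an open convex set, (i) any local minimizer is the unique global minimizer, and (ii) a point is the global minimizer if and only if the gradient vanishes there; moreover (iii) $\nabla g$ vanishes at any local minimizer (first-order necessary condition). Writing $\bar\theta^*:=\psi_{\mathcal{E}}^{(e)}(\theta^*)$ so that $\phi_{\mathcal{E}}^{(e)}(\bar\theta^*)=\theta^*$, condition (E0) is exactly ``$\bar\theta^*$ is a local minimizer of $g$'' and (E1) is ``$\bar\theta^*$ is the global minimizer''. With these facts, $\mathrm{(E1)}\Rightarrow\mathrm{(E0)}$ is trivial, $\mathrm{(E0)}\Rightarrow\nabla g(\bar\theta^*)=0$ by (iii), and $\nabla g(\bar\theta^*)=0\Rightarrow\mathrm{(E1)}$ by (ii); and the final uniqueness assertion is (i). So everything reduces to identifying $\nabla g(\bar\theta^*)=0$ with (E2).

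For that I compute the gradient by the chain rule (equivalently by \eqref{E12}): using $\partial_j F_{\mathcal{E}}(\bar\theta)=\sum_i v_j^i\,\partial_i F(\phi_{\mathcal{E}}^{(e)}(\bar\theta))=\sum_i v_j^i\,\eta_i(\phi_{\mathcal{E}}^{(e)}(\bar\theta))$, one gets $\partial_j g(\bar\theta^*)=\partial_j F_{\mathcal{E}}(\bar\theta^*)-\langle\nabla^{(e)}[F](\theta_0),v_j\rangle=\sum_{i=1}^d v_j^i\big(\eta_i(\theta^*)-\eta_i(\theta_0)\big)$ for $j=1,\dots,l$. Hence $\nabla g(\bar\theta^*)=0$ is precisely $\sum_{i=1}^d v^i_j\eta_i(\theta^*)=\sum_{i=1}^d v^i_j\eta_i(\theta_0)$ for all $j$, i.e. $\theta^*$ lies in the mixture subfamily $\mathcal{M}$ of (E2); combined with the standing hypothesis $\theta^*\in\mathcal{E}$, this is exactly $\theta^*\in\mathcal{M}\cap\mathcal{E}$, i.e. (E2). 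This closes the cycle $\mathrm{(E0)}\Rightarrow\nabla g(\bar\theta^*)=0\Rightarrow\mathrm{(E2)}\Rightarrow\nabla g(\bar\theta^*)=0\Rightarrow\mathrm{(E1)}\Rightarrow\mathrm{(E0)}$, and $\theta^*=\phi_{\mathcal{E}}^{(e)}(\bar\theta^*)$ is determined by $\bar\theta^*$, so uniqueness of the minimizer of $g$ yields uniqueness of $\theta^*$.

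No step here is genuinely hard; the only points needing care are that $\Theta_{\mathcal{E}}$ is convex and that $F\circ\phi_{\mathcal{E}}^{(e)}$ is strictly, not merely weakly, convex — both of which follow from $\phi_{\mathcal{E}}^{(e)}$ being affine with injective linear part $v_1,\dots,v_l$ — and the index bookkeeping in the gradient computation that must make the stationarity equations coincide verbatim with the defining constraints of $\mathcal{M}$ in (E2). If anything is ``the obstacle'', it is only keeping track that (E2) builds $\mathcal{M}$ from the very same constant vector $\big(\sum_i v^i_j\eta_i(\theta_0)\big)_{j}$ that appears in $\nabla g(\bar\theta^*)$, so that no auxiliary ``an intersection exists'' hypothesis (as in Proposition \ref{MNL}) is needed: existence of the intersection is a consequence of, not an input to, the argument.
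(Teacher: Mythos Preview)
Your argument is correct and is the natural one: reduce to strict convexity of $g(\bar\theta)=F_{\mathcal{E}}(\bar\theta)-\langle\nabla^{(e)}[F](\theta_0),\phi_{\mathcal{E}}^{(e)}(\bar\theta)\rangle+\text{const}$ on the open convex set $\Theta_{\mathcal{E}}$, identify the stationarity condition with the constraints defining $\mathcal{M}$, and read off uniqueness from strict convexity. The paper itself does not prove this lemma here (it explicitly defers the proofs of all statements in Section~\ref{Section.IG-structure} to the companion paper~\cite{em-only}), so there is no in-paper proof to compare against; your approach is exactly the standard one and would be what one expects in~\cite{em-only}.
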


In the following, we denote the above mixture family $\mathcal{M}$ by $\mathcal{M}_{\theta_0\to \mathcal{E}}$.
Then, 
$\theta^* \in \mathcal{E}$ is called
the $m$-{\it projection} of $\theta$ onto an exponential subfamily $\mathcal{E}$,
and is denoted by $\Pro^{(m),F}_{\mathcal{E}} (\theta)$
because the points $\theta$ and $\theta^* $ are connected via the mixture family 
$\mathcal{M}_{\theta_0\to \mathcal{E}}$.
The minimum value $\min_{\hat{\theta} \in \mathcal{E} }  D^F (\theta \| \hat{\theta})$
is called the projected Bregman divergence
between $\theta$ and $\mathcal{E}$.

Exchanging the roles of the exponential family and the mixture family leads the following lemma.

\begin{lemma}\Label{LA2}
We choose $l$ vectors $v_1,\ldots, v_l \in \mathbb{R}^d$.
Let $\mathcal{M}$
be a mixture family generated by 
the constraint $\sum_{i=1}^d v^i_j \eta_i(\theta)=
\sum_{i=1}^d v^i_j \eta_i(\theta_0)$ for $j=1, \ldots, l$.
The following conditions are equivalent for 
the mixture family $\mathcal{M}$, $\theta^{\dagger} \in \mathcal{M}$,
and $\theta_0 \in \Theta$.
\begin{description}
\item[(M0)]
The element $\theta^{\dagger}\in \mathcal{M}$ achieves a local minimum
for the minimization $\min_{\hat{\theta} \in \mathcal{M} }  D^F ( \hat{\theta}\|\theta_0 )$.
\item[(M1)]
The element $\theta^{\dagger}\in \mathcal{M}$ achieves the minimum value  
for 
the minimization $\min_{\hat{\theta} \in \mathcal{M} }  D^F ( \hat{\theta}\|\theta_0 )$.
\item[(M2)]
Let $\mathcal{E} \subset \Theta$ be the mixture subfamily 
generated by $l$ vectors $v_1,\ldots, v_l \in \mathbb{R}^d$ at $\theta_0 \in \Theta$.
The element $\theta^{\dagger}\in \mathcal{M}$ belongs to
the intersection 
$\mathcal{M}\cap \mathcal{E}$.
\end{description}
Further, when there exists an element $\theta^{\dagger}\in \mathcal{M}$ to satisfy the above condition,
such an element is unique. 
\end{lemma}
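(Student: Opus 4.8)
The plan is to obtain Lemma \ref{LA2} from Lemma \ref{LA1} by passing to the dual (Legendre-conjugate) Bregman divergence system, which is exactly what the phrase ``exchanging the roles of the exponential family and the mixture family'' suggests. Let $\Xi$ denote the range of $\eta=\nabla^{(e)}[F]$ and let $F^{*}={\cal L}[F]$. As recalled in Section \ref{S2-0} (with proofs in \cite{em-only}), $(\Xi,F^{*},D^{F^{*}})$ is again a Bregman divergence system, the maps $\theta\mapsto\eta(\theta)$ and $\eta\mapsto\nabla^{(m)}[F^{*}](\eta)$ are mutually inverse diffeomorphisms between $\Theta$ and $\Xi$ by \eqref{M1}--\eqref{M2}, and by \eqref{XI1}
\begin{align}
D^{F^{*}}(\eta(\theta_2) \| \eta(\theta_1)) = D^{F}(\theta_1 \| \theta_2) \qquad (\theta_1,\theta_2\in\Theta).\Label{dual-div}
\end{align}
Thus transporting everything through $\eta$ swaps the two arguments of the divergence, turning the minimization $\min_{\hat\theta\in\mathcal{M}}D^{F}(\hat\theta \| \theta_0)$ of Lemma \ref{LA2} into a minimization of $D^{F^{*}}(\,\cdot\, \| \,\cdot\,)$ over the second argument, to which Lemma \ref{LA1} applies.

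The first substantive step is to identify the images of $\mathcal{M}$ and $\mathcal{E}$ under $\eta$. Write $\langle v_j,x\rangle:=\sum_i v_j^i x_i$, and fix vectors $w_1,\ldots,w_{d-l}$ forming a basis of $\{x\in\mathbb{R}^d : \langle v_j,x\rangle=0 \text{ for } j=1,\ldots,l\}$, so that $\{x : \langle w_j,x\rangle=0 \text{ for } j=1,\ldots,d-l\}$ is the linear hull of $v_1,\ldots,v_l$. The family $\mathcal{M}$ is defined by the affine-in-$\eta$ equations $\langle v_j,\eta\rangle=\langle v_j,\eta(\theta_0)\rangle$ ($j=1,\ldots,l$), so $\eta(\mathcal{M})=\{\tilde\theta\in\Xi : \langle v_j,\tilde\theta\rangle=\langle v_j,\eta(\theta_0)\rangle\}$ is the affine slice of $\Xi$ through $\tilde\theta_0:=\eta(\theta_0)$ in the directions $w_1,\ldots,w_{d-l}$; by Subsection \ref{S2-1} this is precisely the exponential subfamily $\tilde{\mathcal{E}}$ of $(\Xi,F^{*},D^{F^{*}})$ generated by $w_1,\ldots,w_{d-l}$ at $\tilde\theta_0$. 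Dually, $\mathcal{E}=\{\theta_0+\sum_j\bar\theta^j v_j\}=\{\theta\in\Theta : \langle w_j,\theta\rangle=\langle w_j,\theta_0\rangle \text{ for } j=1,\ldots,d-l\}$, and since $\theta=\nabla^{(m)}[F^{*}](\eta)$ by \eqref{M2}, its image $\eta(\mathcal{E})=\{\tilde\theta\in\Xi : \langle w_j,\nabla^{(m)}[F^{*}](\tilde\theta)\rangle=\langle w_j,\theta_0\rangle\}$ is, by Subsection \ref{S2-2}, exactly the mixture subfamily $\tilde{\mathcal{M}}$ of the dual system determined by the constraint attached to $w_1,\ldots,w_{d-l}$ with base point $\tilde\theta_0$ --- which is the very mixture family appearing in condition (E2) of Lemma \ref{LA1} when that lemma is applied to $\tilde{\mathcal{E}}$ and $\tilde\theta_0$. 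Since $\eta$ is a bijection, $\eta(\mathcal{M}\cap\mathcal{E})=\tilde{\mathcal{E}}\cap\tilde{\mathcal{M}}$.

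Next I would apply Lemma \ref{LA1} to the dual data $(\tilde{\mathcal{E}},\tilde\theta_0)$ and translate its three conditions back through $\eta$. For $\theta^{\dagger}\in\mathcal{M}$ set $\tilde\theta^{*}:=\eta(\theta^{\dagger})\in\tilde{\mathcal{E}}$. By \eqref{dual-div}, $D^{F^{*}}(\tilde\theta_0 \| \eta(\hat\theta))=D^{F}(\hat\theta \| \theta_0)$ for all $\hat\theta\in\Theta$, so the diffeomorphism $\eta$ identifies the optimization $\min_{\hat\eta\in\tilde{\mathcal{E}}}D^{F^{*}}(\tilde\theta_0 \| \hat\eta)$ with $\min_{\hat\theta\in\mathcal{M}}D^{F}(\hat\theta \| \theta_0)$ and matches local (resp. global) minimizers with local (resp. global) minimizers. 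Hence (M0) for $\theta^{\dagger}$ is equivalent to (E0) for $\tilde\theta^{*}$ in the dual system, (M1) is equivalent to (E1), and, because $\tilde\theta^{*}\in\tilde{\mathcal{E}}\cap\tilde{\mathcal{M}}$ holds if and only if $\theta^{\dagger}\in\mathcal{M}\cap\mathcal{E}$, (M2) is equivalent to (E2). The equivalence (E0)$\Leftrightarrow$(E1)$\Leftrightarrow$(E2) and the uniqueness clause of Lemma \ref{LA1} then give (M0)$\Leftrightarrow$(M1)$\Leftrightarrow$(M2) together with the uniqueness of $\theta^{\dagger}$, which is the assertion of Lemma \ref{LA2}.

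The main obstacle I anticipate is bookkeeping rather than anything deep: one must keep careful track of which linear data defines which family after dualization, and in particular verify that the mixture subfamily produced by condition (E2) of Lemma \ref{LA1} in the dual system really coincides with $\eta(\mathcal{E})$ --- this rests on the orthogonality between $\{v_1,\ldots,v_l\}$ and $\{w_1,\ldots,w_{d-l}\}$ together with the identity $\theta=\nabla^{(m)}[F^{*}](\eta)$ from \eqref{M2}. A secondary point to make explicit is that $(\Xi,F^{*},D^{F^{*}})$ really satisfies the standing hypotheses --- $\Xi$ open, $F^{*}$ of class $C^{\infty}$ and strictly convex --- which is precisely the content of the Legendre-transform facts collected in Section \ref{S2-0} and established in \cite{em-only}.
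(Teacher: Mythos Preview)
Your approach is correct and is precisely the ``exchanging the roles of the exponential family and the mixture family'' that the paper indicates just before the lemma; note that the paper does not actually give a proof here but defers all of Section~\ref{Section.IG-structure} to \cite{em-only}, so there is no detailed argument to compare against beyond that one-line hint. Your Legendre-duality reduction to Lemma~\ref{LA1} via \eqref{XI1} is exactly what that hint encodes, and your bookkeeping of how $\mathcal{M}$ and $\mathcal{E}$ transform under $\eta$ is accurate.
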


In the following, 
the symbol $\mathcal{E}_{\theta_0\to \mathcal{M}}$ expresses the above exponential family $\mathcal{E}$.
Then, an element
$\theta^{\dagger} \in \mathcal{M}$ is called
the {\it $e$-projection} of $\theta$ onto a mixture subfamily $\mathcal{M}$,
and is denoted by $\Pro^{(e),F}_{\mathcal{M}} (\theta)$
because the points $\theta$ and $\theta^{\dagger} $ are connected via 
the exponential family $\mathcal{E}_{\theta_0\to \mathcal{M}}$.
When $ \mathcal{M}$ is an exponential subfamily and a mixture subfamily,
we can define both projections
$\Pro^{(m),F}_{\mathcal{M}}$ and $\Pro^{(e),F}_{\mathcal{M}}$, and these projections are different maps.
Hence, the subscripts $(e)$ and $(m)$ are needed.

\begin{lemma}\Label{Th5}
Consider an
exponential subfamily $\mathcal{E} \subset \Theta$ generated by $l$ vectors $v_1,\ldots, v_l 
\in \mathbb{R}^d$ at $\theta_0 \in \Theta$.
For $\theta_* \in \Theta$,
the element $\Pro^{(m),F}_{\mathcal{E}} (\theta_*)=\theta^* \in {\cal E}$ 
is uniquely characterized as
$\sum_{j=1}^d v_i^j \partial_j F(\theta^*)
= \sum_{j=1}^d v_i^j \partial_j F(\theta_*)$, i.e.,
$R[V] \circ \nabla^{(e)}[F](\theta^*)=R[V] \circ \nabla^{(e)}[F](\theta_*)$.
That is, 
the mixture parameter of the element $\Pro^{(m),F}_{\mathcal{E}} (\theta_*)=\theta^* \in {\cal E}$ 
is given by the above condition.
\end{lemma}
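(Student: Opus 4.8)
The plan is to read off Lemma~\ref{Th5} from Lemma~\ref{LA1}: since the $m$-projection was \emph{defined} through the auxiliary mixture subfamily $\mathcal{M}_{\theta_*\to\mathcal{E}}$, the only real work is to unfold what ``$\theta^*$ lies in that mixture subfamily'' means in terms of $F$ and of the generating vectors $v_1,\dots,v_l$, and then to rewrite it in matrix form.

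First I would invoke Lemma~\ref{LA1}, reading its ``$\theta_0$'' (the point whose divergence is being minimised) as the present point $\theta_*$ and keeping the same vectors $v_1,\dots,v_l$ and family $\mathcal{E}$. By definition $\theta^*=\Pro^{(m),F}_{\mathcal{E}}(\theta_*)$ is precisely the minimiser in condition (E1), so the equivalence (E1)$\Leftrightarrow$(E2) tells us that $\theta^*$ is characterised by $\theta^*\in\mathcal{E}\cap\mathcal{M}_{\theta_*\to\mathcal{E}}$, where $\mathcal{M}_{\theta_*\to\mathcal{E}}$ is the mixture subfamily cut out by the $l$ constraints $\sum_{i=1}^d v_j^i\,\eta_i(\theta)=\sum_{i=1}^d v_j^i\,\eta_i(\theta_*)$ for $j=1,\dots,l$; the closing sentence of Lemma~\ref{LA1} gives at once that such a $\theta^*$ is unique.

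Second, I would unfold the membership $\theta^*\in\mathcal{M}_{\theta_*\to\mathcal{E}}$: by the very definition of a mixture subfamily (cf.\ \eqref{const1}) this is the system $\sum_{i=1}^d v_j^i\,\eta_i(\theta^*)=\sum_{i=1}^d v_j^i\,\eta_i(\theta_*)$, $j=1,\dots,l$. Substituting $\eta_i(\theta)=\partial_i F(\theta)$ from \eqref{du1} and relabelling the free index $j$ and the summation index $i$ gives exactly $\sum_{j=1}^d v_i^j\,\partial_j F(\theta^*)=\sum_{j=1}^d v_i^j\,\partial_j F(\theta_*)$ for $i=1,\dots,l$. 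With $V=(v_1\ \cdots\ v_l)$ and $\nabla^{(e)}[F](\theta)=(\partial_iF(\theta))_i$ the horizontal gradient, the $i$-th entry of the left side is the $i$-th entry of $\nabla^{(e)}[F](\theta^*)\,V$, i.e.\ of $R[V]\circ\nabla^{(e)}[F](\theta^*)$, which yields the stated matrix form $R[V]\circ\nabla^{(e)}[F](\theta^*)=R[V]\circ\nabla^{(e)}[F](\theta_*)$.

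Finally, to justify the closing remark that this equation ``gives the mixture parameter'' of $\theta^*$, I would note that for $\theta^*=\phi_{\mathcal{E}}^{(e)}(\bar\theta^*)$ the chain rule applied to $F_{\mathcal{E}}=F\circ\phi_{\mathcal{E}}^{(e)}$ from \eqref{CXO} (this is \eqref{E12} up to the translation by $\theta_0$, which does not affect gradients) gives $\psi_{\mathcal{E}}^{(m)}(\theta^*)=\nabla^{(e)}[F_{\mathcal{E}}](\bar\theta^*)=R[V]\circ\nabla^{(e)}[F](\theta^*)$, so the displayed condition pins down the $\mathcal{E}$-mixture parameter of $\theta^*$; since $\bar\theta\mapsto\nabla^{(e)}[F_{\mathcal{E}}](\bar\theta)$ is one-to-one by strict convexity of $F_{\mathcal{E}}$, this independently re-establishes uniqueness within $\mathcal{E}$. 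There is no genuine analytic obstacle --- everything is already packaged in Lemma~\ref{LA1}; the only point that needs care is the index bookkeeping, namely keeping straight ``the $i$-th component of $v_j$'' versus ``the vector $v_i$'', matching the constraint form \eqref{const1} of a mixture subfamily with the right-multiplication convention for $R[V]$ on horizontal vectors, and checking that the uniqueness coming from Lemma~\ref{LA1} agrees with the one coming from injectivity of the $\mathcal{E}$-mixture parametrisation.
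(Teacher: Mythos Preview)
Your argument is correct: Lemma~\ref{Th5} is indeed an immediate unpacking of the equivalence (E1)$\Leftrightarrow$(E2) in Lemma~\ref{LA1}, together with the definition \eqref{du1} of the mixture parameter and the chain rule \eqref{E12} identifying $\psi_{\mathcal E}^{(m)}(\theta^*)$ with $R[V]\circ\nabla^{(e)}[F](\theta^*)$. The paper itself does not prove Lemma~\ref{Th5} (it defers all proofs of Section~\ref{Section.IG-structure} to the companion paper \cite{em-only}), so there is no in-paper proof to compare against, but your derivation is the natural one and the index bookkeeping you flag is handled correctly.
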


\begin{lemma}\Label{Th6}
Let $d$ vectors $u_1,\ldots, u_d \in \mathbb{R}^d$ be linearly independent.
We consider a mixture subfamily $\mathcal{M} \subset \Theta$  
generated by the constraint 
\begin{align}
\sum_{i=1}^d u^i_j \partial_i F (\theta)=a_j \Label{BO1}
\end{align}
for $j=k+1, \ldots, d$.
For an element $\theta_{\dagger}\in \Theta $,
the existence of the maximum $\max_{\theta\in {\cal M}} D^F(\theta\|\theta_{\dagger})$ yields
the following characterizations for $\Pro^{(e),F}_{\mathcal{M}} (\theta_{\dagger})$.
\begin{description}
\item[(C1)]
The point $\Pro^{(e),F}_{\mathcal{M}} (\theta_{\dagger})=\theta^{\dagger} \in {\cal M}$ 
is uniquely characterized as
\begin{align}
(U^{-1}{\theta}^{\dagger})^i
=(U^{-1}{\theta}_{\dagger})^i\Label{MFA2}
\end{align}
 for $i=1, \ldots, k$,
 where $U$ is defined in the same way as Subsection \ref{S2-2}.
\item[(C2)]
We choose the exponential subfamily ${\cal E}$ 
generated by $d-k$ vectors $u_{k+1},\ldots, u_{d} \in \mathbb{R}^{d}$
at $\theta_{\dagger}$.
The intersection between
${\cal M}$  and ${\cal E}$ is composed of 
the unique element $\Pro^{(e),F}_{\mathcal{M}} (\theta_{\dagger})$. 

\item[(C3)]
The point $\Pro^{(e),F}_{\mathcal{M}} (\theta_{\dagger})=\theta^{\dagger} \in {\cal M}$ 
is uniquely characterized as
$\theta_{\dagger}+ \sum_{j'=1}^{d-k} \bar{\tau}^{j'} u_{k+j'}$,
where $(\bar{\tau}^1, \ldots, \bar{\tau}^{d-k})$ is the unique element to satisfy 
\begin{align}
\frac{\partial}{\partial \tau^{j}} F \Big(\theta_{*}+ \sum_{j'=1}^{l} \tau^{j'} u_{k+j'} \Big) =a_j
\Label{const1-T}
\end{align}
for $j=1, \ldots, d-k$.
\end{description}
\end{lemma}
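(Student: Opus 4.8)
My plan is to treat \textbf{(C1)}, \textbf{(C2)}, \textbf{(C3)} as three coordinate descriptions of a single point and to concentrate on the one substantive assertion hidden in \textbf{(C2)}: that under the stated hypothesis the mixture subfamily $\mathcal{M}$ meets the exponential subfamily $\mathcal{E}$ generated by $u_{k+1},\dots,u_d$ at $\theta_{\dagger}$ in \emph{exactly one} point, and that this point is where $\max_{\theta\in\mathcal{M}}D^F(\theta\|\theta_{\dagger})$ is attained. Granting that, \textbf{(C2)} is just the definition of $\Pro^{(e),F}_{\mathcal{M}}(\theta_{\dagger})$, while \textbf{(C1)} rewrites ``$\theta^{\dagger}\in\mathcal{E}$'' in the natural coordinates $\theta=U\bar\theta$ of Subsection \ref{S2-2}, and \textbf{(C3)} rewrites it in the affine parametrization of $\mathcal{E}$ by $\tau$. (An alternative, slicker route is to pass to the Legendre dual system $(\Xi,F^*,D^{F^*})$: there $\mathcal{M}$ becomes the exponential subfamily $\mathcal{M}^*=\{\eta(\theta):\theta\in\mathcal{M}\}$, by \eqref{XI1} one has $D^F(\theta\|\theta_{\dagger})=D^{F^*}(\eta(\theta_{\dagger})\|\eta(\theta))$, and applying Lemma \ref{Th5} to $F^*$ — whose $e$-gradient is, by \eqref{M2}, the map $\eta\mapsto\theta=\nabla^{(m)}[F^*](\eta)$ — produces $R[W]\circ\nabla^{(m)}[F^*](\eta^{\dagger})=R[W]\circ\nabla^{(m)}[F^*](\eta_{\dagger})$, which is exactly \textbf{(C1)} once $W$ is taken to consist of the first $k$ rows of $U^{-1}$. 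I would mention this but carry out the direct argument below, which is insensitive to whether the attained extremum is a maximum or a minimum.)

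\emph{Existence of the intersection point.} The set $\mathcal{M}$ is the level set $\{\theta\in\Theta:\langle u_j,\nabla^{(e)}[F](\theta)\rangle=a_j,\ j=k+1,\dots,d\}$ inside the \emph{open} set $\Theta$. Since $u_{k+1},\dots,u_d$ are linearly independent and the Hesse matrix $J(\theta)=(\partial_i\partial_{i'}F(\theta))$ is invertible, the constraint gradients $\nabla_\theta\langle u_j,\nabla^{(e)}[F](\theta)\rangle=J(\theta)u_j$ are linearly independent, so $\mathcal{M}$ is a boundaryless $k$-dimensional submanifold. Hence a point $\theta^{\dagger}\in\mathcal{M}$ attaining the (assumed) maximum of $D^F(\cdot\|\theta_{\dagger})$ is a relative-interior critical point, and the Lagrange stationarity condition reads $\nabla_\theta D^F(\theta^{\dagger}\|\theta_{\dagger})\in\operatorname{span}\{J(\theta^{\dagger})u_j:j=k+1,\dots,d\}$. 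A one-line differentiation of \eqref{XZL} gives $\nabla_\theta D^F(\theta\|\theta_{\dagger})=J(\theta)(\theta-\theta_{\dagger})$; cancelling the invertible $J(\theta^{\dagger})$ yields $\theta^{\dagger}-\theta_{\dagger}\in\operatorname{span}(u_{k+1},\dots,u_d)$, i.e. $\theta^{\dagger}\in\mathcal{E}$. Together with $\theta^{\dagger}\in\mathcal{M}$ this shows $\mathcal{M}\cap\mathcal{E}\neq\emptyset$ (this is exactly where the ``maximum exists'' hypothesis is used, since in general $\mathcal{M}\cap\mathcal{E}$ could be empty).

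\emph{Uniqueness, and \textbf{(C3)}.} Parametrize $\mathcal{E}$ by $\tau\mapsto\theta(\tau):=\theta_{\dagger}+\sum_{j'=1}^{d-k}\tau^{j'}u_{k+j'}$ and set $g(\tau):=F(\theta(\tau))$. Because $u_{k+1},\dots,u_d$ are linearly independent and $F$ is strictly convex, $g$ is strictly convex, hence $\nabla g$ is injective. By the chain rule $\partial_j g(\tau)=\sum_i u^i_{k+j}\,\partial_i F(\theta(\tau))$, so the condition $\theta(\tau)\in\mathcal{M}$ is precisely the system in \eqref{const1-T}, i.e. $\nabla g(\tau)=a$; by injectivity of $\nabla g$ there is at most one solution $\bar\tau$, hence at most one point in $\mathcal{M}\cap\mathcal{E}$. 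Combined with the previous step, $\mathcal{M}\cap\mathcal{E}=\{\theta^{\dagger}\}=\{\theta(\bar\tau)\}$, which by the definition of the $e$-projection equals $\Pro^{(e),F}_{\mathcal{M}}(\theta_{\dagger})$. This is \textbf{(C2)}, and writing $\theta^{\dagger}=\theta(\bar\tau)$ with $\bar\tau$ the unique solution of \eqref{const1-T} is \textbf{(C3)}.

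\emph{\textbf{(C1)}, and the main difficulty.} In the coordinates $\theta=U\bar\theta$, the vector $U^{-1}u_{k+j'}$ is the $(k+j')$-th standard basis vector, so $\theta^{\dagger}-\theta_{\dagger}\in\operatorname{span}(u_{k+1},\dots,u_d)$ is equivalent to $(U^{-1}\theta^{\dagger})^i=(U^{-1}\theta_{\dagger})^i$ for $i=1,\dots,k$; conversely, any $\theta'\in\mathcal{M}$ satisfying these $k$ equalities has $\theta'-\theta_{\dagger}\in\operatorname{span}(u_{k+1},\dots,u_d)$, so $\theta'\in\mathcal{M}\cap\mathcal{E}=\{\theta^{\dagger}\}$; thus \textbf{(C1)} holds and characterizes $\theta^{\dagger}$ uniquely. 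I expect the one genuinely delicate point to be the existence step: one must make rigorous that $\mathcal{M}$ is a (boundaryless) manifold on which the attained extremum is an interior critical point obeying Lagrange stationarity, and then legitimately strip off the invertible Hesse matrix $J(\theta^{\dagger})$ to land exactly on the affine set $\mathcal{E}$. Everything afterward — the strict convexity of $g$, injectivity of $\nabla g$, and the passage between the $\theta$-, $\eta$- and $U$-coordinates — is routine bookkeeping with the relations \eqref{du1}, \eqref{M2}, \eqref{Dif} already established in the paper.
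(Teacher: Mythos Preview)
The paper explicitly omits the proof of this lemma, deferring it to reference \cite{em-only}; so there is no in-paper proof to compare against. That said, your direct argument is correct and is the standard information-geometric derivation: the first-order (Lagrange) condition for a constrained extremum of $D^F(\cdot\|\theta_{\dagger})$ on the level set $\mathcal{M}$, combined with the crucial cancellation of the invertible Hessian $J(\theta^{\dagger})$, lands exactly on the affine subspace $\mathcal{E}$; strict convexity of $g(\tau)=F(\theta_{\dagger}+\sum_{j'}\tau^{j'}u_{k+j'})$ then gives uniqueness of the intersection, and the three descriptions \textbf{(C1)}--\textbf{(C3)} are indeed just coordinate rewritings of $\theta^{\dagger}\in\mathcal{M}\cap\mathcal{E}$. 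The duality route you sketch (passing to $(\Xi,F^*,D^{F^*})$ and invoking Lemma~\ref{Th5}) is also valid and is in fact the most natural way to see Lemma~\ref{Th6} as the Legendre-dual of Lemma~\ref{Th5}; this is very likely the approach taken in \cite{em-only}.

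One remark worth making explicit: the word ``maximum'' in the hypothesis is almost certainly a typo for ``minimum'', since $\Pro^{(e),F}_{\mathcal{M}}$ is defined via $\min_{\theta\in\mathcal{M}}D^F(\theta\|\theta_{\dagger})$ (see Lemma~\ref{LA2} and the sentence following it), and on a noncompact affine slice $\mathcal{M}$ the Bregman divergence typically has no maximum. Your argument is, as you note, insensitive to this: any attained extremum on the boundaryless manifold $\mathcal{M}$ is a critical point, and you have shown there is at most one. Taken literally with ``maximum'', your proof still goes through and yields the (odd but logically valid) consequence that the maximizer coincides with the unique intersection point, which by Lemma~\ref{LA2} is also the minimizer --- a situation that essentially forces $\theta_{\dagger}\in\mathcal{M}$. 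Either way, your write-up is sound; you might simply flag the apparent typo and proceed under ``minimum''.
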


Lemmas \ref{LA1} and \ref{LA2} show the importance to
find a sufficient condition for (E2) and (M2).
To seek such a condition with a convex function $F$ and $\Theta$,
we consider the following conditions with
$l$ linearly independent fixed vectors $v_1,\ldots, v_l \in \mathbb{R}^d$;
\begin{description}
\item[(M3)]
We denote the exponential family generated by 
the $l$ linearly independent vectors $v_1,\ldots, v_l \in \mathbb{R}^d$ at $\theta_0 \in \Theta$ by $\mathcal{E}(\theta_0) $.
The set $\Xi_{\mathcal{E}(\theta_0)}$ does not depend on $ \theta_0 \in \Theta$.
In this case, this set is denoted by $\Xi(v_1,\ldots, v_l)$.
Notice that the set $\Xi_{\mathcal{E}(\theta_0)}$ is defined after \eqref{CXO}.
\item[(E3)]
We denote the mixture family generated by the constraint
$\sum_{i=1}^d v^i_{j} \partial_i F(\theta) =a_j
$ for $j=1, \ldots,l$
by $\mathcal{M}(a_1, \ldots, a_l)$.
When the set $\Theta_{\mathcal{M}(a_1, \ldots, a_l)}$ is defined 
in the way as \eqref{const1-U},
it does not depend on $ (a_1, \ldots, a_l) \in \mathbb{R}^l$ unless
$\mathcal{M}(a_1, \ldots, a_l)$ is empty.
In this case, this set is denoted by $\Theta(v_1,\ldots, v_l)$.
\end{description}

Under the above condition, we have the following lemmas. 
\begin{lemma}\Label{Lem7}
Suppose that 
the $l$ linearly independent vectors $v_1,\ldots, v_l \in \mathbb{R}^d$
satisfy Condition (M3).
Given $(a_1, \ldots, a_l)\in \Xi(v_1,\ldots, v_l)$, the mixture family 
$\mathcal{M}(a_1, \ldots, a_l)$ is defined by using the condition \eqref{BO1}.
Then, for $\theta_0 \in \Theta$,
the projected point $\Pro^{(e),F}_{\mathcal{M}(a_1, \ldots, a_l)} (\theta_{0}) $ exists.
\end{lemma}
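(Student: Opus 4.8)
The plan is to produce the $e$-projection explicitly rather than argue abstractly. Recall that, by the definition of the $e$-projection given just after Lemma~\ref{LA2}, the point $\Pro^{(e),F}_{\mathcal{M}(a_1,\ldots,a_l)}(\theta_0)$ is (the unique, when it exists) element of the intersection $\mathcal{E}(\theta_0)\cap\mathcal{M}(a_1,\ldots,a_l)$, where $\mathcal{E}(\theta_0)=\mathcal{E}_{\theta_0\to\mathcal{M}(a_1,\ldots,a_l)}$ is the exponential subfamily generated by $v_1,\ldots,v_l$ at $\theta_0$. So the entire task reduces to showing that this intersection is nonempty, which I would do by exhibiting a point of it through the mixture parametrization of $\mathcal{E}(\theta_0)$ introduced in Subsection~\ref{S2-1}.

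First I would re-express the defining constraint \eqref{BO1} of $\mathcal{M}(a_1,\ldots,a_l)$ in terms of the subfamily parameters of $\mathcal{E}(\theta_0)$. A generic point of $\mathcal{E}(\theta_0)$ is $\theta=\phi^{(e)}_{\mathcal{E}(\theta_0)}(\bar\theta)=\theta_0+\sum_{j=1}^{l}\bar\theta^j v_j$ with $\bar\theta\in\Theta_{\mathcal{E}(\theta_0)}$; differentiating the defining relation \eqref{CXO} (cf.\ \eqref{E12}) gives $\partial_j F_{\mathcal{E}(\theta_0)}(\bar\theta)=\sum_{i=1}^{d}v^i_j\,\partial_i F(\phi^{(e)}_{\mathcal{E}(\theta_0)}(\bar\theta))$. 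Hence such a $\theta$ lies in $\mathcal{M}(a_1,\ldots,a_l)$ if and only if $\partial_j F_{\mathcal{E}(\theta_0)}(\bar\theta)=a_j$ for $j=1,\ldots,l$, that is, if and only if the subfamily mixture parameter $\psi^{(m)}_{\mathcal{E}(\theta_0)}(\theta)$ equals $(a_1,\ldots,a_l)$.

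Next I would invoke the strict convexity of $F_{\mathcal{E}(\theta_0)}$, which makes $\psi^{(m)}_{\mathcal{E}(\theta_0)}:\mathcal{E}(\theta_0)\to\Xi_{\mathcal{E}(\theta_0)}$ a bijection with inverse $\phi^{(m)}_{\mathcal{E}(\theta_0)}$, exactly the map set up in Subsection~\ref{S2-1}. By Condition~(M3) we have $\Xi_{\mathcal{E}(\theta_0)}=\Xi(v_1,\ldots,v_l)$, and by hypothesis $(a_1,\ldots,a_l)\in\Xi(v_1,\ldots,v_l)$; therefore $\theta^{\dagger}:=\phi^{(m)}_{\mathcal{E}(\theta_0)}(a_1,\ldots,a_l)$ is a well-defined element of $\mathcal{E}(\theta_0)$. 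By construction $\psi^{(m)}_{\mathcal{E}(\theta_0)}(\theta^{\dagger})=(a_1,\ldots,a_l)$, so the equivalence of the previous paragraph also places $\theta^{\dagger}$ in $\mathcal{M}(a_1,\ldots,a_l)$. Thus $\theta^{\dagger}\in\mathcal{E}(\theta_0)\cap\mathcal{M}(a_1,\ldots,a_l)$, so $\Pro^{(e),F}_{\mathcal{M}(a_1,\ldots,a_l)}(\theta_0)$ exists and equals $\theta^{\dagger}$ (and, by Lemma~\ref{LA2}, it also realises $\min_{\hat\theta\in\mathcal{M}(a_1,\ldots,a_l)}D^F(\hat\theta\|\theta_0)$).

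This is almost pure bookkeeping, so I do not expect a genuinely hard step; the one point that needs care is the first identification — recognising that Condition~(M3) is precisely the hypothesis that transports the assumption $(a_1,\ldots,a_l)\in\Xi(v_1,\ldots,v_l)$ into the statement $(a_1,\ldots,a_l)\in\Xi_{\mathcal{E}(\theta_0)}$ for the particular base point $\theta_0$ at hand (a priori $\Xi(v_1,\ldots,v_l)$ is only known to be $\Xi_{\mathcal{E}(\theta_0')}$ for whatever reference point was used to define it, and without (M3) the gradient range could shift with the base point). Everything else follows directly from the parametrizations of Subsections~\ref{S2-1}--\ref{S2-2} and the characterizations in Lemmas~\ref{LA2} and \ref{Th6}.
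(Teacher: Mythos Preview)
Your proof is correct and is exactly the natural argument: rewrite the constraint defining $\mathcal{M}(a_1,\ldots,a_l)$ in the subfamily mixture coordinates of $\mathcal{E}(\theta_0)$ via the chain rule $\partial_j F_{\mathcal{E}(\theta_0)}(\bar\theta)=\sum_i v^i_j\,\partial_i F(\phi^{(e)}_{\mathcal{E}(\theta_0)}(\bar\theta))$, observe that the intersection $\mathcal{E}(\theta_0)\cap\mathcal{M}(a_1,\ldots,a_l)$ is nonempty precisely when $(a_1,\ldots,a_l)\in\Xi_{\mathcal{E}(\theta_0)}$, and then invoke (M3) to identify $\Xi_{\mathcal{E}(\theta_0)}$ with $\Xi(v_1,\ldots,v_l)$ independently of the base point $\theta_0$. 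The paper itself does not prove Lemma~\ref{Lem7} here---it defers all proofs in Section~\ref{Section.IG-structure} to the companion paper \cite{em-only}---so there is no in-text argument to compare against, but your construction is the expected one given the characterizations already set up in Lemmas~\ref{LA2} and~\ref{Th6}.
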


\begin{lemma}\Label{Lem8}
Suppose that 
the $l$ linearly independent vectors $v_1,\ldots, v_l \in \mathbb{R}^d$
satisfy Condition (E3).
Then, for $ (b^1, \ldots, b^{d-l}) \in \mathbb{R}^{d-l}$ and
$\theta_0 \in \Theta$,
the projected point $\Pro^{(m),F}_{\mathcal{E}(b^1, \ldots, b^{d-l})} (\theta_{0}) $ exists
unless $\mathcal{E}(b^1, \ldots, b^{d-l})$ is empty
where the exponential family $\mathcal{E}(b^1, \ldots, b^{d-l})$ is defined
as $\{  (\sum_{i=1}^{d-l} u_i^j b^i+ \sum_{i=1}^{l} u_i^j \theta^i)_{j=1}^d |
(\theta^1,\ldots,\theta^{l}) \in \mathbb{R}^{l} 
\}\cap \Theta$. 
\end{lemma}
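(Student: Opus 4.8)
The plan is to identify the $m$-projection of $\theta_0$ onto $\mathcal{E}(b^1,\ldots,b^{d-l})$ through characterization (E2) of Lemma~\ref{LA1}, and then to produce the point of the required intersection by invoking Condition (E3) twice. Fix once and for all the $d\times d$ matrix $U=(u_1\cdots u_d)$ whose last $l$ columns are $v_1,\ldots,v_l$ and whose first $d-l$ columns are the complement used to define $\mathcal{E}(\cdot)$; then $\mathcal{E}(b^1,\ldots,b^{d-l})=\{\theta\in\Theta\mid (U^{-1}\theta)^i=b^i,\ i=1,\ldots,d-l\}$, and $\mathcal{E}(b^1,\ldots,b^{d-l})$ is the exponential family generated by $v_1,\ldots,v_l$ at the base point $\sum_{i=1}^{d-l}b^i u_i$.

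Suppose $\mathcal{E}(b^1,\ldots,b^{d-l})\neq\emptyset$ and fix $\theta_0\in\Theta$. Put $a_j:=\sum_{i=1}^d v_j^i\,\partial_iF(\theta_0)$ for $j=1,\ldots,l$. By the definition of $\mathcal{M}_{\theta_0\to\mathcal{E}}$ following Lemma~\ref{LA1}, the mixture family $\mathcal{M}_{\theta_0\to\mathcal{E}(b^1,\ldots,b^{d-l})}$ is exactly the family $\mathcal{M}(a_1,\ldots,a_l)$ of Condition (E3); since $\theta_0$ satisfies its own defining constraints, $\theta_0\in\mathcal{M}(a_1,\ldots,a_l)$, so $\mathcal{M}(a_1,\ldots,a_l)\neq\emptyset$ and Condition (E3) yields $\Theta_{\mathcal{M}(a_1,\ldots,a_l)}=\Theta(v_1,\ldots,v_l)$. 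By \eqref{const1-U}, $\Theta_{\mathcal{M}(a_1,\ldots,a_l)}$ is precisely the set of tuples $((U^{-1}\theta)^i)_{i=1}^{d-l}$ realised by $\theta\in\mathcal{M}(a_1,\ldots,a_l)$. Hence, once we know that $(b^1,\ldots,b^{d-l})\in\Theta(v_1,\ldots,v_l)$, there is $\theta^*\in\mathcal{M}(a_1,\ldots,a_l)$ with $(U^{-1}\theta^*)^i=b^i$ for $i=1,\ldots,d-l$, i.e. $\theta^*\in\mathcal{M}_{\theta_0\to\mathcal{E}(b^1,\ldots,b^{d-l})}\cap\mathcal{E}(b^1,\ldots,b^{d-l})$; then (E2)$\Rightarrow$(E1) in Lemma~\ref{LA1}, together with its uniqueness clause, shows that $\Pro^{(m),F}_{\mathcal{E}(b^1,\ldots,b^{d-l})}(\theta_0)=\theta^*$ exists.

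It remains to check $(b^1,\ldots,b^{d-l})\in\Theta(v_1,\ldots,v_l)$, and this is where the non-emptiness hypothesis enters. Choose $\theta_1\in\mathcal{E}(b^1,\ldots,b^{d-l})$ and put $a'_j:=\sum_{i=1}^d v_j^i\,\partial_iF(\theta_1)$. Then $\theta_1\in\mathcal{M}(a'_1,\ldots,a'_l)$, so this mixture family is non-empty and Condition (E3) again gives $\Theta_{\mathcal{M}(a'_1,\ldots,a'_l)}=\Theta(v_1,\ldots,v_l)$. But $(b^1,\ldots,b^{d-l})=((U^{-1}\theta_1)^i)_{i=1}^{d-l}\in\Theta_{\mathcal{M}(a'_1,\ldots,a'_l)}$ by the very description of that set, hence $(b^1,\ldots,b^{d-l})\in\Theta(v_1,\ldots,v_l)$, which combined with the previous paragraph completes the argument.

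The main obstacle is purely organisational: one must recognise that the mixture family $\mathcal{M}_{\theta_0\to\mathcal{E}(b)}$ associated to the $m$-projection by (E2) is literally one of the families $\mathcal{M}(a_1,\ldots,a_l)$ governed by Condition (E3) — equivalently, that $\mathcal{E}(b)$ is the exponential family generated by the same vectors $v_1,\ldots,v_l$ that occur in (E3) — and that the non-emptiness of $\mathcal{E}(b)$ is exactly the additional input needed to place $(b^1,\ldots,b^{d-l})$ in the ($a$-independent) set $\Theta(v_1,\ldots,v_l)$. Everything else is a direct reading of the $U^{-1}$-parametrization and of Lemma~\ref{LA1}. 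One may also note that this argument is the mirror image of the proof of Lemma~\ref{Lem7} under the interchange of natural and mixture parameters, i.e. under passage to the Legendre-dual system $(\Xi,F^*,D^{F^*})$ via \eqref{XI1}, so this lemma could alternatively be obtained from Lemma~\ref{Lem7} by duality.
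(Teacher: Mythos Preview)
The paper does not actually contain a proof of Lemma~\ref{Lem8}: all proofs in Section~\ref{Section.IG-structure} are explicitly omitted and deferred to the companion paper \cite{em-only}. So there is no in-paper proof to compare against.

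That said, your argument is correct and is exactly the natural one. You correctly identify $\mathcal{M}_{\theta_0\to\mathcal{E}(b)}$ with the mixture family $\mathcal{M}(a_1,\ldots,a_l)$ appearing in Condition (E3), use (E3) once at the pair $(a)$ and once at $(a')$ to pin $(b^1,\ldots,b^{d-l})$ inside the $a$-independent set $\Theta(v_1,\ldots,v_l)=\Theta_{\mathcal{M}(a)}$, and then close with (E2)$\Rightarrow$(E1) from Lemma~\ref{LA1}. The indexing (with $k=d-l$ in the notation of Subsection~\ref{S2-2} and $u_{d-l+j}=v_j$) is handled consistently, and the role of the non-emptiness hypothesis on $\mathcal{E}(b)$ is exactly as you describe. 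Your closing remark that this is the Legendre-dual version of Lemma~\ref{Lem7} via \eqref{XI1} is also the right structural observation, and is almost certainly how the omitted proof in \cite{em-only} proceeds.
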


Therefore, to consider the existence of both types of projections universally, we introduce the following conditions for 
the Bregman divergence system $(\Theta,F,D^F)$.
\begin{description}
\item[(M4)]
Any $l$ linearly independent vectors $v_1,\ldots, v_l \in \mathbb{R}^d$
satisfy the condition (M3) for $l=1, \ldots, d-1$.
\item[(E4)]
Any $l$ linearly independent vectors $v_1,\ldots, v_l \in \mathbb{R}^d$
satisfy the condition (E3) for $l=1, \ldots, d-1$.
\end{description}
When (M4) holds, the $e$-projection 
$\Pro^{(e),F}_{\mathcal{M}}$ 
can be defined for any mixture subfamily $\mathcal{M}$.
Also, 
when (E4) holds, the $m$-projection 
$\Pro^{(m),F}_{\mathcal{E}}$ 
can be defined for any exponential subfamily $\mathcal{E}$.
Therefore, these two conditions are helpful for the analysis of these projections.

\begin{table*}[htb]
\begin{center}
\caption{Summary of dimensions}
\begin{tabular}{|c|c|} \hline
Symbol & Space \\
\hline 
$d$ & Dimension of the whole space \\ 
\hline 
$l$ & Dimension of Exponential family ${\cal E}$ \\
\hline 
$k$ & Dimension of Mixture family ${\cal M}$ \\
\hline 
\end{tabular}
\Label{table1}
\end{center}
\end{table*}

\subsection{Evaluation of Bregman divergence without Pythagorean theorem}
Next, we evaluate Bregman divergence when we cannot use the Pythagorean theorem.
For this aim, we focus on $J(\theta)^{-1}$, i.e.,
the inverse of the Hesse matrix $J(\theta$) defined for the parameters of $\Theta$.
Then, we introduce the quantity
$\gamma(\hat{\Theta} |{\Theta})$
for a subset $\hat{\Theta}$ of $\Theta$.
\begin{align}
\gamma(\hat{\Theta}|{\Theta}):= &
\inf\{\gamma|
\gamma J(\theta_1)^{-1}\ge J(\theta_2)^{-1}
\hbox{ for }\theta_1,\theta_2 \in \hat{\Theta} \} .
\end{align}
We say that a subset $\hat{\Theta}$ of $\Theta$
is a {\it star subset} for an element $\theta_1 \in \hat{\Theta}$
when 
$\lambda \eta(\theta)+(1-\lambda)\eta(\theta_1) \in \eta(\hat{\Theta})$ for 
$\theta \in \hat{\Theta}$ and $\lambda \in (0,1)$.

Then, we have the following theorem.
\begin{theorem}\Label{XAM} 
We assume that the condition (M4) holds.
Then, for 
a star subset with $\hat{\Theta}$ for $ \theta_1 \in \hat{\Theta}$,
$\theta_2 \in \hat{\Theta}$, and $\theta_3 \in \Theta$, we have
\begin{align}
& D^F(\theta_1\|\theta_2)
\nonumber  \\
\le & 
D^F(\theta_1\|\theta_3)+\gamma (\hat{\Theta}|{\Theta}) D^F(\theta_2\|\theta_3)+ 
2 \gamma (\hat{\Theta}|{\Theta})\sqrt{ D^F(\theta_1\|\theta_3)D^F(\theta_2\|\theta_3)}.
\Label{BLT}
\end{align}
\end{theorem}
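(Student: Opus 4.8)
The plan is to pass to the mixture (dual) parametrization, isolate $D^F(\theta_1\|\theta_3)$ by an exact algebraic identity, and control the remaining inner product by a Cauchy--Schwarz estimate that turns the hypothesis on $J(\theta)^{-1}$ into inequalities between Bregman divergences through the integral formula \eqref{KPOT}.

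First I would dualize. By \eqref{XI1} one has $D^F(\theta_i\|\theta_j)=D^{F^*}(\eta(\theta_j)\|\eta(\theta_i))$, and by \eqref{du2} the Hessian of $F^*$ at $\eta=\eta(\theta)$ equals $J(\theta)^{-1}$; write $J^*(\eta):=J(\theta)^{-1}$. In these terms $\gamma:=\gamma(\hat\Theta\,|\,\Theta)$ is a uniform comparison constant for the quadratic forms $J^*$, the star-subset hypothesis says that $\eta(\hat\Theta)$ is star-shaped about $\eta(\theta_1)$ so that the segment $[\eta(\theta_1),\eta(\theta_2)]$ lies in $\eta(\hat\Theta)$, and condition (M4) guarantees that the straight segments used below lie in the relevant domains so that \eqref{KPOT} applies. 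I also record that $\gamma\ge1$, by taking $\theta_1=\theta_2$ in the defining inequality of $\gamma$.

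Next I would use the exact identity obtained by expanding the definition \eqref{XZL},
\[D^F(\theta_1\|\theta_2)=D^F(\theta_1\|\theta_3)-D^F(\theta_2\|\theta_3)+\langle\,\eta(\theta_2)-\eta(\theta_1),\ \theta_2-\theta_3\,\rangle,\]
which already exhibits $D^F(\theta_1\|\theta_3)$ with coefficient $1$; it then remains to bound the inner product on the right by $(1+\gamma)D^F(\theta_2\|\theta_3)+2\gamma\sqrt{D^F(\theta_1\|\theta_3)D^F(\theta_2\|\theta_3)}$. For this I would apply Cauchy--Schwarz for the positive-definite form $J(\theta_2)$, so that the inner product is at most $\langle\theta_2-\theta_3,J(\theta_2)(\theta_2-\theta_3)\rangle^{1/2}$ times $\langle\eta(\theta_2)-\eta(\theta_1),J^*(\eta(\theta_2))(\eta(\theta_2)-\eta(\theta_1))\rangle^{1/2}$. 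For the first factor, \eqref{KPOT} writes $D^F(\theta_2\|\theta_3)=\int_0^1\langle\theta_2-\theta_3,J(\theta_3+t(\theta_2-\theta_3))(\theta_2-\theta_3)\rangle\,t\,dt$, and since $\theta_2\in\hat\Theta$ the defining property of $\gamma$ gives $J(\theta_2)\le\gamma\,J(\theta_3+t(\theta_2-\theta_3))$ pointwise in $t$, whence $\langle\theta_2-\theta_3,J(\theta_2)(\theta_2-\theta_3)\rangle\le2\gamma\,D^F(\theta_2\|\theta_3)$. For the second factor I would split $\eta(\theta_2)-\eta(\theta_1)=(\eta(\theta_2)-\eta(\theta_3))+(\eta(\theta_3)-\eta(\theta_1))$, apply the triangle inequality for the norm attached to $J^*(\eta(\theta_2))$, and bound each summand by the same device using $D^{F^*}(\eta(\theta_3)\|\eta(\theta_2))=D^F(\theta_2\|\theta_3)$ and $D^{F^*}(\eta(\theta_3)\|\eta(\theta_1))=D^F(\theta_1\|\theta_3)$ together with $\eta(\theta_2)\in\eta(\hat\Theta)$, obtaining the bound $\sqrt{2\gamma D^F(\theta_2\|\theta_3)}+\sqrt{2\gamma D^F(\theta_1\|\theta_3)}$. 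Multiplying the two factors and using $\gamma\ge1$ to collapse the constants then yields the asserted inequality.

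The step I expect to be the main obstacle is this last estimate. The formula \eqref{KPOT} represents $D^F(\theta_2\|\theta_3)$ and $D^F(\theta_1\|\theta_3)$ through the Hessian along the segments $[\theta_3,\theta_2]$ and $[\theta_3,\theta_1]$ (equivalently, along mixture geodesics in the dual picture), and these segments in general leave $\hat\Theta$, so the comparison encoded in $\gamma$ can only be invoked after anchoring at a point of $\hat\Theta$ --- here $\theta_2$ --- and with the anchoring done at a single point the bookkeeping tends to produce the looser constants $(1,\,2\gamma,\,2\gamma)$ or $(\gamma,\,\gamma,\,2\gamma)$ rather than the stated $(1,\,\gamma,\,2\gamma)$. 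Tightening the argument so that the coefficient of $D^F(\theta_2\|\theta_3)$ comes out as exactly $\gamma$, presumably by combining the exact identity with a more careful, path-by-path estimate rather than a single anchor point, is the part that will require the most care.
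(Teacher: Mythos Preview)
The paper itself defers the proof of this theorem to the companion paper \cite{em-only}, so a direct comparison with the paper's argument is not possible. Your overall architecture---the exact three-point identity
\[
D^F(\theta_1\|\theta_2)=D^F(\theta_1\|\theta_3)-D^F(\theta_2\|\theta_3)+\langle\eta(\theta_2)-\eta(\theta_1),\theta_2-\theta_3\rangle
\]
followed by a Cauchy--Schwarz / integral-formula estimate---is a natural line of attack, and the identity is correct.

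However, the gap you flag in your last paragraph is more serious than a matter of constants: as written, the argument does not produce \emph{any} bound depending only on $\gamma(\hat\Theta|\Theta)$. The step ``since $\theta_2\in\hat\Theta$ the defining property of $\gamma$ gives $J(\theta_2)\le\gamma J(\theta_3+t(\theta_2-\theta_3))$'' is simply unjustified: the definition of $\gamma$ compares $J^{-1}$ only at pairs of points of $\hat\Theta$, and the $\theta$-segment from $\theta_3$ to $\theta_2$ need not meet $\hat\Theta$ except at its endpoint. The same obstruction kills the ``second factor'' estimate after the triangle inequality: bounding $\|\eta_2-\eta_3\|_{J^*(\eta_2)}^2$ by $2\gamma D^{F^*}(\eta_3\|\eta_2)$ requires $J^*(\eta_2)\le\gamma J^*(\eta_2+t(\eta_3-\eta_2))$ along a segment that again leaves $\eta(\hat\Theta)$. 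Your proposed ``anchoring at a single point'' does not rescue this, because one still needs a \emph{lower} bound on $J^*$ (equivalently $J$) along a segment that exits $\hat\Theta$, and $\gamma$ gives you nothing there. Note that the star-subset hypothesis is stated in the mixture parameter and only about segments emanating from $\eta(\theta_1)$; the sole segment you are guaranteed to control via $\gamma$ is the $\eta$-segment from $\eta(\theta_1)$ to $\eta(\theta_2)$, which appears in the dual integral for $D^F(\theta_1\|\theta_2)$ itself. A workable proof has to be organised so that every Hessian comparison is made along that segment (or otherwise inside $\eta(\hat\Theta)$), and must make essential use of (M4)---which in your sketch plays no real role beyond a vague ``so that \eqref{KPOT} applies.'' You should revisit how (M4), through the existence of $e$-projections onto mixture subfamilies, lets you replace $\theta_3$ by a point on the mixture geodesic through $\theta_1,\theta_2$ before any Hessian comparison is attempted.
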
 

\section{Examples of Bregman divergence}\Label{S3}
\subsection{Classical system}\Label{4A}
We consider the set of probability distributions on the finite set ${\cal X}=\{1, \ldots, n\}$.
We focus on $d$ linearly independent functions $f_1, \ldots, f_{d}$ defined on ${\cal X}$,
where the linear space spanned by $f_1, \ldots, f_{d}$ does not contain a constant function and $d \le n-1$.
Then, the $C^{\infty}$ strictly convex function 
$\mu$ on $\mathbb{R}^{d}$ is defined as
$\mu(\theta):= 
\log \big(\sum_{x \in {\cal X}} \exp (  \sum_{j=1}^{d} \theta^j f_j(x)  )\big)$, 
which yields
the Bregman divergence system $(\mathbb{R}^{d}, \mu, D^\mu)$. 
When $d=n-1$,
any probability distribution with full support on ${\cal X}$
can be written as $P_\theta$, which is defined as
$P_\theta(x):= 
\exp \Big(  (\sum_{j=1}^{n-1} \theta^j f_j(x)  ) - \mu(\theta) \Big)$.
It is known that the KL divergence 
equals the Bregman divergence of the potential function $\mu$ \cite[Section 3.4]{Amari-Nagaoka}, 
i.e., we have
\begin{align}
D^{\mu}(\theta\|\theta')= D(P_\theta\|P_{\theta'})
\Label{MGA}
\end{align}
for $\theta \in \mathbb{R}^{d}$, where
the KL divergence $D(q\| p)$ is defined as
\begin{align}
D(q\| p)=\sum_{\omega} p(\omega)
(\log p(\omega) - \log q(\omega)).
\end{align}

When the parameter $\theta$ is limited to $(\bar\theta,\underbrace{0, \ldots, 0}_{d-l})$ with 
$\bar\theta \in \mathbb{R}^l$,
the set of distributions $P_\theta$ forms an exponential subfamily.
Also, when the linear space spanned by $d-k$ linearly independent functions $g_1, \ldots, g_{d-k}$
does not contain a constant function,
for $d-k$ constants $a_1, \ldots, a_{d-k}$, 
the following set of distributions forms a mixture subfamily; 
\begin{align}
\Big\{P_\theta \Big|
\sum_{x \in {\cal X}}g_i(x) P_\theta(x)=a_i \hbox{ for } i=1, \ldots, d-k \Big\}.
\end{align}

\begin{example}
When $\X$ is given as $\X_1 \times \X_2$ with $n_i=|\X_i|$,
the set of distributions with full support on $\X$ forms 
a Bregman divergence system $(\mathbb{R}^{d}, \mu, D^\mu)$. 
When
$f_i$ is a function on $\X_1$ or $\X_2$ with  $i=1, \ldots, n_1+n_2-2$,
and they are linearly independent,
the exponential subfamily generated by $f_1, \ldots,  f_{n_1+n_2-2}$ forms 
the set ${\cal P}_{\X_1}\times {\cal P}_{\X_2}$ of independent distributions on  $\X_1 \times \X_2$.
\end{example}

\begin{example}
When $\X$ is given as $\X_1 \times \X_2\times \X_3$ with $n_i=|\X_i|$,
the set of distributions with full support on $\X$ forms 
a Bregman divergence system $(\mathbb{R}^{d}, \mu, D^\mu)$. 
When 
$f_i$ is a function on $\X_1,\X_2$ or $\X_2,\X_3$ with $i=1, \ldots, n_2 (n_1+n_3-1)-1$,
and they are linearly independent,
the exponential subfamily generated by 
$f_1, \ldots, f_{n_2 (n_1+n_3-1)-1}$ forms the set  
${\cal P}_{X_1-X_2-X_3}$ of distributions on 
$\X_1 \times \X_2 \times X_3$ to satisfy the Markovian condition $X_1-X_2-X_3$.
\end{example}

For the possibility of the projection, we have the following lemma.
For its proof, see \cite{em-only}.
\begin{lemma}\Label{LOS}
The Bregman divergence system $(\mathbb{R}^{d}, \mu, D^\mu)$ defined in this subsection satisfies
the conditions (E4) and (M4).
\end{lemma}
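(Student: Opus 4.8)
The plan is to translate everything back to probability distributions and reduce both conditions to one standard fact about minimal exponential families on the finite set $\mathcal{X}$: the set of attainable mean (mixture) parameters is the interior of the convex hull of the values of the sufficient statistics, and this set depends only on the support of the reference distribution. Fix linearly independent $v_1,\ldots,v_l\in\mathbb{R}^d$ and put $g_j:=\sum_{i=1}^d v_j^i f_i$. Because $f_1,\ldots,f_d$ are linearly independent and their span contains no nonzero constant, the same holds for $g_1,\ldots,g_l$, so $1,g_1,\ldots,g_l$ are linearly independent and every exponential subfamily generated by $v_1,\ldots,v_l$ is minimal. A direct computation with $\mu$ gives $P_{\theta_0+\sum_j\bar\theta^j v_j}(x)\propto P_{\theta_0}(x)\exp(\sum_j\bar\theta^j g_j(x))$, so the exponential subfamily $\mathcal{E}(\theta_0)$ generated by $v_1,\ldots,v_l$ at $\theta_0$ is exactly the exponential family of distributions with base point $P_{\theta_0}$ and sufficient statistics $g_1,\ldots,g_l$, with $\partial_j F_{\mathcal{E}(\theta_0)}(\bar\theta)=\mathbb{E}_Q[g_j]$ for the corresponding tilted distribution $Q$; dually, the mixture subfamily cut out by $\sum_i v_j^i\partial_i\mu(\theta)=a_j$ is $\{P_\theta:\mathbb{E}_{P_\theta}[g_j]=a_j,\ j=1,\ldots,l\}$.

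For (M4): by the above, $\Xi_{\mathcal{E}(\theta_0)}=\{(\mathbb{E}_Q[g_j])_{j=1}^l:Q\in\mathcal{E}(\theta_0)\}$. Since the family is minimal on a finite alphabet, its log-partition function is steep, so this set equals the interior of $\mathrm{conv}\{(g_j(x))_{j=1}^l:x\in\mathcal{X}\}$. Every $P_{\theta_0}$ has full support $\mathcal{X}$, so this polytope, hence $\Xi_{\mathcal{E}(\theta_0)}$, is the same for all $\theta_0$; this is (M3), and letting $v_1,\ldots,v_l$ and $l$ vary gives (M4).

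For (E4): fix a matrix $U$ whose last $l$ columns are $v_1,\ldots,v_l$ and write $k=d-l$. The linear map $\theta\mapsto((U^{-1}\theta)^i)_{i=1}^k$ has kernel $\mathrm{span}(v_1,\ldots,v_l)$, so it identifies $\Theta_{\mathcal{M}(a)}$ with the image of $\mathcal{M}(a)$ under the quotient $\pi\colon\mathbb{R}^d\to\mathbb{R}^d/\mathrm{span}(v_1,\ldots,v_l)$ via an isomorphism independent of $a$. Thus it suffices to prove $\pi(\mathcal{M}(a))=\pi(\mathcal{M}(a'))$ whenever both are non-empty. Given $P_\theta\in\mathcal{M}(a)$ and $a'$ with $\mathcal{M}(a')\neq\emptyset$, non-emptiness forces $a'$ to be an attainable $g$-mean of some full-support distribution, i.e.\ $a'\in\mathrm{int}\,\mathrm{conv}\{(g_j(x))_j:x\in\mathcal{X}\}$, which by the argument for (M4) is exactly the mean-parameter space of the minimal exponential family based at $P_\theta$ with sufficient statistics $g_1,\ldots,g_l$. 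Hence there is $\tau\in\mathbb{R}^l$ with $Q(x)\propto P_\theta(x)\exp(\sum_j\tau^j g_j(x))$ satisfying $\mathbb{E}_Q[g_j]=a'_j$, and $\theta':=\theta+\sum_j\tau^j v_j$ lies in $\mathcal{M}(a')$ with $\pi(\theta')=\pi(\theta)$. By symmetry $\pi(\mathcal{M}(a))=\pi(\mathcal{M}(a'))$, so $\Theta_{\mathcal{M}(a)}$ is independent of $a$ whenever $\mathcal{M}(a)$ is non-empty, which is (E3); varying $v_1,\ldots,v_l$ and $l$ yields (E4).

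The only genuinely analytic input is the characterization of the mean-parameter space of a minimal finite-alphabet exponential family as the interior of the marginal polytope, together with its insensitivity to the (full-support) reference measure; the rest is bookkeeping, the main care being (i) to match the abstract parameter sets $\Xi_{\mathcal{E}}$ and $\Theta_{\mathcal{M}}$ of Section \ref{Section.IG-structure} with the mean-parameter polytope and its image under $\pi$, and (ii) to verify that minimality, i.e.\ linear independence of $1,g_1,\ldots,g_l$, is inherited from $f_1,\ldots,f_d$.
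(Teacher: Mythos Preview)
Your argument is correct. The paper itself does not prove Lemma~\ref{LOS}; it simply refers to the companion paper \cite{em-only} for the proof, so there is no in-paper argument to compare against. What you have written is the natural approach: identify $\Xi_{\mathcal{E}(\theta_0)}$ and (the relevant projection of) $\mathcal{M}(a)$ with the mean-parameter polytope of a minimal exponential family on a finite alphabet, and then use that the image of the moment map is $\interior\operatorname{conv}\{(g_j(x))_{j=1}^l:x\in\mathcal{X}\}$, which depends only on the support of the base distribution.

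Two small remarks on presentation, neither affecting correctness:
\begin{itemize}
\item Calling the family ``steep'' is slightly off: on a finite alphabet the natural parameter space is all of $\mathbb{R}^l$, so steepness is vacuous. The fact you are actually using is that for a minimal regular exponential family the gradient of the log-partition is a bijection onto the interior of the marginal polytope; this is standard (e.g.\ Barndorff-Nielsen, Brown, or Wainwright--Jordan).
\item In the step for (E4) you implicitly use that whenever $\mathcal{M}(a')\neq\emptyset$ the value $a'$ lies in the \emph{interior} of the polytope. This is fine because any $P_{\theta'}$ has full support, so $a'=\sum_x P_{\theta'}(x)\,g(x)$ is a strictly positive convex combination of \emph{all} the generators of a full-dimensional polytope (full-dimensionality coming from minimality), hence interior. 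It is worth making that one-line justification explicit, since the argument does not require $d=n-1$.
\end{itemize}
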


\subsection{Quantum system}\Label{4B}
In the quantum system, we focus on the $n$-dimensional Hilbert space ${\cal H}$ \cite{hayashi}.
We choose $d$ linearly independent Hermitian matrices $X_1, \ldots, X_{d}$ on ${\cal H}$,
where the linear space spanned by $X_1, \ldots, X_{d}$ does not contain the identity matrix.
Then, we define the $C^{\infty}$ strictly convex function 
$\mu$ on $\mathbb{R}^{d}$ as
$\mu(\theta):= \log (\Tr \exp (  \sum_{j=1}^{d} \theta^j X_j ) )$.
A quantum state on ${\cal H}$ is given as
a positive semi-definite Hermitian matrix $\rho$ with the condition $\Tr \rho=1$, which is called 
a density matrix.
We denote the set of density matrices by ${\cal S}({\cal H})$.
Any density matrix with full support on ${\cal H}$
can be written as $\rho_\theta$, which is defined as
$\rho_\theta := 
\exp \Big(  (\sum_{j=1}^{d} \theta^j X_j  ) - \mu(\theta) \Big)$.
It is known that the relative entropy 
equals the Bregman divergence of the potential function $\mu$ \cite[Section 7.2]{Amari-Nagaoka}, 
i.e., we have
\begin{align}
D^{\mu}(\theta\|\theta')= D(\rho_\theta\|\rho_{\theta'})
\Label{MGA2}
\end{align}
for $\theta \in \mathbb{R}^{d}$, where
the relative entropy $D(\rho\| \rho')$ is defined as
\begin{align}
D(\rho\| \rho')=\Tr \rho (\log \rho - \log \rho').
\end{align}

When the parameter $\theta$ is limited to 
$(\bar\theta,\underbrace{0, \ldots, 0}_{d-l})$ with 
$\bar\theta \in \mathbb{R}^l$,
the set of distributions $\rho_\theta$ forms an exponential family.
Also, when the linear space spanned by $d-k$ linearly independent Hermitian matrices
$Y_1, \ldots, Y_{d-k}$
does not contain a constant function,
for $d-k$ constants $a_1, \ldots, a_{d-k}$, 
the following set of distributions forms a mixture family; 
\begin{align}
\Big\{\rho_\theta \Big|
\Tr Y_i \rho_\theta=a_i \hbox{ for } i=1, \ldots, d-k \Big\}.
\end{align}

For the possibility of the projection, we have the following lemma.
For its proof, see \cite{em-only}.
\begin{lemma}\Label{LOS3}
The Bregman divergence system $(\mathbb{R}^{d}, \mu, D^\mu)$ defined in this section satisfies
the conditions (E4) and (M4).
\end{lemma}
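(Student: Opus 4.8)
The plan is to deduce both (M4) and (E4) from a single structural fact about quantum exponential families, namely that the set of attainable mixture (expectation) parameters of an exponential subfamily depends only on the generating Hermitian matrices, not on the base point. To set this up, for linearly independent Hermitian matrices $Z_1,\dots,Z_m$ on ${\cal H}$ with $\mathrm{span}\{Z_j\}\not\ni I$ and an arbitrary Hermitian matrix $B$, I would put $F_B(\theta):=\log\Tr\exp(B+\sum_j\theta^jZ_j)$ on $\mathbb{R}^m$. One checks $F_B$ is $C^\infty$ and strictly convex (its Hessian is the Kubo--Mori/Bogoliubov metric, positive definite since $Z_1,\dots,Z_m,I$ are linearly independent) with $\partial_jF_B(\theta)=\Tr Z_j\rho_\theta$, where $\rho_\theta=\exp(B+\sum_j\theta^jZ_j)/\Tr(\cdot)$; hence the mixture-parameter range of the corresponding exponential subfamily is exactly $\nabla F_B(\mathbb{R}^m)$.

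The heart of the argument is the identity
\begin{align}
\nabla F_B(\mathbb{R}^m)=\interior N,\qquad N:=\{(\Tr Z_1\sigma,\dots,\Tr Z_m\sigma)\mid\sigma\in{\cal S}({\cal H})\},
\end{align}
which is visibly independent of $B$ (and $N$ is a compact convex body, full-dimensional because $Z_1,\dots,Z_m,I$ are linearly independent). By Legendre duality for smooth strictly convex functions, $\nabla F_B$ is a bijection of $\mathbb{R}^m$ onto $\interior(\mathrm{dom}\,F_B^*)$, so it suffices to prove the sandwich $\interior N\subseteq\mathrm{dom}\,F_B^*\subseteq N$. For the right inclusion: if $\bar\eta\notin N$, separate it from the compact convex set $N$ to get $w\neq0$ with $\langle w,\bar\eta\rangle>\max_\sigma\Tr(\sum_j w^jZ_j)\sigma=\lambda_{\max}(\sum_j w^jZ_j)$; evaluating $F_B$ along the ray $\theta=tw$ and using that $\exp(B)$ has full support (so its overlap with the top eigenspace of $\sum_j w^jZ_j$ is strictly positive), one finds $F_B(tw)=t\lambda_{\max}(\sum_j w^jZ_j)+O(1)$, hence $\langle\bar\eta,tw\rangle-F_B(tw)\to+\infty$ and $F_B^*(\bar\eta)=+\infty$. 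For the left inclusion one invokes the existence of the quantum maximum-entropy state with prescribed expectations (the same compactness/coercivity argument as in the classical case, Lemma~\ref{LOS}), which makes $F_B^*$ finite on $\interior N$.

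Granting the identity, (M4) follows: given linearly independent $v_1,\dots,v_l\in\mathbb{R}^d$, set $Y_j=\sum_i v_j^iX_i$ and $A_0=\sum_i\theta_0^iX_i$; then $F_{\mathcal{E}(\theta_0)}$ of \eqref{CXO} equals $F_{A_0}$ with $Z_j=Y_j$ and $m=l$, so $\Xi_{\mathcal{E}(\theta_0)}=\nabla F_{A_0}(\mathbb{R}^l)=\interior\{(\Tr Y_j\sigma)_j\}$ is independent of $\theta_0$, which is precisely (M3); as $l$ and the $v_j$ were arbitrary, (M4) holds. For (E4), with $u_{k+j}=v_j$ ($k=d-l$) and $W_j=\sum_iu_j^iX_i$ as in Subsection~\ref{S2-2}, formula \eqref{const1-U} exhibits $\Theta_{\mathcal{M}(a)}$ as the set of $(\theta^1,\dots,\theta^k)$ for which $(a_1,\dots,a_{d-k})$ lies in the attainable expectation-parameter range of the exponential subfamily generated by $W_{k+1},\dots,W_d$ at the point with Hermitian matrix $B=\sum_{j\le k}\theta^jW_j$; by the identity this range is $\interior\{(\Tr W_{k+j}\sigma)_j\}$, independent of $B$ and hence of $(\theta^1,\dots,\theta^k)$. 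Therefore $\Theta_{\mathcal{M}(a)}$ is all of $\mathbb{R}^k$ when that interior contains $a$ (equivalently, when $\mathcal{M}(a)\neq\emptyset$) and empty otherwise; in particular it does not depend on $a$ whenever $\mathcal{M}(a)$ is nonempty, which is (E3), and (E4) follows. The main obstacle is the sandwich $\interior N\subseteq\mathrm{dom}\,F_B^*\subseteq N$: the upper inclusion rests on the recession-slope computation together with the observation that $\exp(B)$ is automatically full rank for Hermitian $B$ (so the base point never degenerates), and the lower inclusion on the quantum maximum-entropy existence result; the remainder is bookkeeping parallel to the classical Lemma~\ref{LOS}.
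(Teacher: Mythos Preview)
Your argument is sound and self-contained; the paper itself does not prove Lemma~\ref{LOS3} but simply defers to the companion paper~\cite{em-only}. Your route via the base-point independence of the attainable expectation set, namely $\nabla F_B(\mathbb{R}^m)=\interior N$ with $N=\{(\Tr Z_j\sigma)_j:\sigma\in\mathcal{S}(\mathcal{H})\}$, is the natural structural fact, and the sandwich $\interior N\subseteq\mathrm{dom}\,F_B^*\subseteq N$ together with the Legendre-type bijection $\nabla F_B:\mathbb{R}^m\to\interior(\mathrm{dom}\,F_B^*)$ (Rockafellar, Theorem~26.5) cleanly yields (M3) and (E3) for arbitrary generator systems, hence (M4) and (E4).

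Two points are worth making explicit. First, the recession estimate $F_B(tw)=t\lambda_{\max}(W)+O(1)$ for $W=\sum_jw^jZ_j$: the lower bound follows from the Peierls--Bogoliubov inequality $\Tr e^{B+tW}\ge e^{\langle\psi|(B+tW)|\psi\rangle}$ with $\psi$ a top eigenvector of $W$, and the upper bound from Golden--Thompson $\Tr e^{B+tW}\le\Tr e^Be^{tW}$; both use that $e^B>0$, which you correctly flag. Second, the left inclusion is most directly obtained by coercivity: for $\bar\eta\in\interior N$, compactness of the unit sphere gives $\langle\bar\eta,w\rangle\le\lambda_{\max}\bigl(\sum_jw^jZ_j\bigr)-\epsilon$ uniformly in unit $w$, whence $\langle\bar\eta,\theta\rangle-F_B(\theta)\le-\epsilon\|\theta\|+O(1)$ and $F_B^*(\bar\eta)<\infty$. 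This is exactly the ``quantum maximum-entropy existence'' you invoke; phrasing it as coercivity makes clear there is no circularity. Finally, in the (E4) step you should record that the completion $u_1,\dots,u_k$ of $v_1,\dots,v_l$ to a basis (required to define $\Theta_{\mathcal{M}(a)}$ via \eqref{const1-U}) yields $W_1,\dots,W_d$ with $\mathrm{span}\{W_j\}=\mathrm{span}\{X_i\}\not\ni I$, so that $W_{k+1},\dots,W_d,I$ are indeed linearly independent and your key identity applies.
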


\section{Reverse em-problem}\Label{Sec:BBem}
\subsection{General formulation}\Label{S4-1}
In this section, we address a maximization problem for
a pair of a $k$-dimensional mixture subfamily $\mathcal{M}$ and an $l$-dimensional  exponential subfamily $\mathcal{E}$.
Similar to Section IV of \cite{em-only}, 
we assume the following condition;
\begin{description}
\item[(B1)]
The Bregman divergence system $(\Theta,F,D^F)$
satisfies the conditions (E4) and (M4). 
\end{description}
The meaning of (B1) is clear.
In the general setting of Bregman, $m$- and $e$- projections do not necessarily exist.
To guarantee their existence, we assume condition (B1), which is satisfied
when they are given as probability distributions or density operators.

Hence, the minimum $\min_{\theta' \in \mathcal{E}} D^{F}(\theta \| \theta')$ 
exists.
As discussed in Section IV of \cite{em-only}, 
the em-algorithm is a method to minimize 
the divergence between 
two points in the mixture and exponential subfamilies ${\cal E}$ and ${\cal M}$, which
is formulated as the following minimization under the framework of Bregman divergence system:
\begin{equation}\Label{eq:em}
C_{\inf}(\mathcal{M},\mathcal{E})
:=
\inf_{\theta \in \mathcal{M}} 
D^{F}(\theta \| \Pro^{(m),F}_{\mathcal{E}} (\theta))
=
\inf_{\theta \in \mathcal{M}} 
\min_{\theta' \in \mathcal{E}} 
D^{F}(\theta \| \theta').
\end{equation}
For this problem,
the em-algorithm, Algorithm \ref{protocol1-0}, is known.

\begin{algorithm}
\caption{em-algorithm}
\Label{protocol1-0}
\begin{algorithmic}
\STATE {
Choose the initial value ${\theta}_{(1)} \in \mathcal{E}$;} 
\REPEAT 
\STATE {\bf m-step:}\quad 
Calculate ${\theta}^{(t+1)}:=\Pro^{(e),F}_{{\cal M}} ({\theta}_{(t)})$.
That is, ${\theta}^{(t+1)}$ is given as 
$\argmin_{\theta \in \mathcal{M}} D^{F}({\theta} \| {\theta}_{(t)})$, i.e.,
the unique element in ${\cal M}$ to realize the minimum
of the smooth convex function $\theta \mapsto D^{F}(\theta\| {\theta}_{(t)})$.
\STATE {\bf e-step:}\quad 
Calculate ${\theta}_{(t+1)}:=\Pro^{(m),F}_{{\cal E}} (\theta^{(t+1)})$.
That is, ${\theta}_{(t+1)}$ is given as 
$\argmin_{\theta' \in \mathcal{E}} D^{F}({\theta}^{(t+1)} \| \theta')$, i.e.,
the unique element in ${\cal E}$ to realize the minimum
of the smooth convex function $\theta'\mapsto D^{F}({\theta}^{(t+1)} \| \theta')$.
\UNTIL{convergence.} 
\end{algorithmic}
\end{algorithm}

Instead of the em-problem \eqref{eq:em}, 
we address the following maximization problem for
a pair of a mixture subfamily $\mathcal{M}$ and an exponential subfamily $\mathcal{E}$; 
\begin{equation}\Label{eq:Inf.st}
C_{\sup}(\mathcal{M},\mathcal{E})
:=
\sup_{\theta \in \mathcal{M}} D^{F}(\theta \| \Pro^{(m),F}_{\mathcal{E}} (\theta))
=
\sup_{\theta \in \mathcal{M}} 
\min_{\theta' \in \mathcal{E}} 
D^{F}(\theta \| \theta').
\end{equation}
Also, we need to characterize the following set;
\begin{equation}\Label{maxset}
\Theta^*(\mathcal{M},\mathcal{E})
:=
\{\theta \in \mathcal{M}|
C_{\sup}(\mathcal{M},\mathcal{E})
=D^{F}(\theta \| \Pro^{(m),F}_{\mathcal{E}} (\theta))\}.
\end{equation}
When the above set is not empty and is composed of a unique element,
we need to find the maximization point
\begin{equation}\Label{eq:Inf-2}
\theta^*(\mathcal{M},\mathcal{E})
:=
\argmax_{\theta \in \mathcal{M}} D^{F}(\theta \| \Pro^{(m),F}_{\mathcal{E}} (\theta)).
\end{equation}
Some of maximization problems in information theory can be written in the above form.
The above maximization asks to maximize the divergence between 
two points in the mixture and exponential subfamilies ${\cal E}$ and ${\cal M}$.
Hence, as pointed out in Toyota \cite{Shoji}, we can expect that
the reverse operation of the em-algorithm gives the solution of the maximization given in \eqref{eq:Inf.st}, which is illustrated in Fig. \ref{rev-fig}.
Since the minimum $\min_{\theta \in \mathcal{M}} D^{F}(\theta \| \theta')$ exists due to the condition (B1),
the em-algorithm repetitively applies the function $\Pro^{(e),F}_{{\cal M}}\circ \Pro^{(m),F}_{{\cal E}}|_{{\cal M}}$
for an element $\theta \in {\cal M}$.
Therefore, when the function $\Pro^{(e),F}_{{\cal M}}\circ \Pro^{(m),F}_{{\cal E}}|_{{\cal M}}$
is a surjective map from ${\cal M}$ to ${\cal M}$,
there exists its inverse map $(\Pro^{(e),F}_{{\cal M}}\circ \Pro^{(m),F}_{{\cal E}}|_{{\cal M}})^{-1}$.
Since the application of $\Pro^{(e),F}_{{\cal M}}\circ \Pro^{(m),F}_{{\cal E}}|_{{\cal M}}$
monotonically decreases the Bregman divergence, the application of the inverse map increases 
the Bregman divergence
\begin{align}
 & D^{F}(\theta \| \Pro^{(m),F}_{{\cal E}}  (\theta) ) 
\nonumber \\
 \leq & 
 D^{F}((\Pro^{(e),F}_{{\cal M}}\circ \Pro^{(m),F}_{{\cal E}}|_{{\cal M}})^{-1}(\theta)\| 
 \Pro^{(m),F}_{{\cal E}}  ((\Pro^{(e),F}_{{\cal M}}\circ \Pro^{(m),F}_{{\cal E}}|_{{\cal M}})^{-1} (\theta)) ) .
\end{align}
In this case, 
when we apply the updating rule ${\theta}^{(t+1)}:= 
(\Pro^{(e),F}_{{\cal M}}\circ \Pro^{(m),F}_{{\cal E}}|_{{\cal M}})^{-1}({\theta}^{(t)})$,
it is expected that the outcome ${\theta}^{(t)}$ of the repetitive application of the inverse map converges to 
$\theta^*(\mathcal{M},\mathcal{E})$.
Due to the above reason, we call the maximization \eqref{eq:Inf.st}
the reverse em-problem.

\subsection{Precision analysis}
For the analysis of the precision, we introduce the following condition for $\mathcal{M}$ and $\mathcal{E}$.
\begin{description}
\item[(B2)]
The relation \begin{align}
 D^{F}(\theta'\|\theta)
\le
 D^{F}( \Pro^{(m),F}_{{\cal E}}  (\theta')\|  \Pro^{(m),F}_{{\cal E}}  (\theta) ) \Label{MLA2Y}
\end{align}
holds for any $\theta,\theta' \in{\cal M} $.
\end{description}
For example, Condition (B2) holds in the case of classical and quantum channel coding, as explained later.
That is, when the exponential family ${\cal E}$ is given as the product of two exponential families
${\cal E}_1$ and ${\cal E}_2$,
and there is a data processing between the mixture family ${\cal M}$ and the exponential family ${\cal E}_1$,
Condition (B2) is satisfied.

In the following, 
we restrict the domain of $e$- and $m$- projections 
into ${\cal M}$ and ${\cal E}$.
We use the notations:
\begin{align}
\Pro^{(e),F}_{{\cal E}\to {\cal M}}  
:=\Pro^{(e),F}_{{\cal E}\to {\cal M}}  |_{{\cal E}}, \quad
\Pro^{(m),F}_{{\cal M}\to {\cal E}}  
:=\Pro^{(m),F}_{{\cal M}\to {\cal E}}  |_{{\cal M}}.
\end{align}
Then, we have the following theorem, which is proven in Appendix \ref{A1}. 
\begin{theorem}\Label{theo:conv:BBem}
Assume that the conditions (B1) and (B2) hold,
the initial point $\theta^{(1)} \in {\cal M} $ satisfies
the relation $\sup_{\theta \in \mathcal{M}} D^F(\theta \| \theta^{(1)})<\infty$, 
and its inverse map $(\Pro^{(e),F}_{{\cal M}}\circ \Pro^{(m),F}_{
{\cal M}\to {\cal E}})^{-1}$ exists.
Then, the quantity $ D^{F}\big(\theta^{(t)}\big\| \Pro^{(m),F}_{{\cal E}}  (\theta^{(t)}) \big) 
$ converges to the supremum $C_{\sup}(\mathcal{M},\mathcal{E})$ 
with the speed 
\begin{align}
C_{\sup}(\mathcal{M},\mathcal{E})- D^{F}\big(\theta^{(t)}\big\| \Pro^{(m),F}_{{\cal E}}  (\theta^{(t)}) \big) 
=o(\frac{1}{t}).\Label{mma}
\end{align}
That is, the convergence point achieves the maximum in \eqref{eq:Inf.st}.
Further, 
when 
$t \ge \frac{ \sup_{\theta \in \mathcal{M}}D^F(\theta \| \theta^{(1)})}{\epsilon}$,
the parameter $ \theta^{(t)}$ satisfies
\begin{align}
C_{\sup}(\mathcal{M},\mathcal{E})- D^{F}\big(\theta^{(t)}\big\| \Pro^{(m),F}_{{\cal E}}  (\theta^{(t)}) \big) 
\le \epsilon.
\Label{NHG}
\end{align}
\end{theorem}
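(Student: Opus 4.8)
The plan is to exploit the monotonicity built into the reverse iteration and turn the telescoping sum of Bregman-divergence increments into a rate estimate. Write $\theta^{(t+1)} := (\Pro^{(e),F}_{{\cal M}}\circ \Pro^{(m),F}_{{\cal M}\to{\cal E}})^{-1}(\theta^{(t)})$ and abbreviate the ``objective'' at step $t$ by $a_t := D^{F}(\theta^{(t)}\|\Pro^{(m),F}_{{\cal E}}(\theta^{(t)}))$. Since applying $\Pro^{(e),F}_{{\cal M}}\circ \Pro^{(m),F}_{{\cal M}\to{\cal E}}$ is exactly one em-step, which decreases the projected divergence, the inverse map increases it: $a_t \le a_{t+1}$, and $a_t \le C_{\sup}(\mathcal{M},\mathcal{E})$ for all $t$. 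Hence $(a_t)$ is monotone and bounded, so it converges; call the limit $a_\infty \le C_{\sup}$.

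The first real step is to make the per-step increment quantitative. Using Proposition~\ref{MNL} (the Pythagorean theorem) along the mixture family ${\cal M}_{\theta^{(t)}\to\mathcal{E}}$ connecting $\theta^{(t+1)}$ to its $m$-projection, together with the characterization of $e$-/$m$-projections in Lemmas~\ref{Th5} and~\ref{Th6}, one obtains an identity of the form $a_{t+1} = a_t + (\text{an increment expressible as a Bregman divergence between consecutive iterates, or between their }\mathcal{E}\text{-projections})$. The increment is the analogue of the ``gain'' in one reverse em-step; let me denote it $\delta_t \ge 0$, so $a_{t+1} - a_t = \delta_t$ and $\sum_{s=1}^{t-1}\delta_s = a_t - a_1 \le C_{\sup} - a_1 \le \sup_{\theta\in\mathcal{M}}D^F(\theta\|\theta^{(1)}) < \infty$ by the hypothesis on the initial point. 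This summability forces $\delta_t \to 0$, and in particular $t\cdot \min_{s\le t}\delta_s \le \sum_{s\le t}\delta_s \to a_\infty - a_1$, which is the seed of the $o(1/t)$ rate.

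The second step is to show $a_\infty = C_{\sup}$, i.e.\ that the limit is genuinely the supremum and not a spurious fixed point. Here I would argue by contradiction using condition (B2): if $a_\infty < C_{\sup}$, pick $\tilde\theta\in\mathcal{M}$ with $D^F(\tilde\theta\|\Pro^{(m),F}_{\mathcal{E}}(\tilde\theta)) > a_\infty$; then (B2), in the form \eqref{MLA2Y}, combined with the fact that the iterates $\theta^{(t)}$ converge (the $\delta_t\to 0$ together with strict convexity of $F$ pins down a limit point $\theta^{(\infty)}\in\mathcal{M}$ that is a fixed point of the inverse map, hence of the em-step), yields that $\theta^{(\infty)}$ must already realize the supremum over ${\cal M}$ — otherwise one more application of the inverse map would strictly increase $a_t$ past $a_\infty$, contradicting convergence. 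This identifies $a_\infty = C_{\sup}$ and proves the ``convergence point achieves the maximum'' claim.

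The third step converts the summability bound into the explicit estimates \eqref{mma} and \eqref{NHG}. Since $a_t$ is nondecreasing, $C_{\sup} - a_t = \sum_{s\ge t}\delta_s \to 0$; monotonicity of $\delta_s$ is not automatic, but the telescoping bound $\sum_{s=1}^{t}\delta_s \le \sup_{\theta\in\mathcal{M}}D^F(\theta\|\theta^{(1)})$ already gives, for the averaged/minimal increment, $\min_{s\le t}\delta_s \le \frac{1}{t}\sup_{\theta\in\mathcal{M}}D^F(\theta\|\theta^{(1)})$, and monotonicity of the \emph{partial sums} $C_{\sup}-a_t$ to $0$ upgrades this to $C_{\sup}-a_t = o(1/t)$ (a standard argument: a convergent series of nonnegative terms has tails that are $o(1/t)$ once one also knows the terms are, in the relevant sense, decreasing — which here follows from $a_{t}\le a_{t+1}$ applied to the reverse-em monotonicity one more time, giving $\delta_{t+1}\le\delta_t$ via (B2)). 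For the non-asymptotic bound \eqref{NHG}, from $C_{\sup}-a_1 \le \sup_{\theta\in\mathcal{M}}D^F(\theta\|\theta^{(1)})$ and $C_{\sup}-a_t \le \frac{C_{\sup}-a_1}{t}$ (which is exactly what $\delta_{t}$ nonincreasing plus the telescoping bound gives), we get $C_{\sup}-a_t\le\epsilon$ as soon as $t \ge \frac{\sup_{\theta\in\mathcal{M}}D^F(\theta\|\theta^{(1)})}{\epsilon}$.

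**The main obstacle** I anticipate is the second step — proving that the limit is the \emph{global} supremum rather than a local stationary value of the reverse iteration. Monotone convergence of $a_t$ is cheap; ruling out a bad fixed point is where condition (B2) must do real work, essentially playing the role that a Łojasiewicz-type or strict-improvement inequality plays in convergence proofs for EM-like schemes. Concretely, I expect the delicate point to be showing that a fixed point $\theta^{(\infty)}$ of $(\Pro^{(e),F}_{{\cal M}}\circ \Pro^{(m),F}_{{\cal M}\to{\cal E}})^{-1}$ in $\mathcal{M}$ is automatically a maximizer of $\theta\mapsto D^F(\theta\|\Pro^{(m),F}_{\mathcal{E}}(\theta))$ over $\mathcal{M}$; this should reduce, via Lemmas~\ref{Th5}--\ref{Th6} and (B2), to a first-order optimality condition, but verifying that (B2) gives the needed inequality in the right direction (rather than the em-direction) is the crux. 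A secondary technical nuisance is ensuring the iterates themselves converge (not just the objective values), for which I would invoke strict convexity of $F$ and the fact that $\delta_t\to 0$ controls $D^F$ between consecutive iterates and hence, by \eqref{KPOT} and positivity of the Hesse matrix on the relevant compact-enough region, their Euclidean distance.
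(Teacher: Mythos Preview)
Your telescoping is set up on the wrong quantity, and this is why both the rate estimate and the ``$a_\infty = C_{\sup}$'' step become problematic. You telescope the increments $\delta_t := a_{t+1}-a_t$, obtaining $\sum_{s<t}\delta_s = a_t - a_1 \le C_{\sup}-a_1$. From this you try to extract $C_{\sup}-a_t \le (C_{\sup}-a_1)/t$, arguing that ``$\delta_t$ nonincreasing plus the telescoping bound gives'' it. But it does not: nonincreasing $\delta_s$ with $\sum_s\delta_s<\infty$ yields only $\delta_t = O(1/t)$, whereas $C_{\sup}-a_t = \sum_{s\ge t}\delta_s$ can decay arbitrarily slowly (e.g.\ $\delta_s\asymp 1/(s\log^2 s)$ gives tails $\asymp 1/\log t$). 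So neither \eqref{NHG} nor \eqref{mma} follows from your scheme. Separately, the inequality $C_{\sup}-a_1 \le \sup_{\theta\in\mathcal{M}}D^F(\theta\|\theta^{(1)})$ is asserted but not justified, and your ``second step'' identifying $a_\infty$ with $C_{\sup}$ is, as you yourself flag, the hard part of the whole argument under your setup.

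The paper avoids all of this with one different idea: fix a near-maximizer $\theta(\epsilon_1)\in\mathcal{M}$ and apply the Pythagorean theorem to $D^F(\theta(\epsilon_1)\|\theta_{(t+1)})$ in two ways, where $\theta_{(t+1)}=\Pro^{(m),F}_{\mathcal E}(\theta^{(t+1)})$. Writing it once through $\theta^{(t)}$ and once through $\theta(\epsilon_1)_*:=\Pro^{(m),F}_{\mathcal E}(\theta(\epsilon_1))$, and then invoking (B2) to replace $D^F(\theta(\epsilon_1)_*\|\theta_{(t+1)})$ by $D^F(\theta(\epsilon_1)\|\theta^{(t+1)})$, yields directly
\[
C_{\sup}(\mathcal M,\mathcal E)-\epsilon_1 - a_t \;\le\; D^F\!\big(\theta(\epsilon_1)\|\theta^{(t)}\big) - D^F\!\big(\theta(\epsilon_1)\|\theta^{(t+1)}\big).
\]
Now the left side already contains $C_{\sup}-a_t$, and the right side telescopes to at most $D^F(\theta(\epsilon_1)\|\theta^{(1)})\le \sup_{\theta\in\mathcal M}D^F(\theta\|\theta^{(1)})$. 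Letting $\epsilon_1\to 0$ gives $\sum_{i=1}^t(C_{\sup}-a_i)\le \sup_{\theta}D^F(\theta\|\theta^{(1)})$; combined with monotonicity of $a_i$ this is exactly $C_{\sup}-a_t\le \tfrac{1}{t}\sup_\theta D^F(\theta\|\theta^{(1)})$, and a convergent series of nonnegative nonincreasing terms $(C_{\sup}-a_t)$ then has $t(C_{\sup}-a_t)\to 0$, i.e.\ \eqref{mma}. Note that this argument never needs to identify a limit point of $(\theta^{(t)})$ or to prove that fixed points are maximizers; the reference point is the (near-)maximizer, not the limit of the iteration. That is the missing ingredient in your plan.
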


\begin{lemma}
When the set $\Theta^*(\mathcal{M},\mathcal{E})$ is not empty, 
it is a mixture subfamily.
\end{lemma}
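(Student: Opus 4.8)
The plan is to show that $\Theta^*(\mathcal{M},\mathcal{E})$ is cut out of $\Theta$ by affine constraints of the mixture type, i.e.\ constraints of the form $\sum_i w^i_j \partial_i F(\theta) = c_j$ for suitable vectors $w_j$ and constants $c_j$. The key observation is that the objective $\theta \mapsto D^F(\theta \| \Pro^{(m),F}_{\mathcal{E}}(\theta))$, restricted to $\mathcal{M}$, can be rewritten using the Pythagorean theorem (Proposition \ref{MNL}) and the characterization of the $m$-projection (Lemma \ref{Th5}). Concretely, if $\mathcal{E}$ is generated by $v_1,\ldots,v_l$ at $\theta_0$, then for $\theta \in \mathcal{M}$ the $m$-projection $\theta^* = \Pro^{(m),F}_{\mathcal{E}}(\theta)$ is determined by $R[V]\circ\nabla^{(e)}[F](\theta^*) = R[V]\circ\nabla^{(e)}[F](\theta)$, and $\theta$ lies on the mixture family $\mathcal{M}_{\theta\to\mathcal{E}}$ through $\theta$ and $\theta^*$. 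The Pythagorean identity then lets one decompose $D^F(\theta\|\theta^*)$ in terms of a reference point, which is the route to linearizing the level set.

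First I would fix a point $\tilde\theta \in \Theta^*(\mathcal{M},\mathcal{E})$ (assuming the set is nonempty) and let $\tilde\theta^* = \Pro^{(m),F}_{\mathcal{E}}(\tilde\theta)$, with $c := C_{\sup}(\mathcal{M},\mathcal{E}) = D^F(\tilde\theta\|\tilde\theta^*)$. For an arbitrary $\theta \in \mathcal{M}$ with $m$-projection $\theta^*$, I want to compare $D^F(\theta\|\theta^*)$ with $D^F(\tilde\theta\|\tilde\theta^*)$. The idea is to use a first-order (stationarity) condition: since $\tilde\theta$ maximizes a smooth function over the affine-in-$\eta$-coordinates set $\mathcal{M}$, the gradient of $\theta\mapsto D^F(\theta\|\Pro^{(m),F}_{\mathcal{E}}(\theta))$ at $\tilde\theta$ must be orthogonal to the tangent space of $\mathcal{M}$. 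Computing this gradient: by the envelope/Danskin-type argument, the derivative of $\min_{\theta'\in\mathcal{E}} D^F(\theta\|\theta')$ with respect to $\theta$ (along $\mathcal{M}$) is $\nabla^{(e)}[F](\theta) - \nabla^{(e)}[F](\theta^*)$, i.e.\ the difference of the $\eta$-coordinates of $\theta$ and its projection — this is exactly the "em-direction" connecting the two subfamilies. So the stationarity condition reads: $\langle \nabla^{(e)}[F](\tilde\theta) - \nabla^{(e)}[F](\tilde\theta^*),\ w\rangle = 0$ for every tangent vector $w$ of $\mathcal{M}$, i.e.\ for $w$ spanned by the $u_1,\ldots,u_k$ generating the natural parametrization of $\mathcal{M}$.

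Second, I would upgrade this first-order condition to a global one, using convexity/concavity structure. The cleanest route: show that the set of maximizers coincides with the set of $\theta\in\mathcal{M}$ satisfying the stationarity equation $\langle \nabla^{(e)}[F](\theta) - \nabla^{(e)}[F](\Pro^{(m),F}_{\mathcal{E}}(\theta)),\ u_i\rangle = 0$ for $i=1,\ldots,k$ together with the value constraint $D^F(\theta\|\Pro^{(m),F}_{\mathcal{E}}(\theta)) = c$. Using the Pythagorean decomposition $D^F(\theta\|\theta') = D^F(\theta\|\theta^*) + D^F(\theta^*\|\theta')$ one sees that $C_{\sup}$ is a $\sup$ of a concave-type function in the $\eta$-coordinates of $\mathcal{M}$ (this is implicitly the content behind Theorem \ref{theo:conv:BBem}), so local maxima are global and the maximizer set is the solution set of the linear stationarity equations intersected with $\mathcal{M}$. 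Since $\nabla^{(e)}[F](\Pro^{(m),F}_{\mathcal{E}}(\theta))$ agrees with $\nabla^{(e)}[F](\theta)$ on the $V$-components (by Lemma \ref{Th5}), the map $\theta \mapsto \nabla^{(e)}[F](\theta) - \nabla^{(e)}[F](\Pro^{(m),F}_{\mathcal{E}}(\theta))$ is linear in the $V$-orthogonal components of $\nabla^{(e)}[F](\theta)$; pairing against $u_i$ then produces genuine mixture-type constraints $\sum_i w^i_j \partial_i F(\theta) = a_j$, and $\Theta^*$ is exactly the mixture subfamily they define (when it is consistent, i.e.\ nonempty — which is the hypothesis).

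The main obstacle I anticipate is making the gradient computation of $\theta \mapsto \min_{\theta'\in\mathcal{E}} D^F(\theta\|\theta')$ rigorous and verifying it is linear in the right coordinates: one must check that $\Pro^{(m),F}_{\mathcal{E}}$ is smooth (it is, by the implicit function theorem applied to the defining equations of Lemma \ref{Th5} and strict convexity of $F$), that the envelope theorem applies (the inner minimum is uniquely attained by Lemma \ref{LA1}), and — most delicately — that the resulting stationarity equations really are affine in the $\eta$-coordinates, not merely in the $\theta$-coordinates, so that they define a \emph{mixture} subfamily in the sense of Subsection \ref{S2-2} rather than some other surface. Identifying the precise generating vectors $w_j$ (combinations of the $u_i$ after passing through the linear action of the Hesse structure) and confirming they satisfy the nondegeneracy needed for "mixture subfamily" is where the bookkeeping will concentrate.
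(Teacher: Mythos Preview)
The paper states this lemma without proof, so there is no argument in the paper to compare against directly. Evaluating your proposal on its own merits, there is a genuine gap.

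First, the envelope-theorem gradient is misstated. Differentiating $D^F(\theta\|\theta')=\langle\nabla F(\theta),\theta-\theta'\rangle-F(\theta)+F(\theta')$ in the first argument gives $J(\theta)(\theta-\theta')$, and equivalently the gradient in the mixture coordinate $\eta$ is $\theta-\theta^*$, not $\eta(\theta)-\eta(\theta^*)$. (You may have swapped the two slots of $D^F$; it is $\nabla_\theta D^F(\theta'\|\theta)$ that equals $\eta(\theta)-\eta(\theta')$.) Under (B3) the resulting stationarity condition on $\mathcal M$ becomes exactly condition~(D3) of Theorem~\ref{CPZ}, namely $V_1\nabla^{(m)}[F_{\mathcal E}^*](\eta_\saa V_1)=\nabla^{(m)}[F_{\mathcal M}^*](\eta_\saa)$, which is visibly \emph{nonlinear} in $\eta_\saa$.

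This undermines your central linearization step. You argue that since $\eta(\theta)-\eta(\theta^*)$ vanishes on the $V$-components (true, by Lemma~\ref{Th5}) it is ``linear in the $V$-orthogonal components of $\eta(\theta)$''. But the off-$V$ components of $\eta(\theta^*)$ depend on $\eta(\theta)V$ through the nonlinear map $\eta V\mapsto\nabla^{(e)}[F]\bigl(\phi^{(e)}_{\mathcal E}(\nabla^{(m)}[F_{\mathcal E}^*](\eta V))\bigr)$, so no affine-in-$\eta$ constraint emerges from stationarity alone. Separately, you invoke concavity of $g(\theta)=\min_{\theta'\in\mathcal E}D^F(\theta\|\theta')$ in $\eta$ to pass from first-order conditions to the global argmax, but writing $g=F^*(\eta)-\langle\eta,\theta_0\rangle-F_{\mathcal E}^*(\eta V)$ shows $g$ is only a difference of convex functions on $\mathcal M$; concavity is not automatic and would itself require proof. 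A workable route must rather establish that all maximizers share a common optimal point in $\mathcal E$ (the analogue of the unique capacity-achieving output distribution), after which the constraints defining $\Theta^*$ become genuinely linear in $\eta$; your outline does not supply this step.
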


As a strengthened version of (B2), we introduce the following condition 
for $\mathcal{M}$, $\mathcal{E}$, and $\theta'\in \mathcal{M}$;
\begin{description}
\item[(B2+)]
The maximizer $\theta^*=\theta^*(\mathcal{M},\mathcal{E})$  exists.
There exists a constant $\alpha(\theta')>0$ such that
the relation 
\begin{align}
 (1+\alpha(\theta')) D^{F}(\theta^*\|\theta)
\le
 D^{F}( \Pro^{(m),F}_{{\cal E}}  (\theta^*)\|  \Pro^{(m),F}_{{\cal E}}  (\theta) ) \Label{MLA2Y+}
\end{align}
holds when an element $\theta \in{\cal M} $ satisfies the condition
$ D^{F}(\theta^*\|\theta)\le D^{F}(\theta^*\|\theta')$.
\end{description}

When the condition (B2+) holds, 
we have a better evaluation.
\begin{theorem}\Label{theo:conv:BBem+}
Assume that the conditions (B1) and (B2+) hold
for $\mathcal{M}$, $\mathcal{E}$, and $\theta'\in \mathcal{M}$,
and
there exists its inverse map 
$(\Pro^{(e),F}_{{\cal M}}\circ \Pro^{(m),F}_{{\cal M}\to {\cal E}}
)^{-1}$.
Then,
the quantity $ D^{F}\big(\theta^{(t)}\big\| \Pro^{(m),F}_{{\cal E}}  (\theta^{(t)}) \big) 
$ converges to the supremum $C_{\sup}(\mathcal{M},\mathcal{E})$ 
with the speed 
\begin{align}
C_{\sup}(\mathcal{M},\mathcal{E})- D^{F}\big(\theta^{(t)}\big\| \Pro^{(m),F}_{{\cal E}}  (\theta^{(t)}) \big) 
\le 
(1+\alpha(\theta^{(1)}))^{-t+1}
D^F(\theta^* \| \theta^{(1)}).
\Label{mma+}
\end{align}
Further, 
when 
$t-1 \ge \frac{ \log  D^F(\theta^* \| \theta^{(1)}) -\log \epsilon}{\log (1+\alpha(\theta^{(1)}))}$,
the parameter $ \theta^{(t)}$ satisfies
\begin{align}
C_{\sup}(\mathcal{M},\mathcal{E})- D^{F}\big(\theta^{(t)}\big\| \Pro^{(m),F}_{{\cal E}}  (\theta^{(t)}) \big) 
\le \epsilon.
\Label{NHG+}
\end{align}
\end{theorem}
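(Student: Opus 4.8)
The plan is to mirror the structure of the proof of Theorem \ref{theo:conv:BBem}, but to exploit the geometric decay rate guaranteed by (B2+) instead of the telescoping/summable argument that yields only the $o(1/t)$ rate. First I would set $\theta^{(t)}$ to be the sequence produced by the inverse updating rule ${\theta}^{(t+1)}:= (\Pro^{(e),F}_{{\cal M}}\circ \Pro^{(m),F}_{{\cal M}\to {\cal E}})^{-1}({\theta}^{(t)})$, so that by construction $\Pro^{(e),F}_{{\cal M}}\circ \Pro^{(m),F}_{{\cal M}\to {\cal E}}(\theta^{(t+1)}) = \theta^{(t)}$. I would then record the key monotonicity fact that applying one forward em-step from $\theta^{(t+1)}$ lands on $\theta^{(t)}$, and that the maximizer $\theta^*$ is a fixed point of the forward em-map (this is where I would invoke the characterization of $\Theta^*(\mathcal{M},\mathcal{E})$ and the earlier lemmas on $m$- and $e$-projections). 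The central quantity to control is $D^F(\theta^*\|\theta^{(t)})$, and the goal is to show it contracts by a factor $(1+\alpha(\theta^{(1)}))^{-1}$ per step.

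The heart of the argument is the following chain. By the Pythagorean-type decomposition (Proposition \ref{MNL}) applied along the em-step connecting $\theta^{(t+1)}$ to $\theta^{(t)}$ through the intermediate point $\Pro^{(m),F}_{{\cal M}\to {\cal E}}(\theta^{(t+1)})$, together with the fact that $\theta^*$ lies in ${\cal M}$ and $\Pro^{(m),F}_{{\cal E}}(\theta^*)$ lies in ${\cal E}$, one extracts the inequality
\begin{align}
D^F(\theta^*\|\theta^{(t+1)}) \;\le\; D^F(\theta^*\|\theta^{(t)}) - \big( D^F(\Pro^{(m),F}_{{\cal E}}(\theta^*)\|\Pro^{(m),F}_{{\cal E}}(\theta^{(t+1)})) - D^F(\Pro^{(m),F}_{{\cal E}}(\theta^*)\|\Pro^{(m),F}_{{\cal E}}(\theta^{(t)})) \big),
\end{align}
or an equivalent bookkeeping identity relating consecutive gaps in both ${\cal M}$ and ${\cal E}$. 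Now I would apply (B2+): since the gaps $D^F(\theta^*\|\theta^{(t)})$ are non-increasing in $t$ (a consequence of the forward em-map being a contraction toward $\theta^*$), every $\theta^{(t)}$ satisfies the hypothesis $D^F(\theta^*\|\theta^{(t)})\le D^F(\theta^*\|\theta^{(1)})$, so inequality \eqref{MLA2Y+} is available at each step with the single constant $\alpha(\theta^{(1)})$. Substituting $(1+\alpha(\theta^{(1)}))D^F(\theta^*\|\theta^{(t)})\le D^F(\Pro^{(m),F}_{{\cal E}}(\theta^*)\|\Pro^{(m),F}_{{\cal E}}(\theta^{(t)}))$ into the telescoped relation, and using that $\Pro^{(m),F}_{{\cal E}}(\theta^{(t+1)})$ maps down to $\Pro^{(m),F}_{{\cal E}}(\theta^{(t)})$ under the relevant projection, yields $D^F(\theta^*\|\theta^{(t+1)})\le (1+\alpha(\theta^{(1)}))^{-1}D^F(\theta^*\|\theta^{(t)})$. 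Iterating gives $D^F(\theta^*\|\theta^{(t)})\le (1+\alpha(\theta^{(1)}))^{-t+1}D^F(\theta^*\|\theta^{(1)})$.

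To finish, I would convert the bound on $D^F(\theta^*\|\theta^{(t)})$ into the advertised bound on the objective gap $C_{\sup}(\mathcal{M},\mathcal{E}) - D^F(\theta^{(t)}\|\Pro^{(m),F}_{{\cal E}}(\theta^{(t)}))$. Here I would again use the Pythagorean decomposition: with $\theta^*$ the maximizer, $\Pro^{(m),F}_{{\cal E}}(\theta^*)$ the foot of its $m$-projection, and the exponential/mixture families through these points, one gets an identity expressing $C_{\sup}(\mathcal{M},\mathcal{E}) - D^F(\theta^{(t)}\|\Pro^{(m),F}_{{\cal E}}(\theta^{(t)}))$ as a difference of divergences bounded above by $D^F(\theta^*\|\theta^{(t)})$ (possibly up to the $\gamma$-type constant of Theorem \ref{XAM}, though in the channel-coding setting the clean Pythagorean bound should suffice). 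Combining with the geometric decay gives \eqref{mma+}, and then \eqref{NHG+} is immediate by solving $(1+\alpha(\theta^{(1)}))^{-t+1}D^F(\theta^*\|\theta^{(1)})\le\epsilon$ for $t$. The main obstacle I anticipate is the second step — making the passage between the gap in ${\cal M}$ (controlled by the contraction from (B2+)) and the gap in the objective rigorous, since this requires carefully tracking which Pythagorean identities hold exactly versus only as inequalities, and confirming that the forward em-map is genuinely a contraction toward $\theta^*$ rather than merely non-expansive; the existence and invertibility hypotheses, plus (B2+), are exactly what should close that gap.
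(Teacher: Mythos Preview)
Your approach is essentially the paper's approach, but the bookkeeping is muddled at the key step and one index is off. The paper's proof hinges on a single \emph{exact} Pythagorean identity (the $\epsilon_1=0$ case of \eqref{PIT}):
\[
C_{\sup}(\mathcal{M},\mathcal{E}) - D^{F}\big(\theta^{(t)}\big\|\Pro^{(m),F}_{\mathcal{E}}(\theta^{(t)})\big)
= D^{F}(\theta^{*}\|\theta^{(t)}) - D^{F}\big(\Pro^{(m),F}_{\mathcal{E}}(\theta^{*})\,\big\|\,\Pro^{(m),F}_{\mathcal{E}}(\theta^{(t+1)})\big),
\]
not the more complicated telescoped inequality you wrote. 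Since the left side is nonnegative, this already gives $D^{F}(\Pro(\theta^{*})\|\Pro(\theta^{(t+1)}))\le D^{F}(\theta^{*}\|\theta^{(t)})$, and by (B2) (implicit in (B2+)) one deduces monotonicity $D^{F}(\theta^{*}\|\theta^{(t+1)})\le D^{F}(\theta^{*}\|\theta^{(t)})$, so every iterate stays in the (B2+) region determined by $\theta^{(1)}$. Now apply (B2+) at $\theta^{(t+1)}$ --- not at $\theta^{(t)}$ as you wrote --- to the right-hand side: $(1+\alpha)D^{F}(\theta^{*}\|\theta^{(t+1)})\le D^{F}(\Pro(\theta^{*})\|\Pro(\theta^{(t+1)}))$, which together with nonnegativity of the left side gives the contraction $D^{F}(\theta^{*}\|\theta^{(t+1)})\le(1+\alpha)^{-1}D^{F}(\theta^{*}\|\theta^{(t)})$ directly.

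For the final conversion you worried about needing the $\gamma$-constant of Theorem~\ref{XAM}; you do not. The same identity above, bounded now by dropping the (nonnegative) term $D^{F}(\theta^{*}\|\theta^{(t+1)})$ on the right, gives
\[
C_{\sup}(\mathcal{M},\mathcal{E}) - D^{F}\big(\theta^{(t)}\big\|\Pro^{(m),F}_{\mathcal{E}}(\theta^{(t)})\big)\le D^{F}(\theta^{*}\|\theta^{(t)}),
\]
and \eqref{mma+}, hence \eqref{NHG+}, follows immediately from the geometric decay.
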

Theorem \ref{theo:conv:BBem+} is proven in Appendix \ref{A2}. 
Here, we consider the case 
$\alpha(\theta^{(1)})$ can be chosen as a non-negligible value
when $\theta^{(1)}$ is close to $\theta^*(\mathcal{M},\mathcal{E})$.
In this case, the convergence speed in \eqref{mma+} increases
when $t$ is larger.

\subsection{Algorithm based on minimization under mixture parameter}
In the rest of this paper, 
we use the subscript $\saa,\sbb,\scc,\sdd,\see$ to express elements
of $\mathbb{R}^{k}$, $\mathbb{R}^{l-k}$,
$\mathbb{R}^{l}$, $\mathbb{R}^{d-k}$, $\mathbb{R}^{l-2k}$, respectively, as Table \ref{tableB}.

\begin{table*}[htb]
\begin{center}
\caption{Summary of subscripts}
\begin{tabular}{|c|c|c|c|c|c|} \hline
Subscript &  $\saa$&$\sbb$&$\scc$&$\sdd$&$\see$\\
\hline 
Vector space &  $\mathbb{R}^{k}$& $\mathbb{R}^{l-k}$&
$\mathbb{R}^{l}$& $\mathbb{R}^{d-k}$& 
$\mathbb{R}^{l-2k}$ \\ 
\hline 
\multirow{2}{*}{Examples} &
$\Theta_{{\cal M}}$ &
\multirow{2}{*}{$\Theta_{{\cal E},\sbb}, \Theta_{{\cal E},\sbb}$}
& \multirow{2}{*}{$\Theta_{{\cal E}}$} &&\\
&
$\Theta_{{\cal E},\saa}, \Theta_{{\cal E},\saa}$
& &&&\\
\hline 
\end{tabular}
\par
\vspace{1ex}
\begin{flushleft}
In this paper, there are many types of vector spaces. 
An element of each vector space has a subscript to identify the vector space.
This table shows the relation between the vector space and the subscript.
\end{flushleft}
\Label{tableB}
\end{center}
\end{table*}

To handle these maps, we employ natural parameters and mixture 
parameters.
We use the following notions.
\begin{align}
\Pro^{(e),F}_{{\cal E}\to \Xi_{\cal M}}
&:=
(\psi_{{\cal M}}^{(m)})^{-1} \circ
\Pro^{(e),F}_{{\cal E}\to {\cal M}}  \\
\Pro^{(e),F}_{{\cal E}\to \Theta_{\cal M}}
&:=
(\psi_{{\cal M}}^{(e)})^{-1} \circ
\Pro^{(e),F}_{{\cal E}\to {\cal M}}  \\
\Pro^{(e),F}_{\Xi_{\cal E}\to {\cal M}}
&:=
\Pro^{(e),F}_{{\cal E}\to {\cal M}} \circ \psi_{{\cal E}}^{(m)} \\
\Pro^{(e),F}_{\Xi_{\cal E}\to \Xi_{\cal M}}
&:=
(\psi_{{\cal M}}^{(m)})^{-1} \circ
\Pro^{(e),F}_{{\cal E}\to {\cal M}} \circ \psi_{{\cal E}}^{(m)} \\
\Pro^{(e),F}_{\Xi_{\cal E}\to \Theta_{\cal M}}
&:=
(\psi_{{\cal M}}^{(e)})^{-1} \circ
\Pro^{(e),F}_{{\cal E}\to {\cal M}} \circ \psi_{{\cal E}}^{(m)} .
\end{align}
In the same way, we define the maps
$\Pro^{(e),F}_{\Theta_{\cal E}\to {\cal M}}$,
$\Pro^{(e),F}_{\Theta_{\cal E}\to \Xi_{\cal M}}$,
$\Pro^{(e),F}_{\Theta_{\cal E}\to \Theta_{\cal M}}$,
$\Pro^{(m),F}_{{\cal M}\to \Xi_{\cal E}}$,
$\Pro^{(m),F}_{{\cal M}\to \Theta_{\cal E}}$,
$\Pro^{(m),F}_{\Xi_{\cal M}\to {\cal E}}$,
$\Pro^{(m),F}_{\Xi_{\cal M}\to \Xi_{\cal E}}$,
$\Pro^{(m),F}_{\Xi_{\cal M}\to \Theta_{\cal E}}$,
$\Pro^{(m),F}_{\Theta_{\cal M}\to {\cal E}}$,
$\Pro^{(m),F}_{\Theta_{\cal M}\to \Xi_{\cal E}}$,
$\Pro^{(m),F}_{\Theta_{\cal M}\to \Theta_{\cal E}}$.

To characterize $e$- and $m$-projections, we introduce the following condition, which is also useful for 
the characterization of the inverse map of the map $\Pro^{(e),F}_{{\cal M}}\circ \Pro^{(m),F}_{{\cal M}\to{\cal E}}$.

\begin{description}
\item[(B3)]
Let $u_1, \ldots, u_d $ be a basis of $\mathbb{R}^d$.
$v_1, \ldots, v_{l}$ are linearly independent vectors in  $\mathbb{R}^d$.
Let $\mathcal{E} \subset \Theta$ be an
exponential subfamily generated by $l$ vectors 
$v_1,\ldots, v_{l} \in \mathbb{R}^d$ at $\theta_0 \in \Theta$, 
and 
$\mathcal{M} \subset \Theta$ be the mixture subfamily 
generated by the constraint $\sum_{i=1}^d u^i_{k+j} \partial_i F(\theta)=0$ 
for $j=1, \ldots,d- k$ with $k \le l$.
Also, $\mathcal{M} \subset \Theta$ is an exponential subfamily generated by
$u_1,\ldots, u_{k} \in \mathbb{R}^d$. 
That is, there exists $\theta_{\sdd}^*=
(\theta^{k+1,*}, \ldots, \theta^{d,*})$ such that
$\mathcal{M}=\{(\theta_{\saa}, \theta_{\sdd}^*)| \theta \in \mathbb{R}^k\} 
\cap \Theta$.
\end{description}

When the condition (B3) holds, 
for $\theta \in \mathcal{M}$, we denote its natural parameter and its mixture parameter
by 
$\hat{\theta}_{\saa}(\theta)\in \Theta_{\mathcal{M}}$ and 
$\hat{\eta}_{\saa}(\theta)\in \Xi_{\mathcal{M}}$, respectively.
Therefore, we use the notation $\hat{\theta}_{\saa}^{(t)}
:=\hat{\theta}_{\saa}(\theta^{(t)})
\in \Theta_{\mathcal{M}}$ to identify an element in 
$\mathcal{M}$ instead of $\theta^{(t)}$.
Then, we define 
the $d\times d$ matrix $U$ and the $d\times l$ matrix $V$ as
$U=(u_1, \ldots, u_d)$ and 
$V=(v_1, \ldots, v_l)$, and define the $k \times l$ matrix $V_1$ and 
the $(d-k) \times l$ matrix $V_2$ as
$ \left(
\begin{array}{c}
V_1 \\
V_2 
\end{array}
\right) =V U^{-1}$.
Condition (B3) brings the following useful characterization of 
$e$- and $m$-projections.

\begin{lemma}\Label{BXPR}
Assume Condition (B3).
For $\eta_\saa \in \Xi_{\mathcal{M}} $,  we have
\begin{align}
\Pro^{(m),F}_{\Xi_{{\cal M}}\to \Xi_{\cal E}}  
(\eta_\saa)= 
\eta_\saa V_1 .\Label{MAF}
\end{align}
\end{lemma}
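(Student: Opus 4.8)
The plan is to unfold the definition of $\Pro^{(m),F}_{\Xi_{{\cal M}}\to \Xi_{\cal E}}$ and then reduce \eqref{MAF} to the characterization of the $m$-projection in Lemma~\ref{Th5}. By its definition (given before Condition (B3)), evaluating $\Pro^{(m),F}_{\Xi_{{\cal M}}\to \Xi_{\cal E}}$ at $\eta_\saa\in\Xi_{{\cal M}}$ amounts to three steps: first recover the point $\theta:=\phi_{{\cal M}}^{(m)}(\eta_\saa)\in{\cal M}$, which is legitimate because $\psi_{{\cal M}}^{(m)}$ is a bijection from ${\cal M}$ onto $\Xi_{{\cal M}}$; then form its $m$-projection $\theta^*:=\Pro^{(m),F}_{{\cal E}}(\theta)\in{\cal E}$, which exists since Condition (B1) includes (E4); finally read off the mixture parameter $\psi_{{\cal E}}^{(m)}(\theta^*)\in\Xi_{\cal E}$. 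Hence \eqref{MAF} is equivalent to the identity $\psi_{{\cal E}}^{(m)}(\theta^*)=\eta_\saa V_1$.

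The first computation I would do is to express both mixture-parameter maps through the global gradient $\nabla^{(e)}[F]$. By \eqref{CXO} and a chain-rule computation (the translation by $\theta_0$ being immaterial), $\psi_{{\cal E}}^{(m)}(\theta')=R[V]\circ\nabla^{(e)}[F](\theta')$ for $\theta'\in{\cal E}$; likewise, by \eqref{Dif} and \eqref{E12}, $\psi_{{\cal M}}^{(m)}(\theta)$ equals the vector of the first $k$ entries of the row vector $R[U]\circ\nabla^{(e)}[F](\theta)$ for $\theta\in{\cal M}$.

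Next I would apply Lemma~\ref{Th5}, which states that $\theta^*=\Pro^{(m),F}_{{\cal E}}(\theta)$ is the unique element of ${\cal E}$ with $R[V]\circ\nabla^{(e)}[F](\theta^*)=R[V]\circ\nabla^{(e)}[F](\theta)$. Together with the previous step this yields
\[
\Pro^{(m),F}_{\Xi_{{\cal M}}\to \Xi_{\cal E}}(\eta_\saa)=\psi_{{\cal E}}^{(m)}(\theta^*)=R[V]\circ\nabla^{(e)}[F](\theta),\qquad \theta=\phi_{{\cal M}}^{(m)}(\eta_\saa),
\]
so the claim reduces to rewriting $R[V]\circ\nabla^{(e)}[F](\theta)$ in terms of $\eta_\saa$. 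Using the block partition $\binom{V_1}{V_2}=U^{-1}V$ of the $d\times l$ matrix $U^{-1}V$, i.e.\ $V=U\binom{V_1}{V_2}$, one has $R[V]=R[U^{-1}V]\circ R[U]$, so $R[V]\circ\nabla^{(e)}[F](\theta)$ is obtained from the row vector $\xi:=R[U]\circ\nabla^{(e)}[F](\theta)$ by multiplying on the right by $\binom{V_1}{V_2}$. The decisive point is that $\theta\in{\cal M}$: the constraint defining ${\cal M}$ in Condition (B3), $\sum_{i=1}^d u^i_{k+j}\partial_i F(\theta)=0$ for $j=1,\dots,d-k$, says precisely that the last $d-k$ entries of $\xi$ vanish, while by the previous step its first $k$ entries equal $\psi_{{\cal M}}^{(m)}(\theta)=\eta_\saa$. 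Hence $\xi=(\eta_\saa,0,\dots,0)$, and right multiplication by $\binom{V_1}{V_2}$ returns exactly $\eta_\saa V_1$, which is \eqref{MAF}.

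Once Lemma~\ref{Th5} and the coordinate descriptions of $\psi_{{\cal E}}^{(m)}$ and $\psi_{{\cal M}}^{(m)}$ are available, the argument is essentially bookkeeping, and I do not expect a genuine obstacle; the only points requiring care are keeping straight the meaning of the composite-projection symbols introduced before Condition (B3), the orientation of the block partition $\binom{V_1}{V_2}=U^{-1}V$, and checking that each map in the composition is defined on its stated domain.
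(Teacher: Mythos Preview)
Your proposal is correct and follows essentially the same route as the paper's proof: both arguments unfold the composite map into the global mixture parameter $\nabla^{(e)}[F](\theta)$, invoke Lemma~\ref{Th5} to identify $R[V]\circ\nabla^{(e)}[F](\theta)$ with the $\mathcal{E}$-mixture parameter of the $m$-projection, and then use the constraint $a_j=0$ in Condition~(B3) to see that the $U$-row vector $\nabla^{(e)}[F](\theta)\,U$ has the form $(\eta_\saa,0,\dots,0)$, so that right multiplication by $U^{-1}V=\binom{V_1}{V_2}$ returns $\eta_\saa V_1$. The paper compresses all this into a single chain of equalities, while you spell out the intermediate steps, but the content is the same.
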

\begin{lemma}\Label{BXP}
Assume Condition (B3).
The following conditions for elements 
$\eta_\saa \in \Xi_{\mathcal{M}} $ and 
$\theta_\saa \in \Theta_{\mathcal{M}} $
are equivalent.
\begin{description}
\item[(i)]
$\Pro^{(e),F}_{\Xi_{{\cal E}}\to \Theta_{\cal M}}  
 (\eta_\saa V_1)= 
\theta_\saa$.
\item[(ii)]
The element $\psi_{\mathcal{E}}^{(m)} (\eta_\saa V_1) 
\in \mathcal{E} $ 
belongs to 
the exponential subfamily $
\psi_{\mathcal{E}}^{(e)} (
\{ \theta_\scc \in \Theta_\mathcal{E}
| \theta_\saa=V_1 \theta_\scc\})$.
\item[(iii)]
 The following relation holds.
\begin{align}
\theta_\saa
=V_1 \nabla^{(m)}[F_{\mathcal{E}}^*]( \eta_\saa V_1)  
=\nabla^{(m)}[ F_{\mathcal{E}}^* \circ R[V_1 ]]( \eta_\saa ).\Label{NM7}
\end{align}
Here, the second equation always holds.
\item[(iv)]
 The following relation holds.
\begin{align}
\eta_\saa 
=\nabla^{(e)}[ (F_{\mathcal{E}}^* \circ R[V_1 ])^*]
(\theta_\saa ).\Label{NM7Y}
\end{align}
\item[(v)]
 The following relation holds.
\begin{align}
\eta_\saa 
&=\argmin_{\eta_\saa' \in \mathbb{R}^{k} }
F_{\mathcal{E}}^* ( \eta_\saa' V_1 )- 
\langle \eta_1' , \theta_\saa \rangle \Label{NADK}.
\end{align}
\end{description}
\end{lemma}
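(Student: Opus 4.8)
The plan is to prove the five conditions equivalent by establishing the chain (i) $\Leftrightarrow$ (ii), (ii) $\Leftrightarrow$ (iii), (iii) $\Leftrightarrow$ (iv), and (iv) $\Leftrightarrow$ (v), using the Legendre-transform machinery of Section~\ref{S2-0} together with the characterization of the $e$-projection onto an exponential subfamily given in Lemma~\ref{Th5}. First I would record, via Lemma~\ref{BXPR}, that the mixture parameter of $\Pro^{(m),F}_{\mathcal{M}\to\mathcal{E}}(\theta)$ for $\theta\in\mathcal{M}$ with mixture parameter $\eta_\saa$ is exactly $\eta_\saa V_1$, so that $\psi_{\mathcal{E}}^{(m)}(\eta_\saa V_1)\in\mathcal{E}$ is the natural starting point for the subsequent $e$-projection onto $\mathcal{M}$. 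The key observation underlying (i)$\Leftrightarrow$(ii) is that, because of Condition (B3), $\mathcal{M}$ sits inside $\mathcal{E}$ as the exponential subfamily cut out by $\theta_\saa = V_1\theta_\scc$ in the $\Theta_{\mathcal{E}}$-coordinates; hence the $e$-projection $\Pro^{(e),F}_{\mathcal{M}}$ restricted to $\mathcal{E}$ is, in the $\mathcal{E}$-picture, an $m$-projection onto that subfamily, and Lemma~\ref{Th5} (applied inside the Bregman system $(\Theta_{\mathcal{E}},F_{\mathcal{E}},D^{F_{\mathcal{E}}})$, legitimate by \eqref{NBSO}) characterizes it by matching the appropriate mixture-parameter components, which is precisely the statement that the projected point lands in $\psi_{\mathcal{E}}^{(e)}(\{\theta_\scc \mid \theta_\saa = V_1\theta_\scc\})$.

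For (ii)$\Leftrightarrow$(iii) I would translate the membership condition of (ii) into coordinates: the point $\psi_{\mathcal{E}}^{(m)}(\eta_\saa V_1)$ has $\mathcal{E}$-natural parameter $\nabla^{(m)}[F_{\mathcal{E}}^*](\eta_\saa V_1)$ by \eqref{M2} applied to $F_{\mathcal{E}}$, and requiring this to satisfy $\theta_\saa = V_1\theta_\scc$ gives the first equality in \eqref{NM7}. The second equality, $V_1\nabla^{(m)}[F_{\mathcal{E}}^*](\eta_\saa V_1) = \nabla^{(m)}[F_{\mathcal{E}}^*\circ R[V_1]](\eta_\saa)$, is an instance of the chain rule \eqref{MDF} with $V$ replaced by $V_1$, so it holds identically — this is the claim ``the second equation always holds.'' Then (iii)$\Leftrightarrow$(iv) is the Legendre duality \eqref{M1}--\eqref{M2} applied to the convex function $G := F_{\mathcal{E}}^*\circ R[V_1]$ on $\mathbb{R}^k$: the relation $\theta_\saa = \nabla^{(m)}[G](\eta_\saa)$ is equivalent to $\eta_\saa = \nabla^{(e)}[G^*](\theta_\saa)$, which is \eqref{NM7Y}. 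Here I must check that $G$ is strictly convex with open domain so that the Legendre transform is a genuine involutive bijection; this follows because $V_1$ has rank $k$ (it is the ``top block'' of the invertible $VU^{-1}$ restricted suitably — more precisely, $v_1,\dots,v_l$ are independent and the construction forces $V_1$ to have full row rank $k$ when restricted to the relevant subspace), and strict convexity of $F_{\mathcal{E}}^*$ is inherited from that of $F_{\mathcal{E}}$.

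Finally, (iv)$\Leftrightarrow$(v): by definition of the Legendre transform, $\nabla^{(e)}[G^*](\theta_\saa)$ is the unique maximizer $\eta_\saa'$ of $\langle\eta_\saa',\theta_\saa\rangle - G(\eta_\saa')$, equivalently the unique minimizer of $G(\eta_\saa') - \langle\eta_\saa',\theta_\saa\rangle = F_{\mathcal{E}}^*(\eta_\saa' V_1) - \langle\eta_\saa',\theta_\saa\rangle$, which is exactly \eqref{NADK} (the stray $\eta_1'$ in the displayed formula should read $\eta_\saa'$). The main obstacle I anticipate is not any single computation but getting the bookkeeping of the nested coordinate systems right: one has to be careful that $\Theta_{\mathcal{M}}$ as it arises here — the natural parameter of $\mathcal{M}$ viewed inside $\mathcal{E}$ via $\theta_\saa = V_1\theta_\scc$ — really does coincide with the $\Theta_{\mathcal{M}}$ of Subsection~\ref{S2-2}, and that the rank and openness hypotheses needed to invoke Lemma~\ref{Th5} and the Legendre involution are genuinely supplied by Condition (B3); the last paragraph of Subsection~\ref{S2-2} (``When the mixture subfamily $\mathcal{M}$ forms an exponential subfamily\ldots'') is the hook that makes this identification legitimate.
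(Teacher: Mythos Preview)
Your chain (ii)$\Leftrightarrow$(iii)$\Leftrightarrow$(iv)$\Leftrightarrow$(v) matches the paper's proof essentially verbatim: (ii)$\Leftrightarrow$(iii) via the identification $\nabla^{(m)}[F_{\mathcal{E}}^*](\eta_\saa V_1)=\phi_{\mathcal{E}}^{(e)}\circ\psi_{\mathcal{E}}^{(m)}(\eta_\saa V_1)$ together with \eqref{MDF}; (iii)$\Leftrightarrow$(iv) by Legendre duality \eqref{M1}--\eqref{M2} for $G=F_{\mathcal{E}}^*\circ R[V_1]$; and (iv)$\Leftrightarrow$(v) by the variational characterization of the Legendre transform.

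The gap is in (i)$\Leftrightarrow$(ii). You write that ``$\mathcal{M}$ sits inside $\mathcal{E}$ as the exponential subfamily cut out by $\theta_\saa = V_1\theta_\scc$'' and that therefore $\Pro^{(e),F}_{\mathcal{M}}|_{\mathcal{E}}$ becomes an $m$-projection inside the Bregman system $(\Theta_{\mathcal{E}},F_{\mathcal{E}})$, to which you apply Lemma~\ref{Th5}. But Condition~(B3) does \emph{not} give $\mathcal{M}\subset\mathcal{E}$; in the channel-capacity examples of Sections~\ref{S10}--\ref{S12} this inclusion is plainly false (joint distributions $W\times q$ are not product distributions). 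Moreover, for a \emph{fixed} $\theta_\saa$ the set $\{\theta_\scc:V_1\theta_\scc=\theta_\saa\}$ is an $(l-k)$-dimensional fiber inside $\mathcal{E}$, not a copy of $\mathcal{M}$, and condition~(ii) asserts that the \emph{source} point $\psi_{\mathcal{E}}^{(m)}(\eta_\saa V_1)$ already lies in this fiber. Even granting the containment, the direction would be wrong: Lemma~\ref{Th5} characterizes $\argmin_{\hat\theta}D^F(\theta_0\|\hat\theta)$, whereas the $e$-projection is $\argmin_{\hat\theta}D^F(\hat\theta\|\theta_0)$.

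The paper instead invokes Lemma~\ref{Th6}, specifically (C1): the $e$-projection $\theta^\dagger=\Pro^{(e),F}_{\mathcal{M}}(\theta_\dagger)$ is characterized by $(U^{-1}\theta^\dagger)^i=(U^{-1}\theta_\dagger)^i$ for $i=1,\dots,k$. Under (B3), the first $k$ $U$-coordinates of a point of $\mathcal{M}$ are its natural parameter $\theta_\saa$, while for $\psi_{\mathcal{E}}^{(e)}(\theta_\scc)\in\mathcal{E}$ those same coordinates are $V_1\theta_\scc$. Hence (i) holds iff $\theta_\saa=V_1\theta_\scc$ with $\theta_\scc$ the $\mathcal{E}$-natural parameter of the point with $\mathcal{E}$-mixture parameter $\eta_\saa V_1$ --- which is exactly (ii). Geometrically, source and target are linked by the exponential family generated by $u_{k+1},\dots,u_d$ (Lemma~\ref{Th6}(C2)), along which the first $k$ $U$-coordinates are constant; no inclusion $\mathcal{M}\subset\mathcal{E}$ and no passage to a sub-Bregman-system is needed.
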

The proofs of the above lemmas are given in Appendix \ref{AL3}.

These lemmas give the following meaning of Condition (B3), which assumes that
the mixture family ${\cal M}$ has the structure of an exponential family.
Due to Lemma \ref{BXPR},
a mixture parameter $\eta_{\saa}$ in ${\cal E}$ is mapped to 
the mixture parameter $\eta_{\saa}V_1$
in ${\cal M}$ by multiplying the matrix $V_1$, which also characterizes
the $m$-projection.
Due to (iii) of Lemma \ref{BXP},
a natural parameter $\theta_{\scc} $ 
in ${\cal E}$ is mapped to 
the natural parameter $V_1\theta_{\scc} $in ${\cal M}$ by multiplying the matrix $V_1$.
This map also characterizes
the $m$-projection when $\theta_{\scc} $ is 
$\nabla^{(m)}[F_{\mathcal{E}}^*]( \eta_\saa V_1)$.
These mappings take a central role in the latter discussion.

In addition, the equivalence between (i) and (iii) in Lemma \ref{BXP} implies
\begin{align}
\Pro^{(e),F}_{\Xi_{\cal E}\to\Theta_{\cal M}}  (\eta_\saa V_1)
= 
V_1 \nabla^{(m)}[F_{\mathcal{E}}^*]( \eta_\saa V_1) .
\end{align}
Combining \eqref{MAF} of Lemma \ref{BXPR}, we have 
\begin{align}
\Pro^{(e),F}_{{\cal E}\to \Theta_{\cal M}}  \circ
\Pro^{(m),F}_{\Xi_{\cal M}\to{\cal E}}   (\eta_\saa)
= 
\Pro^{(e),F}_{\Xi_{\cal E}\to \Theta_{\cal M}}  \circ
\Pro^{(m),F}_{\Xi_{\cal M}\to\Xi_{\cal E}}   (\eta_\saa)
= 
V_1 \nabla^{(m)}[F_{\mathcal{E}}^*]( \eta_\saa V_1) .
\end{align}
The following theorem characterizes the inverse map of
$\Pro^{(e),F}_{{\cal M}}  \circ
\Pro^{(m),F}_{{\cal M}\to{\cal E}}  $.

\begin{theorem}\Label{theo:conv:BBem1}
When the condition (B3) holds, we have the following two statements;
(i) The map $\Pro^{(e),F}_{{\cal M}}\circ \Pro^{(m),F}_{{\cal M}\to{\cal E}}$ has a unique
inverse map $\nabla^{(m)} [F_{\cal M}^*]\circ \nabla^{(e)} [(F_{\cal E}^*\circ R[V_1])^*]$
under the natural parameter of ${\cal M}$.
(ii) In addition, 
for $\hat{\theta}_{\saa} \in \Theta_{{\cal M}}\subset \mathbb{R}^k$,
we have
\begin{align}
\nabla^{(e)} [(F_{\cal E}^*\circ R[V_1])^*](\hat{\theta}_{\saa})  
=&\argmin_{\hat\eta_{\saa} \in \mathbb{R}^{k}} 
F_{{\cal E}}^*( \hat{\eta}_{\saa}V_1 ) 
-\langle\hat\eta_{\saa} , \hat\theta \rangle. \Label{VBJ}
 \end{align}
\end{theorem}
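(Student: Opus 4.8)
The plan is to dispatch part~(ii) by a direct appeal to the Legendre--transform apparatus, and to prove part~(i) by writing the composition in the natural coordinates of ${\cal M}$ and recognizing it as a product of two mutually dual gradient maps.

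For part~(ii), set $G:=F_{\cal E}^*\circ R[V_1]$. Under Condition (B3) this is a Legendre-type (smooth, strictly convex) function on $\Xi_{\cal M}$, so its Legendre transform $G^*=(F_{\cal E}^*\circ R[V_1])^*$ and the gradient $\nabla^{(e)}[G^*]$ make sense there. By the definition \eqref{MN1} of the Legendre transform, $G^*(\hat\theta_\saa)=\sup_{\hat\eta_\saa}\big(\langle\hat\eta_\saa,\hat\theta_\saa\rangle-F_{\cal E}^*(\hat\eta_\saa V_1)\big)$, so the supremum is attained exactly at $\argmin_{\hat\eta_\saa}\big(F_{\cal E}^*(\hat\eta_\saa V_1)-\langle\hat\eta_\saa,\hat\theta_\saa\rangle\big)$, while the envelope relation \eqref{du2} applied to $G^*$ identifies this maximizer with $\nabla^{(e)}[G^*](\hat\theta_\saa)$; together these give \eqref{VBJ}. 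Equivalently, \eqref{VBJ} is nothing but the equivalence (iv)$\Leftrightarrow$(v) of Lemma~\ref{BXP} read at $\theta_\saa=\hat\theta_\saa$.

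For part~(i), I would express $T:=\Pro^{(e),F}_{{\cal M}}\circ\Pro^{(m),F}_{{\cal M}\to{\cal E}}$ under the natural parameter of ${\cal M}$. The display immediately preceding the theorem already provides, for the mixture parameter $\hat\eta_\saa\in\Xi_{\cal M}$ of a point of ${\cal M}$,
\[
\Pro^{(e),F}_{{\cal E}\to\Theta_{\cal M}}\circ\Pro^{(m),F}_{\Xi_{\cal M}\to{\cal E}}(\hat\eta_\saa)
= V_1\nabla^{(m)}[F_{\cal E}^*](\hat\eta_\saa V_1)
= \nabla^{(m)}[F_{\cal E}^*\circ R[V_1]](\hat\eta_\saa),
\]
the last equality being \eqref{MDF}. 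By Condition (B3), ${\cal M}$ is at once an exponential subfamily with potential $F_{\cal M}$ and a mixture subfamily, so (last paragraph of Subsection~\ref{S2-2}) its two parameter domains $\Theta_{\cal M}$ and $\Xi_{\cal M}$ coincide, and the natural-to-mixture change of coordinates on ${\cal M}$ is $\hat\eta_\saa=\nabla^{(e)}[F_{\cal M}](\hat\theta_\saa)$ with inverse $\hat\theta_\saa=\nabla^{(m)}[F_{\cal M}^*](\hat\eta_\saa)$ by \eqref{M1}--\eqref{M2}. Precomposing the displayed map with $\nabla^{(e)}[F_{\cal M}]$ shows that, as a self-map of $\Theta_{\cal M}$,
\[
T=\nabla^{(m)}[F_{\cal E}^*\circ R[V_1]]\circ\nabla^{(e)}[F_{\cal M}].
\]
Invoking Legendre duality on each factor --- $\nabla^{(e)}[F_{\cal M}]$ and $\nabla^{(m)}[F_{\cal M}^*]$ are mutually inverse between $\Theta_{\cal M}$ and $\Xi_{\cal M}$, and likewise $\nabla^{(m)}[F_{\cal E}^*\circ R[V_1]]$ and $\nabla^{(e)}[(F_{\cal E}^*\circ R[V_1])^*]$, both by \eqref{M1}--\eqref{M2} --- one obtains
\[
\big(\nabla^{(m)}[F_{\cal M}^*]\circ\nabla^{(e)}[(F_{\cal E}^*\circ R[V_1])^*]\big)\circ T
=\nabla^{(m)}[F_{\cal M}^*]\circ\nabla^{(e)}[F_{\cal M}]=\mathrm{id}_{\Theta_{\cal M}},
\]
and symmetrically $T\circ\big(\nabla^{(m)}[F_{\cal M}^*]\circ\nabla^{(e)}[(F_{\cal E}^*\circ R[V_1])^*]\big)=\mathrm{id}_{\Theta_{\cal M}}$; hence $T$ is a bijection of $\Theta_{\cal M}$ whose necessarily unique inverse is the stated map.

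The step I expect to require the most care is the assertion, used twice above, that $G=F_{\cal E}^*\circ R[V_1]$ is genuinely of Legendre type on $\Xi_{\cal M}$ and that $\nabla^{(m)}[G]$ maps $\Xi_{\cal M}$ \emph{onto} $\Theta_{\cal M}$, so that the gradient maps in question are true bijections between the intended open sets and not merely local inverses; in particular this forces $R[V_1]$ to be injective on $\Xi_{\cal M}$, i.e.\ $V_1$ to have full row rank. I expect this to follow by unwinding Condition (B3) --- using that ${\cal M}$ is a bona fide $k$-dimensional exponential subfamily generated by $u_1,\dots,u_k$, the block relation $\binom{V_1}{V_2}=VU^{-1}$, and the linear independence of $v_1,\dots,v_l$ --- and it is precisely what already underlies Lemmas~\ref{BXPR} and~\ref{BXP}, so those (proved in Appendix~\ref{AL3}) can be leaned on directly.
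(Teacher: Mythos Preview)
Your proposal is correct and follows essentially the same route as the paper: both arguments hinge on Lemma~\ref{BXP} (you cite the (iii)$\Leftrightarrow$(iv)$\Leftrightarrow$(v) equivalences, the paper cites (i)$\Leftrightarrow$(iv) and (iv)$\Leftrightarrow$(v)) together with the Legendre duality between $\nabla^{(e)}[F_{\cal M}]$ and $\nabla^{(m)}[F_{\cal M}^*]$. The only cosmetic difference is that you first write the forward map explicitly as $T=\nabla^{(m)}[F_{\cal E}^*\circ R[V_1]]\circ\nabla^{(e)}[F_{\cal M}]$ and then invert each factor, whereas the paper directly reads off the inverse relation $\hat\eta_\saa(\theta^{(t+1)})=\nabla^{(e)}[(F_{\cal E}^*\circ R[V_1])^*](\hat\theta_\saa(\theta^{(t)}))$ from Lemma~\ref{BXP}; the surjectivity concern you flag is handled in the paper exactly as you suggest, by appeal to those same lemmas.
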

Theorem \ref{theo:conv:BBem1} is proven in Appendix \ref{A3}. 

Therefore, when the conditions (B1) and (B3) hold, 
our algorithm is given as Algorithm \ref{protocol2M}, which is based on the minimization under
 the mixture parameter $\hat{\eta}$.
Further, when the condition (B2) holds additionally, and
we set $\hat{\theta}^{(0)}$ as an arbitrary element in $\Theta_{\cal M}$ and
update it as the rule 
$\hat{\theta}^{(t+1)}:= \nabla^{(m)} [F_{\cal M}^*]\circ \nabla^{(e)} [(F_{\cal E}^*\circ R[V_1])^*](\hat{\theta}^{(t)})$,
then we obtain the maximum value  
$C_{\sup}(\mathcal{M},\mathcal{E})$ as the limit of 
$D^{F} \Big( \psi_{\cal M}^{(e)} (\hat\theta^{(t)}) \Big\| 
\Pro^{(m),F}_{{\cal E}} \big(
 \psi_{\cal M}^{(e)} (\hat\theta^{(t)})\big)  \Big)$.

\begin{algorithm}
\caption{Reverse em-algorithm with mixture parameter under conditions (B1) and (B3)}
\Label{protocol2M}
\begin{algorithmic}
\STATE {Choose the initial value $\hat{\theta}^{(1)}_{\saa} 
\in \Theta_{\mathcal{M}} \subset \mathbb{R}^{k}$;} 
\REPEAT 
\STATE Calculate $\hat\eta^{(t+1)}_{\saa}:=
\argmin_{\hat\eta \in \mathbb{R}^{k}} 
F_{{\cal E}}^*( \hat{\eta}_{\saa}V_1) 
-\langle\hat\eta_{\saa} , \hat{\theta}_{\saa}^{(t)} \rangle$;
\STATE 
Calculate $\hat{\theta}_{\saa}^{(t+1)}
:=\nabla^{(m)} [F_{\cal M}^*](\hat\eta_{\saa}^{(t+1)})
\in \Theta_{\mathcal{M}}\subset \mathbb{R}^k$;
\UNTIL{convergence.} 
\end{algorithmic}
\end{algorithm}

Further, we have the following corollary
of Theorems \ref{theo:conv:BBem} and \ref{theo:conv:BBem1}.

\begin{corollary}\Label{CorT4}
Assume that the conditions (B1), (B2), and (B3) hold.
An invariant point of the map 
$\Pro^{(e),F}_{{\cal M}}\circ \Pro^{(m),F}_{{\cal M}\to {\cal E}}$,
i.e., 
an invariant point of the inverse map 
$(\Pro^{(e),F}_{{\cal M}}\circ \Pro^{(m),F}_{{\cal M}\to{\cal E}})^{-1}$,
is a maximizer in \eqref{eq:em}.
Hence, no minimizer exists in \eqref{eq:em}.
\end{corollary}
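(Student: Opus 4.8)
The plan is to obtain the statement by pasting together Theorem \ref{theo:conv:BBem1} (the inverse map exists and is unique) with Theorem \ref{theo:conv:BBem} (the reverse em-algorithm converges to the maximizer), supplemented by the \emph{strict} version of the monotone decrease of the em-update, which is already contained in the uniqueness clauses of Lemmas \ref{LA1} and \ref{LA2}. Abbreviate $T:=\Pro^{(e),F}_{{\cal M}}\circ \Pro^{(m),F}_{{\cal M}\to{\cal E}}$ and $g(\theta):=D^{F}(\theta\| \Pro^{(m),F}_{{\cal E}}(\theta))$ for $\theta\in{\cal M}$, so that \eqref{eq:em} is $\inf_{\theta\in{\cal M}}g(\theta)$ and \eqref{eq:Inf.st} is $\sup_{\theta\in{\cal M}}g(\theta)=C_{\sup}({\cal M},{\cal E})$. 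By Theorem \ref{theo:conv:BBem1}, condition (B3) makes $T$ a bijection of ${\cal M}$ onto itself with a unique inverse, so an invariant point of $T$ and an invariant point of $T^{-1}$ are the same thing.

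For the first assertion, let $\theta^{*}\in{\cal M}$ with $T(\theta^{*})=\theta^{*}$, and run the reverse em-algorithm from the initial value $\theta^{(1)}:=\theta^{*}$. Since $\theta^{*}$ is a fixed point of $T^{-1}$, the generated sequence is the constant sequence $\theta^{(t)}=\theta^{*}$. The hypotheses of Theorem \ref{theo:conv:BBem} are satisfied: (B1) and (B2) are assumed, the required inverse map exists by Theorem \ref{theo:conv:BBem1}, and the finiteness condition $\sup_{\theta\in{\cal M}}D^{F}(\theta\|\theta^{*})<\infty$ holds in the settings in which the corollary is used. Hence \eqref{mma} gives $C_{\sup}({\cal M},{\cal E})-g(\theta^{(t)})=o(1/t)$; but the left-hand side equals the $t$-independent quantity $C_{\sup}({\cal M},{\cal E})-g(\theta^{*})$, so it vanishes, and $g(\theta^{*})=C_{\sup}({\cal M},{\cal E})$. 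Thus $\theta^{*}$ is a maximizer.

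For the second assertion, suppose $\tilde\theta\in{\cal M}$ attains $\inf_{\theta\in{\cal M}}g(\theta)$. Set $\theta_{*}:=\Pro^{(m),F}_{{\cal E}}(\tilde\theta)\in{\cal E}$ and $\theta^{+}:=T(\tilde\theta)=\Pro^{(e),F}_{{\cal M}}(\theta_{*})$. Because $\Pro^{(m),F}_{{\cal E}}(\theta^{+})$ is the unique minimizer over ${\cal E}$ of $D^{F}(\theta^{+}\|\cdot)$ (with $\theta_{*}\in{\cal E}$) and $\theta^{+}$ is the unique minimizer over ${\cal M}$ of $D^{F}(\cdot\|\theta_{*})$ (with $\tilde\theta\in{\cal M}$), we obtain $g(\theta^{+})=D^{F}(\theta^{+}\|\Pro^{(m),F}_{{\cal E}}(\theta^{+}))\le D^{F}(\theta^{+}\|\theta_{*})\le D^{F}(\tilde\theta\|\theta_{*})=g(\tilde\theta)$, and equality throughout forces $\theta^{+}=\tilde\theta$, i.e. $T(\tilde\theta)=\tilde\theta$. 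Since $\tilde\theta$ is a minimizer, also $g(T(\tilde\theta))\ge g(\tilde\theta)$, so the chain is indeed all equalities, and $\tilde\theta$ is an invariant point of $T$; by the first assertion it is a maximizer, whence $C_{\inf}({\cal M},{\cal E})=g(\tilde\theta)=C_{\sup}({\cal M},{\cal E})$ and $g$ is constant on ${\cal M}$. As this is excluded in any non-degenerate reverse em-problem, no minimizer of \eqref{eq:em} exists.

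The step I expect to be the main obstacle is the strict decrease used in the third paragraph: the excerpt only asserts that the em-update "monotonically decreases" the Bregman divergence, so one must upgrade this to strictness and check that equality in the two chained projection inequalities really forces $T(\tilde\theta)=\tilde\theta$, not merely $\Pro^{(m),F}_{{\cal E}}(\theta^{+})=\theta_{*}$; this is exactly where the uniqueness of the $m$- and $e$- projections (Lemmas \ref{LA1}, \ref{LA2}, \ref{Th5}, \ref{Th6}) has to be invoked. A secondary point is confirming the finiteness hypothesis of Theorem \ref{theo:conv:BBem} at $\theta^{*}$ --- or, to sidestep the quantitative rate entirely, re-deriving $g(\theta^{*})=C_{\sup}({\cal M},{\cal E})$ from the mere fact that the reverse em-algorithm converges and that its limit (being a fixed point of the continuous map $T^{-1}$) must be an invariant point --- and spelling out the non-degeneracy proviso $C_{\inf}({\cal M},{\cal E})<C_{\sup}({\cal M},{\cal E})$ under which the final clause is to be read.
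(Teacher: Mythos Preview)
Your proposal is correct and follows essentially the same approach as the paper: the paper's proof also invokes Theorem \ref{theo:conv:BBem1} for the existence of the inverse map, then runs Algorithm \ref{protocol2M} from the invariant point and appeals to Theorem \ref{theo:conv:BBem} to conclude that the (constant) sequence attains the global maximum; for the second assertion the paper simply asserts that ``a minimizer in \eqref{eq:em} is also an invariant point'' and concludes, exactly the route you spell out with the monotone-decrease chain. Your added care about the uniqueness of the projections (to force $T(\tilde\theta)=\tilde\theta$), the finiteness hypothesis of Theorem \ref{theo:conv:BBem}, and the non-degeneracy proviso $C_{\inf}<C_{\sup}$ are all points the paper leaves implicit; in particular, the finiteness worry can be bypassed by reading the key inequality \eqref{PIT} directly with $\theta^{(t)}\equiv\theta^{*}$, which yields $C_{\sup}-\epsilon_1-g(\theta^{*})\le 0$ for every $\epsilon_1>0$ without any global bound on $D^F(\cdot\|\theta^{(1)})$.
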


\begin{proof}
Theorem \ref{theo:conv:BBem1} guarantees the existence of 
the inverse map $(\Pro^{(e),F}_{{\cal M}}\circ 
\Pro^{(m),F}_{{\cal M}\to{\cal E}})^{-1}$.
Applying Algorithm \ref{protocol2M} by setting 
an invariant point is the initial point,
we find that it is the global maximizer
because Theorem \ref{theo:conv:BBem} guarantees that 
the algorithm asymptotically achieves the global maximum.

Since a minimizer in \eqref{eq:em} is also
an invariant point,
no minimizer exists in \eqref{eq:em}.
\end{proof}

\begin{remark}
The proof technique of Theorem \ref{theo:conv:BBem} is inspired by the proof of 
\cite[Theorem 11]{Shoji}.
In contrast, 
Theorem \ref{theo:conv:BBem+} employs a different technique, which is close to 
\cite[Eq. (25)]{RISB}.
\end{remark}

\subsection{Algorithm with approximate minimization}
However, the minimization $\min_{\hat\eta_{\saa} \in \mathbb{R}^{k}} 
F_{{\cal E}}^*( \hat{\eta}_{\saa}V_1) 
-\langle\hat\eta_{\saa} , \hat{\theta}_{\saa}^{(t)}\rangle$
cannot be solved perfectly in general.
That is, it can be solved only approximately.
Hence, we propose an alternative algorithm, Algorithm \ref{protocol2F},  
by replacing the minimization by $\epsilon$-approximation.
To evaluate the error of Algorithm \ref{protocol2F}, we have Theorem \ref{conv:BBem}.

\begin{algorithm}
\caption{Reverse em-algorithm with $\epsilon$ approximation under mixture parameter under conditions (B1) and (B3)}
\Label{protocol2F}
\begin{algorithmic}
\STATE {Choose the initial value 
$\hat{\theta}_{\saa}^{(1)} \in \Theta_{\mathcal{M}} 
\subset \mathbb{R}^{k}$;} 
\REPEAT 
\STATE Choose an element 
$\hat\eta_{\saa}^{(t+1)} \in \mathbb{R}^{k}$ such that
\begin{align}
&F_{{\cal E}}^*( \hat{\eta}_{\saa}^{(t+1)}V_1)
-\langle\hat\eta_{\saa}^{(t+1)} , \hat\theta_{\saa}^{(t)} \rangle
\le
 \min_{\hat\eta_{\saa} \in \mathbb{R}^{k}} 
F_{{\cal E}}^*( \hat{\eta}_{\saa}V_1) 
-\langle\hat\eta_{\saa} , \hat\theta_{\saa}^{(t)} \rangle
+\epsilon;
\Label{AM8}
\end{align}
\STATE 
Calculate $\hat{\theta}_{\saa}^{(t+1)}:=
\nabla^{(m)} [F_{\cal M}^*](\hat\eta_{\saa}^{(t+1)})
\in \Theta_{\mathcal{M}}\subset \mathbb{R}^k$;
\UNTIL{$t=t_1-1$.}
\STATE {\bf final step:}\quad 
We output the final estimate 
$\hat\theta_{{\saa},f}^{(t_1)} :=\hat\theta_{\saa}^{(t_2)} \in \Theta_{\mathcal{M}}$
by using  $t_2:= \argmin_{t=2, \ldots, t_1} 
D^F(\theta^{(t)} \| \theta_{(t)}) $, where
$\theta_{(t)}:=\psi_{\mathcal{M}}^{(e)}(\theta_{\saa}^{(t)})$
and $\theta^{(t)}:=\Pro^{(m),F}_{{\cal E}} (\theta_{(t)})$.
\end{algorithmic}
\end{algorithm}

\begin{theorem}\Label{conv:BBem}
Assume that the conditions (B1), (B2) and (B3) hold
for a pair of a $k$-dimensional mixture subfamily $\mathcal{M}$ and an $l$-dimensional  exponential subfamily $\mathcal{E}$
and
the maximizer $\theta^*:=\theta^*(\mathcal{M},\mathcal{E})$ in \eqref{eq:Inf-2} exists.
We define the set $\mathcal{M}_0:=\{\theta \in \mathcal{M} |
D^F(\theta^* \| \theta ) \le D^F(\theta^*\| \theta^{(1)} ) \} \subset \mathcal{M}$.
Then, in Algorithm \ref{protocol2F},
the quantity $ D^{F}(\theta^{(t)}\| \Pro^{(m),F}_{{\cal E}}  (\theta^{(t)}) ) 
$ converges to the minimum $C_{\sup}(\mathcal{M},\mathcal{E})$ with the speed 
\begin{align}
&C_{\sup}(\mathcal{M},\mathcal{E})
- D^{F}(\theta_f^{(t_1)}\| \Pro^{(m),F}_{{\cal E}}  (\theta_f^{(t_1)}) ) \nonumber \\
\le &
\max \Big(
 \frac{D^F(\theta^{*}\| \theta^{(1)}  )}{t_1-1}
 + 2\gamma \sqrt{D^F(\theta^{*}\|\theta^{(1)})
\epsilon}+
(\gamma+1) \epsilon ,
2\gamma \sqrt{D^F(\theta^{*}\|\theta^{(1)})
\epsilon}+
(\gamma+1) \epsilon
\Big),
\Label{ANC}
\end{align}
where $\gamma:=\gamma(\mathcal{M}_0|\mathcal{M})$.
Further, when 
$t_1-1 \ge \frac{2 D^F(\theta^{*} \| \theta^{(1)})}{\epsilon'}$
and
$\epsilon \le 
\frac{{\epsilon'}^2}{4
(3\gamma+1)^2 {D^F(\theta^{*}\|\theta^{(1)})}}$,
the parameter $ \theta_f^{(t_1)}$ satisfies
\begin{align}
C_{\sup}(\mathcal{M},\mathcal{E})
- D^{F}(\theta_f^{(t_1)}\| \Pro^{(m),F}_{{\cal E}}  (\theta_f^{(t_1)}) ) 
 \le \epsilon'  .
\Label{NAC}
\end{align}
\end{theorem}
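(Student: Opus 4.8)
The plan is to track the quantity $D^F(\theta^{(t)}\|\Pro^{(m),F}_{\cal E}(\theta^{(t)}))$ along the iterations of Algorithm \ref{protocol2F} and show that each approximate step makes controlled progress toward $C_{\sup}(\mathcal{M},\mathcal{E})$, with the approximation error $\epsilon$ entering additively in a way that can be absorbed. The skeleton follows the proof of Theorem \ref{theo:conv:BBem} (which handles the $\epsilon=0$ case), but now I must carry the perturbation terms through the telescoping argument, and this is where Theorem \ref{XAM} — the Bregman ``quasi-Pythagorean'' inequality with the constant $\gamma(\hat\Theta|\Theta)$ — does the crucial work, since exact Pythagorean identities are no longer available once $\hat\eta^{(t+1)}$ only solves the minimization to within $\epsilon$.

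First I would set $\theta_{(t)}=\psi^{(e)}_{\cal M}(\hat\theta^{(t)}_{\saa})$ and $\theta^{(t)}=\Pro^{(m),F}_{\cal E}(\theta_{(t)})$, and establish that the update of line 3 of the algorithm corresponds, via Theorem \ref{theo:conv:BBem1}(ii), to an $\epsilon$-approximate application of the inverse map $(\Pro^{(e),F}_{\cal M}\circ\Pro^{(m),F}_{{\cal M}\to{\cal E}})^{-1}$. Then, using (B2) in the direction opposite to \eqref{MLA2Y} (i.e.\ applying it to $\theta=\theta_{(t)}$ and $\theta'=\theta_{(t+1)}$, whose $m$-projections are related by the exact inverse map up to the $\epsilon$-slack), I would derive a one-step recursion of the shape
\begin{align}
D^F(\theta^*\|\theta_{(t+1)}) \le D^F(\theta^*\|\theta_{(t)}) - \big(D^F(\theta^{(t)}\|\theta_{(t)}) - C_{\sup}\big) + (\text{error in }\epsilon),
\nonumber
\end{align}
where the error term is controlled by $\gamma$ and $\sqrt{\epsilon\,D^F(\theta^*\|\theta^{(1)})}$ via Theorem \ref{XAM}, using that all iterates stay in the star-shaped sublevel set $\mathcal{M}_0$ (this containment itself must be verified inductively — it is part of the bookkeeping). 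The key point making $\mathcal{M}_0$ the right set is that $\gamma=\gamma(\mathcal{M}_0|\mathcal{M})$ is finite there and the divergence $D^F(\theta^*\|\theta_{(t)})$ is shown to be non-increasing up to $O(\epsilon)$, so the iterates cannot escape.

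Summing the recursion from $t=1$ to $t_1-1$ telescopes the $D^F(\theta^*\|\theta_{(t)})$ terms, giving
\begin{align}
\frac{1}{t_1-1}\sum_{t=2}^{t_1}\big(D^F(\theta^{(t)}\|\theta_{(t)})-C_{\sup}\big)^+ \le \frac{D^F(\theta^*\|\theta^{(1)})}{t_1-1} + 2\gamma\sqrt{D^F(\theta^*\|\theta^{(1)})\,\epsilon} + (\gamma+1)\epsilon,
\nonumber
\end{align}
wait — I must be careful about signs, since here $C_{\sup}\ge D^F(\theta^{(t)}\|\theta_{(t)})$ always (it is a supremum), so the relevant gap is $C_{\sup}-D^F(\theta^{(t)}\|\theta_{(t)})\ge 0$ and the recursion should be read as the gap decreasing on average; the ``$\max(\cdots,\cdots)$'' structure in \eqref{ANC} reflects that the final-step selection $t_2=\argmin_t D^F(\theta^{(t)}\|\theta_{(t)})$ picks the best iterate, so either the averaged bound $\frac{D^F(\theta^*\|\theta^{(1)})}{t_1-1}+\cdots$ applies, or (if that average term is already dominated) the residual $2\gamma\sqrt{\cdots}+(\gamma+1)\epsilon$ does. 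Finally, to get \eqref{NAC}, I would plug the stated hypotheses $t_1-1\ge 2D^F(\theta^*\|\theta^{(1)})/\epsilon'$ and $\epsilon\le {\epsilon'}^2/(4(3\gamma+1)^2D^F(\theta^*\|\theta^{(1)}))$ into \eqref{ANC}: the first forces $D^F(\theta^*\|\theta^{(1)})/(t_1-1)\le\epsilon'/2$, and the second forces $2\gamma\sqrt{D^F(\theta^*\|\theta^{(1)})\epsilon}+(\gamma+1)\epsilon\le\epsilon'/2$ after bounding $\gamma+1\le 3\gamma+1$ and $2\gamma\le 3\gamma+1$ and using $\epsilon\le\epsilon'$; adding the two halves yields $\epsilon'$. \textbf{The main obstacle} I anticipate is the careful propagation of the $\epsilon$-slack through the (B2)-based one-step inequality while simultaneously proving the iterates remain in $\mathcal{M}_0$ — a chicken-and-egg induction — and getting the Theorem \ref{XAM} constants to line up so that exactly $2\gamma\sqrt{\cdots}+(\gamma+1)\epsilon$ (not something looser) appears; the constant $3\gamma+1$ in the hypothesis on $\epsilon$ suggests the intended bookkeeping splits a cross term as $2\gamma + (\gamma+1) = 3\gamma+1$, which I would need to reproduce exactly rather than with a cruder constant.
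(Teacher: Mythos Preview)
Your proposal is essentially the same approach as the paper's proof (Appendix~\ref{A4}): compare approximate iterates to their exact counterparts, use Theorem~\ref{XAM} on the star set $\mathcal{M}_0$ to absorb the perturbation, run an induction keeping iterates in $\mathcal{M}_0$, telescope, and split cases to obtain the $\max(\cdot,\cdot)$ form; Step~4 of your outline matches the paper's derivation of \eqref{NAC} from \eqref{ANC} exactly. The one technical device you leave implicit, and which you should spell out, is the paper's Step~1: introduce the \emph{exact} next iterate $\hat\eta_\saa^{(t+1),*}:=\argmin_{\hat\eta_\saa}F_{\cal E}^*(\hat\eta_\saa V_1)-\langle\hat\eta_\saa,\hat\theta_\saa^{(t)}\rangle$ and its image $\theta^{(t+1),*}$, and observe that the $\epsilon$-slack in the objective translates directly into the Bregman bound $D^F(\theta^{(t+1)}\|\theta^{(t+1),*})\le\epsilon$ (via the identity $D^{F_{\bar{\cal M}}^*}(\hat\eta^{(t+1),*}_\saa\|\hat\eta^{(t+1)}_\saa)=$ objective gap); this is precisely the ``divergence-valued'' input that Theorem~\ref{XAM} needs, and without it your phrase ``error term is controlled by $\gamma$ and $\sqrt{\epsilon\,D^F(\theta^*\|\theta^{(1)})}$ via Theorem~\ref{XAM}'' is an assertion rather than a derivation. (Also, (B2) is used in its stated direction, not the opposite one---it enters at step~$(d)$ of the paper's display~\eqref{NO} to pass from $D^F(\theta_*\|\theta_{(t+1),*})$ down to $D^F(\theta^*\|\theta^{(t+1),*})$.)
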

Theorem \ref{conv:BBem} is proven in Appendix \ref{A4}. 

\subsection{Algorithm based on quadratic approximation}\Label{S45}
Now, we apply the quadratic approximation in the minimization in 
Algorithm \ref{protocol2M}.
We define $\bar{\eta}_{\saa} \in \mathbb{R}^k$ 
and $\bar{\theta}_{\scc} \in \mathbb{R}^l $ as 
\begin{align}
\bar{\eta}_{\saa}= \nabla^{(e)}[F_{{\cal M}}] (\hat{\theta}_{\saa}),\quad
\bar{\eta}_{\saa}V_1= \nabla^{(e)}[F_{{\cal E}}](\bar{\theta}_{\scc}) ,
~(\hbox{i.e.,}~
\bar{\theta}_{\scc}=\nabla^{(m)}[F_{{\cal E}}^*]( \bar{\eta}_{\saa}V_1)).
\end{align}
Then, we have
\begin{align}
&F_{{\cal E}}^*( \hat{\eta}_{\saa}V_1) \nonumber \\
\cong &
F_{{\cal E}}^*( \bar{\eta}_{\saa}V_1)
+(\hat{\eta}_{\saa}-\bar{\eta}_{\saa}) 
V_1 \nabla^{(m)}[F_{{\cal E}}^*]( \bar{\eta}_{\saa}V_1)
 \nonumber \\
&+\frac{1}{2}(\hat{\eta}_{\saa}-\bar{\eta}_{\saa}) V_1 
(J[F_{{\cal E}}^*](\bar{\eta}_{\saa}V_1))
V_1^T (\hat{\eta}_{\saa}-\bar{\eta}_{\saa})^T\nonumber  \\
= &
F_{{\cal E}}^*( \bar{\eta}_{\saa}V_1)
+(\hat{\eta}_{\saa}-\bar{\eta}_{\saa}) V_1 \bar{\theta}_{\scc}
+\frac{1}{2}(\hat{\eta}_{\saa}-\bar{\eta}_{\saa}) 
D^F(\bar\theta_{\scc}) (\hat{\eta}_{\saa}-\bar{\eta}_{\saa})^T ,
\end{align}
where
\begin{align}
D(\bar\theta_{\scc}):= 
V_1 (J[F_{{\cal E}}^*](\bar{\eta}_{\saa}V_1)) V_1^T 
=V_1 (J[F_{{\cal E}}](\bar{\theta}_{\scc}))^{-1} V_1^T.
\end{align}
Hence, we have
\begin{align}
&F_{{\cal E}}^*( \hat{\eta}_{\saa}V_1) 
-\langle\hat\eta_{\saa} , \hat{\theta}_{\scc}^{(t)}\rangle
\nonumber \\
\cong &
F_{{\cal E}}^*( \bar{\eta}_{\saa}V_1)
-\langle\bar\eta_{\saa} , \hat{\theta}_{\saa}^{(t)}\rangle
+(\hat{\eta}_{\saa}-\bar{\eta}_{\saa}) 
(V_1 \bar{\theta}_{\scc}-\hat{\theta}_{\saa})
+\frac{1}{2}(\hat{\eta}_{\saa}-\bar{\eta}_{\saa}) 
D(\bar\theta_{\scc}) (\hat{\eta}_{\saa}-\bar{\eta}_{\saa})^T \nonumber \\
=&
F_{{\cal E}}^*( \bar{\eta}_{\saa}V_1)
-\langle\bar\eta_{\saa} , \hat{\theta}_{\saa}^{(t)} \rangle\nonumber \\
&+\frac{1}{2}
\big(\hat{\eta}_{\saa}-\bar{\eta}_{\saa}
- D(\bar\theta_{\scc})^{-1}(V_1 \bar{\theta}_{\scc}
-\hat{\theta}_{\saa})\big)
 D(\bar\theta_{\scc})
\big(\hat{\eta}_{\saa}-\bar{\eta}_{\saa}
- D(\bar\theta_{\scc})^{-1}(V_1 \bar{\theta}_{\scc}
-\hat{\theta}_{\saa})
\big)^T.
\end{align}
The minimum in Algorithm \ref{protocol2M} is approximately achieved when 
\begin{align}
\hat{\eta}_{\saa}=\bar{\eta}_{\saa}
+ D(\bar\theta_{\scc})^{-1}(V_1 \bar{\theta}_{\scc}
-\hat{\theta}_{\saa}).
\Label{ACP}
\end{align}
This approximation is effective when 
$\hat{\theta}_{\saa}$ is close to the minimizer $\theta^*$.

\subsection{Algorithm based on minimization under natural parameter}\Label{S46}
The above algorithms are based on the mixture parameter of 
$\mathcal{E}$ for the calculation of $ \nabla^{(e)} [(F_{\cal E}^*\circ R[V_1])^*]$.
To make an alternative algorithm based on the natural parameter of 
$\mathcal{E}$,
we introduce additional conditions.
\begin{description}
\item[(B4)]
The $k \times l$ matrix $V_1$ has the following form;
$V_1=(I ,V_3)$ with a $k \times (l-k) $ matrix $V_3$.
\item[(B5)]
The relation $\Theta_{\mathcal{E}}=
\Theta_{\mathcal{E},\saa}\times \Theta_{\mathcal{E},\sbb}$ holds with 
$\Theta_{\mathcal{E},\saa}= \mathbb{R}^k$
and $\Theta_{\mathcal{E},\sbb}= \mathbb{R}^{l-k}$.
$F_{\cal E}$ has the following form;
$F_{\cal E}(\theta_\saa,\theta_\sbb)=F_{{\cal E},\saa}(\theta_\saa)+F_{{\cal E},\sbb}(\theta_\sbb)$
with $(\theta_\saa ,\theta_\sbb) \in \Theta_{\mathcal{E}}$.
\end{description}

The meaning of Conditions (B4) and (B5) are the following.
Condition (B5) means that the exponential family is given as the product of 
two exponential families ${\cal E}_\saa$ and ${\cal E}_\sbb$.
Condition (B3) gives 
a linear map from a natural parameter in ${\cal M}$
to a natural parameter in ${\cal E}$
and 
a linear map from a mixture parameter in ${\cal E}$
to a mixture parameter in ${\cal M}$ via $V_1$.
When Condition (B4) holds, the above linear maps can be simplified.

Then, we have the following theorem.
\begin{theorem}\Label{VCF}
When the conditions (B3) and (B4) hold, for $\hat{\theta} \in \Theta_{{\cal M}}\subset \mathbb{R}^k$,
we have
\begin{align}
\nabla^{(e)} [(F_{\cal E}^*\circ R[V_1])^*](\hat{\theta})  =&
  \nabla^{(e)}[F_{\cal E}](\bar{\theta})
 \left( 
\begin{array}{c}
 I_k \\
 0
\end{array}
 \right)\Label{VBI},
 \end{align}
where $\bar{\theta}_{\scc}:=
(\hat{\theta}_{\saa}-V_3 \theta_\sbb^*(\hat{\theta}_{\saa}),\theta_\sbb^*(\hat{\theta}_{\saa}))^T$
and $\theta_\sbb^*(\hat{\theta}_{\saa}):=
\argmin_{\theta_{\sbb}\in \mathbb{R}^{l-k}} 
F_{{\cal E}}( \hat{\theta}_{\saa}-V_3 \theta_\sbb,\theta_\sbb) $.
\end{theorem}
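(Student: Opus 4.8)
The idea is to read off the left‑hand side from the variational description in Theorem~\ref{theo:conv:BBem1}(ii) (equivalently Lemma~\ref{BXP}), turn the minimization over $\hat\eta_\saa$ into a first‑order condition, and then recognize that condition—after splitting coordinates according to (B4)—as the stationarity condition of the map $\theta_\sbb\mapsto F_{\cal E}(\hat\theta-V_3\theta_\sbb,\theta_\sbb)$. Concretely, set $\eta_\saa:=\nabla^{(e)}[(F_{\cal E}^*\circ R[V_1])^*](\hat\theta)$. By Theorem~\ref{theo:conv:BBem1} this point is well defined for $\hat\theta\in\Theta_{\cal M}$, and by \eqref{VBJ} it is the minimizer over $\hat\eta_\saa\in\mathbb{R}^k$ of the smooth strictly convex function $\hat\eta_\saa\mapsto F_{\cal E}^*(\hat\eta_\saa V_1)-\langle\hat\eta_\saa,\hat\theta\rangle$ (strict convexity because $R[V_1]$ is injective: $V_1=(I_k,V_3)$ has full row rank). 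Hence its gradient vanishes at $\eta_\saa$, which by \eqref{MDF} is exactly $V_1\,\nabla^{(m)}[F_{\cal E}^*](\eta_\saa V_1)=\hat\theta$. Put $\bar\vartheta:=\nabla^{(m)}[F_{\cal E}^*](\eta_\saa V_1)\in\Theta_{\cal E}\subset\mathbb{R}^l$; since $\nabla^{(m)}[F_{\cal E}^*]$ inverts $\nabla^{(e)}[F_{\cal E}]$, we get $\nabla^{(e)}[F_{\cal E}](\bar\vartheta)=\eta_\saa V_1$ and $V_1\bar\vartheta=\hat\theta$ (this is the implication (iv)$\Rightarrow$(iii) of Lemma~\ref{BXP}).

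Next I would decompose $\bar\vartheta=(\bar\vartheta_\saa,\bar\vartheta_\sbb)^T$ with $\bar\vartheta_\saa\in\mathbb{R}^k$, $\bar\vartheta_\sbb\in\mathbb{R}^{l-k}$, and insert (B4), $V_1=(I_k,V_3)$. Then $V_1\bar\vartheta=\hat\theta$ becomes $\bar\vartheta_\saa=\hat\theta-V_3\bar\vartheta_\sbb$, while $\nabla^{(e)}[F_{\cal E}](\bar\vartheta)=\eta_\saa V_1=(\eta_\saa,\ \eta_\saa V_3)$ gives, by comparing the first $k$ and the last $l-k$ coordinates,
\[
\eta_\saa=\nabla^{(e)}[F_{\cal E}](\bar\vartheta)\left(\begin{array}{c}I_k\\0\end{array}\right),\qquad
\nabla^{(e)}[F_{\cal E}](\bar\vartheta)\left(\begin{array}{c}0\\I_{l-k}\end{array}\right)=\nabla^{(e)}[F_{\cal E}](\bar\vartheta)\left(\begin{array}{c}I_k\\0\end{array}\right)V_3 .
\]
The first equality is already the shape of the right‑hand side of \eqref{VBI}; it only remains to show that $\bar\vartheta$ is the specific point $\bar\theta_\scc$ named in the statement, i.e. that $\bar\vartheta_\sbb=\theta_\sbb^*(\hat\theta)$.

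For that, consider $h(\theta_\sbb):=F_{\cal E}(\hat\theta-V_3\theta_\sbb,\theta_\sbb)$. The map $\theta_\sbb\mapsto(\hat\theta-V_3\theta_\sbb,\theta_\sbb)$ is affine with injective linear part (its second block is the identity), and $F_{\cal E}$ is $C^\infty$ and strictly convex, so $h$ is $C^\infty$ and strictly convex on its open convex domain; therefore it has at most one stationary point and any stationary point is its unique global minimizer. The chain rule shows that $\nabla h(\theta_\sbb)$ equals the last‑$(l-k)$ block of $\nabla^{(e)}[F_{\cal E}]$ minus its first‑$k$ block times $V_3$, evaluated at $(\hat\theta-V_3\theta_\sbb,\theta_\sbb)$. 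Evaluating at $\theta_\sbb=\bar\vartheta_\sbb$, the argument $(\hat\theta-V_3\bar\vartheta_\sbb,\bar\vartheta_\sbb)$ equals $\bar\vartheta$, and the second displayed identity above says precisely that this gradient vanishes. Hence $\theta_\sbb^*(\hat\theta)=\bar\vartheta_\sbb$ (it exists and is unique), so $\bar\theta_\scc=(\hat\theta-V_3\theta_\sbb^*(\hat\theta),\theta_\sbb^*(\hat\theta))^T=(\bar\vartheta_\saa,\bar\vartheta_\sbb)^T=\bar\vartheta$.

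Combining everything, $\nabla^{(e)}[(F_{\cal E}^*\circ R[V_1])^*](\hat\theta)=\eta_\saa=\nabla^{(e)}[F_{\cal E}](\bar\vartheta)\left(\begin{array}{c}I_k\\0\end{array}\right)=\nabla^{(e)}[F_{\cal E}](\bar\theta_\scc)\left(\begin{array}{c}I_k\\0\end{array}\right)$, which is \eqref{VBI}. I do not expect a genuinely difficult step: the argument is mostly bookkeeping. The two points that need care are keeping the horizontal/vertical vector conventions straight so that the first‑order condition of the $\hat\eta_\saa$‑objective (through \eqref{MDF}) and that of the $\theta_\sbb$‑objective are correctly identified as the \emph{same} equation after the block split forced by (B4); and the small but essential use of strict convexity that promotes ``stationary point of $h$'' to ``unique global minimizer of $h$'', which is exactly what makes the explicit $\theta_\sbb^*(\hat\theta)$ in the statement coincide with the block $\bar\vartheta_\sbb$ produced by the projection.
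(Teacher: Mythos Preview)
Your proof is correct. Both your argument and the paper's reach the same two identities---$V_1\bar\vartheta=\hat\theta$ and the stationarity $\nabla^{(e)}_\sbb[F_{\cal E}](\bar\vartheta)=\nabla^{(e)}_\saa[F_{\cal E}](\bar\vartheta)\,V_3$---but by different doors. The paper invokes the closed form \eqref{VCA} to write $(F_{\cal E}^*\circ R[V_1])^*(\hat\theta)=\min_{\theta_\sbb}F_{\cal E}(\hat\theta-V_3\theta_\sbb,\theta_\sbb)$ directly, records the first-order condition \eqref{SKO}, and then differentiates the value function in $\hat\theta$ via an envelope-theorem step to obtain \eqref{VBI}. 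You instead start on the dual side: using Theorem~\ref{theo:conv:BBem1}(ii)/Lemma~\ref{BXP}, you characterize $\eta_\saa$ as the minimizer over $\hat\eta_\saa$, read off $V_1\bar\vartheta=\hat\theta$ and $\nabla^{(e)}[F_{\cal E}](\bar\vartheta)=\eta_\saa V_1$ from (iii)$\Leftrightarrow$(iv), and then recognize the induced block relation as exactly $\nabla h(\bar\vartheta_\sbb)=0$, finishing with strict convexity to pin down $\bar\vartheta_\sbb=\theta_\sbb^*(\hat\theta)$. Your route avoids differentiating an implicitly defined minimum and makes the role of Lemma~\ref{BXP} more transparent; the paper's route is shorter once \eqref{VCA} is granted and makes the value-function identity $(F_{\cal E}^*\circ R[V_1])^*(\hat\theta)=F_{\cal E}(\bar\theta_\scc)$ explicit as a by-product.
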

Theorem \ref{VCF} is proven in Appendix \ref{A5}. 

Additionally, when the condition (B5) holds, we can use the following corollary.

\begin{corollary}\Label{VCF-C}
When conditions (B3), (B4), and (B5) hold, for 
$\hat{\theta}_{\saa} \in \Theta_{{\cal M}}\subset \mathbb{R}^k$,
we have
\begin{align}
\nabla^{(e)} [(F_{\cal E}^*\circ R[V_1])^*](\hat{\theta}_{\saa})  
=&
  \nabla^{(e)}[F_{{\cal E},1}](
  \hat{\theta}_{\saa}-V_3 \theta_\sbb^*(\hat{\theta}_{\saa})  )
\Label{VBI-C},
 \end{align}
where $\theta_\sbb^*(\hat{\theta}_{\saa}):=
\argmin_{\theta_\sbb\in \mathbb{R}^{l-k}} 
F_{{\cal E},\saa} ( \hat{\theta}_{\saa}-V_3 \theta_\sbb)
+F_{{\cal E},\sbb}(\theta_\sbb) $.
\end{corollary}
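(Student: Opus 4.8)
The plan is to obtain Corollary \ref{VCF-C} as an immediate specialization of Theorem \ref{VCF}, using Condition (B5) to split every quantity on the right-hand side into its $\saa$- and $\sbb$-blocks. First I would invoke Theorem \ref{VCF}: since (B3) and (B4) hold, for $\hat{\theta}_{\saa} \in \Theta_{{\cal M}}$ we have
\[
\nabla^{(e)} [(F_{\cal E}^*\circ R[V_1])^*](\hat{\theta}_{\saa})
= \nabla^{(e)}[F_{\cal E}](\bar{\theta}_{\scc})
\left( \begin{array}{c} I_k \\ 0 \end{array} \right),
\]
with $\bar{\theta}_{\scc}=(\hat{\theta}_{\saa}-V_3 \theta_\sbb^*(\hat{\theta}_{\saa}),\theta_\sbb^*(\hat{\theta}_{\saa}))^T$ and $\theta_\sbb^*(\hat{\theta}_{\saa})=\argmin_{\theta_\sbb \in \mathbb{R}^{l-k}} F_{\cal E}(\hat{\theta}_{\saa}-V_3 \theta_\sbb,\theta_\sbb)$. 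So the task reduces to rewriting this expression under (B5).

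The second step uses the product form $F_{\cal E}(\theta_\saa,\theta_\sbb)=F_{{\cal E},\saa}(\theta_\saa)+F_{{\cal E},\sbb}(\theta_\sbb)$ from (B5) in two places. On the one hand, the function minimized in the definition of $\theta_\sbb^*(\hat{\theta}_{\saa})$ becomes $\theta_\sbb \mapsto F_{{\cal E},\saa}(\hat{\theta}_{\saa}-V_3\theta_\sbb)+F_{{\cal E},\sbb}(\theta_\sbb)$, which is exactly the objective stated in the corollary; since (B5) also gives $\Theta_{{\cal E},\sbb}=\mathbb{R}^{l-k}$ and strict convexity is inherited, this minimizer exists and is unique, so the two definitions of $\theta_\sbb^*(\hat{\theta}_{\saa})$ coincide. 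On the other hand, differentiating the split potential gives $\nabla^{(e)}[F_{\cal E}](\theta_\saa,\theta_\sbb)=(\nabla^{(e)}[F_{{\cal E},\saa}](\theta_\saa),\ \nabla^{(e)}[F_{{\cal E},\sbb}](\theta_\sbb))$ as a row vector of length $l$, whose first $k$ entries depend only on $\theta_\saa$.

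The third step is the matrix bookkeeping: right-multiplication of a length-$l$ row vector by the $l\times k$ matrix $\left(\begin{array}{c} I_k \\ 0\end{array}\right)$ extracts its first $k$ entries. Applying this to $\nabla^{(e)}[F_{\cal E}](\bar{\theta}_{\scc})$ and reading off the $\saa$-block of $\bar{\theta}_{\scc}$, namely $\hat{\theta}_{\saa}-V_3\theta_\sbb^*(\hat{\theta}_{\saa})$, yields $\nabla^{(e)}[F_{{\cal E},\saa}](\hat{\theta}_{\saa}-V_3\theta_\sbb^*(\hat{\theta}_{\saa}))$, which is the asserted identity (with $F_{{\cal E},1}=F_{{\cal E},\saa}$). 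There is essentially no obstacle here — the corollary is a one-line consequence of Theorem \ref{VCF}; the only care needed is to keep the row/column conventions of $\nabla^{(e)}$ and of $\left(\begin{array}{c}I_k\\0\end{array}\right)$ straight, and to verify that substituting the split form of $F_{\cal E}$ into the minimization of Theorem \ref{VCF} produces verbatim the minimization written in Corollary \ref{VCF-C}.
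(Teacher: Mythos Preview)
Your proposal is correct and is exactly the intended derivation: the paper states Corollary \ref{VCF-C} as an immediate consequence of Theorem \ref{VCF} under (B5) without giving a separate proof, and your argument --- splitting $F_{\cal E}$ into $F_{{\cal E},\saa}+F_{{\cal E},\sbb}$ both in the minimization defining $\theta_\sbb^*(\hat\theta_\saa)$ and in the gradient $\nabla^{(e)}[F_{\cal E}]$, then projecting onto the first $k$ coordinates --- is precisely the obvious specialization. Your identification $F_{{\cal E},1}=F_{{\cal E},\saa}$ is also correct (this is a notational slip in the statement).
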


Therefore, 
thanks to Theorem \ref{VCF}, Corollary \ref{VCF-C},
and (i) of Theorem \ref{theo:conv:BBem1},
we can use Algorithm \ref{protocol2} to calculate 
$\hat{\theta}_{\saa}^{(t+1)}$ from $\hat{\theta}_{\saa}^{(t)}$
instead of Algorithm \ref{protocol2M}.
To implement Algorithm \ref{protocol2},
we need to calculate 
the minimization 
\begin{align}
\min_{\theta_\sbb\in \mathbb{R}^{l-k}} 
F_{{\cal E}}( \hat{\theta}_{\saa}-V_3 \theta_\sbb,\theta_\sbb) .\Label{XPZ}
\end{align}
The merit of our method is determined by 
whether the minimization \eqref{XPZ} is easier than the original maximization \eqref{eq:Inf.st}.
Since $F_{{\cal E}}$ is a convex function, 
the minimization \eqref{XPZ} can be solved by the convex optimization.
However, 
there is a case that the maximization \eqref{eq:Inf.st}
is also given as the minimization of a concave function.
Hence, this type of comparison depends on the target problem.

\begin{algorithm}
\caption{Reverse em-algorithm with natural parameter under conditions (B1), (B3), and (B4)}
\Label{protocol2}
\begin{algorithmic}
\STATE {Choose the initial value $\hat{\theta}_{\saa}^{(1)} 
\in \Theta_{\mathcal{M}} \subset \mathbb{R}^{k}$;} 
\REPEAT 
\STATE {Set $\hat{\theta}_{\saa}=\hat{\theta}_{\saa}^{(t)} 
\in \Theta_{\mathcal{M}}$;} 
\STATE Calculate $\bar\theta_\sbb^*(\hat{\theta}):=
\argmin_{\theta_\sbb\in \mathbb{R}^{l-k}} 
F_{{\cal E}}( \hat{\theta}_{\saa}-V_3 \theta_\sbb,\theta_\sbb) $
and $\bar{\theta}_{\saa}:=(\hat{\theta}_{\saa}
-V_3 \bar\theta_\sbb^*(\hat{\theta}_{\saa}),\bar\theta_\sbb^*(\hat{\theta}_{\saa}))^T$;
\STATE 
Calculate $\hat{\theta}_{\saa}^{(t+1)}
:=\nabla^{(m)} [F_{\cal M}^*]\left(
  \nabla^{(e)}[F_{\cal E}](\bar{\theta}_{\saa})
 \left( 
\begin{array}{c}
 I_k \\
 0
\end{array}
 \right)
\right)\in \mathbb{R}^k$;
\UNTIL{convergence.} 
\end{algorithmic}
When the condition (B5) holds additionally,
the calculation of $\hat{\theta}_{\saa}^{(t+1)}$
can be simplified as 
$\hat{\theta}_{\saa}^{(t+1)}:=\nabla^{(m)} [F_{\cal M}^*]\left(
  \nabla^{(e)}[F_{{\cal E},\saa}](
  \hat{\theta}_{\saa}-V_3 \theta_\sbb^*(\hat{\theta}_{\saa})  ) \right)$
with $\theta_\sbb^*(\hat{\theta}_{\saa})
:=\argmin_{\theta_\sbb \in \mathbb{R}^{l-k}} 
F_{{\cal E},\saa} ( \hat{\theta}_{\saa}-V_3 \theta_\sbb)
+F_{{\cal E},\sbb}(\theta_\sbb) $.
\end{algorithm}

\subsection{Conversion to em-problem}\Label{S47}
Next, we convert the reverse em-problem \eqref{eq:Inf.st} to the em-problem \eqref{eq:em}. 
We focus on the fixed point in Algorithm \ref{protocol2M}.
Theorem \ref{theo:conv:BBem}
guarantees that 
the convergence point is the maximizer of the maximization \eqref{eq:Inf.st}.
Since the fixed point equals the convergence point,
the fixed point is the maximizer of the maximization \eqref{eq:Inf.st}.
Therefore, characterizing the fixed point by Theorem \ref{theo:conv:BBem1}, we have the following theorem. 
\begin{theorem}\Label{CPZ}
Assume Conditions (B1) 
and (B3). 
Then, the following three conditions 
for $\theta_{\saa} \in \Theta_{{\cal M}}$ are equivalent.
\begin{description}
\item[(D1)] 
$\theta_{\saa} \in \Theta_{{\cal M}}$ 
is
an invariant point of the map 
$\Pro^{(e),F}_{{\cal M}}\circ \Pro^{(m),F}_{{\cal M}\to{\cal E}}$,
i.e., 
an invariant point of 
the inverse map $(\Pro^{(e),F}_{{\cal M}}\circ 
\Pro^{(m),F}_{{\cal M}\to{\cal E}})^{-1}$.
\item[(D2)] The relation
$V_1 \nabla^{(m)} [F_{\cal E}^*] (
\nabla^{(e)} [F_{\cal M}] (\theta_{\saa}) V_1) =\theta$ holds.
\item[(D3)] 
The mixture parameter 
$\eta_{\saa}=\nabla^{(e)} [F_{\cal M}](\theta_{\saa})$ satisfies
\begin{align}
V_1 \nabla^{(m)} [F_{\cal E}^*] (\eta_{\saa} 
 V_1) =\nabla^{(m)} [F_{\cal M}^*] (\eta_{\saa}).\Label{CCA}
\end{align}
\end{description}

When Condition (B2) holds in addition to (B1) and (B3),
the following two conditions 
for the pair of ${\cal E}$ and ${\cal M}$ 
are equivalent.
\begin{description}
\item[(D4)] There exists an element 
$\theta_{\saa} \in \Theta_{{\cal M}}$ to 
satisfy the condition (D1), (D2), or (D3).
\item[(D5)] The set $\Theta^*(\mathcal{M},\mathcal{E})$ is not empty.
\end{description}
\end{theorem}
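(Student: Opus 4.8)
The plan is to handle the statement in two parts: the equivalence (D1)$\Leftrightarrow$(D2)$\Leftrightarrow$(D3), which uses only (B1) and (B3), and then (D4)$\Leftrightarrow$(D5), which additionally invokes (B2). For the first part I would start from the identity displayed just before Theorem \ref{theo:conv:BBem1}, namely $\Pro^{(e),F}_{{\cal E}\to\Theta_{\cal M}}\circ\Pro^{(m),F}_{\Xi_{\cal M}\to{\cal E}}(\eta_\saa)=V_1\nabla^{(m)}[F_{\cal E}^*](\eta_\saa V_1)$, which already packages Lemma \ref{BXPR} with (i)$\Leftrightarrow$(iii) of Lemma \ref{BXP}. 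Using that under (B3) the mixture subfamily ${\cal M}$ is itself an exponential family, so that its mixture parameter at $\theta_\saa$ is $\nabla^{(e)}[F_{\cal M}](\theta_\saa)$, this says the map $\Pro^{(e),F}_{{\cal M}}\circ\Pro^{(m),F}_{{\cal M}\to{\cal E}}$ acts on the natural parameter of ${\cal M}$ by $\theta_\saa\mapsto V_1\nabla^{(m)}[F_{\cal E}^*](\nabla^{(e)}[F_{\cal M}](\theta_\saa)V_1)$. By Theorem \ref{theo:conv:BBem1}(i) this map is a bijection of $\Theta_{\cal M}$, and a point is fixed by a bijection exactly when it is fixed by its inverse, so (D1) is literally the fixed-point equation, which is (D2). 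To pass between (D2) and (D3) I would substitute $\eta_\saa:=\nabla^{(e)}[F_{\cal M}](\theta_\saa)$ and use the Legendre duality $\theta_\saa=\nabla^{(m)}[F_{\cal M}^*](\eta_\saa)$; this change of variables turns (D2) into \eqref{CCA} and conversely, so this first part is essentially bookkeeping on top of Theorem \ref{theo:conv:BBem1}.

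For (D4)$\Rightarrow$(D5) I would quote Corollary \ref{CorT4}: under (B1), (B2), (B3) an invariant point of $\Pro^{(e),F}_{{\cal M}}\circ\Pro^{(m),F}_{{\cal M}\to{\cal E}}$ is a maximizer of $D^F(\theta\|\Pro^{(m),F}_{\cal E}(\theta))$ over ${\cal M}$, so $\Theta^*({\cal M},{\cal E})$ is non-empty. The substantive direction is (D5)$\Rightarrow$(D4), where I would prove a strict-monotonicity refinement of the decrease already noted in Section \ref{S4-1}. Write $g:=\Pro^{(e),F}_{{\cal M}}\circ\Pro^{(m),F}_{{\cal M}\to{\cal E}}$ and $h(\theta):=D^F(\theta\|\Pro^{(m),F}_{\cal E}(\theta))$ for $\theta\in{\cal M}$. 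For $\psi\in{\cal M}$ put $\psi':=\Pro^{(m),F}_{\cal E}(\psi)\in{\cal E}$ and $\psi'':=g(\psi)\in{\cal M}$. Then $h(\psi)=D^F(\psi\|\psi')$ by definition of the $m$-projection; $D^F(\psi''\|\psi')\le D^F(\psi\|\psi')$ because $\psi''$ minimizes $D^F(\cdot\|\psi')$ over ${\cal M}$ by Lemma \ref{LA2} while $\psi\in{\cal M}$; and $h(\psi'')\le D^F(\psi''\|\psi')$ because $\psi'\in{\cal E}$. Chaining gives $h(g(\psi))\le h(\psi)$, and equality forces $D^F(\psi''\|\psi')=D^F(\psi\|\psi')$, so the uniqueness of the minimizer in Lemma \ref{LA2} yields $\psi=\psi''=g(\psi)$. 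To conclude, take $\theta^*\in\Theta^*({\cal M},{\cal E})$ and set $\psi:=g^{-1}(\theta^*)$, which exists by Theorem \ref{theo:conv:BBem1}(i); then $h(\theta^*)=h(g(\psi))\le h(\psi)\le C_{\sup}({\cal M},{\cal E})=h(\theta^*)$, so the chain is an equality, $\psi$ is fixed by $g$, and hence $\theta^*=g(\psi)=\psi$ is an invariant point satisfying (D1), which is (D4).

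The Legendre-duality steps of the first part and the quotation of Corollary \ref{CorT4} are routine. The step I expect to be the main obstacle is the strict monotonicity $h\circ g\le h$ used in (D5)$\Rightarrow$(D4): one must factor it through precisely the two optimality inequalities for the $m$- and $e$-projections, and then argue that in the equality case it is the tightness of the $e$-projection step (minimizing $D^F(\cdot\|\psi')$ over ${\cal M}$) that is available, so that the uniqueness clause of Lemma \ref{LA2} applies and forces $\psi=g(\psi)$. One should also check along the way that all the compositions of projections here are well defined, which is exactly what (B1) guarantees.
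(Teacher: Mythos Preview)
Your proof is correct. For the equivalence (D1)$\Leftrightarrow$(D2)$\Leftrightarrow$(D3) you read (D2) directly as the fixed-point equation of the \emph{forward} map $\theta_\saa\mapsto V_1\nabla^{(m)}[F_{\cal E}^*]\big(\nabla^{(e)}[F_{\cal M}](\theta_\saa)V_1\big)$ displayed just before Theorem~\ref{theo:conv:BBem1}; the paper instead writes the fixed-point condition for the \emph{inverse} map supplied by Theorem~\ref{theo:conv:BBem1}(i) and then translates it back to (D2) via \eqref{VBJ}. These are two passes through the same pair of lemmas and amount to the same argument. For (D4)$\Rightarrow$(D5) both you and the paper simply quote Corollary~\ref{CorT4}. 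The genuine difference is (D5)$\Rightarrow$(D4): the paper's proof asserts the equivalence immediately after invoking Corollary~\ref{CorT4}, which on its face only yields (D4)$\Rightarrow$(D5). Your strict-monotonicity argument---factoring $h(g(\psi))\le h(\psi)$ through the two projection inequalities and then, in the equality case, applying the uniqueness clause of Lemma~\ref{LA2} at the $e$-projection step to force $\psi=g(\psi)$---supplies this direction explicitly. As a bonus, your argument for (D5)$\Rightarrow$(D4) uses only (B1) and (B3); condition (B2) enters solely through Corollary~\ref{CorT4} on the (D4)$\Rightarrow$(D5) side.
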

\begin{proof}
First, we show the equivalence among (D1), (D2), and (D3).
When $\theta_{\saa}$ satisfies Condition (D1), $\theta_{\saa}$ 
is a fixed point for the iteration given in Theorem \ref{theo:conv:BBem1}, which is equivalent to the condition:
\begin{align}
\nabla^{(m)} [F_{\cal M}^*]\circ \nabla^{(e)} 
[(F_{\cal E}^*\circ R[V_1])^*](\theta_{\saa})=\theta_{\saa}.
\Label{XHYP}
\end{align}
We choose the mixture parameter $\eta_{\saa}=\nabla^{(e)} [F_{\cal M}](\theta_{\saa})$, which implies 
\begin{align}
\nabla^{(m)} [F_{\cal M}^*](\eta_{\saa})=\theta_{\saa}.
\Label{XHYPB}
\end{align}
Hence, the condition \eqref{XHYP} is equivalent to 
\begin{align}
\nabla^{(e)} [(F_{\cal E}^*\circ R[V_1])^*](\theta_{\saa})=\eta_{\saa}.
\Label{XHYP2}
\end{align}
Due to \eqref{VBJ}, the condition \eqref{XHYP2} is equivalent to
\begin{align}
V_1\nabla^{(m)} [F_{\cal E}^*](\eta_{\saa} V_1)=\theta_{\saa}.
\Label{XHYP3}
\end{align}
The combination of \eqref{XHYPB} and \eqref{XHYP3} implies the equivalence between 
(D1) and (D3).
Also, substituting $\nabla^{(e)} [F_{\cal M}](\theta_{\saa})$ into 
$\eta_{\saa}$ at \eqref{XHYP3},
we obtain the equivalence between (D1) and (D2).

Under Conditions (B1), (B2), and (B3),
Corollary \ref{CorT4} guarantees that
an invariant point is limited to an element of 
the set $\Theta^*(\mathcal{M},\mathcal{E})$.
Hence, we obtain
the equivalence between (D4) and (D5) for 
$\theta_{\saa} \in \Theta_{{\cal M}}$.
\end{proof}

We define the exponential family
$\Theta_{{\cal M},{\cal E}}:= 
\Theta_{{\cal M}}\times \Theta_{{\cal E}}$
with the potential function
$F_{{\cal M},{\cal E}}((\theta_{\saa},\theta_{\scc})^T):= 
F_{{\cal M}}(\theta_{\saa})+F_{{\cal E}}(\theta_{\scc})$.
This exponential family has the mixture parameter
$(\eta_{\saa},\eta_{\scc})$ as
\begin{align}
\nabla^{(e)}[F_{{\cal M},{\cal E}}]
= 
(\nabla^{(e)}[F_{{\cal M}}],\nabla^{(e)}[F_{{\cal E}}]).\Label{XOA}
\end{align}

We define 
the mixture subfamily $\hat{\cal M}$ and 
the exponential subfamily $\hat{\cal E}$ as
\begin{align}
\hat{\cal M}:= &\left\{
\left(\begin{array}{c}
\theta_{\saa}\\ \theta_{\scc}
\end{array}\right)
 \left| (\eta_{\saa},\eta_{\scc}) =\nabla^{(e)}[F_{{\cal M},{\cal E}}]
 \left(\begin{array}{c}
\theta_{\saa}\\ \theta_{\scc}
\end{array}\right)
, ~(\eta_{\saa}, \eta_{\scc} ) 
\left(\begin{array}{c}
V_1 \\
-I
\end{array}\right)
=0 \right.\right\} \\
\hat{\cal E}:=& \left\{
\left.\left(\begin{array}{c}
\theta_{\saa}\\ \theta_{\scc}
\end{array}\right)
\right| V_1 \theta_{\scc}=\theta_{\saa}  \right\} 
=  \left\{
\left(\begin{array}{c}
V_1 \theta_{\scc}\\ \theta_{\scc}
\end{array}\right)
\right\} 
=  \left\{
\left(\begin{array}{c}
V_1 \\
I
\end{array}\right)
\theta_{\scc} \right\} .
\end{align}
By using $\theta_{\scc}=\nabla^{(m)} [F_{\cal E}^*] (\eta_{\saa} V_1) \in \Theta_{{\cal E}}$,
(D3) of Theorem \ref{CPZ} is rewritten as
\begin{align}
\eta_\saa V_1= \nabla^{(e)} [F_{\cal E}] (\theta_{\scc}),\quad
\eta_\saa= \nabla^{(e)} [F_{\cal M}](V_1\theta_{\scc}),
\end{align}
which implies that
\begin{align}
\nabla^{(e)} [F_{\cal M}](V_1\theta_{\scc}) V_1= \nabla^{(e)} [F_{\cal E}] (\theta_{\scc}).\Label{XZA}
\end{align}
Since the condition \eqref{XZA} for $\theta_{\scc}$
is equivalent to the condition that
the element $ \left(\begin{array}{c}
V_1 \theta_{\scc}\\ \theta_{\scc}
\end{array}\right) \in \hat{\cal E}$ belongs to $\hat{\cal M}$,
we have the following corollary of Theorem \ref{CPZ}.
\begin{corollary}\Label{Cor3}
Assume that Conditions (B1) and (B3) hold and
the intersection $\hat{\cal M}\cap \hat{\cal E}$ is not empty.
For an element $(\theta_{\saa,*},\theta_{\scc,*}) \in \hat{\cal M}\cap \hat{\cal E}$,
$\theta_{\saa,*} \in \Theta_{{\cal M}}$ 
is
an invariant point of the map 
$\Pro^{(e),F}_{{\cal M}}\circ \Pro^{(m),F}_{{\cal M}\to{\cal E}}$,
i.e., 
an invariant point of 
the inverse map $(\Pro^{(e),F}_{{\cal M}}\circ \Pro^{(m),F}_{{\cal M}\to{\cal E}})^{-1}$.

When Condition (B2) holds additionally,
the maximization \eqref{eq:Inf.st} is written 
as follows.
\begin{align}
C_{\sup}({\cal M},{\cal E})=D^F(\phi^{(e)}_{{\cal M}}(\theta_{\saa,*}) \|
\phi^{(e)}_{{\cal E}}(\theta_{\scc,*})).
\end{align}
\end{corollary}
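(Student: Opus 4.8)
The plan is to read the corollary directly off Theorem~\ref{CPZ}, Corollary~\ref{CorT4}, and Lemma~\ref{BXPR}, reusing the rewriting of condition (D3) already carried out in the paragraph preceding the statement. First I would take $(\theta_{\saa,*},\theta_{\scc,*})\in\hat{\cal M}\cap\hat{\cal E}$ and unfold the two membership conditions. Since $\hat{\cal M},\hat{\cal E}\subset\Theta_{{\cal M},{\cal E}}=\Theta_{\cal M}\times\Theta_{\cal E}$, automatically $\theta_{\saa,*}\in\Theta_{\cal M}$ and $\theta_{\scc,*}\in\Theta_{\cal E}$. Membership in $\hat{\cal E}$ gives $\theta_{\saa,*}=V_1\theta_{\scc,*}$, and membership in $\hat{\cal M}$, after inserting \eqref{XOA}, gives $\nabla^{(e)}[F_{\cal M}](\theta_{\saa,*})\,V_1=\nabla^{(e)}[F_{\cal E}](\theta_{\scc,*})$. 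Combining the two produces exactly \eqref{XZA} with $\theta_{\scc}=\theta_{\scc,*}$, which the text preceding the corollary has identified with condition (D3) of Theorem~\ref{CPZ} at the parameter $\theta_{\saa,*}$; equivalently, applying the inverse $\nabla^{(m)}[F_{\cal E}^*]$ of $\nabla^{(e)}[F_{\cal E}]$ to \eqref{XZA} and then multiplying by $V_1$ gives $V_1\nabla^{(m)}[F_{\cal E}^*]\big(\nabla^{(e)}[F_{\cal M}](\theta_{\saa,*})V_1\big)=V_1\theta_{\scc,*}=\theta_{\saa,*}$, which is (D2). By the equivalence (D1)$\Leftrightarrow$(D2)$\Leftrightarrow$(D3) in Theorem~\ref{CPZ}, valid under (B1) and (B3), the parameter $\theta_{\saa,*}$ is an invariant point of $\Pro^{(e),F}_{{\cal M}}\circ\Pro^{(m),F}_{{\cal M}\to{\cal E}}$, hence also of its inverse. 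This settles the first assertion.

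For the value formula I would add the hypothesis (B2) and appeal to Corollary~\ref{CorT4}, which says that any invariant point of $\Pro^{(e),F}_{{\cal M}}\circ\Pro^{(m),F}_{{\cal M}\to{\cal E}}$ is a global maximizer of the reverse em-problem \eqref{eq:Inf.st}. Hence $\phi^{(e)}_{\cal M}(\theta_{\saa,*})$ attains the supremum and $C_{\sup}({\cal M},{\cal E})=D^{F}\big(\phi^{(e)}_{\cal M}(\theta_{\saa,*})\,\big\|\,\Pro^{(m),F}_{\cal E}(\phi^{(e)}_{\cal M}(\theta_{\saa,*}))\big)$. It then remains to identify the $m$-projection on the right: the mixture parameter of $\phi^{(e)}_{\cal M}(\theta_{\saa,*})$ inside ${\cal M}$ is $\eta_{\saa}:=\nabla^{(e)}[F_{\cal M}](\theta_{\saa,*})$, so by Lemma~\ref{BXPR} the mixture parameter of its $m$-projection onto ${\cal E}$ is $\eta_{\saa}V_1$, and by \eqref{XZA} this equals $\nabla^{(e)}[F_{\cal E}](\theta_{\scc,*})$, the mixture parameter of $\phi^{(e)}_{\cal E}(\theta_{\scc,*})$. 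Since the passage between a point and its mixture parameter is a bijection, $\Pro^{(m),F}_{\cal E}(\phi^{(e)}_{\cal M}(\theta_{\saa,*}))=\phi^{(e)}_{\cal E}(\theta_{\scc,*})$, and substituting gives the claimed expression.

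I expect no genuine difficulty in the argument itself: the delicate part is purely the coordinate bookkeeping — keeping the natural versus mixture parameters of ${\cal M}$ and ${\cal E}$ straight throughout, noting that $\Pro^{(m),F}_{{\cal M}\to{\cal E}}$ is by definition the restriction of $\Pro^{(m),F}_{\cal E}$ to ${\cal M}$ so that Lemma~\ref{BXPR} applies to $\phi^{(e)}_{\cal M}(\theta_{\saa,*})$, and checking that exactly the hypotheses invoked ((B1) and (B3) for the invariance claim, together with (B2) for the value formula) match what Theorem~\ref{CPZ}, Corollary~\ref{CorT4}, and Lemma~\ref{BXPR} require. No new estimate is needed; the corollary is a repackaging of the fixed-point characterization in Theorem~\ref{CPZ} with the optimality statement in Corollary~\ref{CorT4}.
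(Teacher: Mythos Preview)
Your proposal is correct and follows essentially the same route as the paper: the paper derives the corollary from the paragraph immediately preceding it, where membership in $\hat{\cal M}\cap\hat{\cal E}$ is shown to be equivalent to condition \eqref{XZA}, hence to (D3) and thus (D1) of Theorem~\ref{CPZ}; the maximizer statement under (B2) then comes from Corollary~\ref{CorT4}. Your additional identification $\Pro^{(m),F}_{\cal E}(\phi^{(e)}_{\cal M}(\theta_{\saa,*}))=\phi^{(e)}_{\cal E}(\theta_{\scc,*})$ via Lemma~\ref{BXPR} makes explicit a step the paper leaves implicit in the value formula.
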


Therefore, the reverse em-problem \eqref{eq:Inf.st} is reduced to 
finding the element $(\theta_{\saa,*},\theta_{\scc,*}) \in \hat{\cal M}\cap \hat{\cal E}$.
This element can be found by solving the following minimization problem;
\begin{align}
\argmin_{(\theta_{\saa},\theta_{\scc}) \in \hat{\cal M}} 
\min_{(\theta_{\saa}',\theta_{\scc}') \in \hat{\cal E}} 
D^{F_{{\cal M}}}(\theta_{\saa} \| \theta_{\saa}')
+D^{F_{{\cal E}}}(\theta_{\scc} \| \theta_{\scc}').\Label{AML}
\end{align}
Since 
$\hat{\cal E}$ is an exponential family and $\hat{\cal M}$ is a mixture family,
the above minimization problem \eqref{AML} is a special case of 
the em-problem \eqref{eq:em}.
Therefore, to solve \eqref{AML}, we can employ the em-algorithm, Algorithm \ref{protocol1-0}. 

As another method to characterize the intersection $\hat{\cal M}\cap\hat{\cal E}$,
we assume Condition (B4), and introduce the parameterizations
$\theta_{\scc}=(\theta_{\saa},\theta_{\sbb})$ and 
$\eta_{\scc}=(\eta_{\saa},\eta_{\sbb})$.
Additionally, we introduce the following new condition.

\begin{description}
\item[(B6)]
Condition (B4) and 
the relation $k \ge 2l$ hold.
The rank of  $V_3$ is $l$.
The vector 
\begin{align}
\theta_{\saa,*}:=\nabla^{(m)}[F_{{\cal M}}^*](\eta_{\saa})-
\nabla_\saa^{(m)}[F_{{\cal E}}^*](\eta_{\saa}(I, V_3))
\Label{MMZ}
\end{align}
does not depend on  $\eta_{\saa} \in \Xi_{{\cal M}} \subset \mathbb{R}^l$, where
\begin{align}
\nabla^{(m)}[F_{{\cal E}}^*](\eta_{\saa}(I, V_3))
=
\left(\begin{array}{c}
\nabla^{(m)}_{\saa}[F_{{\cal E}}^*](\eta_{\saa}(I, V_3)) \\
\nabla^{(m)}_{\sbb}[F_{{\cal E}}^*](\eta_{\saa}(I, V_3))
\end{array}
\right).
\end{align}
\end{description}

When Condition (B6) holds,
the first $l$ natural parameters of $\eta_{\saa}(I,V_3)$
in ${\cal E}$
can be calculated from the natural parameters of $\eta_{\saa}$
in ${\cal M}$.

We choose an element $\theta_{\sbb,*} \in \mathbb{R}^{k-l}$
and a $(k-l)\times (k-2l)$ matrix $V_4 $ such that 
\begin{align}
\theta_{\saa,*}=& V_3 \theta_{\sbb,*} \Label{ZMP}\\
\Ker V_3 =& \im V_4.\Label{ZMP2}
\end{align}
The existence of $\theta_{\sbb,*}$ is guaranteed by Condition (B6) (the rank condition for $V_3$).
Then,
we define the following exponential and mixture subfamilies of 
${\cal E}$ as
\begin{align}
\overline{\cal E}:=& \{ (\theta_{\saa},\theta_{\sbb,*}
+ V_4 \theta_\see)^T |
\theta_\saa \in \mathbb{R}^l, \theta_\see \in \mathbb{R}^{k-2l} \} \Label{XC3}\\
\overline{\cal M}:=& \{ (\theta_\saa,\theta_\sbb)^T |
(\eta_\saa,\eta_\sbb)= \nabla^{(e)}[F_{{\cal E}}]((\theta_\saa,\theta_\sbb)^T),~
\eta_\saa V_3- \eta_\sbb=0 \} .\Label{XC4}
\end{align}
We have the following corollary of Corollary \ref{Cor3}.

\begin{corollary}\Label{Cor4}
Assume Conditions (B1), (B3), (B4), and (B6).
The following two conditions for an element 
$(\theta_{\saa},\theta_{\sbb})^T \in \Theta_{{\cal E}} $ are equivalent.
\begin{description}
\item[(F1)] 
The point
$\Pro^{(e),F}_{\Theta_{\cal E}\to {\cal M}}
((\theta_{\saa},\theta_{\sbb})^T)$
is invariant for the map
$\Pro^{(e),F}_{{\cal M}}\circ \Pro^{(m),F}_{{\cal M}\to{\cal E}}$.
\item[(F2)] The element $(\theta_{\saa},\theta_{\sbb})^T$ belongs to 
the intersection $\overline{\cal E} \cap \overline{\cal M}$.
\item[(F3)]
There is an element $\eta_{\saa}'\in \Xi_{\cal M}$  such that
$\psi_{\cal M}^{(m)}(\eta_{\saa}')$
is invariant for the map
$\Pro^{(e),F}_{{\cal M}}\circ \Pro^{(m),F}_{{\cal M}\to{\cal E}}$
and 
$(\theta_{\saa},\theta_{\sbb})^T
=\Pro^{(m),F}_{\Xi_{\cal M}\to\Theta_{\cal E}}(\eta_{\saa}')$.
\end{description}

When Condition (B2) holds additionally,
(F1) is equivalent to the following condition.
\begin{description}
\item[(F1')] 
The maximum exists in \eqref{eq:Inf.st}, i.e., 
\begin{align}
C_{\sup}({\cal M},{\cal E})=
D^F(
\Pro^{(e),F}_{\Theta_{\cal E}\to {\cal M}}(\theta_{\saa},
\theta_{\sbb})^T
) \|
\phi^{(e)}_{{\cal E}}((\theta_{\saa},\theta_{\sbb})^T).
\Label{SXO}
\end{align}
\end{description}
\end{corollary}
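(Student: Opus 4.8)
The plan is to obtain Corollary~\ref{Cor4} as a consequence of Corollary~\ref{Cor3} by pushing the whole picture down from the product space $\Theta_{{\cal M},{\cal E}}=\Theta_{\cal M}\times\Theta_{\cal E}$ onto $\Theta_{\cal E}$ alone. Corollary~\ref{Cor3}, read through the equivalences of Theorem~\ref{CPZ}, says that the invariant points of $\Pro^{(e),F}_{\cal M}\circ\Pro^{(m),F}_{{\cal M}\to{\cal E}}$ correspond bijectively to the points of $\hat{\cal M}\cap\hat{\cal E}$. Since $\hat{\cal E}=\{(V_1\theta_\scc,\theta_\scc)^T\}$ is a graph over $\Theta_{\cal E}$, a point of $\hat{\cal E}$ is carried by its coordinate $\theta_\scc=(\theta_\saa,\theta_\sbb)^T\in\Theta_{\cal E}$, and membership of that point in $\hat{\cal M}$ is exactly the identity \eqref{XZA}. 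First I would make the dictionary explicit: under this graph identification the map $\Pro^{(e),F}_{\Theta_{\cal E}\to{\cal M}}$ becomes $\theta_\scc\mapsto\phi^{(e)}_{\cal M}(V_1\theta_\scc)$, so that condition (F1) for $(\theta_\saa,\theta_\sbb)^T$ is equivalent to $\theta_\scc$ solving \eqref{XZA}.

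Next I would unfold \eqref{XZA} under Conditions (B4) and (B6). With $V_1=(I,V_3)$ and $(\eta_\saa,\eta_\sbb)=\nabla^{(e)}[F_{\cal E}]((\theta_\saa,\theta_\sbb)^T)$, the relation $\nabla^{(e)}[F_{\cal M}](V_1\theta_\scc)V_1=\nabla^{(e)}[F_{\cal E}](\theta_\scc)$ splits into $\nabla^{(e)}[F_{\cal M}](\theta_\saa+V_3\theta_\sbb)=\eta_\saa$ and $\eta_\saa V_3=\eta_\sbb$. The second equation is the defining constraint of $\overline{\cal M}$ in \eqref{XC4}. From the first I would invert to $\theta_\saa+V_3\theta_\sbb=\nabla^{(m)}[F_{\cal M}^*](\eta_\saa)$, use $(\theta_\saa,\theta_\sbb)^T=\nabla^{(m)}[F_{\cal E}^*](\eta_\saa(I,V_3))$ (valid on the graph), and subtract the $\saa$-block of the latter to isolate $V_3\theta_\sbb=\theta_{\saa,*}$, where $\theta_{\saa,*}$ is the vector \eqref{MMZ} that (B6) forces to be independent of $\eta_\saa$. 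With the choices of $\theta_{\sbb,*}$ and $V_4$ in \eqref{ZMP}--\eqref{ZMP2}, the equation $V_3\theta_\sbb=\theta_{\saa,*}=V_3\theta_{\sbb,*}$ is equivalent to $\theta_\sbb\in\theta_{\sbb,*}+\im V_4$ with $\theta_\saa$ unconstrained, that is, to $(\theta_\saa,\theta_\sbb)^T\in\overline{\cal E}$ as in \eqref{XC3}. Thus \eqref{XZA} holds precisely when $(\theta_\saa,\theta_\sbb)^T\in\overline{\cal E}\cap\overline{\cal M}$, which establishes (F1) $\Leftrightarrow$ (F2).

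For (F1) $\Leftrightarrow$ (F3) I would exploit the same bijection in its ``$m$-projection'' form: by Corollary~\ref{Cor3} the point of $\hat{\cal M}\cap\hat{\cal E}$ attached to an invariant point $\theta_{\saa,*}$ has $\theta_\scc$-coordinate equal to the natural parameter of $\Pro^{(m),F}_{\cal E}(\phi^{(e)}_{\cal M}(\theta_{\saa,*}))$, while the fixed-point identity $\Pro^{(e),F}_{\cal M}\circ\Pro^{(m),F}_{\cal E}(\phi^{(e)}_{\cal M}(\theta_{\saa,*}))=\phi^{(e)}_{\cal M}(\theta_{\saa,*})$ shows that $e$-projecting this point back onto ${\cal M}$ returns $\theta_{\saa,*}$; taking $\eta_\saa'=\nabla^{(e)}[F_{\cal M}](\theta_{\saa,*})$ gives (F3), and conversely any $(\theta_\saa,\theta_\sbb)^T$ of the form in (F3) $e$-projects onto an invariant point, which is (F1). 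Finally, when (B2) is added, (F1) $\Leftrightarrow$ (F1') follows at once: (F1) produces an invariant point, so the equivalence (D4) $\Leftrightarrow$ (D5) of Theorem~\ref{CPZ} gives that the supremum in \eqref{eq:Inf.st} is attained, and the value formula of Corollary~\ref{Cor3}, combined with the identification of $\Pro^{(e),F}_{\Theta_{\cal E}\to{\cal M}}((\theta_\saa,\theta_\sbb)^T)$ and $\phi^{(e)}_{\cal E}((\theta_\saa,\theta_\sbb)^T)$, yields \eqref{SXO}; the converse retraces the same steps.

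The hard part will be the bookkeeping in the middle step together with the compatibility assertion of the first: I must check that $\Pro^{(e),F}_{\Theta_{\cal E}\to{\cal M}}$, read along the graph identification, pulls the invariant set back onto $\overline{\cal E}\cap\overline{\cal M}$ exactly, and that (B6) is indeed the precise hypothesis that makes $\theta_{\saa,*}$, hence the exponential subfamily $\overline{\cal E}$, well defined independently of $\eta_\saa$. Keeping natural versus mixture parameters, the restriction to $\Theta_{\cal E}$, and the transposes consistent through all these reparametrisations is where slips are most likely.
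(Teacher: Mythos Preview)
Your proposal is correct and follows essentially the same route as the paper's proof. Both arguments reduce (F1) to the fixed-point condition of Theorem~\ref{CPZ} (equivalently \eqref{XZA}/\eqref{CCA}), then under (B4) split that condition into the constraint $\eta_{\saa}V_3=\eta_{\sbb}$ defining $\overline{\cal M}$ and the constraint $V_3\theta_{\sbb}=\theta_{\saa,*}$ defining $\overline{\cal E}$, invoking (B6) precisely to make $\theta_{\saa,*}$ independent of $\eta_{\saa}$; the (F1$'$) part is handled in both via Corollary~\ref{CorT4}. The only cosmetic difference is that the paper declares (F1)$\Leftrightarrow$(F3) ``trivial'' and then works from (F3) through Lemma~\ref{BXPR} and the chain \eqref{CAP}--\eqref{CAP3}, whereas you phrase the same identities via the graph $\hat{\cal E}$ and Corollary~\ref{Cor3}; the algebra you outline (invert the first block to $\theta_{\saa}+V_3\theta_{\sbb}=\nabla^{(m)}[F_{\cal M}^*](\eta_{\saa})$, subtract $\nabla_{\saa}^{(m)}[F_{\cal E}^*](\eta_{\saa}(I,V_3))$, apply \eqref{MMZ}) is exactly the content of \eqref{CAP3}.
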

Therefore,
when the intersection $\overline{\cal E} \cap \overline{\cal M}$ is not empty
and Conditions (B1), (B2), (B3), (B4), and (B6) hold,
the maximization \eqref{eq:Inf.st} is written 
by the element of $\overline{\cal E} \cap \overline{\cal M}$ as \eqref{SXO}.

\begin{proof}
We choose $(\eta_{\saa},\eta_{\sbb})= 
\nabla^{(e)}[F_{{\cal E}}]((\theta_{\saa},\theta_{\sbb})^T)$.
The equivalent between (F1) and (F3) is trivial.

In order that $\psi_{{\cal E}}^{(e)} (\eta_{\saa},\eta_{\sbb})$
satisfies the condition (F1),
$(\eta_{\saa},\eta_{\sbb})$ needs  to be written as
$\Pro^{(m),F}_{\Xi_{\cal M}\to\Xi_{\cal E}}(\eta_{\saa}')$
with $\eta_{\saa}' \in \Xi_{\cal M}$.
In addition, 
Lemma \ref{BXPR} guarantees that
$\Pro^{(m),F}_{\Xi_{\cal M}\to\Xi_{\cal E}}(\eta_{\saa}')
= \eta_{\saa}'(I, V_3)$ and $\eta_{\saa}= \eta_{\saa}'$.
That is, 
the condition (i)
$(\eta_{\saa},\eta_{\sbb})=\eta_{\saa}(I, V_3)$, i.e., 
$\eta_{\saa} V_3- \eta_{\sbb}=0$,
is a necessary condition for (F1).
In the following, we discuss the equivalent condition for (F1) under this necessary condition (i).

Condition (F1) is equivalent to each of  
the following conditions.
\begin{align}
\theta_{\saa,*}
=&V_3 \nabla^{(m)}_\sbb[F_{{\cal E}}^*](\eta_{\saa}(I, V_3))\Label{CAP} \\
\nabla^{(m)}[F_{{\cal M}}^*](\eta_{\saa})
=& V_1\nabla^{(m)}[F_{{\cal E}}^*](\eta_{\saa}(I, V_3))
\Label{CAP2} 
\end{align}
because 
\eqref{CAP2} is the same as \eqref{CCA}, which is
equivalent to Condition (E1),
and we have
\begin{align}
&\theta_{\saa,*}-V_3 \nabla^{(m)}_\sbb[F_{{\cal E}}^*](\eta_{\saa}'(I, V_3))\nonumber \\
\overset{(a)}=& \nabla^{(m)}[F_{{\cal M}}^*](\eta_{\saa}')
-
\nabla^{(m)}_{\saa}[F_{{\cal E}}^*](\eta_{\saa}'(I, V_3))
-V_3 \nabla^{(m)}_\sbb[F_{{\cal E}}^*](\eta_{\saa}'(I, V_3)) \nonumber \\
\overset{(b)}=& \nabla^{(m)}[F_{{\cal M}}^*](\eta_{\saa}')
- V_1\nabla^{(m)}[F_{{\cal E}}^*](\eta_{\saa}'(I, V_3)) ,\Label{CAP3}
\end{align}
where 
$(a)$ and $(b)$ follow from (B6) and the relation $V_1=(I,V_3))$,
respectively.

The condition \eqref{CAP}
is equivalent to the condition
$\theta_{\saa,*}=V_3 \theta_{\sbb}$.
This condition is equivalent to the condition (ii) 
that $\theta_{\sbb}$ is written as 
$\theta_{\sbb,*}+ V_4 \theta_{\see}$.
Since the conditions (i) and (ii) correspond to the sets 
$\overline{\cal M}$ and $\overline{\cal E}$, respectively.
Therefore, (F1) implies (F2).

Conversely, when Condition (F2) holds,
the conditions (i) and (ii) hold.
Due to \eqref{CAP3}, 
under the condition (i), the condition (ii), i.e., \eqref{CAP} implies 
\eqref{CAP2}, which is equivalent to (F1).
Therefore, (F2) implies (F1).
The desired equivalence is obtained.
\end{proof}

Therefore, the reverse em-problem \eqref{eq:Inf.st} is reduced to 
finding the element 
$(\bar\theta_{\saa,*},\bar\theta_{\sbb,*})^T \in
\overline{\cal E} \cap \overline{\cal M}$.
This element can be found by solving the following minimization problem;
\begin{align}
\argmin_{(\theta_{\saa},\theta_{\sbb}) \in \overline{\cal M}} 
\min_{(\theta_{\saa}',\theta_{\sbb}') \in \overline{\cal E}} 
D^{F_{{\cal E}}}((\theta_{\saa},\theta_{\sbb}) \| 
(\theta_{\saa}',\theta_{\sbb}')).
\Label{AML2}
\end{align}
Since 
$\overline{\cal E}$ is an exponential family and $\overline{\cal M}$ is a mixture family,
the above minimization problem \eqref{AML2} is another special case of 
the em-problem \eqref{eq:em}.
Therefore, to solve \eqref{AML2}, we can employ the em-algorithm, Algorithm \ref{protocol1-0}. 
The minimization problem \eqref{AML2} has a smaller number of free parameters
than the minimization problem \eqref{AML}.

The following is an alternative method to find an element of 
$\Pro^{(e),F}_{{\cal M}}\circ \Pro^{(m),F}_{{\cal M}\to{\cal E}}$.
Find an element $\eta_{\saa,*}$ to realize an
extremal value of the following function;
\begin{align}
\kappa(\eta_{\saa}):=
F_{{\cal E}^*}(\eta_{\saa}(I,V_3))
-F_{{\cal M}^*}(\eta_{\saa})
-\langle\eta_{\saa} , \theta_{\saa,*}\rangle
-\langle\eta_{\saa} , V_3 \theta_{\sbb,*}\rangle.
\end{align}

\begin{lemma}
Assume Conditions (B1), (B3), (B4), and (B6).
The condition 
\begin{align}
\nabla [\kappa] (\eta_{\saa,*})=0\Label{MZY}
\end{align}
is equivalent to 
$\nabla^{(m)} [F_{{\cal E}^*}]
(\eta_{\saa,*}(I,V_3))\in 
\overline{\cal M}\cap \overline{\cal E}$.
\end{lemma}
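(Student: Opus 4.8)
The plan is to reduce the claimed equivalence to the geometry already worked out in the proof of Corollary~\ref{Cor4}. Set $\theta_\scc^{*}:=\nabla^{(m)}[F_{\cal E}^*](\eta_{\saa,*}(I,V_3))\in\Theta_{\cal E}$ and write $\theta_\saa^{*},\theta_\sbb^{*}$ for its $\saa$- and $\sbb$-blocks. I would proceed in three steps: first, compute $\nabla[\kappa]$ and show $\nabla[\kappa](\eta_{\saa,*})=0$ is equivalent to \eqref{CAP2} at $\eta_{\saa,*}$; second, recall that \eqref{CAP2} is equivalent to \eqref{CAP}, which says precisely $\theta_\scc^{*}\in\overline{\cal E}$; third, observe that $\theta_\scc^{*}$ always lies in $\overline{\cal M}$, so that $\theta_\scc^{*}\in\overline{\cal M}\cap\overline{\cal E}$ is the same as $\theta_\scc^{*}\in\overline{\cal E}$.

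For the first step, I would differentiate $\kappa$ termwise. By the chain rule together with \eqref{MDF} applied to $F_{\cal E}^*$ and the $k\times l$ matrix $V_1=(I,V_3)$, the gradient of $\eta_\saa\mapsto F_{\cal E}^{*}(\eta_\saa(I,V_3))$ is $V_1\nabla^{(m)}[F_{\cal E}^*](\eta_\saa(I,V_3))$; the gradient of $\eta_\saa\mapsto F_{\cal M}^{*}(\eta_\saa)$ is $\nabla^{(m)}[F_{\cal M}^*](\eta_\saa)$; and the linear terms contribute a constant vector built from $\theta_{\saa,*}$ and $V_3\theta_{\sbb,*}$. Using the block identity $V_1\nabla^{(m)}[F_{\cal E}^*](\eta_\saa(I,V_3))=\nabla_\saa^{(m)}[F_{\cal E}^*](\eta_\saa(I,V_3))+V_3\nabla_\sbb^{(m)}[F_{\cal E}^*](\eta_\saa(I,V_3))$, the defining identity \eqref{MMZ} for $\theta_{\saa,*}$, and the relation \eqref{ZMP}, the constant terms cancel the $\saa$-block and $\nabla[\kappa](\eta_{\saa,*})=0$ collapses to the relation $\theta_{\saa,*}=V_3\nabla_\sbb^{(m)}[F_{\cal E}^*](\eta_{\saa,*}(I,V_3))$, i.e.\ \eqref{CAP} at $\eta_{\saa,*}$. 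The identity established around \eqref{CAP3} in the proof of Corollary~\ref{Cor4} then shows this is equivalent to \eqref{CAP2} at $\eta_{\saa,*}$.

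For the second and third steps, note first that by \eqref{ZMP} the relation $\theta_{\saa,*}=V_3\theta_\sbb^{*}$ (which is \eqref{CAP} rewritten, since $\theta_\sbb^{*}=\nabla_\sbb^{(m)}[F_{\cal E}^*](\eta_{\saa,*}(I,V_3))$) is equivalent to $V_3(\theta_\sbb^{*}-\theta_{\sbb,*})=0$, i.e.\ $\theta_\sbb^{*}-\theta_{\sbb,*}\in\Ker V_3=\im V_4$ by \eqref{ZMP2}, i.e.\ $\theta_\sbb^{*}=\theta_{\sbb,*}+V_4\theta_\see$ for some $\theta_\see$; by \eqref{XC3} this is exactly $\theta_\scc^{*}\in\overline{\cal E}$. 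On the other hand, since $\nabla^{(e)}[F_{\cal E}]$ inverts $\nabla^{(m)}[F_{\cal E}^*]$ by \eqref{M1}--\eqref{M2}, the mixture parameter of $\theta_\scc^{*}$ with respect to $F_{\cal E}$ equals $\eta_{\saa,*}(I,V_3)=(\eta_{\saa,*},\eta_{\saa,*}V_3)$, which automatically satisfies the constraint $\eta_\saa V_3-\eta_\sbb=0$ defining $\overline{\cal M}$ in \eqref{XC4}; thus $\theta_\scc^{*}\in\overline{\cal M}$ unconditionally. Chaining the equivalences, $\nabla[\kappa](\eta_{\saa,*})=0$ holds iff $\theta_\scc^{*}\in\overline{\cal E}$, hence iff $\theta_\scc^{*}\in\overline{\cal M}\cap\overline{\cal E}$, which is the assertion.

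The main obstacle I anticipate is the first step: one has to keep the row/column conventions for $\nabla^{(e)}$ and $\nabla^{(m)}$ and the $\saa$/$\sbb$ block decomposition consistent throughout, and in particular verify that the constant vector coming from the linear part of $\kappa$ cancels precisely against the $\theta_{\saa,*}$-contribution in \eqref{MMZ}, so that the first-order condition reduces exactly to \eqref{CAP} (equivalently \eqref{CAP2}) at $\eta_{\saa,*}$ rather than to a shifted variant of it. Everything after that is the bookkeeping already done in Corollary~\ref{Cor4}.
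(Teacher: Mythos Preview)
Your approach is correct and essentially the same as the paper's: the paper computes $\nabla[\kappa](\eta_{\saa,*})$ directly, uses \eqref{MMZ} from (B6) to cancel the $\saa$-block, and obtains $\nabla[\kappa](\eta_{\saa,*})=V_3\bigl(\nabla^{(m)}_{\sbb}[F_{\cal E}^*](\eta_{\saa,*}(I,V_3))-\theta_{\sbb,*}\bigr)$, whence the vanishing of the gradient is equivalent to $\theta_\scc^{*}\in\overline{\cal E}$ via \eqref{ZMP2}. Your proof is in fact slightly more complete, since you make explicit the step the paper leaves tacit: that $\theta_\scc^{*}=\nabla^{(m)}[F_{\cal E}^*](\eta_{\saa,*}(I,V_3))$ automatically has mixture parameter $(\eta_{\saa,*},\eta_{\saa,*}V_3)$ and therefore always lies in $\overline{\cal M}$, so that $\theta_\scc^{*}\in\overline{\cal E}$ is the same as $\theta_\scc^{*}\in\overline{\cal M}\cap\overline{\cal E}$.
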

That is, the above extremal value gives 
the solution \eqref{SXO}.

\begin{proof}
We  have
\begin{align*}
&\nabla [\kappa] (\eta_{\saa,*})\\
=&(I,V_3) 
(\nabla^{(m)} [F_{{\cal E}^*}](\eta_{\saa,*}(I,V_3))
-
\nabla^{(m)} [F_{{\cal M}^*}](\eta_{\saa,*})
-\theta_{\saa,*}
-V_3 \theta_{\sbb,*} \\
\overset{(a)}=&V_3 
(\nabla^{(m)}_{\sbb} [F_{{\cal E}^*}](\eta_{\saa,*}(I,V_3))
-\theta_{\sbb,*}),
\end{align*}
where $(a)$ follows from \eqref{MMZ} in (B6).
Hence, \eqref{MZY}
is equivalent to $\nabla^{(m)} [F_{{\cal E}^*}]
(\eta_{\saa,*}(I,V_3))\in 
\overline{\cal E}$.
\end{proof}

\subsection{Non-iterative method}\Label{S47B}
We directly characterize the maximizer of the maximization \eqref{eq:Inf.st}
without iterations. 
For this aim, we assume Condition (B5) in addition to (B1), (B3), and (B4).
When Condition (B5) holds, Condition (B6) is rewritten as follows.
\begin{description}
\item[(B6+)]
Condition (B4) and 
the relation $k \ge 2l$ hold.
The rank of  $V_3$ is $l$.
The vector $
\theta_{\saa,*}:=\nabla^{(m)}[F_{{\cal M}}^*](\eta_\saa)-
\nabla^{(m)}[F_{{\cal E},\saa}^*](\eta_{\saa})$
does not depend on $\eta_{\saa} \in \Xi_{{\cal M}} \subset \mathbb{R}^l$.
\end{description}

Using the solution $\theta_{\sbb,*}$ of \eqref{ZMP}, 
we define the following exponential and mixture subfamilies of 
${\cal E}_{\sbb}$ as
\begin{align}
\overline{\cal E}_{\sbb}:=& \{ \theta_{\sbb,*}+ 
V_4 \theta_{\see} \in \Theta_{{\cal E},\sbb}|
 \theta_{\see} \in \mathbb{R}^{k-2l} \} \Label{XC1}\\
\overline{\cal M}_{\sbb}:=& 
\{ \theta_{\sbb} \in \Theta_{{\cal E},\sbb}|
\eta_{\sbb}=\nabla^{(e)}[F_{{\cal E},\sbb}](\theta_{\sbb}),~
\eta_{\sbb} V_4=0 \} .\Label{XC2}
\end{align}
We have the following corollary of Corollary \ref{Cor4}.
\begin{corollary}\Label{Cor5}
Assume Conditions (B1), (B3), (B4), (B5), (B6+), and $\Theta=\mathbb{R}^d$.
The following condition (E4) for an element $(\bar\theta_{\saa,*},\bar\theta_{\sbb,*})^T \in \Theta_{{\cal E}} $ is equivalent to (F1), (F2), and (F3) in Corollary \ref{Cor4}.
\begin{description}
\item[(F4)]
The following relations hold.
\begin{align}
\bar\theta_{\sbb,*} &\in  \overline{\cal E}_{\sbb} \cap \overline{\cal M}_{\sbb}\Label{XOP}\\
\nabla^{(e)}[F_{{\cal E},\sbb}](\bar\theta_{\sbb,*})
&=  \nabla^{(e)}[F_{{\cal E},\saa}](\bar\theta_{\saa,*})V_3.\Label{XOP2}
\end{align}
 \end{description}
\end{corollary}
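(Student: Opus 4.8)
The plan is to deduce the claimed equivalence directly from Corollary~\ref{Cor4} by unfolding the definitions of $\overline{\cal E}$, $\overline{\cal M}$, $\overline{\cal E}_{\sbb}$, $\overline{\cal M}_{\sbb}$ in the presence of Condition (B5); essentially no new analytic input beyond Corollary~\ref{Cor4} is needed. First I would record that, under (B5), Conditions (B6) and (B6+) coincide --- this is exactly the rewriting noted just before the statement --- because $F_{\cal E}=F_{{\cal E},\saa}+F_{{\cal E},\sbb}$ is block-separable, hence so is its Legendre transform, $F_{\cal E}^*(\eta_\saa,\eta_\sbb)=F_{{\cal E},\saa}^*(\eta_\saa)+F_{{\cal E},\sbb}^*(\eta_\sbb)$, so that $\nabla^{(m)}_{\saa}[F_{\cal E}^*](\eta_\saa(I,V_3))=\nabla^{(m)}[F_{{\cal E},\saa}^*](\eta_\saa)$ and the vector $\theta_{\saa,*}$ of (B6) is the one of (B6+). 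Consequently the hypotheses (B1), (B3), (B4), (B6) of Corollary~\ref{Cor4} are in force, so it suffices to prove the equivalence (F4)$\iff$(F2); the chain (F2)$\iff$(F1)$\iff$(F3) --- and, if (B2) is additionally assumed, also (F1)$\iff$(F1') --- is then inherited verbatim. The hypothesis $\Theta=\mathbb{R}^d$ enters only to make all relevant parameter domains full Euclidean spaces (in particular $\Theta_{{\cal E}}=\mathbb{R}^l$, $\Theta_{{\cal E},\saa}=\mathbb{R}^k$, $\Theta_{{\cal E},\sbb}=\mathbb{R}^{l-k}$ by (B5)), so that $\overline{\cal E}_{\sbb}$, $\overline{\cal M}_{\sbb}$ and the Legendre transforms above are globally well defined and there are no domain restrictions to track.

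Next I would translate membership in $\overline{\cal E}\cap\overline{\cal M}$. Writing $\nabla^{(e)}[F_{\cal E}]=(\nabla^{(e)}[F_{{\cal E},\saa}],\nabla^{(e)}[F_{{\cal E},\sbb}])$ by (B5) and comparing \eqref{XC3} with \eqref{XC1}, membership $(\bar\theta_{\saa,*},\bar\theta_{\sbb,*})^T\in\overline{\cal E}$ says precisely that $\bar\theta_{\sbb,*}\in\overline{\cal E}_{\sbb}$ (with $\bar\theta_{\saa,*}$ unconstrained); and the defining constraint of $\overline{\cal M}$ in \eqref{XC4}, namely $\eta_\saa V_3-\eta_\sbb=0$ with $(\eta_\saa,\eta_\sbb)=\nabla^{(e)}[F_{\cal E}]((\bar\theta_{\saa,*},\bar\theta_{\sbb,*})^T)$, becomes under (B5) exactly the linking equation \eqref{XOP2}, $\nabla^{(e)}[F_{{\cal E},\saa}](\bar\theta_{\saa,*})V_3=\nabla^{(e)}[F_{{\cal E},\sbb}](\bar\theta_{\sbb,*})$. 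Hence (F2) is equivalent to the conjunction of $\bar\theta_{\sbb,*}\in\overline{\cal E}_{\sbb}$ and \eqref{XOP2}.

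It then remains to match this with (F4), whose only additional ingredient over the above is the membership $\bar\theta_{\sbb,*}\in\overline{\cal M}_{\sbb}$, i.e.\ $\nabla^{(e)}[F_{{\cal E},\sbb}](\bar\theta_{\sbb,*})V_4=0$ by \eqref{XC2}. Here I would invoke \eqref{ZMP2}, $\Ker V_3=\im V_4$, which forces $V_3V_4=0$; inserting \eqref{XOP2} then gives
\[
\nabla^{(e)}[F_{{\cal E},\sbb}](\bar\theta_{\sbb,*})V_4
=\nabla^{(e)}[F_{{\cal E},\saa}](\bar\theta_{\saa,*})\,V_3V_4=0,
\]
so the $\overline{\cal M}_{\sbb}$-membership is automatic once \eqref{XOP2} holds. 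Thus (F2) implies (F4); conversely (F4) implies (F2) trivially, since $\overline{\cal E}_{\sbb}\cap\overline{\cal M}_{\sbb}\subseteq\overline{\cal E}_{\sbb}$ and \eqref{XOP2} is common to both. Combining with Corollary~\ref{Cor4} yields the asserted equivalence.

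The argument is mostly bookkeeping; the one genuine point --- which I expect to be the main thing to get right --- is the redundancy of the $\overline{\cal M}_{\sbb}$-membership inside (F4), which rests on the orthogonality $V_3V_4=0$ together with the block-separability of $F_{\cal E}$ (and hence of $F_{\cal E}^*$) supplied by (B5). A secondary care point is keeping the index conventions of Table~\ref{tableB} consistent while unfolding \eqref{XC1}--\eqref{XC4}, and verifying that the base point $\theta_{\sbb,*}$ from \eqref{ZMP} --- whose existence uses the rank hypothesis on $V_3$ in (B6+) --- is the same one appearing in both $\overline{\cal E}$ (via \eqref{XC3}) and $\overline{\cal E}_{\sbb}$ (via \eqref{XC1}).
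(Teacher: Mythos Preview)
Your proposal is correct and follows essentially the same route as the paper: both reduce to showing (F2)$\iff$(F4) via Corollary~\ref{Cor4}, both unpack $\overline{\cal E}$ and $\overline{\cal M}$ under (B5) into the pair ``$\bar\theta_{\sbb,*}\in\overline{\cal E}_{\sbb}$'' and ``\eqref{XOP2}'', and both observe that the $\overline{\cal M}_{\sbb}$-membership is redundant given \eqref{XOP2}. The only cosmetic difference is that you derive the redundancy by the direct computation $\nabla^{(e)}[F_{{\cal E},\sbb}](\bar\theta_{\sbb,*})V_4=\nabla^{(e)}[F_{{\cal E},\saa}](\bar\theta_{\saa,*})V_3V_4=0$, while the paper phrases it as ``$\nabla^{(e)}[F_{{\cal E},\sbb}](\bar\theta_{\sbb,*})$ lies in the row-image of $V_3$, hence annihilates $\im V_4=\Ker V_3$''.
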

\begin{proof}
Condition (F2) 
element $(\bar\theta_{\saa,*},\bar\theta_{\sbb,*})^T \in \Theta_{{\cal E}} $
is equivalent to the pair of the following conditions. 
(i) $\bar\theta_{\sbb,*}$ has the form $\theta_{\sbb,*}+ V_4 \theta_{\see}$, which corresponds to the condition 
$(\bar\theta_{\saa,*},\bar\theta_{\sbb,*})^T\in 
\overline{\cal E}$.
(ii) The pair $(\bar\theta_{\saa,*},\bar\theta_{\sbb,*})$ 
satisfies the condition 
\eqref{XOP2}, 
which corresponds to the condition 
$(\bar\theta_{\saa,*},\bar\theta_{\sbb,*})^T\in 
\overline{\cal M}$.
To satisfy \eqref{XOP2}, $\nabla^{(e)}[F_{{\cal E}_\sbb}]
(\bar\theta_{\sbb,*})
$ needs to have the form $\bar\eta_{\saa,*} V_3$ with 
$\bar\eta_{\saa,*} \in \mathbb{R}^l$,
which is equivalent to the condition (iii); 
$\nabla^{(e)}[F_{{\cal E}_\sbb}](\bar\theta_{\sbb,*}) V_4=0$, 
i.e., 
$(\bar\theta_{\saa,*},\bar\theta_{\sbb,*})^T\in 
\overline{\cal M}_\sbb$.
Since the conditions (i), (ii), and (iii) are equivalent to Condition (F4),
we obtain the desired statement.
\end{proof}

Although $\overline{\cal E}_\sbb$ is an exponential family and 
$\overline{\cal M}_\sbb$ is a mixture family,
we do not need to employ the em-algorithm, Algorithm \ref{protocol1-0},
because it can be solved directly as follows.
Since the generating vectors of $\overline{\cal E}_\sbb$ is the same as that of
$\overline{\cal M}_\sbb$,
 the intersection \eqref{XOP} can be calculated by solving
the following minimization.
That is, the following method finds the element in $\overline{\cal M}_\sbb$ among elements in $\overline{\cal E}_\sbb$.
Define $\bar\theta_{\see,*}$ as
\begin{align}
\bar\theta_{\see,*}:=\argmin_{\theta_3\in \mathbb{R}^{k-2l}
}F_{{\cal E},\sbb}(\theta_{\sbb,*}+ V_4 \theta_\see),\Label{XOE}
\end{align}
where $\theta_{\sbb,*}$ is defined by \eqref{MMZ} and \eqref{ZMP}.
Then, we have 
$\nabla^{(e)}[F_{{\cal E},\sbb}](\theta_{\sbb,*}+ 
V_4 \bar\theta_{\see,*}) V_4=0$, which implies 
$\theta_{\sbb,*}+ V_4 \bar\theta_{\see,*} 
\in \overline{\cal M}_\sbb$.
Thus,
\begin{align}
\bar\theta_{\sbb,*}:=\theta_{\sbb,*}+ V_4 \bar\theta_{\see,*}
\in  \overline{\cal E}_\sbb \cap \overline{\cal M}_\sbb.
\Label{MZSS}
\end{align}
Therefore, the statement of Corollary \ref{Cor5} is rewritten as follows.

\begin{theorem}\Label{Cor6-1}
Assume Conditions (B1), (B3), (B4), (B5), (B6+), and $\Theta=\mathbb{R}^d$.
We choose $\bar\theta_{\sbb,*}$ 
by combining \eqref{MZSS} and the solution of \eqref{XOE}.
Also, we choose 
$\bar\eta_{\saa,*} \in \mathbb{R}^l$ as
$\nabla^{(e)}[F_{{\cal E},\sbb}](\bar\theta_{\sbb,*})
=\bar\eta_{\saa,*} V_3$. 
When $\bar\eta_{\saa,*} \in \mathbb{R}^l$ 
belongs to the image of
$\nabla^{(e)}[F_{{\cal E},\saa}]$,
there exists $\bar\theta_{\saa,*}\in \Theta_{{\cal E},\saa}$ 
to satisfy the condition \eqref{XOP2}, i.e.,
$\nabla^{(e)}[F_{{\cal E},\sbb}](\bar\theta_{\sbb,*})
=  \nabla^{(e)}[F_{{\cal E},\saa}](\bar\theta_{\saa,*})V_3$, and 
the parameter $(\bar\theta_{\saa,*},\bar\theta_{\sbb,*})$ 
is invariant for the map
$\Pro^{(e),F}_{{\cal M}}\circ \Pro^{(m),F}_{{\cal M}\to{\cal E}}$.
When Condition (B2) holds additionally,
the parameter $(\bar\theta_{\saa,*},\bar\theta_{\sbb,*})$ 
is the solution of 
the maximum in \eqref{eq:Inf.st}.
 \end{theorem}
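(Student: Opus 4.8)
The plan is to read Theorem \ref{Cor6-1} as the constructive, non-iterative face of Corollary \ref{Cor5}: I would show that the pair $(\bar\theta_{\saa,*},\bar\theta_{\sbb,*})$ assembled from \eqref{XOE}, \eqref{MZSS} and the prescribed choice of $\bar\eta_{\saa,*}$ satisfies Condition (F4), and then simply quote the chain of equivalences (F4)$\Leftrightarrow$(F1)$\Leftrightarrow$(F1') supplied by Corollaries \ref{Cor5} and \ref{Cor4}. So there is essentially nothing new to prove beyond verifying (F4); most of the work is already contained in the running text between Corollary \ref{Cor5} and the theorem, and the role of the added hypothesis $\Theta=\mathbb{R}^d$ is to make the auxiliary minimisation \eqref{XOE} an honest unconstrained convex program.

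First I would establish \eqref{XOP}, i.e. $\bar\theta_{\sbb,*}\in\overline{\cal E}_{\sbb}\cap\overline{\cal M}_{\sbb}$. By \eqref{MZSS}, $\bar\theta_{\sbb,*}=\theta_{\sbb,*}+V_4\bar\theta_{\see,*}$ with $\bar\theta_{\see,*}$ the minimiser in \eqref{XOE}, so membership in $\overline{\cal E}_{\sbb}$ is immediate from \eqref{XC1}. Since $\Theta=\mathbb{R}^d$, \eqref{XOE} minimises the strictly convex map $\theta_\see\mapsto F_{{\cal E},\sbb}(\theta_{\sbb,*}+V_4\theta_\see)$ over all of $\mathbb{R}^{k-2l}$, and its stationarity condition, read through the chain rule, is exactly $\nabla^{(e)}[F_{{\cal E},\sbb}](\bar\theta_{\sbb,*})\,V_4=0$, which by \eqref{XC2} says $\bar\theta_{\sbb,*}\in\overline{\cal M}_{\sbb}$. (I would flag here the mild point, taken for granted in the running text, that the infimum in \eqref{XOE} is attained; for the potentials of Section \ref{S3} this needs the appropriate non-degeneracy, and in the applications it holds once the reverse em-problem is known to have a solution.)

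Next I would produce $\bar\theta_{\saa,*}$ and verify \eqref{XOP2}. Put $\eta_{\sbb}:=\nabla^{(e)}[F_{{\cal E},\sbb}](\bar\theta_{\sbb,*})$, so $\eta_{\sbb}V_4=0$ by the previous step. By \eqref{ZMP2} the row vector $\eta_{\sbb}$ annihilates $\im V_4=\Ker V_3$, hence lies in the row space of $V_3$, so $\eta_{\sbb}=\bar\eta_{\saa,*}V_3$ for some $\bar\eta_{\saa,*}\in\mathbb{R}^l$; the full-rank hypothesis on $V_3$ in (B6+) makes $\bar\eta_{\saa,*}$ unique, and this is the vector named in the statement. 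Under the stated hypothesis that $\bar\eta_{\saa,*}$ lies in the image of $\nabla^{(e)}[F_{{\cal E},\saa}]$, pick $\bar\theta_{\saa,*}\in\Theta_{{\cal E},\saa}$ with $\nabla^{(e)}[F_{{\cal E},\saa}](\bar\theta_{\saa,*})=\bar\eta_{\saa,*}$; then $\nabla^{(e)}[F_{{\cal E},\saa}](\bar\theta_{\saa,*})V_3=\bar\eta_{\saa,*}V_3=\eta_{\sbb}=\nabla^{(e)}[F_{{\cal E},\sbb}](\bar\theta_{\sbb,*})$, which is \eqref{XOP2}. Together with \eqref{XOP} this is precisely Condition (F4) for $(\bar\theta_{\saa,*},\bar\theta_{\sbb,*})^T$.

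Finally I would close by citation: Corollary \ref{Cor5} gives (F4)$\Leftrightarrow$(F1), so $(\bar\theta_{\saa,*},\bar\theta_{\sbb,*})$ is invariant for $\Pro^{(e),F}_{{\cal M}}\circ\Pro^{(m),F}_{{\cal M}\to{\cal E}}$ in the sense of (F1) of Corollary \ref{Cor4}; adding Condition (B2), the same corollary promotes (F1) to (F1'), which is exactly the assertion that the maximum in \eqref{eq:Inf.st} is attained at $(\bar\theta_{\saa,*},\bar\theta_{\sbb,*})$ and equals the right-hand side of \eqref{SXO}. The only genuinely load-bearing step is the passage $\eta_{\sbb}V_4=0\Rightarrow\eta_{\sbb}=\bar\eta_{\saa,*}V_3$ with $\bar\eta_{\saa,*}$ well defined: this is precisely why (B6+) is phrased with $\Ker V_3=\im V_4$, the rank-$l$ condition on $V_3$, and $k\ge 2l$ together (the rank condition also being what makes $\theta_{\sbb,*}$ in \eqref{ZMP} exist), and it is also why the conditional hypothesis on $\bar\eta_{\saa,*}$ — which plugs the gap left by possible non-surjectivity of $\nabla^{(e)}[F_{{\cal E},\saa}]$ — cannot be dropped.
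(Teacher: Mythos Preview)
Your proposal is correct and follows essentially the same approach as the paper: the paper presents Theorem \ref{Cor6-1} as a direct rewriting of Corollary \ref{Cor5}, with the running text between them verifying \eqref{XOP} via the stationarity condition of \eqref{XOE}, and the theorem statement itself packaging the choice of $\bar\eta_{\saa,*}$ and the image hypothesis to produce $\bar\theta_{\saa,*}$ satisfying \eqref{XOP2}. Your write-up simply makes explicit the linear-algebraic step ($\eta_{\sbb}V_4=0\Rightarrow\eta_{\sbb}=\bar\eta_{\saa,*}V_3$ via $\Ker V_3=\im V_4$) that the paper leaves implicit, and then invokes (F4)$\Leftrightarrow$(F1)$\Leftrightarrow$(F1') exactly as the paper intends.
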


Due to Corollary \ref{Cor5},
the existence of the maximum in \eqref{eq:Inf.st}, Condition (E1), 
is equivalent to the existence of 
$\bar\eta_{\saa,*} \in \mathbb{R}^l$ that belongs to the image of $\nabla^{(e)}[F_{{\cal E}_\saa}]$.
That is, although an element $\bar\theta_{\sbb,*} \in 
\Theta_{{\cal E}_\sbb}$ exists,
there is a possibility that no element $\bar\theta_{\saa,*}
\in \Theta_{{\cal E}_\saa}$ 
 satisfies the condition \eqref{XOP2} with $\bar\theta_{\sbb,*}$.
Therefore, the method of this subsection works only when 
the maximum in \eqref{eq:Inf.st} exists.
That is, when the maximum does not exist in \eqref{eq:Inf.st},
the non-iterative method does not work at all.
Instead of the non-iterative method, as proven in Theorem \ref{theo:conv:BBem},
the iterative algorithms in the previous subsection work 
even when the maximum does not exist in \eqref{eq:Inf.st}.

Now, we compare the minimization \eqref{XOE} with the original reverse em-problem \eqref{eq:Inf.st}.
The minimization \eqref{XOE} is given as the minimization of the convex function
$F_{{\cal E},\sbb}$.
This objective function $F_{{\cal E},\sbb}$ has a simpler form 
than the objective function of the original reverse em-problem \eqref{eq:Inf.st}
because it is a part of the potential function to define the exponential family 
${\cal E}$. 
Further,
the number of free parameters in the minimization \eqref{XOE} is $k-2l$.
When $k<3l $, the number of free parameters in this method is smaller than 
the number of free parameters of the original reverse em-problem.
Depending on the situation, this method reduces the number of free parameters.
In particular, when $k=2l$, 
the matrix $V_3$ is a square matrix of size $l$
and
we do not need to solve the minimization \eqref{XOE}
as follows.
In this case, when the rank of $V_3$ is $l$, $\Ker V_3$ is $\{0\}$, which implies 
$V_4=0$.
Hence, as the special case with $k=2l$, i.e., the case when the number of parameters in ${\cal E}$
is twice of that of ${\cal M}$, 
we have the following corollary.
\begin{corollary}\Label{Cor6}
Assume Conditions (B1), (B3), (B4), (B5), (B6+), $\Theta=\mathbb{R}^d$,
and $k=2l$.
Then, the intersection $\overline{\cal E} \cap \overline{\cal M}
$ is given as $\{\theta_{\sbb,*}\}$, where
$\theta_{\sbb,*}$ is defined by \eqref{MMZ} and \eqref{ZMP}.
In addition, when there exists $\bar\theta_{\saa,*} 
\in \Theta_{{\cal M}} $ such that
\begin{align}
\nabla^{(e)}[F_{{\cal E},\sbb}](\theta_{\sbb,*})
=  \nabla^{(e)}[F_{{\cal E},\saa}](\bar\theta_{\saa,*})V_3, \Label{XOP2T}
\end{align}
the relation \eqref{SXO} holds under the choice 
$\bar\theta_{\sbb,*}=\theta_{\sbb,*}$.
 \end{corollary}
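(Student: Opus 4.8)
The plan is to obtain Corollary~\ref{Cor6} as the $k=2l$ specialization of Corollary~\ref{Cor5} (equivalently Theorem~\ref{Cor6-1}), the key observation being that in this regime the auxiliary matrix $V_4$ vanishes and the minimization \eqref{XOE} degenerates to a triviality.

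First I would record the structural effect of $k=2l$. By Condition (B4) the matrix $V_3$ is square (of size $l$, as noted in the remark preceding the corollary), and Condition (B6+) says its rank is $l$, so $V_3$ is invertible and $\Ker V_3=\{0\}$. Then \eqref{ZMP2} forces $\im V_4=\{0\}$; since $V_4$ acts out of the $(k-2l)=0$-dimensional $\theta_{\see}$-space it is simply the empty (zero) matrix, and the minimization \eqref{XOE} is over a single point, returning $\bar\theta_{\sbb,*}=\theta_{\sbb,*}$ with $\theta_{\sbb,*}$ as in \eqref{MMZ}, \eqref{ZMP}. Substituting $V_4=0$ into \eqref{XC1}–\eqref{XC2}, the subfamily $\overline{\cal E}_{\sbb}$ collapses to the single point $\{\theta_{\sbb,*}\}$, the constraint $\eta_{\sbb}V_4=0$ defining $\overline{\cal M}_{\sbb}$ becomes vacuous (so $\overline{\cal M}_{\sbb}=\Theta_{{\cal E},\sbb}$), and $\overline{\cal E}$ likewise reduces to $\{(\theta_{\saa},\theta_{\sbb,*})^{T}\}$. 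Hence $\overline{\cal E}_{\sbb}\cap\overline{\cal M}_{\sbb}=\{\theta_{\sbb,*}\}$ and every element of $\overline{\cal E}\cap\overline{\cal M}$ has $\sbb$-coordinate $\theta_{\sbb,*}$, which is the first assertion.

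Next I would feed $\bar\theta_{\sbb,*}=\theta_{\sbb,*}$ into Theorem~\ref{Cor6-1}. Since $V_3$ is invertible I would set $\bar\eta_{\saa,*}:=\nabla^{(e)}[F_{{\cal E},\sbb}](\theta_{\sbb,*})\,V_3^{-1}$, the unique vector with $\nabla^{(e)}[F_{{\cal E},\sbb}](\theta_{\sbb,*})=\bar\eta_{\saa,*}V_3$. Under Conditions (B3)–(B5) with $\Theta=\mathbb{R}^d$, the natural-parameter spaces $\Theta_{\cal M}$ and $\Theta_{{\cal E},\saa}$ are identified through the identity block of $V_1$, so the existence of $\bar\theta_{\saa,*}\in\Theta_{\cal M}$ with \eqref{XOP2T} is exactly the condition that $\bar\eta_{\saa,*}$ lie in the image of $\nabla^{(e)}[F_{{\cal E},\saa}]$ required by Theorem~\ref{Cor6-1}. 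Granting it, $(\bar\theta_{\saa,*},\theta_{\sbb,*})$ satisfies Condition (F4) of Corollary~\ref{Cor5}, hence Conditions (F2) and (F1) there, so $\Pro^{(e),F}_{\Theta_{\cal E}\to{\cal M}}((\bar\theta_{\saa,*},\theta_{\sbb,*})^{T})$ is invariant for $\Pro^{(e),F}_{{\cal M}}\circ\Pro^{(m),F}_{{\cal M}\to{\cal E}}$. Finally, adding Condition (B2), the equivalence (F1)$\Leftrightarrow$(F1') of Corollary~\ref{Cor4} turns this invariance into the claimed identity \eqref{SXO} with $\bar\theta_{\sbb,*}=\theta_{\sbb,*}$, completing the proof.

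I do not expect a substantive obstacle: the corollary is a bookkeeping specialization, and essentially everything follows once $V_4=0$ is recognised. The one delicate point is the degeneration in the second step — verifying that, once the $\theta_{\see}$-direction disappears, $\overline{\cal M}_{\sbb}$ becomes \emph{all} of $\Theta_{{\cal E},\sbb}$ (its defining equation $\eta_{\sbb}V_4=0$ being empty) rather than being cut down by a residual constraint, so that the intersection with $\overline{\cal E}_{\sbb}$ is genuinely the single point $\theta_{\sbb,*}$ and \eqref{XOE} may be skipped. A lesser point is making the identification $\Theta_{\cal M}\cong\Theta_{{\cal E},\saa}$ precise enough that the existence clause \eqref{XOP2T} in Corollary~\ref{Cor6} literally coincides with the image condition on $\nabla^{(e)}[F_{{\cal E},\saa}]$ stated in Theorem~\ref{Cor6-1}.
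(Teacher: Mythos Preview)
Your proposal is correct and follows essentially the same approach as the paper: the paper's justification, given in the short paragraph immediately preceding the corollary, observes that for $k=2l$ the matrix $V_3$ is square and (by the rank condition in (B6+)) invertible, hence $\Ker V_3=\{0\}$ and $V_4=0$, after which the corollary is read off directly from Theorem~\ref{Cor6-1}. Your write-up simply fills in the details of that degeneration (the collapse of $\overline{\cal E}_\sbb$ to a point, the vacuity of the $\overline{\cal M}_\sbb$ constraint, and the chain (F4)$\Rightarrow$(F1)$\Rightarrow$(F1') via Corollaries~\ref{Cor5} and~\ref{Cor4}) exactly as the paper intends.
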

That is, Corollary \ref{Cor6} shows a simple calculation method for $k=2l$.
However, it works when an element $\bar\theta_{\saa,*} \in \Theta_{{\cal M}} $ to
satisfy \eqref{XOP2T} exists.
In addition, as explained in Sections \ref{S10} and \ref{S12},
the algorithms in the reference \cite{exact} are special cases of 
the method based on Theorem \ref{Cor6-1}.
Hence, this method can be considered as an extension of algorithms in the reference \cite{exact}.

Here, we notice that 
Condition (B2) can be replaced by the unique existence of the solution of the maximization \eqref{eq:Inf.st}
in the discussions in Subsections \ref{S47} and \ref{S47B}.
When we drop this condition, 
the methods in Subsections \ref{S47} and \ref{S47B} work for finding the 
local maximizer of the maximization \eqref{eq:Inf.st}.

\section{Capacity of classical channel }\Label{S10}
\subsection{Problem setting}
Let $\X:=\{1,\ldots,n_1 \}$ and $\Y:= \{1,\ldots, n_2 \}$ be finite sets. 
We call a map  $ W  :\X  \rightarrow  {\cal P}_{\Y}$ a channel from $\X$ to $\Y$. 
We use the notation $W_x(y):=W(y|x)$.
For $q \in {\cal P}_{\X}$ and $r \in {\cal P}_{\Y}$, 
$W \cdot q \in {\cal P}_{\Y}$,
$W \times q \in {\cal P}_{\X \times \Y}$, and $q \times r \in \mathcal{P}_{\X \times \Y}$ 
are defined by 
$(W \cdot q)(x,y):=\sum_{x \in \X}W(y|x)q(x)$, 
$(W \times q)(x,y):=W(y|x)q(x)$, and $(q \times r)(x,y) := q(x)r(y)$ respectively. 
The channel capacity of a channel  $W$ is given by 
\begin{align}
\max_{q \in {{\cal P}_{\X}} }D(W \times q 
\| (W \cdot q ) \times q )
=\max_{q \in {{\cal P}_{\X}} } \min_{ q'\in {{\cal P}_{\X}} , q'' \in {{\cal P}_{\Y}} }
 D(W \times q \| q'' \times q' ).\Label{MOA}
\end{align}

As explained in Subsection \ref{S54},
the set of product distributions $q'' \times q'$ forms an exponential subfamily 
${\cal E}$ and 
the set of distributions $W \times q$ forms a mixture subfamily ${\cal M}$.
That, the maximization problem \eqref{MOA} is a special case of the maximization \eqref{eq:Inf.st}
with $k=n_1-1$, $l=n_1+n_2-2$, and $d=n_1 n_2-1$.
In the following, we apply Algorithm \ref{protocol2}.
For this aim, 
we need to choose a suitable coordinate to satisfy conditions (B1), (B3), (B4), and (B5)
and check Condition (B2).

\subsection{Constructions of vectors $u_1, \ldots, u_{n_1 n_2-1},v_1,\ldots, v_{n_1+n_2-2}$}
To choose a suitable coordinate to satisfy conditions (B3), (B4), and (B5), we need to choose 
suitable vectors $u_1, \ldots, u_{n_1n_2},v_1,\ldots, v_{n_1+n_2-1}$.
For this aim, we define various functions on ${\cal Y}$ and ${\cal X}\times {\cal Y}$.

First, we choose $n_2-1$ linearly independent functions $f_j$ on ${\cal Y}$ for $j=1, \ldots, n_2-1$ to satisfy the condition that
\begin{align}
\sum_{y \in {\cal Y}} f_j(y) W_{n_1}(y)=0
\end{align}
and the linear space spanned by $f_1, \ldots, f_{n_2-1}$ does not contain a constant function.
As a typical case, $f_j$ can be chosen as follows.
\begin{align}
f_j(y):=
\left\{
\begin{array}{ll}
W_{n_1}(j+1) & \hbox{when } y=j \\
-W_{n_1}(j) & \hbox{when } y=j+1 \\
0 & \hbox{otherwise}.
\end{array}
\right.
\end{align}
Then, we define the functions $\xi_1, \ldots, \xi_{n_1n_2-1}$ on ${\cal X}\times {\cal Y}$ as follows.
\begin{align}
\xi_i(x,y):=& \delta_i(x) \\ 
\xi_{n_1-1+(i-1)(n_2-1)+j }(x,y):=& (f_j(y)-h_{i,j})\delta_i(x) \\
\xi_{(n_1-1)n_2+j }(x,y):=& f_j(y)\delta_{n_1}(x)
\end{align}
for $i=1, \ldots, n_1-1$ and $j=1, \ldots, n_2-1$, 
where we define
\begin{align}
h_{i,j}:=\sum_{y} f_j(y)W_{i}(y)\Label{NUM2}
\end{align}
for $i=1, \ldots, n_1$ and $j=1, \ldots, n_2-1$.

Then, we define the $C^{\infty}-$strictly convex function 
$F$ on $\mathbb{R}^{n_1n_2-1}$ as
\begin{align}
F(\theta):= \log \sum_{x,y} e^{\sum_{i=1}^{n_1 n_2-1} \theta^i \xi_i(x,y) }
\Label{MMLA}
\end{align}
That is,
we consider the Bregman divergence system $(\mathbb{R}^{d}, F, D^F)$.
We define the distribution $P_{\theta,XY}, P_{\theta,X}, P_{\theta,Y}$ as
\begin{align}
P_{\theta,XY}(x,y) &:=e^{\sum_{i=1}^{n_1 n_2-1} \theta^i \xi_i(x,y)-F(\theta) }, \\
P_{\theta,X}(x) &:=\sum_{y} e^{\sum_{i=1}^{n_1 n_2-1} \theta^i \xi_i(x,y)-F(\theta) },\\
P_{\theta,Y}(y) &:=\sum_{x}e^{\sum_{i=1}^{n_1 n_2-1} \theta^i \xi_i(x,y)-F(\theta) }. 
\end{align}
Then, as a special case of \eqref{MGA}, we have
\begin{align}
D^F(\theta\|\theta')= D(P_{\theta,XY} \| P_{\theta',XY} ).
\end{align}

Next, we choose the matrix $U$ as the identity matrix,
and $u_1, \ldots, u_{n_1 n_2-1}$ are chosen as its $n_1 n_2-1$ column vectors.
Then, we define vectors $v_1, \ldots, v_{n_1+n_2-2}$ as follows, whereas $V=(v_1, \ldots, v_{n_1+n_2-2})$.
\begin{align}
v_i:=& u_i \hbox{ for }i=1, \ldots,n_1-1, \\
v_{n_1-1+j}:=& 
\sum_{i=1}^{n_1} u_{n_1-1+(i-1)(n_2-1)+j }+ \sum_{i=1}^{n_1-1} h_{i,j} u_i
\hbox{ for }j=1, \ldots,n_2-1.\Label{XL3}
\end{align}
Then, we have
\begin{align}
\sum_{i=1}^{n_1n_2-1}\xi_i (x,y) v_j^i=
\left\{
\begin{array}{ll}
\delta_j(x) & \hbox{when }j=1, \ldots, n_{1}-1 \\
f_{j-n_1+1 }(y) & \hbox{when }j=n_1, \ldots, n_{1}+n_2-1 .
\end{array}
\right.
\end{align}
The case with $j=n_1, \ldots, n_{1}+n_2-1 $ can be shown as follows.
For $j=1, \ldots, n_2-1 $, we have
\begin{align}
&\sum_{i=1}^{n_1n_2-1}\xi_i (x,y) v_{n_1-1+j}^i \nonumber \\
=&
\sum_{i=1}^{n_1n_2-1}\xi_i (x,y) 
\Big(\sum_{i'=1}^{n_1} u^i_{n_1-1+(i'-1)(n_2-1)+j }+ \sum_{i'=1}^{n_1-1} h_{i',j} u_{i'}^i\Big)\nonumber \\
=&
\sum_{i'=1}^{n_1}\xi_{n_1-1+(i'-1)(n_2-1)+j } (x,y) 
+
\sum_{i'=1}^{n_1-1}\xi_{i'} (x,y) 
 h_{i',j} \nonumber \\
=&
\xi_{(n_1-1) n_2+j } (x,y) 
+\sum_{i'=1}^{n_1-1}\xi_{n_1-1+(i'-1)(n_2-1)+j } (x,y) 
+
\sum_{i'=1}^{n_1-1}\xi_{i'} (x,y) 
 h_{i',j} \nonumber \\
=&
f_j(y) \delta_{n_1}(x)
+\sum_{i=1}^{n_1-1}(f_j(y)-h_{i,j})\delta_i(x)
+\sum_{i=1}^{n_1-1}\delta_{i} (x) h_{i,j} \nonumber \\
=&
\sum_{i=1}^{n_1} f_j(y) \delta_i(x)=f_j(y).
\end{align}

\subsection{Parameterizations of ${\cal E}$ and ${\cal M}$}
Using 
\begin{align}
F_{{\cal E},\saa}(\theta^1, \ldots,\theta^{n_1-1})
&:= \log \sum_{x} e^{\sum_{i=1}^{n_1 -1} \theta^i \delta_i(x)}\Label{NCV}\\
F_{{\cal E},\sbb}(\theta^{n_1}, \ldots,\theta^{n_1+n_2-2})
&:= 
\log \sum_{y} e^{\sum_{j=1}^{n_2 -1} \theta^{n_1-1+j} f_j(y) },\Label{NCV2}
\end{align}
we define the distributions on $\X$ and $\Y$ as
\begin{align}
\bar{P}_{\theta_\saa,X}(x):=&e^{\sum_{i=1}^{n_1 -1} \theta^i \delta_i(x) -
F_{{\cal E},1}(\theta_\saa)} \\
\bar{P}_{\theta_\sbb,Y}(y):=&e^{\sum_{j=1}^{n_2 -1} \theta^{n_1-1+j} f_j(y) 
-F_{{\cal E},\sbb}(\theta_\sbb)}\Label{BNL}
\end{align}
for 
$\theta_\saa:=(\theta^1, \ldots,\theta^{n_1-1})$
and 
$\theta_\sbb:=(\theta^{n_1}, \ldots,\theta^{n_1+n_2-2})$.
Then, we have
\begin{align}
P_{\sum_{j=1}^{n_1-1}\theta_\saa^j v_j
+\sum_{j'=1}^{n_2-1}\theta_\sbb^j 
v_{n_1-1+j},XY}
=\bar{P}_{\theta_\saa,X}\times \bar{P}_{\theta_\sbb,Y}.
\end{align}
Hence, the set of product distributions is written as the exponential subfamily
${\cal E}:=\{ P_{\sum_{j=1}^{n_1+n_2-2} \bar{\theta}^j v_j ,XY}\}$
generated by $v_1, \ldots, v_{n_1+n_2-2}$ 
at the point $(0, \ldots, 0)$.
Then, we have $F_{{\cal E}}(\theta_\saa,\theta_\sbb)=F_{{\cal E},\saa}(\theta_\saa)+
F_{{\cal E},\sbb}(\theta_\sbb)$.


We define the mixture family ${\cal M}$ by the constraint
$\sum_{i=1}^{n_1 n_2-1} u^i_{n_1-1+j'} \partial_i F(\theta)=0$ 
for $j'=1, \ldots, n_1 (n_2-1)$.
This constraint is equivalent to 
\begin{align}
\sum_{y }  (f_j(y)-h_{i',j})  P_{\theta,XY}(i',y)=0 , \quad
\sum_{y }  f_j(y) P_{\theta,XY}(n_1,y)=0 
\end{align}
for $i'=1, \ldots, n_1-1$ and $j=1, \ldots, n_2-1$.
Hence, the mixture family ${\cal M}$ is composed of distributions with the form $W \times q$.
Thus, the problem \eqref{MOA} is written as the problem \eqref{eq:Inf.st} with the above defined ${\cal E}$ and ${\cal M}$.
The conditional probability 
$P_{\theta,Y|X}(y|i)=
\frac{P_{\theta,XY}(i,y)}{\sum_{y'}P_{\theta,XY}(i,y')}$ depends only on 
$(\theta^{n_1-1+(i-1)(n_2-1)+j})_{j=1}^{n_2-1}$ for $i=1, \ldots, n_1$.
Since
\begin{align}
P_{\theta,Y|X}(y|i)=&
\frac{e^{\theta^i +\sum_{j=1}^{n_2-1}  \theta^{n_1-1+ (i-1) (n_2-1)+j} 
\xi_{n_1-1+ (i-1) (n_2-1)+j}(i,y)}}
{\sum_{y'}e^{\theta^i +\sum_{j'=1}^{n_2-1}  \theta^{n_1-1+ (i-1) (n_2-1)+j' } 
\xi_{n_1-1+ (i-1) (n_2-1)+j'}(i,y')}}\nonumber \\
=&
\frac{e^{\sum_{j=1}^{n_2-1}  \theta^{n_1-1+ (i-1) (n_2-1)+j} 
(f_j(y)-h_{i,j})
}}
{\sum_{y'} e^{\sum_{j'=1}^{n_2-1}  \theta^{n_1-1+ (i-1) (n_2-1)+j'} 
(f_j(y')-h_{i,j'})
}},
\end{align}
we choose $\theta_\sbb^{\dagger}=
(\theta^{n_1,\dagger}, \ldots, \theta^{n_1n_2-1,\dagger})$ as
\begin{align}
W_i(y)=
\frac{e^{\sum_{j=1}^{n_2-1}  \theta^{n_1-1+ (i-1) (n_2-1)+j,\dagger } 
(f_j(y)-h_{i,j})
}}
{\sum_{y'} e^{\sum_{j'=1}^{n_2-1}  \theta^{n_1-1+ (i-1) (n_2-1)+j',\dagger } 
(f_{j'}(y')-h_{i,j'})
}}.
\end{align}
In this choice, we have
\begin{align}
\log \sum_{y'} e^{\sum_{j'=1}^{n_2-1}  \theta^{n_1-1+ (i-1) (n_2-1)+j',\dagger } 
(f_{j'}(y')-h_{i,j'})
}=H(W_i)\Label{XMZ}
\end{align}
because 
\begin{align}
\sum_{y'} 
(f_{j}(y')-h_{i,j})
e^{\sum_{j'=1}^{n_2-1}  \theta^{n_1-1+ (i-1) (n_2-1)+j',\dagger } 
(f_{j'}(y')-h_{i,j'})
}=0
\end{align}
for $j=1, \ldots,n_2-1$.

Then, ${\cal M}$ is written as
$\{( \theta_\saa, \theta_\sbb^{\dagger})| \theta_\saa \in \mathbb{R}^{n_1-1}\}$.
That is,
${\cal M}$ forms an exponential subfamily generated by $u_1, \ldots, u_{n_1-1}$.
Using \eqref{MMLA}, the function $F_{{\cal M}}$ is written as 
\begin{align}
F_{{\cal M}}(\theta_\saa)
=F(\theta_\saa,\theta_\sbb^{\dagger})
= \log \sum_{x,y} e^{\sum_{i=1}^{n_1 -1}\theta^i \delta_i(x)
+\sum_{i=n_1}^{n_1 n_2-1} \theta^{i,\dagger} \xi_i(x,y) }
\end{align}

Hence, the maximization \eqref{MOA} is rewritten as 
\begin{align}
\sup_{q \in {{\cal P}_{\X}} }D(W \times q 
\| q \times (W \cdot q ))
=
\max_{\theta \in \mathcal{M}} D^{F}(\theta \| \Pro^{(m),F}_{\mathcal{E}} (\theta))
=
\max_{\theta \in \mathcal{M}} 
\min_{\theta' \in \mathcal{E}} 
D^{F}(\theta \| \theta').
\end{align}

\subsection{Check of Conditions (B1), (B2), (B3), (B4), and (B5)}
\Label{S54}
Lemma \ref{LOS} guarantees Condition (B1).
We define the $(n_1-1) \times (n_2-1)$ matrix $H:=(h_{i,j})$.
Then, the relation \eqref{XL3} guarantees that the $ (n_1-1)\times (n_1+n_2-2)$ matrix $V_1$ is $ (I,H)$.
That is, the $(n_1-1) \times (n_2-1)$ matrix $V_3$ is $H$.
Hence, Conditions (B3) and (B4) hold.
Since 
the exponential family ${\cal E}$ satisfies
$F_{{\cal E}}(\bar{\theta}_\saa,\bar{\theta}_\sbb)=
F_{{\cal E},1}(\bar{\theta}_\saa)
+F_{{\cal E},2}(\bar{\theta}_\sbb)$,
we obtain Condition (B5).
Therefore, we can apply 
Algorithm \ref{protocol2} with Condition (B5).
Therefore, we can apply Algorithms \ref{protocol2F} and \ref{protocol2}
to calculate the maximum \eqref{MOA}.

As we have
\begin{align}
P_{\Pro^{(m),F}_{{\cal E}}  (\theta),XY}= P_{\theta,X}\times P_{\theta,Y}
\end{align}
for any $\theta$, 
we have
\begin{align}
 D^{F}(\theta'\|\theta)
=&
 D(P_{\theta,XY}\|P_{\theta',XY})
=
 D(P_{\theta,X}\| P_{\theta',X})\nonumber \\
 \le &
 D( P_{\theta,X}\| P_{\theta',X})
+D(P_{\theta,Y}\| P_{\theta',Y}) \nonumber \\
=&
 D(P_{\theta,X}\times P_{\theta,Y}\| P_{\theta',X}\times P_{\theta',Y})
=
 D(P_{ \Pro^{(m),F}_{{\cal E}}  (\theta)}\| P_{ \Pro^{(m),F}_{{\cal E}}  (\theta')} ) 
\nonumber \\
=&
 D^{F}( \Pro^{(m),F}_{{\cal E}}  (\theta)\|  \Pro^{(m),F}_{{\cal E}}  (\theta') ) 
\end{align}
for $\theta,\theta' \in {\cal M}$.
Thus, condition (B2) holds.
Therefore,
Theorem \ref{theo:conv:BBem} guarantees the global convergence.
When $\theta^{(1)}$ is 
$W \times P_{\uni,X}$ with 
the uniform distribution $P_{\uni,X}$ on ${\cal X}$,
we have 
\begin{align}
\sup_{\theta \in \mathcal{M}} D^F(\theta \| \theta^{(1)})
=&\sup_{q \in {\cal P}({\cal X})}D(W \times q\|W \times P_{\uni,X}) \nonumber \\
=&\sup_{q \in {\cal P}({\cal X})}D(q\|P_{\uni,X})
= \log n_1.\Label{CKP}
\end{align}
Therefore, 
when Theorem \ref{theo:conv:BBem} is applied,
we obtain the precision \eqref{NHG}
with $ \frac{\log n_1}{\epsilon}$ iterations.
Also, we can apply Theorem \ref{conv:BBem} to the error evaluation in Algorithm \ref{protocol2}.

With the above choice of $\theta^{(1)}$,
we consider the case when the distributions $\{W_x\}_{x}$ are linearly independent.
We have $ D^{F}( \Pro^{(m),F}_{{\cal E}}  (\theta^*)\|  \Pro^{(m),F}_{{\cal E}}  (\theta) ) 
- D^{F}(\theta^*\|\theta)=D(W\cdot P_{\theta^*,X} \| W\cdot P_{\theta,X})$.
Since the set $\{\theta \in \mathcal{M} | 
D( P_{\theta^*,X} \|  P_{\theta,X}) \le 
D( P_{\theta^*,X} \|  P_{\theta^{(1)},X}) \}$ is compact
and $D(W\cdot P_{\theta^*,X} \| W\cdot P_{\theta,X})>0$,
there exists $\alpha>0$ such that
$\frac{D(W\cdot P_{\theta^*,X} \| W\cdot P_{\theta,X})}{D( P_{\theta^*,X} \|  P_{\theta,X})}
\ge \alpha$
for $\theta \in \{\theta \in \mathcal{M} | 
D( P_{\theta^*,X} \|  P_{\theta,X}) \le 
D( P_{\theta^*,X} \|  P_{\theta^{(1)},X}) \}$.
This condition implies the condition (B2+).
Hence, we can apply Theorem \ref{theo:conv:BBem+} instead of Theorem \ref{theo:conv:BBem}.
When $\theta^{(1)}$ is the uniform distribution on ${\cal X}$,
we obtain the precision \eqref{NHG+}
with $\frac{ \log  \log n_1  -\log \epsilon}{\log (1+\alpha)}$ iterations. 

However, each step in Algorithms \ref{protocol2F} and  \ref{protocol2} contains a minimization problem.
Unfortunately, this minimization 
requires convex minimization.
Since Arimoto-Blahut algorithm \cite{Arimoto,Blahut} has a simple procedure in each step,
the application of these methods to the classical channel capacity 
does not have an advantage over existing methods.

\begin{remark}
As shown in the end of Section 4 of Toyota \cite{Shoji},
the algorithm by Arimoto \cite{Arimoto} and Blahut \cite{Blahut}
does not use 
the inverse map $(\Pro^{(e),F}_{{\cal M}}\circ \Pro^{(m),F}_{{\cal E}}|_{{\cal M}})^{-1}$ in each iteration.
Toyota \cite{Shoji} proposed to use 
the inverse map $(\Pro^{(e),F}_{{\cal M}}\circ \Pro^{(m),F}_{{\cal E}}|_{{\cal M}})^{-1}$ in each iteration instead of the original Arimoto-Blahut algorithm,
he did not derive the exact expression of the inverse map.
\end{remark}

\subsection{Non-iterative method}
Next, we characterize the maximization \eqref{MOA} without any iterative method. 
To check Condition (B6+),
we prepare the following lemmas.
\begin{lemma}\Label{LC3B}
The relation
\begin{align}
P_{(\theta_\saa +\theta_\saa^{\dagger} ,
\theta_\sbb^{\dagger}),X}=
\bar{P}_{\theta_\saa ,X}\Label{ACU}
\end{align}
holds, where
$\theta_\saa^{\dagger}=
(\theta^{1,\dagger}, \ldots, \theta^{n_1-1,\dagger})$
is defined as
$\theta^{i,\dagger}:=-H(W_i)+H(W_{n_1})$
for $i=1\ldots,n_1-1$.
\end{lemma}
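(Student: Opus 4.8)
The plan is to evaluate $P_{(\theta_\saa+\theta_\saa^{\dagger},\theta_\sbb^{\dagger}),X}$ by brute force: writing $\theta:=(\theta_\saa+\theta_\saa^{\dagger},\theta_\sbb^{\dagger})\in\mathbb{R}^{n_1n_2-1}$, I would compute for each $x\in\X$ the partial partition sum $s(x):=\sum_{y}e^{\sum_{i}\theta^i\xi_i(x,y)}$, then form $e^{F(\theta)}=\sum_{x}s(x)$, and finally read off $P_{\theta,X}(x)=s(x)e^{-F(\theta)}$. The conceptual point I expect to emerge is that the shift $\theta^{i,\dagger}=-H(W_i)+H(W_{n_1})$ is tuned precisely so that summing out $y$ produces the constant $H(W_{n_1})$ for every $x$, after which only the $X$-marginal exponential structure (the $\delta_i$'s and $F_{{\cal E},\saa}$) remains.

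First I would treat $x=i$ with $1\le i\le n_1-1$. Among $\xi_1,\dots,\xi_{n_1n_2-1}$, exactly $\xi_i=\delta_i$ and the functions $\xi_{n_1-1+(i-1)(n_2-1)+j}=(f_j-h_{i,j})\delta_i$, $j=1,\dots,n_2-1$, are nonzero at $x=i$, and their coefficients in $\theta$ are $\theta^i+\theta^{i,\dagger}$ and $\theta^{n_1-1+(i-1)(n_2-1)+j,\dagger}$ respectively, so
\[
s(i)=e^{\theta^i+\theta^{i,\dagger}}\sum_{y}e^{\sum_{j=1}^{n_2-1}\theta^{n_1-1+(i-1)(n_2-1)+j,\dagger}(f_j(y)-h_{i,j})}=e^{\theta^i+\theta^{i,\dagger}+H(W_i)},
\]
where the last equality is \eqref{XMZ}; substituting $\theta^{i,\dagger}=-H(W_i)+H(W_{n_1})$ collapses this to $s(i)=e^{\theta^i+H(W_{n_1})}$. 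For $x=n_1$, only $\xi_{(n_1-1)n_2+j}=f_j\delta_{n_1}$, $j=1,\dots,n_2-1$, survive, with coefficients $\theta^{(n_1-1)n_2+j,\dagger}$; since the index satisfies $(n_1-1)n_2+j=n_1-1+(n_1-1)(n_2-1)+j$ and $h_{n_1,j}=\sum_{y}f_j(y)W_{n_1}(y)=0$ by the choice of the $f_j$, applying \eqref{XMZ} with $i=n_1$ gives $s(n_1)=\sum_{y}e^{\sum_{j}\theta^{(n_1-1)n_2+j,\dagger}f_j(y)}=e^{H(W_{n_1})}$.

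Summing over $x$ then yields $e^{F(\theta)}=\sum_{i=1}^{n_1-1}e^{\theta^i+H(W_{n_1})}+e^{H(W_{n_1})}=e^{H(W_{n_1})}\big(1+\sum_{i=1}^{n_1-1}e^{\theta^i}\big)=e^{H(W_{n_1})}e^{F_{{\cal E},\saa}(\theta_\saa)}$ by the definition \eqref{NCV} of $F_{{\cal E},\saa}$, and hence for every $x$ (using $\delta_i(n_1)=0$ to cover $x=n_1$)
\[
P_{\theta,X}(x)=s(x)e^{-F(\theta)}=e^{\sum_{i=1}^{n_1-1}\theta^i\delta_i(x)-F_{{\cal E},\saa}(\theta_\saa)}=\bar{P}_{\theta_\saa,X}(x),
\]
which is \eqref{ACU}. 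The only genuine obstacle is the index bookkeeping: one must keep straight which $\xi$'s vanish at each $x$ and, in particular, recognize that for $x=n_1$ the relevant parameters $\theta^{(n_1-1)n_2+j,\dagger}$ are exactly the $i=n_1$ instance of \eqref{XMZ} thanks to $h_{n_1,j}=0$. Everything after that is routine manipulation of exponential-family normalizations.
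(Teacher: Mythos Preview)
Your proof is correct and follows essentially the same route as the paper's. Both arguments rest on the key identity \eqref{XMZ} to evaluate the $y$-sums and on the cancellation produced by the choice $\theta^{i,\dagger}=-H(W_i)+H(W_{n_1})$. The only organizational difference is that you compute each $s(x)$ explicitly and then sum to obtain $e^{F(\theta)}$, whereas the paper introduces an auxiliary parameter $\theta_\saa'$ with $P_{(\theta_\saa+\theta_\saa^\dagger,\theta_\sbb^\dagger),X}=\bar P_{\theta_\saa',X}$, evaluates at $x=n_1$ to extract the normalization relation \eqref{XOCS}, and then matches coordinates for $x\neq n_1$; your formulation is arguably more transparent but the content is the same.
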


\begin{proof}
We define
$\theta_{(i)}^{\dagger}=
(\theta^{n_1-1+(i-1)(n_2-1)+1 ,\dagger}, \ldots, 
\theta^{n_1-1+i(n_2-1),\dagger}) \in \mathbb{R}^{n_1-1}$.
Since $W_i=\bar{P}_{ \theta_{(i)}^{\dagger} ,Y}$, 
we have $W_i(y)= e^{
\sum_{j=1}^{n_2-1}\theta^{n_1-1+(i-1)(n_2-1)+j,\dagger}
(f_j(y)-h_{i,j}) - F_{{\cal E},2}(\theta_{(i)}^{\dagger})} $.
Because
\begin{align}
e^{F_{{\cal E},\sbb}(\theta_{(i)}^{\dagger})}
=\sum_{y \in \Y} e^{
\sum_{j=1}^{n_2-1}\theta^{n_1-1+(i-1)(n_2-1)+j,\dagger}
(f_j(y)-h_{i,j})},
\end{align}
\eqref{XMZ} implies the relation
\begin{align}
H( W_i)= -\sum_{y} W_i(y)\log W_i(y) 
=F_{{\cal E},2}(\theta_{(i)}^{\dagger})\Label{XOS}
\end{align}
for $i=1, \ldots, n_1$.

Now, we choose an element $\theta_1' \in \mathbb{R}^{n_1-1}$ such that
\begin{align}
P_{(\theta_\saa +\theta_\saa^{\dagger} ,
\theta_\sbb^{\dagger}),X}=
\bar{P}_{\theta_\saa' ,X}.
\end{align}
Since we have
\begin{align}
P_{(\theta_\saa +\theta_\saa^{\dagger} ,
\theta_\sbb^{\dagger}),X}(n_1)=
e^{ F_{{\cal E},\sbb}(\theta_{(n_1)}^{\dagger})
-F_{{\cal M}}(\theta_\saa +\theta_\saa^{\dagger} ,\theta_\sbb^{\dagger})} ,
\end{align}
the relation $P_{(\theta_\saa +\theta_\saa^{\dagger} ,\theta_\sbb^{\dagger}),X}(n_1)=
\bar{P}_{\theta_\saa' ,X}(n_1)$ yields 
\begin{align}
e^{ F_{{\cal E},\sbb}(\theta_{(n_1)}^{\dagger})
-F_{{\cal M}}(\theta_\saa +\theta_\saa^{\dagger} ,\theta_\sbb^{\dagger})} 
=
e^{ -F_{{\cal E},\saa}(\theta_\saa')}.\Label{XOCS}
\end{align}
For $x\neq n_1$, we have 
\begin{align}
&\bar{P}_{\theta_\saa' ,X}(x)
=P_{(\theta_\saa +\theta_\saa^{\dagger} ,\theta_\sbb^{\dagger}),X}(x)=
e^{\theta_\saa^x+ \theta_\saa^{x,\dagger}
+ F_{{\cal E},\sbb}(\theta_{(x)}^{\dagger})
-F_{{\cal M}}(\theta_\saa +\theta_\saa^{\dagger} ,\theta_\sbb^{\dagger})} \nonumber \\
\overset{(a)}=&
e^{\theta_\saa^x+ \theta_\saa^{x,\dagger}
+ F_{{\cal E},\sbb}(\theta_{(x)}^{\dagger})
-F_{{\cal E},\sbb}(\theta_{(n_1)}^{\dagger})
-F_{{\cal E},\saa}(\theta_\saa')
}\overset{(b)}=e^{\theta_\saa^x-F_{{\cal E},\saa}(\theta_\saa')} ,
\end{align}
where $(a)$ and $(b)$ follow from 
\eqref{XOCS} and
the pair of \eqref{XOS} and the definition of $\theta_\saa^{x,\dagger}$, respectively.
This relation shows \eqref{ACU}.
\end{proof}

In the same way as the end of the previous subsection, 
we assume that the distributions $\{W_x\}_{x}$ are linearly independent.
Then, the rank of $H$ is $n_1-1$.
The combination of this fact and Lemma \ref{LC3B}
guarantees 
\begin{align}
\nabla^{(m)}[F_{{\cal M}}^*](\eta_\saa)-
\nabla^{(m)}[F_{{\cal E},\saa}^*](\eta_\saa)
=\theta_\saa +\theta_\saa^{\dagger}
-\theta_\saa =\theta_\saa^{\dagger},
\end{align}
which implies Condition (B6+).
We choose the parameter $\theta_\sbb^{\ddagger} \in \mathbb{R}^{n_2-1}$ such that
$H \theta_\sbb^{\ddagger}=\theta_\saa^{\dagger} $.
We choose the $(n_2-1)\times (n_2-n_1)$ matrix $G$ such that 
$\im G=\Ker H$.
Then, ${\cal E}_2$ and ${\cal M}_2$ defined in \eqref{XC1} and \eqref{XC2}
are written as
\begin{align}
{\cal E}_\sbb &=\{ \theta_\sbb^{\ddagger}
+ G \theta_\see| \theta_\see 
\in \mathbb{R}^{n_2-n_1}\},\\
{\cal M}_\sbb & = \{ \theta_\sbb \in \mathbb{R}^{n_2-1}|
\nabla^{(e)}[F_{{\cal E},\sbb}](\theta_{2})G =0 \}.
\end{align}
As explained in Subsection \ref{S47},
the intersection ${\cal E}_\sbb\cap {\cal M}_\sbb$ is composed of a unique element.
As the solution of the following minimization \eqref{XOEB},
we choose $\theta_\see^{\ddagger}$ as
\begin{align}
\theta_{\see}^{\ddagger}
:=\argmin_{\theta_\see\in \mathbb{R}^{n_2-n_1}}F_{{\cal E},2}(\theta_{\sbb}^{\ddagger}+ G \theta_\see),\Label{XOEB}
\end{align}
Then, we set
$ \bar\theta_\sbb^{\ddagger}:=\theta_\sbb^{\ddagger}+ G \theta_\see^{\ddagger} \in {\cal E}_\sbb\cap {\cal M}_\sbb$.
Then, we have the following corollary of Theorem \ref{Cor6-1}.
\begin{corollary}\Label{corB7}
When there exists 
$\theta_\saa^{\ddagger} \in \mathbb{R}^{n_1-1} $ such that
$ \nabla^{(e)}[F_{{\cal E},\sbb}](\bar\theta_{\sbb}^{\ddagger})=
\nabla^{(e)}[F_{{\cal E},\saa}](\theta_{\saa}^{\ddagger})H $,
which is equivalent to
\begin{align}
W\cdot \bar{P}_{\theta_{\saa}^{\ddagger}, X} = 
\bar{P}_{\bar\theta_\sbb^{\ddagger} ,Y}
\Label{ACA},
\end{align}
the maximizer in \eqref{CKP} is 
$(\theta_{\saa}^{\ddagger}+\theta_{\saa}^{\dagger},
{\theta}_{\sbb}^{\dagger} )\in {\cal M} $.
When the above condition holds,
the maximum \eqref{CKP} is 
\begin{align}
D(P_{(\theta_{\saa}^{\ddagger}+\theta_{\saa}^{\dagger},
\theta_{\sbb}^{\dagger} ),XY}
\|\bar{P}_{\theta_{\saa}^{\ddagger},X}\times
\bar{P}_{\bar\theta_{\sbb}^{\ddagger},Y}
)
=-H(W_{n_1})
+F_{{\cal E},\sbb}(\bar\theta_{\sbb}^{\ddagger}),\Label{XZM2}
\end{align}
\end{corollary}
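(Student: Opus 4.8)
The plan is to obtain Corollary \ref{corB7} as the channel-capacity instance of Theorem \ref{Cor6-1}. First I would record the dictionary between the abstract objects of Subsection \ref{S47B} and the concrete ones built here: by the identity $V_1=(I,H)$ established in Subsection \ref{S54} the matrix $V_3$ is $H$, the matrix $G$ with $\im G=\Ker H$ is $V_4$, the element $\theta_\sbb^{\ddagger}$ with $H\theta_\sbb^{\ddagger}=\theta_\saa^{\dagger}$ is the $\theta_{\sbb,*}$ of \eqref{ZMP} (note $\theta_{\saa,*}$ of (B6+) equals $\theta_\saa^{\dagger}$, as shown just above), the minimizer $\theta_\see^{\ddagger}$ of \eqref{XOEB} is $\bar\theta_{\see,*}$ of \eqref{XOE}, and hence $\bar\theta_\sbb^{\ddagger}=\theta_\sbb^{\ddagger}+G\theta_\see^{\ddagger}$ is the $\bar\theta_{\sbb,*}$ of \eqref{MZSS}. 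Conditions (B1), (B3), (B4), (B5) and $\Theta=\mathbb{R}^d$ were checked in Subsection \ref{S54}, and (B6+) was verified just above under the standing hypothesis that $\{W_x\}_x$ is linearly independent; so every hypothesis of Theorem \ref{Cor6-1} is in force except the requirement that $\bar\eta_{\saa,*}$ lie in the image of $\nabla^{(e)}[F_{{\cal E},\saa}]$, and $\bar\theta_{\saa,*}$ is to be identified with $\theta_\saa^{\ddagger}$.

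Second, I would check that the displayed hypothesis $\nabla^{(e)}[F_{{\cal E},\sbb}](\bar\theta_\sbb^{\ddagger})=\nabla^{(e)}[F_{{\cal E},\saa}](\theta_\saa^{\ddagger})H$ is exactly that missing membership condition, and that it is equivalent to \eqref{ACA}. Writing the $j$-th mixture coordinate, $(\nabla^{(e)}[F_{{\cal E},\sbb}](\bar\theta_\sbb^{\ddagger}))_j=\sum_y f_j(y)\bar{P}_{\bar\theta_\sbb^{\ddagger},Y}(y)$ while $(\nabla^{(e)}[F_{{\cal E},\saa}](\theta_\saa^{\ddagger})H)_j=\sum_{i=1}^{n_1-1}\bar{P}_{\theta_\saa^{\ddagger},X}(i)h_{i,j}$; since $h_{n_1,j}=\sum_y f_j(y)W_{n_1}(y)=0$ by the choice of the $f_j$, this equals $\sum_{i=1}^{n_1}\bar{P}_{\theta_\saa^{\ddagger},X}(i)\sum_y f_j(y)W_i(y)=\sum_y f_j(y)(W\cdot\bar{P}_{\theta_\saa^{\ddagger},X})(y)$. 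Because $f_1,\dots,f_{n_2-1}$ together with the constant function span all functions on $\Y$ and both $\bar{P}_{\bar\theta_\sbb^{\ddagger},Y}$ and $W\cdot\bar{P}_{\theta_\saa^{\ddagger},X}$ are probability distributions, equality of these inner products for all $j$ is equivalent to $W\cdot\bar{P}_{\theta_\saa^{\ddagger},X}=\bar{P}_{\bar\theta_\sbb^{\ddagger},Y}$, i.e. \eqref{ACA}.

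Granting this, Theorem \ref{Cor6-1} produces $\theta_\saa^{\ddagger}$ with $(\theta_\saa^{\ddagger},\bar\theta_\sbb^{\ddagger})$ invariant for $\Pro^{(e),F}_{{\cal M}}\circ\Pro^{(m),F}_{{\cal M}\to{\cal E}}$, and since (B2) holds (Subsection \ref{S54}) this parameter solves the maximum in \eqref{eq:Inf.st}; so by (F1') of Corollary \ref{Cor4} the maximizer in ${\cal M}$ is $\Pro^{(e),F}_{\Theta_{\cal E}\to{\cal M}}((\theta_\saa^{\ddagger},\bar\theta_\sbb^{\ddagger})^T)$ and the maximum is the $D^F$ between this point and $\phi^{(e)}_{\cal E}((\theta_\saa^{\ddagger},\bar\theta_\sbb^{\ddagger})^T)$. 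To identify the maximizer concretely I would rewrite $\phi^{(e)}_{\cal E}((\theta_\saa^{\ddagger},\bar\theta_\sbb^{\ddagger})^T)$ in the $u$-basis using \eqref{XL3}: its first $n_1-1$ natural coordinates are $(\theta_\saa^{\ddagger})^i+(H\bar\theta_\sbb^{\ddagger})^i$, and $H\bar\theta_\sbb^{\ddagger}=H\theta_\sbb^{\ddagger}+HG\theta_\see^{\ddagger}=\theta_\saa^{\dagger}$ since $HG=0$ and $H\theta_\sbb^{\ddagger}=\theta_\saa^{\dagger}$; because the $e$-projection onto ${\cal M}$ preserves those first $n_1-1$ coordinates (Lemma \ref{Th6}(C1) with $U=I$) and ${\cal M}=\{(\theta_\saa,\theta_\sbb^{\dagger})\}$, the maximizer is $(\theta_\saa^{\ddagger}+\theta_\saa^{\dagger},\theta_\sbb^{\dagger})\in{\cal M}$, and $\phi^{(e)}_{\cal E}((\theta_\saa^{\ddagger},\bar\theta_\sbb^{\ddagger})^T)$ corresponds to $\bar{P}_{\theta_\saa^{\ddagger},X}\times\bar{P}_{\bar\theta_\sbb^{\ddagger},Y}$, so the maximum equals the left-hand side of \eqref{XZM2} by Lemma \ref{LC3B}.

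Finally, for the right-hand side of \eqref{XZM2} I would use \eqref{ACA} to write the second argument as $\bar{P}_{\theta_\saa^{\ddagger},X}\times(W\cdot\bar{P}_{\theta_\saa^{\ddagger},X})$ and expand the divergence as $-\sum_x\bar{P}_{\theta_\saa^{\ddagger},X}(x)H(W_x)+H(\bar{P}_{\bar\theta_\sbb^{\ddagger},Y})$; then compute $H(\bar{P}_{\bar\theta_\sbb^{\ddagger},Y})$ from \eqref{BNL} and \eqref{ACA}, using $H\bar\theta_\sbb^{\ddagger}=\theta_\saa^{\dagger}$, $\theta^{i,\dagger}=-H(W_i)+H(W_{n_1})$ and \eqref{XOS}, to get $H(\bar{P}_{\bar\theta_\sbb^{\ddagger},Y})=\sum_x\bar{P}_{\theta_\saa^{\ddagger},X}(x)H(W_x)-H(W_{n_1})+F_{{\cal E},\sbb}(\bar\theta_\sbb^{\ddagger})$, whence the entropy terms cancel and the value is $-H(W_{n_1})+F_{{\cal E},\sbb}(\bar\theta_\sbb^{\ddagger})$. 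I expect the only real work to be this last bookkeeping — keeping the $v$-parametrization of ${\cal E}$ and the $u$-parametrization of ${\cal M}$ straight through the $e$-projection, and carrying the entropy cancellation through cleanly; everything else is a direct reading of Theorem \ref{Cor6-1} and Corollary \ref{Cor4}.
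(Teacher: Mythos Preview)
Your proposal is correct and follows the paper's intended route: the corollary is read off from Theorem \ref{Cor6-1} once the dictionary $V_3=H$, $V_4=G$, $\theta_{\sbb,*}=\theta_\sbb^{\ddagger}$, $\bar\theta_{\sbb,*}=\bar\theta_\sbb^{\ddagger}$, $\bar\theta_{\saa,*}=\theta_\saa^{\ddagger}$ is set up and (B1)--(B6+) are in place. Your identification of the maximizer via Lemma \ref{Th6}(C1) and $H\bar\theta_\sbb^{\ddagger}=\theta_\saa^{\dagger}$, and your verification that the gradient condition is equivalent to \eqref{ACA}, are more explicit than what the paper writes but fully consistent with it.

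The only genuine difference is in the derivation of the value \eqref{XZM2}. The paper computes $D(W_x\|\bar{P}_{\bar\theta_\sbb^{\ddagger},Y})$ directly for a generic $x$: expanding $\log\bar{P}_{\bar\theta_\sbb^{\ddagger},Y}$ via \eqref{BNL}, using $\sum_y f_j(y)W_x(y)=h_{x,j}$, then $H\bar\theta_\sbb^{\ddagger}=\theta_\saa^{\dagger}$ and $\theta^{x,\dagger}=-H(W_x)+H(W_{n_1})$, to obtain $D(W_x\|\bar{P}_{\bar\theta_\sbb^{\ddagger},Y})=-H(W_{n_1})+F_{{\cal E},\sbb}(\bar\theta_\sbb^{\ddagger})$ \emph{independently of $x$}; the divergence on the left of \eqref{XZM2} is then the $q$-average of these and equals the same constant. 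Your route instead invokes \eqref{ACA} to recognize the divergence as the mutual information $H(\bar{P}_{\bar\theta_\sbb^{\ddagger},Y})-\sum_x q(x)H(W_x)$ and then evaluates the output entropy. Both are valid; the paper's computation has the mild advantage that it does not use \eqref{ACA} in the value calculation and exhibits the equidistance property $D(W_x\|\bar P_Y)=\mathrm{const}$ as a byproduct, whereas your mutual-information decomposition is perhaps more transparent from the channel-coding point of view.
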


The derivation of \eqref{XZM2} follows from the following calculation.
\begin{align}
&D(W_x\|\bar{P}_{\bar\theta_{\sbb}^{\ddagger},Y})
=\sum_y W_x(y) (\log W_x(y) - \log \bar{P}_{\bar\theta_{\sbb}^{\ddagger},Y}(y))\nonumber \\
=&-H(W_x)-\sum_y W_x(y) 
(\sum_{j=1}^{n_2-1}\bar\theta^{n_1-1+j,\ddagger}f_j(y)-F_{{\cal E},2}(\bar\theta_{\sbb}^{\ddagger}))\nonumber \\
=&-H(W_x)- \sum_{j=1}^{n_2-1}\bar\theta^{n_1-1+j,\ddagger}h_{x,j}
+F_{{\cal E},\sbb}(\bar\theta_{\sbb}^{\ddagger})\nonumber \\
=&-H(W_x)- \sum_{j=1}^{n_2-1}\theta^{n_1-1+j,\ddagger}h_{x,j}
+F_{{\cal E},\sbb}(\bar\theta_{\sbb}^{\ddagger})\nonumber \\
=&-H(W_x)- \theta^{x,\dagger}
+F_{{\cal E},\sbb}(\bar\theta_{\sbb}^{\ddagger})\nonumber \\
=&-H(W_x)- (-H(W_x)+H(W_{n_1}))
+F_{{\cal E},\sbb}(\bar\theta_{\sbb}^{\ddagger})\nonumber \\
=&-H(W_{n_1})
+F_{{\cal E},\sbb}(\bar\theta_{\sbb}^{\ddagger}).
\end{align}

When $n_1=n_2$, we have $l=n_1-1=n_2-1 $, which enables us to apply 
Corollary \ref{Cor6}.
In this case,
as another typical case, we can choose the functions $f_j$ such that 
$(f_j(i))_{1\le i,j\le n_2-1}$ is the inverse matrix of $(W_i(j))_{1\le i,j\le n_2-1} $.
Under this choice, $h_{i,j}$ is the identity matrix and the calculation of the maximization \eqref{MOA} based on Corollary \ref{Cor6} 
is done by Algorithm 1 in the reference \cite{exact}.
Therefore, the method based on Theorem \ref{Cor6-1}
can be considered as a generalization of Algorithm 1 in the reference \cite{exact}.
In addition, the above discussion shows that 
Algorithm 1 in the reference \cite{exact} can be characterized as finding the intersection of the exponential family ${\cal E}_2$ and the mixture family ${\cal M}_2$, which is an information geometrical characterization.


However, there is a case that no distribution $P_X $ on $\X$ satisfy \eqref{ACA}
because there does not necessarily exist 
$\theta_\saa^{\ddagger} \in \mathbb{R}^{n_1-1} $ such that
$ \nabla^{(e)}[F_{{\cal E},\sbb}](\bar\theta_{\sbb}^{\ddagger})=
\nabla^{(e)}[F_{{\cal E},\saa}](\theta_{\saa}^{\ddagger})H $.
In this case, instead of a distribution on $\X$,
there exists a function $f_X$ on $\X$ such that
\begin{align}
\sum_{x \in \X} f_X(x) W_x= 
\bar{P}_{\bar\theta_\sbb^{\ddagger} ,Y},~
\sum_{x \in \X} f_X(x)=1.
\end{align}
That is, the above function $f_X$ may take negative value(s).
Also, in this case,
there does not exist the maximum in \eqref{CKP},
and the maximum \eqref{MOA} is achieved in the boundary of $\P_\X$.
We denote the value \eqref{XZM2} by $\hat{C}(\X)$, define 
the subset
\begin{align}
{\cal N}(\X):=\{x \in \X| f_X(x) < 0\}.
\end{align}
When ${\cal N}(\X)$ is the empty set, 
$\hat{C}(\X)$ is the channel capacity.

\begin{algorithm}
\caption{Non-iterative algorithm for classical channel capacity
in the special case}
\Label{protocol1}
\begin{algorithmic}
\STATE {Step 1: Set the parameters
$h_{i,j}=\delta_{i,j}$ for $1 \le i \le n_1-1 $ and $1 \le j \le n_2-1$,
and $h_{n_1,j}=0$ for $1 \le j \le n_2-1$.
Choose $f_1, \ldots, f_{n_2-1}$ such that 
$h_{i,j}=\sum_{y} f_{j}(y)W_{i}(y)$.
Here, we use Algorithm \ref{protocol1T}.
} 
\STATE {Step 2: Set the parameter $\theta^{i,\dagger}
= -H(W_i)+H(W_{n_1})$ for $i=1, \ldots, n_1-1$.}
\STATE {Step 3: Define the function $F_{{\cal E},\sbb}(\theta_{\sbb}):=
\log \sum_{y} e^{\sum_{j=1}^{n_2 -1} \theta_\sbb^{j} f_j(y) }$ 
for
$\theta_{\sbb} \in \mathbb{R}^{n_2-1}$.}
\STATE {Step 4: Choose $\theta_{\see}^{\ddagger}\in 
\mathbb{R}^{n_2-n_1}$ as}
\begin{align}
\theta_{\see}^{\ddagger}
:=\argmin_{\theta_\see\in \mathbb{R}^{n_2-n_1}}
F_{{\cal E},\sbb}(\theta_{\saa}^{\ddagger}, \theta_\see).\Label{XOEBC}
\end{align}
\STATE {Step 5: Set
$ \bar\theta_\sbb^{\ddagger}:=(\theta_\saa^{\ddagger}, \theta_\see^{\ddagger}) \in {\cal E}_\sbb\cap {\cal M}_\sbb$, and calculate 
$P_{\bar\theta_\sbb^{\ddagger},Y}(y)$ by using \eqref{BNL}.}
\STATE {Step 6: 
Calculate $P_X$ by solving $\sum_{x}P_X(x)W_x(y)=
P_{\bar\theta_\sbb^{\ddagger},Y}(y)$ with the condition 
$\sum_{x}P_X(x)=1 $.
We output $-H(W_{n_1})+F_{{\cal E},\sbb}
(\bar\theta_{\sbb}^{\ddagger})$ 
and $\{ x \in \X| P_X(x)<0\}$ as $\hat{C}(\X)$ and ${\cal N}(\X)$, respectively.
In particular, if $P_X $ does not have a negative component, 
$\hat{C}(\X)$ is the capacity.
}
\end{algorithmic}
\end{algorithm}

\begin{algorithm}
\caption{Algorithm for finding $f_1, \ldots, f_{n_2-1}$
}
\Label{protocol1T}
\begin{algorithmic}
\STATE {Step 1: 
We reorder elements of $\Y$ such that vectors
$(W_{i}(y))_{y=1, \ldots, n_1-1}$ are linearly independent
for $i=1, \ldots, n_1-1$ and 
$W_{n_1}(n_1)>0$.}
\STATE {Step 2: 
We denote the inverse matrix of 
$(W_{i}(y)- \frac{W_i(n_1)W_{n_1}(y)}{W_{n_1}(n_1)}
)_{i,y=1, \ldots, n_1-1} $
by $c_{j,y}$, i.e., $ \sum_{y=1}^{n_1-1}c_{j,y}
(W_{i}(y)- \frac{W_i(n_1)W_{n_1}(y)}{W_{n_1}(n_1)}
)=\delta_{i,j}$.}
\STATE {Step 3: 
We set $f_{1}, \ldots f_{n_1-1}$ as
$f_{j}(y)=c_{j,y}$ for $y=1, \ldots, n_1-1$,
$f_{j}(n_1)=-\sum_{y=1}^{n_1-1}
c_{j,y}\frac{W_{n_1}(y)}{W_{n_1}(n_1)}$, 
and 
$f_{j}(y)=0$ for $y=n_1, \ldots, n_2$.}
\STATE {Step 4: 
We set $f_{n_1}, \ldots, f_{n_2-1}$ as follows.
We set $f_{j}(y)= \delta_{j+1,y}$ for $j=n_1, \ldots, n_2-1$
and $y=n_1, \ldots, n_2$.
We choose $f_{j}(y)$ for $j=n_1, \ldots, n_2-1$
and $y=1, \ldots, n_1-1$ as follows.
\begin{align}
 f_{j}(y)= -\sum_{i=1}^{n_1-1}\hat{c}_{i,y}W_i(j+1),
\Label{NX0}
\end{align}
where $(\hat{c}_{i,y})_{i,j=1, \ldots, n_1-1}$ is the inverse matrix
of $(W_i(y))_{i,j=1, \ldots, n_1-1}$.} 
\end{algorithmic}
\end{algorithm}

\subsection{Algorithms for non-iterative method}
Using Corollary \ref{corB7}, we have the following lemma.

\begin{lemma}
With the use of Algorithm \ref{protocol1T},
Algorithm \ref{protocol1} 
calculates $\hat{C}(\X)$ and ${\cal N}(\X)$.
\end{lemma}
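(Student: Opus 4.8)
The plan is to verify that Algorithm \ref{protocol1} correctly instantiates the general non-iterative method of Theorem \ref{Cor6-1} (equivalently Corollary \ref{corB7}) for the special coordinate choice it makes, and that its output matches the definitions of $\hat C(\X)$ and ${\cal N}(\X)$ given just above Algorithm \ref{protocol1}. So the real content is: (a) the functions $f_1,\ldots,f_{n_2-1}$ produced by Algorithm \ref{protocol1T} satisfy the defining conditions required in the previous subsections — namely $\sum_y f_j(y)W_{n_1}(y)=0$, the $f_j$ are linearly independent, their span contains no constant function, and $h_{i,j}=\sum_y f_j(y)W_i(y)=\delta_{i,j}$ for $1\le i,j\le n_1-1$ and $h_{n_1,j}=0$; (b) with this choice the intermediate quantities $\theta_\saa^{\dagger}$, $\theta_\sbb^{\ddagger}$, $\bar\theta_\sbb^{\ddagger}$, $P_{\bar\theta_\sbb^{\ddagger},Y}$ computed in Steps 2–5 coincide with those in Corollary \ref{corB7}; and (c) the linear system solved in Step 6 is exactly $W\cdot P_X = \bar P_{\bar\theta_\sbb^{\ddagger},Y}$ with $\sum_x P_X(x)=1$, so by Corollary \ref{corB7} and the definition of $\hat C(\X)$, ${\cal N}(\X)$ the output is correct.

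First I would check step (a). For $f_1,\ldots,f_{n_1-1}$: writing $W^{(n_1)}_i(y):=W_i(y)-\frac{W_i(n_1)W_{n_1}(y)}{W_{n_1}(n_1)}$ for $i,y=1,\ldots,n_1-1$, Step 2 of Algorithm \ref{protocol1T} takes $(c_{j,y})$ to be the inverse of $(W^{(n_1)}_i(y))$, which exists because of the linear-independence reordering in Step 1. Then a direct computation using $f_j(n_1)=-\sum_{y=1}^{n_1-1}c_{j,y}\frac{W_{n_1}(y)}{W_{n_1}(n_1)}$ and $f_j(y)=0$ for $y\ge n_1$ gives $\sum_y f_j(y)W_i(y)=\sum_{y=1}^{n_1-1}c_{j,y}W^{(n_1)}_i(y)=\delta_{i,j}$ for $i=1,\ldots,n_1-1$, and also $\sum_y f_j(y)W_{n_1}(y)=\sum_{y=1}^{n_1-1}c_{j,y}W_{n_1}(y)+f_j(n_1)W_{n_1}(n_1)=0$. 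For $f_{n_1},\ldots,f_{n_2-1}$: Step 4 sets $f_j(y)=\delta_{j+1,y}$ on $y\ge n_1$ and chooses the values on $y<n_1$ via \eqref{NX0} precisely so that $\sum_y f_j(y)W_i(y)=-\sum_{i'}\hat c_{i',y}W_{i'}(j+1)\cdot(\cdots)$ telescopes to $0$ for $i=1,\ldots,n_1-1$ and to $0$ for $i=n_1$ as well (since $W_{n_1}(y)=0$-contributions vanish appropriately); again one checks $\sum_y f_j(y)W_{n_1}(y)=0$. Linear independence and the no-constant-function property follow from the block-triangular structure of the matrix $(f_j(y))$ together with $h_{i,j}=\delta_{i,j}$.

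Next, step (b): with $h_{i,j}=\delta_{i,j}$ we have $H=I_{n_1-1}$, so $V_3=H=I$, $k=n_1-1$, $l=n_1-1$, hence $k=2l$ only when... — actually here $l=n_1+n_2-2$? No: in this subsection $l$ denotes the exponential-family dimension $n_1+n_2-2$, but the relevant matrix $V_3$ is $(n_1-1)\times(n_2-1)$; since $H=I$ forces $n_2-1\ge n_1-1$ with $H$ of rank $n_1-1$, this is the setting of Corollary \ref{corB7}. Then Step 2's $\theta^{i,\dagger}=-H(W_i)+H(W_{n_1})$ matches Lemma \ref{LC3B}; $\theta_\sbb^{\ddagger}$ solving $H\theta_\sbb^{\ddagger}=\theta_\saa^{\dagger}$ becomes simply $\theta_\sbb^{\ddagger}=\theta_\saa^{\dagger}$ padded with the free coordinates indexed by $\Ker H$, matching the $(\theta_\saa^{\ddagger},\theta_\see)$ parametrization used in \eqref{XOEBC}; $\Ker H=\im G$ with $G$ the obvious inclusion of the last $n_2-n_1$ coordinates; and Step 4's minimization \eqref{XOEBC} is literally \eqref{XOEB}. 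So $\bar\theta_\sbb^{\ddagger}$ in Step 5 equals the $\bar\theta_\sbb^{\ddagger}\in{\cal E}_\sbb\cap{\cal M}_\sbb$ of Corollary \ref{corB7}, and $P_{\bar\theta_\sbb^{\ddagger},Y}$ is computed by \eqref{BNL}. Finally step (c) is immediate: Step 6 solves exactly the system whose solvability (by a genuine distribution) is the hypothesis \eqref{ACA} of Corollary \ref{corB7}; when solvable with $P_X\ge 0$, Corollary \ref{corB7} gives that $-H(W_{n_1})+F_{{\cal E},\sbb}(\bar\theta_\sbb^{\ddagger})$ is the capacity, and in general it equals $\hat C(\X)$ with ${\cal N}(\X)=\{x:P_X(x)<0\}$, matching the output.

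The main obstacle I expect is step (a) — verifying that Algorithm \ref{protocol1T} really produces functions with $h_{i,j}=\delta_{i,j}$, $h_{n_1,j}=0$, no constant in the span, and linear independence, simultaneously for both blocks $j\le n_1-1$ and $j\ge n_1$, and that the reordering in Step 1 can always be achieved (linear independence of $(W_i(y))_{y\le n_1-1}$ for $i\le n_1-1$ and $W_{n_1}(n_1)>0$). This is a somewhat fiddly but elementary linear-algebra bookkeeping; everything after it is just matching names to the already-proved Corollary \ref{corB7}.
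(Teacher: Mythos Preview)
Your proposal is correct and follows essentially the same approach as the paper's proof: first verify that the functions $f_1,\ldots,f_{n_2-1}$ produced by Algorithm \ref{protocol1T} satisfy the coordinate conditions of Step 1 (the paper does exactly the two block computations you sketch), then observe that with $H=[I_{n_1-1}\,|\,0]$ each subsequent step of Algorithm \ref{protocol1} is the literal specialization of Lemma \ref{LC3B}, \eqref{NCV2}, \eqref{XOEB}, and Corollary \ref{corB7}. Your write-up is in fact slightly more thorough than the paper's, since you flag the linear-independence, no-constant-in-span, and reordering-feasibility checks that the paper leaves implicit; the momentary confusion about the dimensions of $V_3$ in your part (b) is harmless once you note $H$ is $(n_1-1)\times(n_2-1)$ of rank $n_1-1$.
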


\begin{proof}
In Algorithm \ref{protocol1},
for a simple calculation, we set the parameters $h_{i,j}$ in the way as Step 1.
The choice of functions $f_1, \ldots, f_{n_2-1}$ given in Step 1 
follows from \eqref{NUM2}.
The choice of $\theta^{i,\dagger}$ given in Step 2
follows from Lemma \ref{LC3B}.
The choice of $F_{{\cal E},\sbb}(\theta_{\sbb})$ given in Step 3
follows from \eqref{NCV2}.
The choice of $\theta_{\see}^{\ddagger}$ given in Step 4
follows from \eqref{XOEB}.
Then, Corollary \ref{corB7} guarantees that
the remaining part gives $\hat{C}({\cal X})$ and ${\cal N}({\cal X})$.

In addition, 
the output of Algorithm \ref{protocol1T} satisfies the requirement of 
Step 1 of Algorithm \ref{protocol1}, whose reason is the following.
For $i=1, \ldots, n_1$ and $j=1, \ldots, n_1$,
we have
$\sum_{y=1}^{n_2-1} f_{j}(y)W_i(y)
=\sum_{y=1}^{n_1-1} f_{j}(y)W_i(y)+f_{j}(n_1)W_i(n_1)
=\sum_{y=1}^{n_1-1} c_{j,y}W_i(y)-\sum_{y=1}^{n_1-1}
c_{j,y}\frac{W_{n_1}(y)}{W_{n_1}(n_1)}W_i(n_1)
=\delta_{i,j}$. 
For $i=1, \ldots, n_1$ and $j=n_1, \ldots, n_2-1$,
we have
$\sum_{y=1}^{n_2-1} f_{j}(y)W_i(y)
=\sum_{y=1}^{n_1-1} f_{j}(y)W_i(y)+W_i(j+1)=0$ due to \eqref{NX0}.
That is, the conditions in Step 1 of Algorithm \ref{protocol1} is satisfied.
\end{proof}

Although Algorithm \ref{protocol1} contains the minimization \eqref{XOEBC},
its objective function has a simpler form as defined in Step 3 than the mutual information.
Hence, even when the number of free parameters is large, 
 the minimization \eqref{XOEBC} can be easily calculated.

Algorithm 1 in the reference \cite{exact} covers only the case when 
$n_1=n_2$ and ${\cal N}(\X)$ is the empty set.
In this special case, 
Algorithm \ref{protocol1} coincides with 
Algorithm 1 in the reference \cite{exact}
while Step 4 of Algorithm \ref{protocol1} is a trivial procedure in this case. 

To see the case beyond Algorithm 1 in the reference \cite{exact},
we study the case when ${\cal N}(\X)$ is not the empty set.
In this case,
we need a more complicated procedure.
To handle this case, we expand the definitions of $\hat{C}(\X)$ and ${\cal N}(\X)$.
That is, we define $\hat{C}(\X_0)$ and ${\cal N}(\X_0)$ for a subset $\X_0\subset \X$ in the same way,
and they can be calculated by Algorithm \ref{protocol1}.
In this case, Algorithm \ref{protocol2Y} gives an algorithm to calculate the capacity.

In order to show this fact, we choose a subset $\X_{2,*} \subset \X_2$ as the support of the maximizer $q^{2,*}\in \P_{\X_2}$ of 
\eqref{MOA}
and denote the maximum value \eqref{MOA} by $C(\X_2)$
when $\X_2$ is substituted into $\X$.
In particular, when $\X_2=\X$, we denote $\X_{2,*} $ and $q^{2,*}$
by $\X_{*}$ and $q^{*}$, respectively.
To show the correctness of  Algorithm \ref{protocol2Y}, we prepare 
the following lemma.

\begin{lemma}\Label{LCVF}
The relation
\begin{align}
(\X_2 \setminus \X_{2,*}) \cap {\cal N}(\X_2)
\neq \emptyset
\end{align}
holds for any subset $\X_2 \subset \X$.
\end{lemma}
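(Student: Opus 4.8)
The plan is to argue by contradiction, exploiting the Pythagorean-type characterisation of the maximizer together with the variational definition of the channel capacity. Suppose the claim fails for some $\X_2 \subset \X$, i.e. $(\X_2 \setminus \X_{2,*}) \cap {\cal N}(\X_2) = \emptyset$. First I would recall what $\hat C(\X_2)$ and ${\cal N}(\X_2)$ encode: $\hat C(\X_2)$ is the value $-H(W_{n_1})+F_{{\cal E},\sbb}(\bar\theta_{\sbb}^{\ddagger})$ produced by the non-iterative construction of Corollary \ref{corB7} applied to the sub-channel indexed by $\X_2$, and the function $f_X$ on $\X_2$ with $\sum_{x} f_X(x)W_x = \bar P_{\bar\theta_\sbb^{\ddagger},Y}$, $\sum_x f_X(x)=1$, has ${\cal N}(\X_2)$ as its negative-support. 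The defining property from the derivation following Corollary \ref{corB7} is that $D(W_x \| \bar P_{\bar\theta_\sbb^{\ddagger},Y}) = \hat C(\X_2)$ for \emph{every} $x \in \X_2$; that is, $\bar P_{\bar\theta_\sbb^{\ddagger},Y}$ is equidistant from all rows $W_x$, $x\in\X_2$.

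The key step is then to show that, under the assumption ${\cal N}(\X_2)\subseteq \X_{2,*}$, the distribution $\bar P_{\bar\theta_\sbb^{\ddagger},Y}$ can actually be written as $W\cdot q$ for a genuine probability distribution $q$ supported on $\X_{2,*}$. Indeed, $q^{2,*}$ is supported on $\X_{2,*}$, and the optimality conditions for the capacity maximization \eqref{MOA} restricted to $\X_2$ say that $D(W_x \| W\cdot q^{2,*}) = C(\X_2)$ for $x \in \X_{2,*}$ and $\le C(\X_2)$ for $x\notin\X_{2,*}$; but since $f_X$ restricted to its negative part sits inside $\X_{2,*}$, one can add the negative coefficients of $f_X$ to $q^{2,*}$'s support without leaving $\X_{2,*}$ and obtain that the two output distributions $\bar P_{\bar\theta_\sbb^{\ddagger},Y}$ and $W\cdot q^{2,*}$ coincide — both are the unique output distribution equidistant from $\{W_x\}_{x\in\X_{2,*}}$ when $\{W_x\}$ restricted to $\X_{2,*}$ is affinely independent, and the equidistance value forces $\hat C(\X_2) = C(\X_2)$. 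Then the function $f_X$ itself must be a convex combination, i.e. nonnegative, contradicting ${\cal N}(\X_2)\neq\emptyset$; and if ${\cal N}(\X_2)$ were empty to begin with, then $\hat C(\X_2)$ is already the capacity, its maximizer has full support $\X_{2,*}=\X_2$, so $\X_2\setminus\X_{2,*}=\emptyset$ and the intersection in the statement is vacuously empty, which is \emph{not} what we want to prove nonempty — so actually the content of the lemma is precisely that ${\cal N}(\X_2)$ is nonempty and meets the complement of the optimal support. Hence I would phrase the contradiction as: if every negative point of $f_X$ lay in $\X_{2,*}$, the geometry would force $f_X\ge 0$, so ${\cal N}(\X_2)=\emptyset$, so $\hat C(\X_2)=C(\X_2)$ with full-support maximizer, whence $\X_{2,*}=\X_2$ and $f_X=q^{2,*}$; but then there is no negative point at all — and one checks separately that whenever $\X_{2,*}\subsetneq \X_2$ (the generic situation the lemma addresses) the construction genuinely produces a negative coefficient on some $x\notin\X_{2,*}$, because a strictly suboptimal row $W_x$, $x\notin\X_{2,*}$, satisfies $D(W_x\|W\cdot q^{2,*}) < C(\X_2)$ strictly, so $\bar P_{\bar\theta_\sbb^{\ddagger},Y} \neq W\cdot q^{2,*}$, forcing $f_X$ off the simplex.

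The main obstacle I anticipate is making the dichotomy airtight: precisely relating $\bar P_{\bar\theta_\sbb^{\ddagger},Y}$ (the equidistant point from \emph{all} rows $W_x$, $x\in\X_2$) to $W\cdot q^{2,*}$ (which need only be equidistant from the optimal rows and dominate the rest), and showing these can agree \emph{only when} $\X_{2,*}=\X_2$. This requires the Pythagorean theorem, Proposition \ref{MNL}, applied to the exponential family ${\cal E}$ of product distributions and the relevant mixture family, together with the strict convexity of $F$ which gives uniqueness of the equidistant point; the uniqueness is what lets me conclude that a single shared value $\hat C(\X_2)$ across all of $\X_2$ is incompatible with the existence of a strictly suboptimal row. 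I would also need the affine independence of $\{W_x\}_{x\in\X_{2,*}}$ on the optimal support, which holds because otherwise the maximizer would not be unique on that support; this is the one place where I would invoke a standard fact about the classical channel capacity maximization rather than re-derive it. Once these pieces are in place, the contradiction — that a nonempty ${\cal N}(\X_2)$ cannot be entirely contained in $\X_{2,*}$ — follows by comparing the two characterisations of the output distribution.
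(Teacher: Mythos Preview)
Your proposal has a genuine gap at its core. You assert that $q^{**} := \bar P_{\bar\theta_\sbb^{\ddagger},Y}$ and $q^* := W\cdot q^{2,*}$ must coincide because both are equidistant (in KL divergence) from $\{W_x\}_{x\in\X_{2,*}}$, and you appeal to ``the unique output distribution equidistant from $\{W_x\}_{x\in\X_{2,*}}$''. That uniqueness claim is false: the set of distributions $q$ with $D(W_x\|q)$ constant over $x\in\X_{2,*}$ is, generically, a submanifold of dimension $n_2-|\X_{2,*}|$ in ${\cal P}_\Y$, not a single point. Affine independence of $\{W_x\}_{x\in\X_{2,*}}$ does not help, and strict convexity of $F$ gives uniqueness of \emph{projections}, not of equidistant points. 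In fact, whenever ${\cal N}(\X_2)\neq\emptyset$ one has $q^{**}\neq q^*$ (since $\hat C(\X_2)>C(\X_2)$ by the minimax characterisation of capacity together with linear independence of $\{W_x\}$), so the identification you want is impossible, and the remainder of the argument (``adding the negative coefficients of $f_X$ to $q^{2,*}$'s support'') does not recover it.

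The paper's argument is constructive rather than by contradiction, and it works at the level of the $e$-projection instead of attempting to identify $q^{**}$ with $q^*$. One $e$-projects $q^{**}$ onto the mixture family ${\cal M}$ spanned by $\{W_x\}_{x\in\X_*}$ and shows this projection equals $q^*$; \emph{inside ${\cal M}$} uniqueness does hold, because the equidistant condition is $|\X_*|-1$ constraints on an $(|\X_*|-1)$-dimensional family. One then takes the generator $g$ of the one-parameter exponential family through $q^*$ and $q^{**}$, normalised so that $\sum_y g(y)q^*(y)=0$, and argues via the KKT inequality $D(W_x\|q^*)\le C(\X_2)\le\hat C(\X_2)=D(W_x\|q^{**})$ (plus a second Pythagorean decomposition along that exponential family) that each $W_x$ with $x\notin\X_*$ lies on the side $\{P:\sum_y g(y)P(y)<0\}$. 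Finally, expanding $q^{**}-q^*=\sum_x v_3(x)(W_x-q^*)$ and pairing with $g$ forces a negative coefficient at some $x\notin\X_*$, which translates directly into $x\in{\cal N}(\X_2)\cap(\X_2\setminus\X_{2,*})$. Your contradiction strategy could perhaps be rebuilt around this projection, but as it stands the missing link is exactly the geometric step the paper supplies.
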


\begin{proof}
It is sufficient to show the desired statement for the case with $\X_2=\X$. Hence, we assume the relation $\X_2=\X$.

We define the mixture family
${\cal M}:=
\{\sum_{x \in \X_*} f_X(x) W_x \in \P_{\Y}|
\sum_{x \in \X_*} f_X(x)=1 \}$.
We denote the distribution 
$\bar{P}_{\bar\theta_\sbb^{\ddagger} ,Y}$
by $q^{**}$.
Then, we obtain $\Pro^{(e),F}_{{\cal M}}(q^{**}) \in {\cal M}$. 
Pythagorean theorem (Proposition \ref{MNL}) guarantees the relation
\begin{align}
D(W_x\| q^{*})
=
D(\Pro^{(e),F}_{{\cal M}}(q_{**})\| q^{*})+
D(W_x\| \Pro^{(e),F}_{{\cal M}}(q^{**}))
\end{align}
for $x \in X_*$.
Since $D(W_x\| \Pro^{(e),F}_{{\cal M}}(q^{**}))$
does not depend on $x \in X_*$,
$\Pro^{(e),F}_{{\cal M}}(q^{**})=q^*$. 
We choose the generator $g $ of the exponential family 
${\cal E}$
that connects
$q^{**}$ and $q^*$ as follows.
\begin{align}
q^{**}(y)=q^{*}(y)e^{g(y)-C} , \quad
\sum_{y \in \Y}q^{*}(y)g(y)=0,
\end{align}
where $ C:=\log \sum_{y \in \Y} q^{*}(y)e^{g(y)}$.
Hence, we have
\begin{align}
\sum_{y \in \Y}q^{**}(y)g(y)>0.\Label{NZK}
\end{align}

We define the hyperplane
${\cal M}_c:=\{P\in \P_\Y| \sum_{y \in \Y} g(y)P(y)=c\}$.
We denote the unique element of $ {\cal M}_c \cap {\cal E}$
by $q_c $.
Due to \eqref{NZK}, $q^{**}$ is written as $q_{t}$ with a 
positive number $t$.

Since this exponential family is orthogonal to ${\cal M}$, 
$\sum_{y \in \Y} g(y)W_x(y)=0$ for any element $x \in X_*$, i.e., 
${\cal M} \subset {\cal M}_0$.
For $x \in \X \setminus \X_*$, we choose $c(x)$ such that
$W_x \in {\cal M}_{c(x)}$.
Pythagorean theorem (Proposition \ref{MNL}) guarantees 
the relation
\begin{align}
D(W_x\| q^{*})
=
D(q_{c(x)}\| q^{*})+ D(W_x\| q_{c(x)})
\end{align}
for $x \in \X \setminus X_*$.
Since \begin{align}
D(W_x\| q^{*})
\le D(W_x\| q^{**})
=
D(q_{c(x)}\| q^{**})+ D(W_x\| q_{c(x)}),
\end{align}
we have $ D(q_{c(x)}\| q_{0})=D(q_{c(x)}\| q^{*})
\le D(q_{c(x)}\| q^{**})=D(q_{c(x)}\| q_{t})
$. Hence, $c(x)<0$ because $t>0$.

Now, we write $q^{**} $ and $q^*$ as
$q^{**}=\sum_{x \in \X} v_1(x) W_x$ and
$q^{*}=\sum_{x \in \X_*} v_2(x) W_x$ 
by using a distribution $v_2$ on $\X_*$ and 
a function $v_1$ with the condition
$\sum_{x \in \X} v_1(x)=1 $.
Using a function $v_3$, $q^{**}-q^*$ is written as 
\begin{align}
q^{**}-q^*= \sum_{x \in \X}v_3(x)(W_x-q^*)
\end{align}
by using a function $v_3$. 
Since $c(x)<0$ for $x \in \X \setminus \X_*$
and $c(x)=0$ for $x \in \X_*$, there exists an element 
$x_* \in \X_*$ such that $v_3(x_*)<0 $.
Hence,
\begin{align}
&q^{**}= q^*+\sum_{x \in \X}v_3(x)(W_x-q^*) 
=\sum_{x \in \X}v_3(x)W_x+
\Big(1-\sum_{x \in \X}v_3(x)\Big) q^*
\nonumber \\
=&\sum_{x \in \X}v_3(x)W_x+
\Big(1-\sum_{x \in \X}v_3(x)\Big) 
\sum_{x \in \X_*}v_2(x)W_x \nonumber \\
=&
\sum_{x \in \X\setminus \X_*}v_3(x)W_x+
\sum_{x \in \X_*}
\Big(v_3(x)+
\Big(1-\sum_{x \in \X}v_3(x)\Big) v_2(x)\Big)W_x,
\end{align}
which shows the desired statement.
\end{proof}
The following lemma holds for Algorithm \ref{protocol2Y}.

\begin{lemma}\Label{LCV2}
$\X_{*}$ is contained in one of sets 
$\{  \X\setminus \X_1  \}_{\X_1 \in {\cal A}_j\cup {\cal B}^j}
$ for any $j$.
Hence, when ${\cal A}_j $ is empty,
$\X_{*}$ is contained in one of sets 
$\{  \X\setminus \X_1  \}_{\X_1 \in {\cal B}^j}$, i.e.,
$\X_{*}$ equals $\X\setminus \argmax_{\X_1 \in {\cal B}^j}
\hat{C}(\X\setminus \X_1) $.
\end{lemma}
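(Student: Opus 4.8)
The plan is to prove the first assertion by induction on the stage index $j$ of Algorithm~\ref{protocol2Y}, carrying along the invariant that the family $\{\X\setminus\X_1\}_{\X_1\in{\cal A}_j\cup{\cal B}^j}$ always contains a set that contains $\X_*$. Lemma~\ref{LCVF} is the engine that propagates this invariant from one stage to the next, and the final ``hence'' is then extracted by combining the invariant with the defining property of ${\cal B}^j$ (emptiness of ${\cal N}$, so that the computed value is a genuine capacity) and with the elementary monotonicity $C(\X_0)\le C(\X)$ for $\X_0\subseteq\X$, where $C(\X_0)$ denotes the value of \eqref{MOA} for the sub-channel on the input alphabet $\X_0$.

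First I would isolate the auxiliary fact: if $\X_*\subseteq\X_0\subseteq\X$, then $C(\X_0)=C(\X)$ and $(\X_0)_*=\X_*$, where $(\X_0)_*$ is the support of the capacity-achieving input distribution on $\X_0$. Indeed $q^*$ is feasible on $\X_0$ and already attains $C(\X)\ge C(\X_0)$, forcing equality; and the maximizer of \eqref{MOA} is unique here because $\{W_x\}_x$ is assumed linearly independent, so $q\mapsto W\cdot q$ is injective and $I(q)=H(W\cdot q)-\sum_x q(x)H(W_x)$ is strictly concave in $q$. Hence the $\X_0$-maximizer coincides with $q^*$, so $(\X_0)_*=\X_*$.

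For the induction, the base case is the initialization, where the only candidate removed set is $\X_1=\emptyset$ and $\X_*\subseteq\X=\X\setminus\emptyset$. For the inductive step, take a witness $\X_1$ at stage $j$ with $\X_*\subseteq\X\setminus\X_1$. If $\X_1\in{\cal B}^j$, it is carried over unchanged and witnesses the invariant at stage $j+1$. If $\X_1\in{\cal A}_j$, then ${\cal N}(\X\setminus\X_1)\neq\emptyset$, so Lemma~\ref{LCVF} applied with $\X_2:=\X\setminus\X_1$ produces an element $x_1\in{\cal N}(\X\setminus\X_1)$ with $x_1\notin(\X\setminus\X_1)_*$; by the auxiliary fact $(\X\setminus\X_1)_*=\X_*$, so $x_1\notin\X_*$. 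Since the algorithm replaces such an $\X_1$ by the enlarged removed sets $\X_1\cup\{x\}$, $x\in{\cal N}(\X\setminus\X_1)$, the choice $x=x_1$ yields a candidate $\X_1\cup\{x_1\}\in{\cal A}_{j+1}\cup{\cal B}^{j+1}$ with $\X_*\subseteq\X\setminus(\X_1\cup\{x_1\})$. This is the invariant at stage $j+1$, proving the first assertion.

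For the ``hence'', suppose ${\cal A}_j=\emptyset$. The invariant gives $\X_1\in{\cal B}^j$ with $\X_*\subseteq\X\setminus\X_1$; since ${\cal N}(\X\setminus\X_1)=\emptyset$, Corollary~\ref{corB7} gives $\hat{C}(\X\setminus\X_1)=C(\X\setminus\X_1)=C(\X)$, while every $\X_1'\in{\cal B}^j$ obeys $\hat{C}(\X\setminus\X_1')=C(\X\setminus\X_1')\le C(\X)$. Hence $\max_{\X_1'\in{\cal B}^j}\hat{C}(\X\setminus\X_1')=C(\X)$, and a maximizer $\X_1^{\sharp}$ has a capacity-achieving input distribution attaining $C(\X)$ on all of $\X$; by uniqueness that distribution is $q^*$, so $\X_*=(\X\setminus\X_1^{\sharp})_*$. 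The step I expect to need the most care is upgrading this to the stated equality $\X_*=\X\setminus\X_1^{\sharp}$: one has to show that for $\X_1'\in{\cal B}^j$ the vector recovered by Algorithm~\ref{protocol1} has full support on $\X\setminus\X_1'$ (equivalently, that the argmax over ${\cal B}^j$ is attained at the maximal removed set $\X\setminus\X_*$), which relies on the precise bookkeeping of Algorithm~\ref{protocol2Y} and on the characterization of ${\cal N}$ from Corollary~\ref{corB7}. Pinning down that support statement, and checking that the transition rule of Algorithm~\ref{protocol2Y} only ever enlarges removed sets by single elements of ${\cal N}$, are the two places where the argument goes beyond routine bookkeeping.
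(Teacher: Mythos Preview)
Your inductive strategy via Lemma~\ref{LCVF} is exactly the paper's approach, and your auxiliary fact (that $\X_*\subseteq\X_0$ forces $(\X_0)_*=\X_*$ and $C(\X_0)=C(\X)$) is the right tool. However, the inductive step has a genuine gap.

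You assert that for $\X_1\in\mathcal{A}_j$ and $x_1\in\mathcal{N}(\X\setminus\X_1)$ produced by Lemma~\ref{LCVF}, the enlarged set $\X_1\cup\{x_1\}$ automatically lies in $\mathcal{A}_{j+1}\cup\mathcal{B}^{j+1}$. But look again at the definition of $\mathcal{A}_{j+1}$ in Algorithm~\ref{protocol2Y}: membership requires not only $\mathcal{N}(\X\setminus(\X_1\cup\{x\}))\neq\emptyset$ but also the pruning condition $\hat{C}(\X\setminus(\X_1\cup\{x\}))>C^j$. If $\mathcal{N}(\X\setminus(\X_1\cup\{x_1\}))\neq\emptyset$ yet $\hat{C}(\X\setminus(\X_1\cup\{x_1\}))\le C^j$, the set is discarded from both families and your invariant fails.

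The paper closes this gap by organizing the case split differently. It first checks whether $\X_*$ is already covered by some $\X\setminus\X_1'$ with $\X_1'\in\mathcal{B}^k$; if so, that witness persists into $\mathcal{B}^{k+1}$. If not, then for every $\X_1'\in\mathcal{B}^k$ we have $\X_*\not\subseteq\X\setminus\X_1'$, hence $C(\X\setminus\X_1')<C(\X)$ strictly, so $C^k<C(\X)$. Since $\X_*\subseteq\X\setminus(\X_1\cup\{x\})$ and $\hat{C}$ dominates the constrained capacity, one gets
\[
\hat{C}\bigl(\X\setminus(\X_1\cup\{x\})\bigr)\ \ge\ C\bigl(\X\setminus(\X_1\cup\{x\})\bigr)\ =\ C(\X)\ =\ \hat{C}(\X_*)\ >\ C^k,
\]
which is exactly the pruning inequality needed for $\X_1\cup\{x\}\in\mathcal{A}_{k+1}$. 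Your proof sketch never touches this inequality; without it the induction does not go through. Your caution about the final ``hence'' (upgrading containment to equality) is reasonable---the paper itself is terse there---but the real missing ingredient is the verification of $\hat{C}>C^j$ in the inductive step.
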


This lemma guarantees the correctness of Algorithm \ref{protocol2Y} for the calculation of the capacity.

\begin{proof}
We show the desired statement by induction for $j$.
For $j=1$, the desired statement holds as follows.
Due to Lemma \ref{LCVF}, 
$\X_{*}$ is contained in one of sets 
$\{  \X\setminus \X_1  \}_{\X_1 \in {\cal A}_1}$.

We assume that 
$\X_{*}$ is contained in one of sets 
$\{  \X\setminus \X_1  \}_{\X_1 \in {\cal A}_k\cup {\cal B}^k}
$.
If $\X_{*}$ is contained in one of sets 
$\{  \X\setminus \X_1  \}_{\X_1 \in {\cal B}^k}$,
the desired statement with $j=k+1$ holds.
If $\X_{*}$ is contained in one of sets $\{  \X\setminus \X_1  \}_{\X_1 \in {\cal A}_k}$,
we choose $\X_1 \in {\cal A}_k$ such that
$\X_{*}\subset \X\setminus \X_1 $.
Due to Lemma \ref{LCVF}, there exists an element $x \in {\cal N}(\X\setminus\X_1)$ 
such that 
$\X_{*}\subset \X \setminus (\X_1 \cup \{x\})$.
When ${\cal N}(\X \setminus (\X_1 \cup \{x\}))=\emptyset$,
$\X_{*}$ is one of subsets
$\{  \X\setminus \X_1  \}_{\X_1 \in {\cal B}^{k+1}}$.

When ${\cal N}(\X \setminus (\X_1 \cup \{x\}))\neq\emptyset$,
$\hat{C}(\X \setminus (\X_1 \cup \{x\})\ge
\hat{C}(\X_*)$ and $\hat{C}(\X_*)$ equals the capacity.
Since $\X_*$ is not contained in ${\cal B}^{k}$,
we have $\hat{C}(\X_*)> C^{k}$.
Hence, we have 
$\hat{C}(\X \setminus (\X_1 \cup \{x\})> C^{k}$.
Thus, $\X_1 \cup \{x\} \in {\cal A}_{k+1}$.
Therefore,
$\X_{*}$ is contained in one of sets 
$
\{  \X\setminus \X_1  \}_{\X_1 \in {\cal A}_{k+1}}
\subset 
\{  \X\setminus \X_1  \}_{\X_1 \in {\cal A}_{k+1}\cup{\cal B}_{k+1}}
$.
\end{proof}

\begin{algorithm}
\caption{Non-iterative algorithm for classical channel capacity in the general case}
\Label{protocol2Y}
\begin{algorithmic}
\STATE {
We apply Algorithm \ref{protocol1} to the input set $\X$.
If ${\cal N}(\X)$ is the empty set, we output $\hat{C}(\X)$ as the capacity.
Otherwise, we define the family ${\cal A}_1$ of subsets of $\X$ with cardinality 1 as
$\{   \{x\}  \}_{x \in {\cal N}(\X)}$.
Set $j=1$;} 
\REPEAT 
\STATE {We define 
families 
${\cal A}_{j+1}$,
${\cal B}_{j+1}$,
${\cal B}^{j+1}$
of subsets of $\X$ 
and the positive number $C^{j+1}$ by using Algorithm \ref{protocol1}
as follows.
\begin{align}
{\cal A}_{j+1}
&:=\left\{  \X_1 \cup \{x\} \left|
\begin{array}{l}
\X_1 \in {\cal A}_j, x \in {\cal N}(\X\setminus\X_1), 
{\cal N}(\X \setminus (\X_1 \cup \{x\})) \neq \emptyset ,\\
\hat{C}(\X \setminus (\X_1 \cup \{x\})) > C^j
\end{array}
\right \}\right. \\
{\cal B}_{j+1}&:=
 \{ \X_1 \cup \{x\} | 
\X_1 \in {\cal A}_j,x \in \X\setminus\X_1, 
{\cal N}(\X \setminus (\X_1 \cup \{x\})) = \emptyset 
\} \\
{\cal B}^{j+1}&:=
{\cal B}^{j}\cup
{\cal B}_{j+1} \\
C^{j+1}&:=\max_{
\X_1\in {\cal B}^{j+1}}\hat{C}(\X \setminus \X_1) .
\end{align}}
\UNTIL{${\cal A}_{j+1}$ is empty. When this stopping condition holds, 
we denote $j+1$ by $j_0$.} 
\end{algorithmic}
We output $C^{j_0}$ as the channel capacity.
\end{algorithm}

\begin{remark}\Label{Rem3}
Muroga \cite{Muroga} also 
considered the calculation method of the classical channel capacity.
In \cite[Section 1]{Muroga},
he derived an analytical calculation method when $n_1=n_2$.
In this special case, our method is slightly different from his method as follows.
While his method needs to calculate the inverse matrix of an $n_1\times n_1$ matrix,
our method needs only to calculate the inverse matrix of an 
$(n_1-1)\times (n_1-1)$ matrix.
Hence, our method is slightly better than his method.
When $n_2>n_1$, he presented his calculation method in \cite[Section 2]{Muroga}.
His calculation method requires to solve nonlinear characteristic equations 
\cite[(28)]{Muroga}.
Although he did not explain how to solve the characteristic equations,
the solution can be characterized by the minimizer of a certain convex function
of $n_2-n_1$ variables in a similar way to \eqref{XOEBC}.
Also,
his calculation method requires to calculate
the determinants of $n_1(n_2-n_1)+1$ $n_1\times n_1 $-matrices
while our method needs to calculate $f_j(y)$, which can be calculated by 
the inverse matrix of $(n_1-1)\times (n_1-1)$-matrix. 
The calculation of the inverse matrix of size $n_1-1$
is easier than
the determinants of $(n_1-1)^2$ $(n_1-2)\times (n_1-2)$-matrices 
and one $(n_1-1)\times (n_1-1)$-matrices 
due to Cramer's formula of the inverse matrix.
Hence, our method is slightly easier than his method.
\end{remark}

\subsection{Application of non-iterative method}
This section aims to demonstrate the advantage of 
our method over the method in \cite{exact}.
That is, applying Algorithm \ref{protocol2Y},
we make a numerical calculation of the classical channel capacity
with the following channel of $n_1=n_2=4$;
\begin{align}
W_1:=
\left(
\begin{array}{c}
0.05 \\
0.9-t \\
0.05 \\
t 
\end{array}
\right),~
W_2:=
\left(
\begin{array}{c}
0.05 \\
0.05 \\
0.9-t \\
t 
\end{array}
\right),~
W_3:=
\left(
\begin{array}{c}
0.9 \\
0.05 \\
0.05 \\
0 
\end{array}
\right),~
W_4:=
\left(
\begin{array}{c}
0.05 \\
0.05 \\
0.05 \\
0.85 
\end{array}
\right).\Label{chan1}
\end{align}

In this channel \eqref{chan1}, 
according to Algorithm \ref{protocol2Y},
we apply Algorithm \ref{protocol1} to the input set $\{1,2,3,4\}$.
As a result, we found that 
the optimal input distribution has the support $\{1,2,3,4\}$ 
when $0\le t \le 0.18$.
However, when $t \ge 0.18$, it does not have a positive probability at $X=4$, i.e., 
${\cal N}(\X)$ is not the empty set.
This case cannot be covered by 
Algorithm 1 in the reference \cite{exact}.
Hence, in this case, as the next step, we apply Algorithm \ref{protocol1} with 
${\cal X}=\{1,2,3\}$,
where we need to make the minimization \eqref{XOEBC} with one free parameter.
Its numerical calculation is done as Figs. \ref{fig-C} and \ref{fig-P}.

\begin{figure}[htbp]
\begin{center}
  \includegraphics[width=0.7\linewidth]{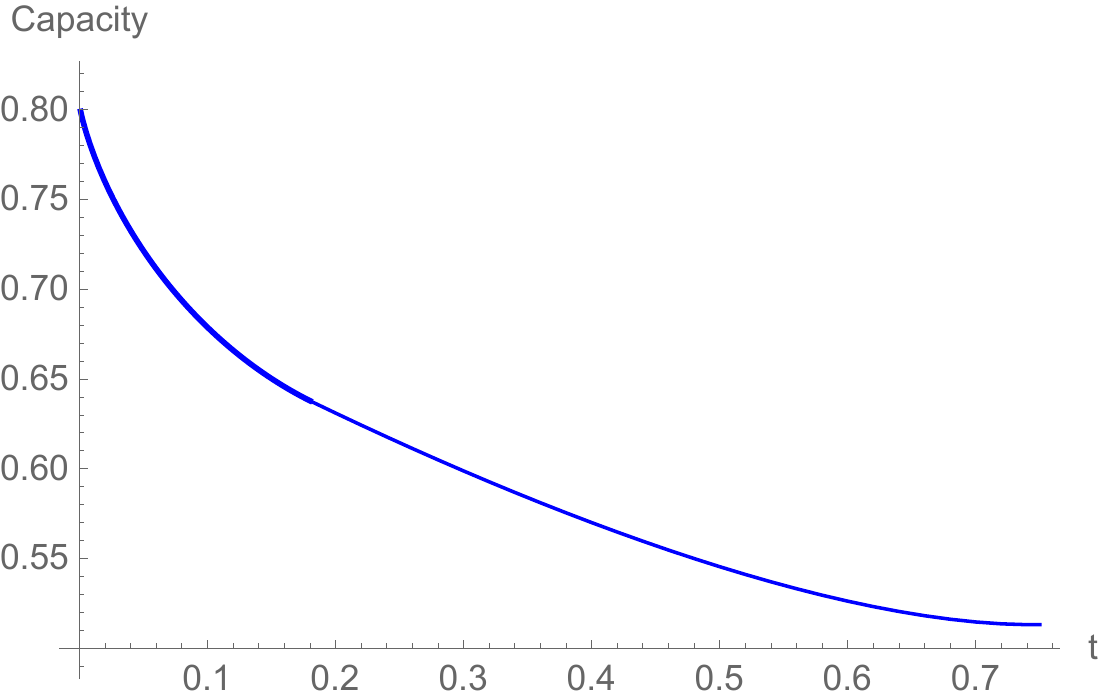}
  \end{center}
\caption{Capacity of the channel \eqref{chan1}.
For $0\le t \le 0.18$, the capacity is calculated by Algorithm \ref{protocol1} with 
${\cal X}=\{1,2,3,4\}$.
For $0.76 \ge t \ge 0.18$, 
the capacity is calculated by Algorithm \ref{protocol1} with 
${\cal X}=\{1,2,3\}$.}
\Label{fig-C}
\end{figure}   

\begin{figure}[htbp]
\begin{center}
  \includegraphics[width=0.7\linewidth]{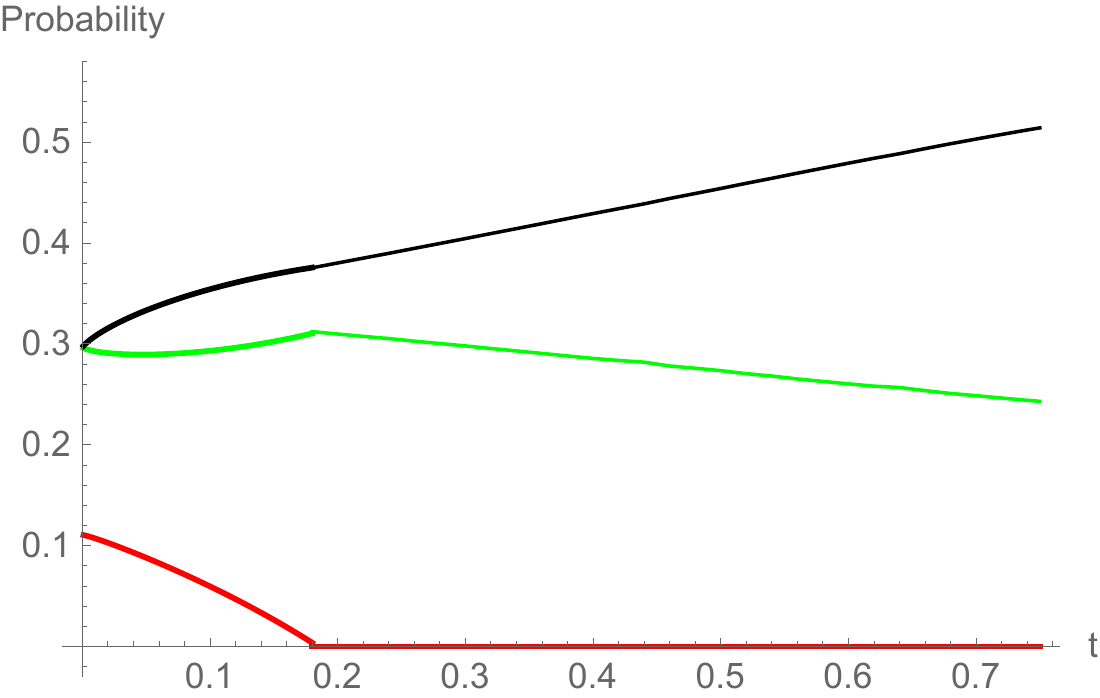}
  \end{center}
\caption{Input distribution realizing capacity:
Due to the symmetry, $P_X(1)=P_X(2)$ in the optimal input distribution.
Green curve shows $P_X(1)$ of the optimal input distribution.
Black curve shows $P_X(3)$ of the optimal input distribution.
Red curve shows $P_X(4)$ of the optimal input distribution.
This value is zero for $t \ge 0.18$.}
\Label{fig-P}
\end{figure}   

\section{Classical Secrecy Capacity}\Label{S11}
\subsection{Problem setting}
Let 
$\X:=\{1,\ldots,n_1 \}$, ${\cal Z}:=\{1,\ldots,n_2 \}$, 
and $\Y:= \{1,\ldots, n_3 \}$ 
be finite sets. 
We call a map  $ W  :\X  \rightarrow  {\cal P}_{\Z \times \Y}$ be a channel from $\X$ to 
$\Z \times \Y$. 
In this section,
we use the subscript $\saa,\sbb,\scc,\sdd,\see,
\sff,\sgg,\shh$ to express elements
of 
$\mathbb{R}^{n_1-1}$, 
$\mathbb{R}^{n_2(n_1+n_3)-n_1}$,
$\mathbb{R}^{n_2(n_1+n_3)-1}$, 
$\mathbb{R}^{n_1n_2n_3-n_1}$, 
$\mathbb{R}^{n_2(n_1+n_3)-2n_1+1}$, 
$\mathbb{R}^{n_2 n_1-n_1}$, 
$\mathbb{R}^{n_2(n_3-1)}$, 
$\mathbb{R}^{n_2(n_3-1)-n_1+1}$, 
respectively, as Tables \ref{tableC} and \ref{tableD}.

\begin{table*}[htb]
\begin{center}
\caption{Summary of subscripts for Section \ref{S11} (1)}
\begin{tabular}{|c|c|c|c|c|} \hline
Subscript &  $\saa$&$\sbb$&$\scc$&$\sdd$
\\
\hline 
Vector space & 
$\mathbb{R}^{n_1-1}$ &
$\mathbb{R}^{n_2(n_1+n_3)-n_1}$ &
$\mathbb{R}^{n_2(n_1+n_3)-1}$ &
$\mathbb{R}^{n_1n_2n_3-n_1}$ \\
\hline 
\multirow{2}{*}{Examples} &
$\Theta_{{\cal M}}, \Theta_{{\cal M}}$ &
\multirow{2}{*}{$\Theta_{{\cal E},\sbb}, \Theta_{{\cal E},\sbb}$}
& \multirow{2}{*}{$\Theta_{{\cal E}}, \Theta_{{\cal E}}$} &\\
&
$\Theta_{{\cal E},\saa}, \Theta_{{\cal E},\saa}$
& &&\\
\hline 
\end{tabular}
\par
\vspace{1ex}
\begin{flushleft}
In Section \ref{S11}, there are more types of vector spaces than in
Section \ref{Sec:BBem}. 
This table and the next table show the relation between a vector space appearing in 
Section \ref{S11}
and the subscript.
\end{flushleft}
\Label{tableC}
\end{center}

\begin{center}
\caption{Summary of subscripts for Section \ref{S11} (2)}
\begin{tabular}{|c|c|c|c|c|} \hline
Subscript &$\see$
&$\sff$&$\sgg$&$\shh$ \\
\hline 
Vector space & 
$\mathbb{R}^{n_2(n_1+n_3)-2n_1+1}$ & 
$\mathbb{R}^{n_2 n_1-n_1}$ &
$\mathbb{R}^{n_2(n_3-1)}$ &
$\mathbb{R}^{n_2(n_3-1)-n_1+1}$\\
\hline 
\end{tabular}
\Label{tableD}
\end{center}
\end{table*}

For a conditional distribution $P_{Y|Z}$
and a joint distribution $P_{X,Z}$,
we define the joint distribution
$ P_{Y|Z}\times P_{X,Z}$ on $\X\times \Z\times \Y$
as
$ P_{Y|Z}\times P_{X,Z}(x,z,y):= P_{Y|Z}(y|z) P_{X,Z}(x,z)$.
We use the notations $W_x(z,y):=W(z,y|x)$,
$W_{Z}(z|x):=\sum_{y'}W(z,y'|x)$, and 
$W_{x,z}(y):=\frac{W(z,y|x)}{W_{Z}(z|x)}$.
For $q \in {\cal P}_{\X}$ and $r \in {\cal P}_{\Y}$,
$ (W\times q)_{Y|Z}$ 
is defined by
$ (W\times q)_{Y|Z}(y|z):=\frac{(W\cdot q)(z,y)}{ (W_Z\cdot q)(z)}$.
Hence, 
$( (W\times q)_{Y|Z}\times (W_Z\times q))(x,z,y)=
(W\times q)_{Y|Z}(y|z) (W_Z\times q)(x,z)$.

When the channel $W$ satisfies Markov chain $X-Y-Z$,
the secrecy capacity of the wire-tap channel  $W$ is given by \cite{Wyner,CK79}
\begin{align}
&\max_{q \in {{\cal P}_{\X}} }
D(W _Y\times q \|  (W _Y\cdot q) \times q ) -D(W _Z\times q \|  (W _Z\cdot q) \times q ) \nonumber \\
=& \max_{q \in {{\cal P}_{\X}} }D(W \times q \| (W\times q)_{Y|Z}\times (W_Z\times q) ) \nonumber \\
=& \max_{q \in {{\cal P}_{\X}} }  \min_{Q \in {{\cal P}_{X-Z-Y} (W_{Z|X}) }} 
D(W \times q \| Q).\Label{MOA2}
\end{align}
We define the set of distributions ${{\cal P}_{X-Z-Y}  }$
on ${\cal X}\times {\cal Z}\times {\cal Y}$
to satisfy the Markov chain $X-Z-Y$.

As proven in Subsection \ref{S64},
the set ${{\cal P}_{X-Z-Y}  }$ forms an exponential subfamily ${\cal E}$ and 
the set of $W \times q  $ forms a mixture subfamily ${\cal M}$.
Hence, the maximization problem \eqref{MOA2} is a special case of the maximization \eqref{eq:Inf.st} with 
$k=n_1-1$, $l=n_1 n_2-1+ n_2(n_3-1)$, and $d=n_1 n_2 n_3-1$.
In the following, we apply Algorithm \ref{protocol2}.
For this aim, 
we need to choose a suitable coordinate to satisfy conditions (B1), (B3), (B4),
and check Condition (B2).

\subsection{Constructions of vectors $u_1, \ldots, u_{n_1 n_2 n_3-1},v_1,\ldots, v_{n_2(n_1+n_3-1)-1}$}
To choose a suitable coordinate to satisfy conditions (B3), (B4), and (B5), we need to choose 
suitable vectors $u_1, \ldots, u_{n_1n_2},v_1,\ldots, v_{n_1+n_2-1}$, which form
the matrices $U$ and $V$.
For this aim, we define various functions on ${\cal Y}$ and ${\cal X}\times {\cal Y}$.
Given $z =1, \ldots, n_2$,
we choose $n_3-1$ independent functions $f_{j,z}$ on ${\cal Y}$ with $j=1, \ldots, n_3-1$ 
to satisfy the condition that
\begin{align}
\sum_{y \in {\cal Y}} f_{j,z}(y) W_{n_1,z}(y)=0
\end{align}
and the linear space spanned by $f_{1,z}, \ldots, f_{n_2-1,z}$ does not contain a constant function.
As a typical case, $f_{j,z}$ can be chosen as follows.
\begin{align}
f_{j,z}(y):=
\left\{
\begin{array}{ll}
W_{n_1,z}(j+1) & \hbox{when } y=j \\
-W_{n_1,z}(j) & \hbox{when } y=j+1 \\
0 & \hbox{otherwise}.
\end{array}
\right.
\end{align}

Then, we define the functions $\xi_1, \ldots, \xi_{n_1n_2 n_3-1}$ on ${\cal X}\times {\cal Z}\times {\cal Y}$ as follows;
\begin{align*}
\xi_i(x,z,y):=& \delta_i(x)  \\ 
\xi_{n_1-1+(i-1)(n_2-1)+j' }(x,z,y):=& (\delta_{j'}(z)-W_{Z,i}(j'))\delta_i(x)\\
\xi_{n_1 n_2-1+(n_1-1) n_2 (n_3-1)+(z'-1)(n_3-1)+j }(x,z,y):= &f_{j,z'}(y) \delta_{n_1}(x)\delta_{z'}(z) ,
\end{align*}
and
\begin{align}
&\xi_{n_1 n_2-1+(i-1)n_2(n_3-1)+(z'-1)(n_3-1)+j }(x,z,y)\nonumber \\
:=& (f_{j,z'}(y)\!-\!h_{i,z'(n_3-1)+j })\delta_{i}(x)\delta_{z'}(z) \Label{XXK}
\end{align}
for $i=1, \ldots, n_1-1$, $z'=1, \ldots, n_2$, 
$j'=1, \ldots, n_2-1 $, and
$j=1, \ldots, n_3-1$,
where 
\begin{align}
h_{i,z'(n_3-1)+j }:=\sum_{y} f_{j,z'}(y)W_{i}(z',y).\Label{MZA}
\end{align}

Then, we define the $C^{\infty}-$strictly convex function 
$F$ on $\mathbb{R}^{n_1n_2n_3-1}$ as
\begin{align}
F(\theta):= \log \sum_{x,z,y} e^{\sum_{i=1}^{n_1 n_2 n_3-1} \theta^i \xi_i(x,z,y) }.
\end{align}
That is,
we consider the Bregman divergence system $(\mathbb{R}^{d}, F, D^F)$.
We define the distribution $P_\theta, P_{\theta,XZ},P_{\theta,X},P_{\theta,Z}, 
P_{\theta,Y|Z}$ as
\begin{align}
P_\theta(x,z,y) &:=e^{\sum_{i=1}^{n_1 n_2 n_3-1} \theta^i \xi_i(x,z,y)-F(\theta) }, \\
P_{\theta,XZ}(xz) &:=\sum_{y} e^{\sum_{i=1}^{n_1 n_2n_3-1} \theta^i \xi_i(x,z,y)-F(\theta) },\Label{CX1}\\
P_{\theta,X}(x) &:=\sum_{z,y}e^{\sum_{i=1}^{n_1 n_2n_3-1} \theta^i \xi_i(x,z,y)-F(\theta) },\Label{CX2}\\
P_{\theta,Z}(z) &:=\sum_{x,y}e^{\sum_{i=1}^{n_1 n_2n_3-1} \theta^i \xi_i(x,z,y)-F(\theta) },\\
P_{\theta,Y|Z}(y|z) &:=
\frac{\sum_{x}e^{\sum_{i=1}^{n_1 n_2n_3-1} \theta^i \xi_i(x,z,y)-F(\theta) }}{P_{\theta,Z}(z)} .
\end{align}
Then, as a special case of \eqref{MGA}, we have
\begin{align}
D^F(\theta\|\theta')= D(P_\theta\|P_{\theta'}).
\end{align}

Next, we choose the matrix $U$ as the identity matrix,
and $u_1, \ldots, u_{n_1 n_2 n_3-1}$ are chosen as its $n_1 n_2 n_3-1$ column vectors.
Then, we define vector $v_1, \ldots, v_{n_2(n_1+n_3-1)-1}$ as follows, whereas $V=(v_1, \ldots, v_{n_2(n_1+n_3-1)-1})$.
\begin{align}
v_i &:= u_i \\
v_{n_1 n_2 -1+j} &:= 
\sum_{i=1}^{n_1} u_{n_1 n_2-1+(n_1-1) n_2 (n_3-1)+j }+ \sum_{i=1}^{n_1-1} h_{i,j} u_i \Label{XLN}
\end{align}
for $ i=1, \ldots,n_1 n_2 -1$ and $ j=1, \ldots, n_2(n_3-1)$.
We define $g_j(x,z,y)$ for $j=1,\ldots, n_2(n_1+n_3-1)-1 $ as
\begin{align}
g_i(x,z,y) := &\delta_i(x) \\
g_{n_1-1+(i-1)(n_2-1)+j' }(x,z,y):=& (\delta_{j'}(z)-W_{Z,i}(j'))\delta_i(x)\Label{Eq144}\\
g_{n_1 n_2-1+(z'-1)(n_3-1)+j }(x,z,y):= &f_{j,z'}(y) \delta_{z'}(z) ,
\end{align}
for $i=1, \ldots, n_1-1$, $z'=1, \ldots, n_2$, 
$j'=1, \ldots, n_2-1 $, and
$j=1, \ldots, n_3-1$.
Then, we have
\begin{align}
\sum_{i=1}^{n_2(n_1+n_3-1)-1}\xi_i (x,z,y) v_j^i=g_j(x,z,y).\Label{COA}
\end{align}

\subsection{Parameterizations of ${\cal E}$ and ${\cal M}$}
We define the exponential subfamily ${\cal E}$
by the generator $v_1, \ldots,  v_{n_2(n_1+n_3-1)-1}$ at the point 
$0$.
Since the set $\{v_1, \ldots,  v_{n_2(n_1+n_3-1)-1}\}$ spans
 the function space spanned by functions of $\X$ and $\Z$
 and functions of $\Y$ and $\Z$,
 the exponential subfamily ${\cal E}$ is the inner of 
${\cal P}_{X-Z-Y}$.

We define the mixture family ${\cal M}$ by the constraint
$\sum_{i=1}^{n_1 n_2-1} u^i_{n_1-1+j'} \partial_i F(\theta)=0$ 
for $j'=1, \ldots, n_1 (n_2-1)$.
This constraint is equivalent to 
\begin{align}
\sum_{z,y } (\delta_j(z)-W_{Z,i}(j))P_{\theta}(i,z,y) &=0 \\
\sum_{z,y }  (f_{j,z'}(y)-h_{i, z'(n_3-1) +j }) \delta_{z'}(z) P_{\theta}(i,z,y) &=0 \\
\sum_{z,y }  f_{j,z}(y) \delta_{z'}(z) P_{\theta}(n_1,z,y)&=0 
\end{align}
for $i=1, \ldots, n_1-1$, $z'=1, \ldots, n_2$, $j=1, \ldots, n_3-1 $.
For $\theta_1 \in \mathbb{R}^{n_1-1}$ and 
$\theta_{\sbb} \in \mathbb{R}^{n_2(n_1+n_3-1)-n_1}$, the function 
$F_{{\cal E}}$ is given as
\begin{align}
F_{{\cal E}}(\theta_\saa,\theta_{\sbb})
=F( V (\theta_\saa,\theta_{\sbb})^T).
\Label{XNII}
\end{align}

The mixture family ${\cal M}$ is composed of distributions with the form $W \times q$.
Thus, the problem \eqref{MOA2} is written as the problem \eqref{eq:Inf.st} with the above defined ${\cal E}$ and ${\cal M}$.
Since the conditional probability 
$P_{\theta,ZY|X}(z,y|i)=
\frac{P_{\theta,XZY}(i,z,y)}{\sum_{z',y'}P_{\theta,XZY}(i,z',y')}$ depends only on 
$(\theta^{j})_{j=n_1}^{n_2(n_1+n_3)-1}$ for $i=1, \ldots, n_1$,
we choose $\theta_{\sdd}^{\dagger}=
(\theta^{n_1,\dagger}, \ldots, \theta^{n_1 n_2 n_3-1,\dagger})$ as
\begin{align}
W_i(z,y)=
\frac{e^{\sum_{j=n_1}^{n_1 n_2 n_3-1}  \theta^{j,\dagger } 
\xi_j(i,z,y)
}}
{\sum_{z',y'} 
e^{\sum_{j=n_1}^{n_1 n_2  n_3-1}  \theta^{j,\dagger } 
\xi_j(i,z',y')}}.\Label{HIA}
\end{align}
Since \eqref{XXK} and \eqref{MZA} guarantee the relation
\begin{align}
\sum_{z',y'} 
\xi_{j'}(i,z',y') e^{\sum_{j=n_1}^{n_1 n_2  n_3-1}  \theta^{j,\dagger } \xi_j(i,z',y')}
=0
\end{align}
for $j'=n_1, \ldots,n_1 n_2  n_3-1$, we have
\begin{align}
&H(W_i)\nonumber \\
=&
-\sum_{z,y}
\bigg(\sum_{j=n_1}^{n_1 n_2 n_3-1}  \theta^{j,\dagger } 
\xi_j(i,z,y)
-\log \Big(\sum_{z',y'} 
e^{\sum_{j=n_1}^{n_1 n_2 n_3-1}  \theta^{j,\dagger } \xi_j(i,z',y')
}\Big)\bigg)W_i(z,y)\nonumber \\
=&
\log \bigg(\sum_{z',y'} 
e^{\sum_{j=n_1}^{n_1 n_2 n_3-1}  \theta^{j,\dagger } \xi_j(i,z',y')
} \bigg)
\Label{XMT}
\end{align}

Due to \eqref{HIA}, ${\cal M}$ is written as
$\{( \theta_\saa, \theta_{\sdd}^{\dagger})| \theta_1 \in \mathbb{R}^{n_1-1}\}$
because the matrix $U$ is the identity matrix.
That is,
${\cal M}$ forms an exponential subfamily generated by 
$u_1, \ldots, u_{n_1-1}$ at $(0,\theta_{\sdd}^{\dagger}) $.
For $\theta_1 \in \mathbb{R}^{n_1-1}$, the function 
$F_{{\cal M}}$ is given as
\begin{align}
F_{{\cal M}}(\theta_\saa)=F( \theta_\saa, \theta_{\sdd}^{\dagger}).
\Label{XNIY}
\end{align}
In addition, the maximization \eqref{MOA2} is rewritten as 
\begin{align}
\max_{q \in {{\cal P}_{\X}} }  \min_{Q \in {{\cal P}_{X-Z-Y} (W_{Z|X}) }} 
D(W \times q \| Q)
=&
\max_{\theta \in \mathcal{M}} D^{F}(\theta \| \Pro^{(m),F}_{\mathcal{E}} (\theta))
\nonumber \\
=&
\max_{\theta \in \mathcal{M}} 
\min_{\theta' \in \mathcal{E}} 
D^{F}(\theta \| \theta').
\end{align}

\subsection{Check of Conditions (B2), (B3), and (B4)}\Label{S64}
Lemma \ref{LOS} guarantees Condition (B1).
We define the $(n_1-1) \times  n_2(n_3-1)$ matrix $H:=(h_{i,j})$,
where $h_{i,j}$ is defined in \eqref{MZA}.
Then, \eqref{XLN} guarantees that the 
$ (n_1-1)\times (n_2 (n_1+n_3-1)-1)$ matrix $V_1$ is $ (I,
0_{n_1-1,n_1 (n_2-1)},H)$,
where $0_{n_1-1,n_1 (n_2-1)}$ is the $(n_1-1)\times n_1 (n_2-1)$ zero matrix.
That is, $(n_1-1) \times (n_1 (n_2-1)+ n_2(n_3-1))$ matrix $V_3$ is 
$(0_{n_1-1,n_1 (n_2-1)},H)$.
Hence, conditions (B3) and (B4) hold.
Therefore, we can apply 
Algorithm \ref{protocol2}.
However, in this example, the condition (B5) does not hold, in general.

As the relation
\begin{align}
P_{\Pro^{(m),F}_{{\cal E}}  (\theta)}=P_{\theta,Y|Z}\times P_{\theta,X Z}
\end{align}
holds for any $\theta$, we have
\begin{align}
 D^{F}(\theta'\|\theta)
=&
 D(P_\theta\| P_{\theta'})
=
 D(P_{\theta,XZ}\| P_{\theta',XZ})
\nonumber \\
 \le &
 D( P_{\theta,XZ}\| P_{\theta',XZ})
+\sum_{z} P_{\theta,Z}(z)D( P_{\theta,Y|Z=z}\| P_{\theta',Y|Z=z}) \nonumber \\
=&
 D( P_{\theta,Y|Z}\times P_{\theta,XZ}\|
 P_{\theta',Y|Z}\times P_{\theta',XZ})
= D(P_{ \Pro^{(m),F}_{{\cal E}}  (\theta')}\| 
(P_{ \Pro^{(m),F}_{{\cal E}}  (\theta)} ) 
\nonumber \\
=&
 D^{F}( \Pro^{(m),F}_{{\cal E}}  (\theta')\|  \Pro^{(m),F}_{{\cal E}}  (\theta) ) 
\end{align}
for $\theta,\theta' \in {\cal M}$.
Thus, condition (B2) holds.
Therefore,
Theorem \ref{theo:conv:BBem} guarantees the global convergence.
When $\theta^{(1)}$ is the uniform distribution on ${\cal X}$,
in the same way as \eqref{CKP}, we can show that
the supremum $\sup_{\theta \in \mathcal{M}} D^F(\theta \| \theta^{(1)})$
equals $ \log n_1$.
Therefore, 
when Theorem \ref{theo:conv:BBem} is applied,
we obtain the precision \eqref{NHG}
with $ \frac{\log n_1}{\epsilon}$ iterations.

\subsection{Conversion to em-problem} 
We define the following functions;
\begin{align}
\bar{g}_i(x)&:=g_i(x,z,y) \\
\bar{g}_j(x,z)&:=g_j(x,z,y) \Label{Eq159}\\
\bar{g}_{j'}(z,y)&:=g_{j'}(x,z,y) 
\end{align}
for $i=1, \ldots, n_1-1$, $j=n_1, \ldots, n_1n_2-1$, and 
$j'= n_1n_2,\ldots, n_2(n_1+n_3-1)-1$.

For 
$\theta_\saa=(\theta^1,\ldots,\theta^{n_1-1})\in \mathbb{R}^{n_1-1} $
and
$\theta_\sff=(\theta^{n_1},\ldots,\theta^{n_1n_2-1})\in \mathbb{R}^{n_1(n_2-1)} $,
we define 
$\hat{P}_{\theta_\saa, X}$,
$\hat{P}_{\theta_\saa,\theta_\sff,XZ}$, and
$\hat{P}_{\theta_\saa,\theta_\sff,Z|X}$ as
\begin{align}
\hat{P}_{\theta_\saa,X}(x)&:=
\frac{e^{
\sum_{j=1}^{n_1-1} \theta^j \bar{g}_j(x)
}}
{\sum_{x'}
e^{\sum_{j=1}^{n_1-1} \theta^j \bar{g}_j(x')}} \Label{XC1T}\\
\hat{P}_{\theta_\saa,\theta_\sff,XZ}(x,z)&:=
\frac{e^{
\sum_{j=1}^{n_1-1} \theta^j \bar{g}_j(x)
+\sum_{j=n_1}^{n_1n_2-1} \theta^j \bar{g}_j(x,z)
}}
{\sum_{x',z'}
e^{\sum_{j=1}^{n_1-1} \theta^j \bar{g}_j(x')
+\sum_{j=n_1}^{n_1n_2-1} \theta^j \bar{g}_j(x',z')
}} \Label{XC2T}\\
\hat{P}_{\theta_\sff,Z|X}(z|x)&:=
\frac{e^{
\sum_{j=n_1}^{n_1n_2-1} \theta^j \bar{g}_j(x,z)
}}
{\sum_{z'}
e^{\sum_{j=n_1}^{n_1n_2-1} \theta^j \bar{g}_j(x,z')
}} .\Label{XC3T}
\end{align}
Then, we have
\begin{align}
\hat{P}_{\theta_\saa,\theta_\sff,Z|X}(z|x)=
\frac{\hat{P}_{\theta_\saa,\theta_\sff,XZ}(x,z)}
{\sum_{z'}\hat{P}_{\theta_\saa,\theta_\sff,XZ}(x,z')}.
\end{align}
For $\theta_\sff=(\theta^{n_1},\ldots,\theta^{n_1n_2-1})\in \mathbb{R}^{n_1(n_2-1)} $,
we choose $\psi_\saa(\theta_\sff) =(\psi^1,\ldots,\psi^{n_1-1})\in \mathbb{R}^{n_1-1} $ and $C_X(\theta_\sff)$ as
\begin{align}
\sum_{j=1}^{n_1-1} \psi^j \bar{g}_j(x)
=-\log \sum_{z'} e^{\sum_{j=n_1}^{n_1n_2-1} \theta^j \bar{g}_j(x,z')}+C_X(\theta_\sff).
\Label{XZP}
\end{align}
Also, we define $\theta_\saa^\dagger
=(\theta^{1,\dagger}, \ldots, \theta^{n_1-1,\dagger}) \in \mathbb{R}^{n_1-1}$ and $C_{X,\dagger}$
as
\begin{align}
\sum_{j=1}^{n_1-1} \theta^{j,\dagger} \bar{g}_j(x)
=-H(W_x) +C_{X,\dagger}.
\Label{XZP2}
\end{align}
The relation \eqref{XZP2} gives the unique definition of $\theta_1^\dagger$ because 
the functions $\{\bar{g}_j\}_j$ and the constant form a basis of the function space over ${\cal X}$.

\begin{lemma}\Label{XCL}
We have the following relations
\begin{align}
\hat{P}_{\theta_1,X}(x)=&
\sum_{z}\hat{P}_{\theta_\saa+ \psi_1(\theta_\sff),\theta_\sff,XZ}(x,z) \Label{XM1}\\
\hat{P}_{\theta_\saa,X}(x)=&
P_{\theta_\saa+\theta_{\saa}^{\dagger},
\theta_{\sdd}^{\dagger},X}(x).
\Label{XM2} 
\end{align}
\end{lemma}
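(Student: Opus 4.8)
The plan is to prove each of the two identities by a direct evaluation of the relevant $X$-marginal, using the defining relations \eqref{XZP} and \eqref{XZP2} of $\psi_\saa(\theta_\sff)$ and $\theta_\saa^\dagger$ together with the partition-function identity \eqref{XMT}. In both cases the strategy is the same: write out the exponential form of the distribution on the right-hand side, split the exponent into the block of indices $i\le n_1-1$ (which only sees the functions $\bar g_i(x)=\delta_i(x)$) and the remaining block, perform the marginalisation, and check that everything that is not of the form $e^{\sum_{i\le n_1-1}\theta^i\bar g_i(x)}$ is $x$-independent and hence cancels against the normalising denominator, leaving exactly \eqref{XC1T}.

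For \eqref{XM1} (whose left-hand side is the $X$-marginal $\hat P_{\theta_\saa,X}$ defined in \eqref{XC1T}), I would start from the right-hand side and marginalise the joint distribution \eqref{XC2T} over $z$. Since $\bar g_j(x)$ for $j\le n_1-1$ does not depend on $z$ while $\bar g_j(x,z)$ for $n_1\le j\le n_1n_2-1$ does, the numerator factors as $e^{\sum_{j\le n_1-1}(\theta^j+\psi^j)\bar g_j(x)}\cdot\sum_z e^{\sum_{j\ge n_1}\theta^j\bar g_j(x,z)}$. Relation \eqref{XZP} says precisely that $e^{\sum_{j\le n_1-1}\psi^j\bar g_j(x)}=e^{C_X(\theta_\sff)}\big(\sum_{z}e^{\sum_{j\ge n_1}\theta^j\bar g_j(x,z)}\big)^{-1}$, so the $z$-sum cancels and what remains in the numerator is $e^{\sum_{j\le n_1-1}\theta^j\bar g_j(x)}$ times the $x$-independent constant $e^{C_X(\theta_\sff)}$. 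The same computation applied to each $x'$ shows that the denominator equals $e^{C_X(\theta_\sff)}\sum_{x'}e^{\sum_{j\le n_1-1}\theta^j\bar g_j(x')}$, so the constant cancels and the result is exactly \eqref{XC1T}.

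For \eqref{XM2}, I would compute the $X$-marginal \eqref{CX2} of $P_{(\theta_\saa+\theta_\saa^\dagger,\theta_\sdd^\dagger)}$. Splitting the exponent into the block $i\le n_1-1$, where $\xi_i(x,z,y)=\delta_i(x)=\bar g_i(x)$ is independent of $(z,y)$, and the block $i\ge n_1$ carried by $\theta_\sdd^\dagger$, the sum over $(z,y)$ turns the latter block into $\sum_{z,y}e^{\sum_{i\ge n_1}\theta^{i,\dagger}\xi_i(x,z,y)}$, which by \eqref{XMT} equals $e^{H(W_x)}$ for $x\in\{1,\dots,n_1\}$. On the other hand $\sum_{i\le n_1-1}\theta^{i,\dagger}\bar g_i(x)=-H(W_x)+C_{X,\dagger}$ by \eqref{XZP2}, so the two $\theta^\dagger$-contributions combine to $e^{H(W_x)}e^{-H(W_x)+C_{X,\dagger}}=e^{C_{X,\dagger}}$, again independent of $x$. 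After cancelling this constant (and the $e^{F(\cdot)}$ normaliser) against the denominator, only $e^{\sum_{i\le n_1-1}\theta^i\bar g_i(x)}$ survives, which is \eqref{XC1T}, i.e.\ $\hat P_{\theta_\saa,X}(x)$.

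Both calculations are essentially bookkeeping, so the main obstacle is not conceptual but the risk of misaligning the coordinate blocks: one must be careful about which components of the full parameter $\theta\in\mathbb R^{n_1n_2n_3-1}$ sit in the $\saa$-, $\sff$-, and $\sdd$-blocks, that the identification $\xi_i=\bar g_i$ holds for $i\le n_1-1$ (and that the $i\ge n_1$ block is exactly the one handled by \eqref{XMT}), and that each of the scalars $C_X(\theta_\sff)$, $C_{X,\dagger}$ and the potentials appearing as normalisers is genuinely constant in $x$ so that it disappears under the softmax. Once these identifications are pinned down, both identities follow immediately.
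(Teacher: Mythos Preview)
Your proposal is correct and follows essentially the same route as the paper's own proof: for \eqref{XM1} the paper marginalises \eqref{XC2T} over $z$ and invokes \eqref{XZP} to collapse the $\psi$-block against the $z$-sum, leaving the $x$-independent constant $e^{C_X(\theta_\sff)}$ that cancels; for \eqref{XM2} it marginalises over $(z,y)$, applies \eqref{XMT} to obtain $e^{H(W_x)}$, and then uses \eqref{XZP2} so that the $\theta^\dagger$-contribution combines with $H(W_x)$ into the $x$-independent constant $e^{C_{X,\dagger}}$. Your bookkeeping caveats about aligning the $\saa$-, $\sff$-, and $\sdd$-blocks and verifying the identification $\xi_i=\bar g_i$ for $i\le n_1-1$ are exactly the points the paper handles implicitly.
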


\begin{proof}
The relation \eqref{XM1} is shown as follows;
\begin{align}
&\sum_{z}\hat{P}_{\theta_\saa+ \psi_\saa(\theta_\sff),
\theta_\sff,XZ}(x,z) \nonumber \\
=&
\sum_{z}\frac{e^{
\sum_{j=1}^{n_1-1} (\theta^j +\psi^j)\bar{g}_j(x)
+\sum_{j=n_1}^{n_1n_2-1} \theta^j \bar{g}_j(x,z)
}}
{\sum_{x',z'}
e^{\sum_{j=1}^{n_1-1} (\theta^j +\psi^j) \bar{g}_j(x')
+\sum_{j=n_1}^{n_1n_2-1} \theta^j \bar{g}_j(x',z')
}}\nonumber \\
\overset{(a)}=&
\frac{e^{
\sum_{j=1}^{n_1-1} \theta^j \bar{g}_j(x)
+C_X(\theta_\sff)}}
{\sum_{x'}
e^{
\sum_{j=1}^{n_1-1} \theta^j \bar{g}_j(x')
+C_X(\theta_\sff)}
}\nonumber \\
=&
\frac{e^{
\sum_{j=1}^{n_1-1} \theta^j \bar{g}_j(x)}}
{\sum_{x'}
e^{
\sum_{j=1}^{n_1-1} \theta^j \bar{g}_j(x')}
}=\hat{P}_{\theta_\saa,X}(x),
\end{align}
where $(a)$ follows from \eqref{XZP}. 

The relation \eqref{XM2} is shown as follows;
\begin{align}
P_{\theta_\saa+\theta_{\saa}^{\dagger},\theta_{\sdd}^{\dagger},X}(x)
=&
\frac{e^{
\sum_{j=1}^{n_1-1} (\theta^j +\theta^{j,\dagger})\bar{g}_j(x)}
\sum_{z,y}
e^{\sum_{j=n_1}^{n_1 n_2 n_3-1}  \theta^{j,\dagger } \xi_j(x,z,y)}}
{\sum_{x'} e^{
\sum_{j=1}^{n_1-1} (\theta^j +\theta^{j,\dagger})\bar{g}_j(x')}
\sum_{z,y}
e^{\sum_{j=n_1}^{n_1 n_2 n_3-1}  \theta^{j,\dagger } \xi_j(x',z,y)}}
\nonumber \\
\overset{(a)}=&
\frac{e^{
\sum_{j=1}^{n_1-1} (\theta^j +\theta^{j,\dagger})\bar{g}_j(x)+H(W_x)}
}
{\sum_{x'} e^{
\sum_{j=1}^{n_1-1} (\theta^j +\theta^{j,\dagger})\bar{g}_j(x')+H(W_{x'})}
}
\nonumber \\
\overset{(b)}=&
\frac{e^{
\sum_{j=1}^{n_1-1} \theta^j \bar{g}_j(x)+C_{X,\dagger}}
}
{\sum_{x'} e^{
\sum_{j=1}^{n_1-1} \theta^j \bar{g}_j(x')+C_{X,\dagger}}
}
=
\frac{e^{
\sum_{j=1}^{n_1-1} \theta^j \bar{g}_j(x)}
}
{\sum_{x'} e^{
\sum_{j=1}^{n_1-1} \theta^j \bar{g}_j(x')}
}
=
\hat{P}_{\theta_\saa,X}(x),
\end{align}
where $(a)$ and $(b)$ follows from \eqref{XMT} and \eqref{XZP2}, respectively.
\end{proof}

To check Condition (B6),
we define 
$\theta_\saa^{\ddagger}=(\theta^{1,\ddagger},\ldots,
\theta^{n_1-1,\ddagger})$, $C_{X,\ddagger}$,
and
$\theta_\sff^{\ddagger}=(\theta^{n_1,\ddagger},\ldots,
\theta^{n_2 n_1-1,\ddagger})$ as
\begin{align}
\sum_{j=1}^{n_1-1} \theta^{j,\ddagger} \bar{g}_j(x)
&=-H(W_{Z|x}) +C_{X,\ddagger}
\Label{XZP3} \\
W_{Z|i}(z)&=\frac{ 
e^{\sum_{j=1}^{n_2-1} \theta^{n_1-1+(i-1)(n_2-1)+j,\ddagger } 
\bar{g}_{(i-1)(n_2-1)+j}(i,z) }
}{
\sum_{z'}e^{\sum_{j=1}^{n_2-1} \theta^{n_1-1+(i-1)(n_2-1)+j ,\ddagger} 
\bar{g}_{(i-1)(n_2-1)+j}(i,z') }
}
\Label{ACG}
\end{align}
for $i=1, \ldots, n_1$.
Since the function $\bar{g}_j$ is defined by 
\eqref{Eq144} and \eqref{Eq159},
\eqref{ACG} is rewritten as
\begin{align}
W_{Z|x}(z)=\frac{e^{
\sum_{j=n_1}^{n_1n_2-1} 
\theta^{j,\ddagger}\bar{g}_j(x,z)}
}
{\sum_{z'}e^{
\sum_{j=n_1}^{n_1n_2-1} 
\theta^{j,\ddagger}\bar{g}_j(x,z')}
}.
\Label{ACGL}
\end{align}

\begin{lemma}
We have the following relations.
\begin{align}
\hat{P}_{\theta_\sff^{\ddagger},Z|X}(z|x)
=&W_{Z|x}(z)
\Label{XMPG} \\
\theta_\saa^{\ddagger}=& \psi_\saa(\theta_{\sff}^{\ddagger}) \Label{CMR}\\
\hat{P}_{(\theta_\saa+\theta_\saa^{\ddagger},\theta_\sff^{\ddagger}), XZ}
=&P_{(\theta_\saa+\theta_\saa^{\dagger},
\theta_{\sdd}^{\dagger}), XZ}.
\Label{XMP4B}
\end{align}
\end{lemma}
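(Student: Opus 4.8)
The plan is to prove the three identities \eqref{XMPG}, \eqref{CMR}, \eqref{XMP4B} in that order, each one being used in the proof of the next. The first is essentially a matter of unfolding definitions: substituting $\theta_\sff=\theta_\sff^{\ddagger}$ into the definition \eqref{XC3T} of $\hat{P}_{\theta_\sff,Z|X}$ produces precisely the right-hand side of \eqref{ACGL}, which by \eqref{ACG} equals $W_{Z|x}(z)$. So \eqref{XMPG} follows with no computation beyond recognizing that the two expressions are literally the same.

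For \eqref{CMR}, I would argue that $\psi_\saa(\theta_\sff^{\ddagger})$, which is characterized by \eqref{XZP}, coincides with $\theta_\saa^{\ddagger}$, characterized by \eqref{XZP3}. The key step is to show that the term $\log\sum_{z'}\exp\!\bigl(\sum_{j=n_1}^{n_1n_2-1}\theta^{j,\ddagger}\bar g_j(x,z')\bigr)$ appearing in \eqref{XZP} equals $H(W_{Z|x})$. Indeed, by \eqref{ACGL} (equivalently by \eqref{XMPG}) the distribution $W_{Z|x}$ is the member of the exponential family with natural parameters $\theta^{j,\ddagger}$ and sufficient statistics $\bar g_j(x,\cdot)$, and that log-sum-exp is its log-partition function. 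These sufficient statistics are centered by construction: from the definition \eqref{Eq144} of the $g_j$ (the subtraction of $W_{Z,i}(j')$), and since $W_{Z|x}$ is the true $Z$-marginal $W_Z(\cdot|x)$ of the channel, one has $\sum_z \bar g_j(x,z)\,W_{Z|x}(z)=0$ for the relevant indices $j$. Writing $W_{Z|x}(z)=\exp\!\bigl(\sum_j\theta^{j,\ddagger}\bar g_j(x,z)-\psi(x)\bigr)$ and computing $-\sum_z W_{Z|x}(z)\log W_{Z|x}(z)$, the linear term drops out and $H(W_{Z|x})=\psi(x)$, the log-partition. Substituting into \eqref{XZP} gives $\sum_{j=1}^{n_1-1}\psi^j\bar g_j(x)=-H(W_{Z|x})+C_X(\theta_\sff^{\ddagger})$, and comparing with \eqref{XZP3} shows $\sum_j(\psi^j-\theta^{j,\ddagger})\bar g_j$ is a constant function on $\X$; since the $\bar g_j$ together with the constant function are linearly independent (as noted after \eqref{XZP2}), this forces $\psi^j=\theta^{j,\ddagger}$, i.e. \eqref{CMR}.

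For \eqref{XMP4B} I would match the two full-support distributions on $\X\times\Z$ by matching their $\X$-marginals and their $Z\mid X$-conditionals. For the marginal: by \eqref{CMR} the $\saa$-parameter of the left side is $\theta_\saa+\psi_\saa(\theta_\sff^{\ddagger})$, so \eqref{XM1} of Lemma \ref{XCL} gives $\sum_z\hat{P}_{(\theta_\saa+\theta_\saa^{\ddagger},\theta_\sff^{\ddagger}),XZ}(x,z)=\hat{P}_{\theta_\saa,X}(x)$, while \eqref{XM2} gives $P_{(\theta_\saa+\theta_\saa^{\dagger},\theta_\sdd^{\dagger}),X}(x)=\hat{P}_{\theta_\saa,X}(x)$; the marginals agree. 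For the conditional given $X=x$: from \eqref{XC2T} the $x$-only factors cancel and the left side conditional is $\hat{P}_{\theta_\sff^{\ddagger},Z|X}(\cdot\mid x)=W_{Z|x}$ by \eqref{XMPG}; from \eqref{CX1} the right side conditional, after cancelling $x$-only factors, is $\bigl(\sum_y e^{\sum_{i\ge n_1}\theta^{i,\dagger}\xi_i(x,z,y)}\bigr)\big/\bigl(\sum_{z',y}e^{\sum_{i\ge n_1}\theta^{i,\dagger}\xi_i(x,z',y)}\bigr)$, and using \eqref{HIA} and the normalization \eqref{XMT} (which give $e^{\sum_{i\ge n_1}\theta^{i,\dagger}\xi_i(x,z,y)}=W_x(z,y)e^{H(W_x)}$) this collapses to $\sum_y W_x(z,y)=W_{Z|x}(z)$. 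Hence both conditionals equal $W_{Z|x}$, and \eqref{XMP4B} follows.

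I expect the main obstacle to be the entropy identity inside the proof of \eqref{CMR}: the observation that the log-partition function of the exponential-family representation of $W_{Z|x}$ equals the Shannon entropy $H(W_{Z|x})$ because the sufficient statistics are centered. Getting this right requires keeping track of which $\bar g_j(x,\cdot)$ are actually nonzero for a given input $x$, and handling the reference input $x=n_1$ (whose block is defined without the mean-subtraction) separately; that bookkeeping is routine but is where care is needed.
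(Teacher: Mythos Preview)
Your proposal is correct and follows essentially the same approach as the paper. For \eqref{XMPG} and \eqref{CMR} your argument is identical to the paper's (including the key observation that the centered sufficient statistics $\bar g_j(x,\cdot)$ make the log-partition function equal to $H(W_{Z|x})$); for \eqref{XMP4B} you organize the same ingredients (\eqref{XM1}, \eqref{XM2}, \eqref{XMPG}, \eqref{HIA}) via a marginal/conditional split rather than the paper's single chain of equalities, which is a cosmetic difference only.
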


\begin{proof}
The relation \eqref{XMPG} follows from \eqref{ACGL} and \eqref{XC3T}.
The combination of \eqref{XZP}, \eqref{XZP3}, and \eqref{ACGL} yields \eqref{CMR}.

The relation \eqref{CMR} is shown as follows;
Similar to \eqref{XMT}, since \eqref{Eq144} and \eqref{Eq159} guarantee
$\sum_{z}\bar{g}_j(x,z)W_{Z|x}(z)=0$, \eqref{ACGL} implies
\begin{align}
H(W_{Z|x})=-\log \sum_{z'} e^{\sum_{j=n_1}^{n_1n_2-1} \theta^j \bar{g}_j(x,z')}.\Label{AMR}
\end{align}
The combination of \eqref{XZP}, \eqref{XZP3}, and \eqref{AMR} yields 
\begin{align}
\sum_{j=1}^{n_1-1} \psi^j \bar{g}_j(x)-C_X(\theta_\sff)
=
\sum_{j=1}^{n_1-1} \theta^{j,\ddagger} \bar{g}_j(x)-C_{X,\ddagger}.
\end{align}
Since the functions $\{\bar{g}_j\}_j$ and the constant
are linearly independent, we obtain
\eqref{CMR}.

The relation \eqref{XMP4B} is shown as follows;
\begin{align}
&P_{\theta_\saa+\theta_{\saa}^{\dagger},
\theta_{\sdd}^{\dagger},XZ}(x)
\nonumber \\
=&
\frac{e^{
\sum_{j=1}^{n_1-1} (\theta^j +\theta^{j,\dagger})\bar{g}_j(x)}
\sum_{y}e^{\sum_{j=n_1}^{n_1 n_2 n_3-1}  \theta^{j,\dagger } \xi_j(x,z,y)}}
{\sum_{x',y',z'} e^{
\sum_{j=1}^{n_1-1} (\theta^j +\theta^{j,\dagger})\bar{g}_j(x')}
e^{\sum_{j=n_1}^{n_1 n_2 n_3-1}  \theta^{j,\dagger } \xi_j(x',z',y')}}
\nonumber \\
=&
\frac{
e^{
\sum_{j=1}^{n_1-1} (\theta^j +\theta^{j,\dagger})\bar{g}_j(x)}
\sum_{z',y'}e^{\sum_{j=n_1}^{n_1 n_2 n_3-1}  \theta^{j,\dagger } \xi_j(x,z',y')}
}
{\sum_{x',y',z'} e^{
\sum_{j=1}^{n_1-1} (\theta^j +\theta^{j,\dagger})\bar{g}_j(x')}
e^{\sum_{j=n_1}^{n_1 n_2 n_3-1}  \theta^{j,\dagger } \xi_j(x',z',y')}}
\nonumber \\
&\cdot \frac{\sum_{y}e^{\sum_{j=n_1}^{n_1 n_2 n_3-1}  \theta^{j,\dagger } \xi_j(x,z,y)}}
{\sum_{z',y'}e^{\sum_{j=n_1}^{n_1 n_2 n_3-1}  \theta^{j,\dagger } \xi_j(x,z',y')}}
\nonumber \\
\overset{(a)}=&
\hat{P}_{\theta_\saa,X}(x)\sum_{y}W_x(z,y)
=\hat{P}_{\theta_\saa,X}(x)W_{Z|x}(z)
\overset{(b)}=\hat{P}_{\theta_\saa,X}(x)\hat{P}_{\theta_\sff^{\ddagger},Z|X}(z|x)\nonumber \\
\overset{(c)}=&
\frac{e^{
\sum_{j=1}^{n_1-1} \theta^j \bar{g}_j(x)}
}
{\sum_{x'} e^{
\sum_{j=1}^{n_1-1} \theta^j \bar{g}_j(x')}
}
\frac{e^{
\sum_{j=n_1}^{n_1n_2-1} \theta^{j,\ddagger} \bar{g}_j(x,z)
}}
{\sum_{z'}
e^{\sum_{j=n_1}^{n_1n_2-1} \theta^{j,\ddagger} \bar{g}_j(x,z')
}}\nonumber \\
\overset{(d)}=&
\frac{e^{
\sum_{j=1}^{n_1-1} (\theta^j+\psi_1^j(\theta_\sff^{\ddagger})) \bar{g}_j(x)}
e^{
\sum_{j=n_1}^{n_1n_2-1} \theta^{j,\ddagger} \bar{g}_j(x,z)
}}
{\sum_{x'} e^{
\sum_{j=1}^{n_1-1} \theta^j \bar{g}_j(x')}
e^{C_X(\theta_\sff^{\ddagger})}}
\nonumber \\
\overset{(e)}=&
\frac{e^{
\sum_{j=1}^{n_1-1} (\theta^j+\theta^{j,\ddagger}) \bar{g}_j(x)
+\sum_{j=n_1}^{n_1n_2-1} \theta^{j,\ddagger} \bar{g}_j(x,z)
}}
{\sum_{x',z'}
e^{\sum_{j=1}^{n_1-1} (\theta^j+\theta^{j,\ddagger}) \bar{g}_j(x')
+\sum_{j=n_1}^{n_1n_2-1} \theta^{j,\ddagger} \bar{g}_j(x',z')
}} \nonumber \\
=&\hat{P}_{(\theta_1+\theta_1^{\ddagger},\theta_2^{\ddagger}), XZ}(x,z),
\end{align}
where each step can be shown as follows.
$(a)$ follows from \eqref{HIA} and \eqref{XMP4B}.
$(b)$ follows from \eqref{XMPG}.
$(c)$ follows from \eqref{XC1T} and \eqref{XC3T}.
$(d)$ follows from \eqref{XZP2}.
$(e)$ follows from \eqref{CMR} and the fact that the denominator
$\sum_{x'} e^{
\sum_{j=1}^{n_1-1} \theta^j \bar{g}_j(x')}
e^{C_X(\theta_\sff^{\ddagger})}$ is a constant that does not depend on $x,z$.
\end{proof}

For any element $\theta_\saa=$ $(\theta^1,\ldots, $ $\theta^{n_1-1})^T
\in \Theta_{{\cal M}}$, 
we define 
$\phi_\sgg(\theta_\saa)=(\phi^{n_1n_2}, \ldots,$ $ \phi^{n_2(n_1+n_3-1)-1})^T$
as follows.
\begin{align}
P_{(\theta_\saa+\theta_\saa^{\dagger},
\theta_{\sdd}^{\dagger}),Y|Z}(y|z) 
=
\frac{e^{\sum_{j=n_1n_2}^{n_2(n_1+n_2-1)-1}\phi^{j} 
\bar{g}_j(z,y)}}{
\sum_{y' \in\Y} e^{\sum_{j=n_1n_2}^{n_2(n_1+n_2-1)-1}\phi^{j} 
\bar{g}_j(z,y')}}
\Label{XPQ2}
\end{align}
for $z \in \Z, y \in \Y$.
Then, for any element $\theta_\saa=(\theta^1,\ldots, \theta^{n_1-1})^T
\in \Theta_{{\cal M}}$, 
we choose
$\phi_\sff(\theta_\saa)=(\phi^{n_1}, \ldots, \phi^{n_1 n_2-1})^T$
and $C_{XZ}(\theta_\saa)$ such that
\begin{align}
&\sum_{j=n_1}^{n_1n_2-1} 
\phi^j \bar{g}_j(x,z)\nonumber \\
=&-\log \Big(
\sum_{y' \in\Y} e^{\sum_{j=n_1n_2}^{n_2(n_1+n_2-1)-1}\phi^{j} \bar{g}_j(z,y')}\Big)+C_{XZ}(\theta_1)
\Label{XPQ1}
\end{align}
for $x \in \X,z \in \Z$.
Then, we prepare the following lemma.
\begin{lemma}\Label{LC3BT}
The relation
\begin{align}
P_{(\theta_\saa+\theta_\saa^{\dagger},
\theta_{\sdd}^{\dagger}),Y|Z}
\times P_{(\theta_\saa+\theta_\saa^{\dagger},\theta_\sff^{\dagger}),XZ}
=P_{V(\theta_\saa+\theta_\saa^{\ddagger}, 
\theta_\sff^{\ddagger}+\phi_\sff(\theta_\saa),
\phi_\sgg(\theta_\saa) )^T}
\Label{XMP6}
\end{align}
holds for 
$\theta_\saa\in \Theta_{{\cal M}}$.
\end{lemma}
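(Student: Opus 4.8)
The plan is to \emph{factorize} the right-hand side of \eqref{XMP6} --- which is a point of the exponential subfamily ${\cal E}$ --- into its $Y|Z$-conditional and its $XZ$-marginal, and then to recognize each factor as one of the two factors on the left-hand side. By \eqref{COA} the generators $v_1,\ldots,v_{n_2(n_1+n_3-1)-1}$ of ${\cal E}$ pair with the functions $g_j$, and by their definitions \eqref{Eq144}, \eqref{Eq159} these fall into three blocks, indexed by $\{1,\ldots,n_1-1\}$, $\{n_1,\ldots,n_1n_2-1\}$, and $\{n_1n_2,\ldots,n_2(n_1+n_3-1)-1\}$, which equal respectively functions $\bar g_i(x)$ of $x$ alone, $\bar g_j(x,z)$ of $(x,z)$, and $\bar g_{j'}(z,y)$ of $(z,y)$. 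Writing the argument of the right-hand side of \eqref{XMP6} as the three-block vector $\bar\theta=(\theta_\saa+\theta_\saa^\ddagger,\ \theta_\sff^\ddagger+\phi_\sff(\theta_\saa),\ \phi_\sgg(\theta_\saa))$, one gets $P_{V\bar\theta}(x,z,y)\propto\exp\big(A(x)+B(x,z)+C(z,y)\big)$, where $A,B,C$ are the three corresponding partial sums; in particular $P_{V\bar\theta}$ obeys the Markov chain $X-Z-Y$.

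For the conditional factor, normalizing $e^{A(x)+B(x,z)+C(z,y)}$ over $y$ at fixed $z$ cancels the $A$- and $B$-terms, giving $P_{V\bar\theta,Y|Z}(y|z)=e^{C(z,y)}/\sum_{y'}e^{C(z,y')}$. Since $C$ is built from the block $\phi_\sgg(\theta_\saa)$, this is exactly the right-hand side of the defining relation \eqref{XPQ2}, so $P_{V\bar\theta,Y|Z}=P_{(\theta_\saa+\theta_\saa^\dagger,\theta_\sdd^\dagger),Y|Z}$, the first factor on the left of \eqref{XMP6}.

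For the marginal factor, summing $P_{V\bar\theta}$ over $y$ gives $P_{V\bar\theta,XZ}(x,z)\propto e^{A(x)+B(x,z)}\,Z(z)$ where $Z(z):=\sum_{y}e^{C(z,y)}$ is the $y$-normalizer just encountered. The key step is that the defining relation \eqref{XPQ1} of $\phi_\sff(\theta_\saa)$ states precisely that the $\phi_\sff(\theta_\saa)$-component of $B(x,z)$ equals $-\log Z(z)+C_{XZ}(\theta_\saa)$, so $e^{B(x,z)}Z(z)$ reduces to $\exp\big(\sum_j\theta_\sff^{\ddagger,j}\bar g_j(x,z)\big)$ times the constant $e^{C_{XZ}(\theta_\saa)}$. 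Absorbing that constant into the normalization leaves $P_{V\bar\theta,XZ}(x,z)\propto\exp\big(\sum_i(\theta_\saa+\theta_\saa^\ddagger)^i\bar g_i(x)+\sum_j\theta_\sff^{\ddagger,j}\bar g_j(x,z)\big)$, which by \eqref{XC2T} equals $\hat P_{(\theta_\saa+\theta_\saa^\ddagger,\theta_\sff^\ddagger),XZ}$; and \eqref{XMP4B} of the preceding lemma identifies this with $P_{(\theta_\saa+\theta_\saa^\dagger,\theta_\sdd^\dagger),XZ}$, the second factor on the left. Multiplying the two matched factors yields \eqref{XMP6}.

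The main obstacle is bookkeeping rather than anything conceptual: one must keep the three index blocks aligned with the function blocks $\bar g_i(x),\bar g_j(x,z),\bar g_{j'}(z,y)$ at every step, carry the additive constant $C_{XZ}(\theta_\saa)$ correctly through the cancellation coming from \eqref{XPQ1}, and note that the $XZ$-marginal of $P_{(\cdot,\theta_\sdd^\dagger)}$ sees $\theta_\sdd^\dagger$ only through its $Z|X$-block, so that \eqref{XMP4B} applies verbatim (this reconciles the symbol $\theta_\sff^\dagger$ appearing in the displayed statement). No inequality, convexity, or projection-existence input beyond \eqref{XPQ2}, \eqref{XPQ1}, and \eqref{XMP4B} is needed.
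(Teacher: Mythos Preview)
Your proposal is correct and follows essentially the same route as the paper: factor $P_{V\bar\theta}$ into its $Y|Z$ conditional and $XZ$ marginal using the three-block structure of the generators via \eqref{COA}, identify the conditional by \eqref{XPQ2}, and identify the marginal by collapsing the $\phi_\sff$ contribution against $\log Z(z)$ via \eqref{XPQ1}, then invoking \eqref{XC2T} and \eqref{XMP4B}. The only difference is that the paper treats the $XZ$ marginal first and the $Y|Z$ conditional second, which is immaterial.
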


\begin{proof}
We have
\begin{align}
&P_{V(\theta_\saa+\theta_\saa^{\ddagger}, 
\theta_\sff^{\ddagger}+\phi_\sff(\theta_\saa),
\phi_\sgg(\theta_\saa) )^T,XZ}(x,z) \nonumber\\
\overset{(a)}=&\frac{e^{
\sum_{j=1}^{n_1-1} (\theta^j+\theta^{j,\ddagger}) \bar{g}_j(x)
+\sum_{j=n_1}^{n_1n_2-1} 
(\theta^{j,\ddagger}+ \phi^j)\bar{g}_j(x,z)}
\sum_{y}  e^{\sum_{j=n_1n_2}^{n_2(n_1+n_2-1)-1}\phi^{j} 
\bar{g}_j(z,y)}
}
{\sum_{x',z'}e^{
\sum_{j=1}^{n_1-1} (\theta^j+\theta^{j,\ddagger}) \bar{g}_j(x')
+\sum_{j=n_1}^{n_1n_2-1} 
(\theta^{j,\ddagger}+ \phi^j)\bar{g}_j(x',z')}
\sum_{y'}  e^{\sum_{j=n_1n_2}^{n_2(n_1+n_2-1)-1}\phi^{j} 
\bar{g}_j(z',y')}
}\nonumber \\
\overset{(b)}=&\frac{e^{
\sum_{j=1}^{n_1-1} (\theta^j+\theta^{j,\ddagger}) \bar{g}_j(x)
+\sum_{j=n_1}^{n_1n_2-1} 
\theta^{j,\ddagger}\bar{g}_j(x,z)+C_{XZ}(\theta_1)}
}
{\sum_{x',z'}e^{
\sum_{j=1}^{n_1-1} (\theta^j+\theta^{j,\ddagger}) \bar{g}_j(x')
+\sum_{j=n_1}^{n_1n_2-1} 
\theta^{j,\ddagger}\bar{g}_j(x',z')+C_{XZ}(\theta_1)}
}\nonumber \\
=&\frac{e^{
\sum_{j=1}^{n_1-1} (\theta^j+\theta^{j,\ddagger}) \bar{g}_j(x)
+\sum_{j=n_1}^{n_1n_2-1} 
\theta^{j,\ddagger}\bar{g}_j(x,z)}
}
{\sum_{x',z'}e^{
\sum_{j=1}^{n_1-1} (\theta^j+\theta^{j,\ddagger}) \bar{g}_j(x')
+\sum_{j=n_1}^{n_1n_2-1} 
\theta^{j,\ddagger}\bar{g}_j(x',z')}
}\nonumber \\
\overset{(c)}=&
\hat{P}_{\theta_\saa+\theta_\saa^{\ddagger}, \theta_{2}^{\ddagger},XZ}(x,z)
\Label{XMP3},
\end{align}
where $(a)$, $(b)$, and $(c)$ follow from \eqref{COA}, \eqref{XPQ1}, and \eqref{XC2T}, respectively.
The combination of \eqref{XMP4B} and \eqref{XMP3} yields that
\begin{align}
P_{V(\theta_\saa+\theta_\saa^{\ddagger}, 
\theta_\sff^{\ddagger}+\phi_\sff(\theta_\saa),
\phi_\sgg(\theta_\saa) )^T,XZ}
=P_{(\theta_\saa+\theta_\saa^{\dagger},
\theta_{\sdd}^{\dagger}), XZ}.
\Label{XMP4}
\end{align}

In the same way as $(a)$ of \eqref{XMP3}, we have
\begin{align}
&P_{V(\theta_\saa+\theta_\saa^{\ddagger}, 
\theta_\sff^{\ddagger}+\phi_\sff(\theta_\saa),
\phi_\sgg(\theta_\saa) )^T,ZY}(z,y) \nonumber\\
=&\frac{\sum_{x}  e^{
\sum_{j=1}^{n_1-1} (\theta^j+\theta^{j,\ddagger}) \bar{g}_j(x)
+\sum_{j=n_1}^{n_1n_2-1} 
(\theta^{j,\ddagger}+ \phi^j)\bar{g}_j(x,z)}
e^{\sum_{j=n_1n_2}^{n_2(n_1+n_2-1)-1}\phi^{j} 
\bar{g}_j(z,y)}
}
{\sum_{x',z',y'}e^{
\sum_{j=1}^{n_1-1} (\theta^j+\theta^{j,\ddagger}) \bar{g}_j(x')
+\sum_{j=n_1}^{n_1n_2-1} 
(\theta^{j,\ddagger}+ \phi^j)\bar{g}_j(x',z')}
  e^{\sum_{j=n_1n_2}^{n_2(n_1+n_2-1)-1}\phi^{j} 
\bar{g}_j(z',y')}
}
\nonumber\\
=&
\tau(z) e^{\sum_{j=n_1n_2}^{n_2(n_1+n_2-1)-1}\phi^{j} 
\bar{g}_j(z,y)},
\Label{XMP3G}.
\end{align}
where $\tau(z)$ is a constant that depends only on $z$.
Hence,
\begin{align}
&P_{V(\theta_\saa+\theta_\saa^{\ddagger}, 
\theta_\sff^{\ddagger}+\phi_\sff(\theta_\saa),
\phi_\sgg(\theta_\saa) )^T,Y|Z}(y|z) \nonumber\\
=&\frac{ 
e^{\sum_{j=n_1n_2}^{n_2(n_1+n_2-1)-1}\phi^{j} 
\bar{g}_j(z,y)}
}{
\sum_{y' \in\Y}
e^{\sum_{j=n_1n_2}^{n_2(n_1+n_2-1)-1}\phi^{j} 
\bar{g}_j(z,y')}
}\Label{XMP5H}.
\end{align}
Thus, \eqref{XPQ2} and \eqref{XMP5H} yield that
\begin{align}
P_{V(\theta_\saa+\theta_\saa^{\ddagger}, 
\theta_\sff^{\ddagger}+\phi_\sff(\theta_\saa),
\phi_\sgg(\theta_\saa) )^T,Y|Z}
=P_{(\theta_\saa+\theta_\saa^{\dagger},
\theta_{\sdd}^{\dagger}),Y|Z}.
\Label{XMP5}
\end{align}
Therefore, the combination of \eqref{XMP5} and \eqref{XMP4} implies \eqref{XMP6}.
\end{proof}

For $i=1, \ldots, n_1-1$, we have
\begin{align}
&\nabla_i^{(e)}[F](V(\theta_\saa+\theta_\saa^{\ddagger}, 
\theta_\sff^{\ddagger}+\phi_\sff(\theta_\saa),
\phi_\sgg(\theta_\saa) )^T)\nonumber \\
=&\sum_{x,y,z} g_i(x,y,z)
P_{V(\theta_\saa+\theta_\saa^{\ddagger}, 
\theta_\sff^{\ddagger}+\phi_\sff(\theta_\saa),
\phi_\sgg(\theta_\saa) )^T}(x,y,z)\nonumber \\
=&\sum_{x,y,z} \delta_i(x)
P_{V(\theta_\saa+\theta_\saa^{\ddagger}, 
\theta_\sff^{\ddagger}+\phi_\sff(\theta_\saa),
\phi_\sgg(\theta_\saa) )^T}(x,y,z)\nonumber \\
\overset{(a)}{=}&\sum_{x,y,z} \delta_i(x)
P_{(\theta_\saa+\theta_\saa^{\dagger},
\theta_{\sdd}^{\dagger}),Y|Z}(y|z)
P_{(\theta_\saa+\theta_\saa^{\dagger},
\theta_\sff^{\dagger}),XZ}
(x,z)\nonumber \\
=&\sum_{x,z} \delta_i(x)P_{\theta_\saa+\theta_\saa^{\dagger}, \theta_{\sdd}^{\dagger},XZ}
(x,z)\nonumber \\
=&\sum_{x,y,z} g_i(x,y,z)P_{\theta_\saa+\theta_\saa^{\dagger}, \theta_{\sdd}^{\dagger}}
(x,y,z)
=\nabla_i^{(e)}[F_{{\cal M}}](\theta_\saa+\theta_\saa^{\dagger}),\Label{SMY}
\end{align}
where $(a)$ follows from Lemma \ref{LC3BT}.

For $i'=n_1, \ldots, n_2(n_1+n_3-1)-1$, we have
\begin{align}
&\nabla_{i'}^{(e)}[F](V(\theta_\saa+\theta_\saa^{\ddagger}, 
\theta_\sff^{\ddagger}+\phi_\sff(\theta_\saa),
\phi_\sgg(\theta_\saa) )^T)\nonumber \\
=&\sum_{x,y,z} g_{i'}(x,y,z)
P_{V(\theta_\saa+\theta_\saa^{\ddagger}, 
\theta_\sff^{\ddagger}+\phi_\sff(\theta_\saa),
\phi_\sgg(\theta_\saa) )^T}(x,y,z)=0.\Label{SMY2}
\end{align}
Combining \eqref{XNII}, \eqref{SMY}, and \eqref{SMY2}, 
for $j=1,\ldots, n_2(n_1+n_3-1)-1$, we have
\begin{align}
&\nabla_j^{(e)}[F_{{\cal E}}](\theta_\saa+\theta_\saa^{\ddagger}, 
\theta_\sff^{\ddagger}+\phi_\sff(\theta_\saa),\phi_\sgg(\theta_\saa) )\nonumber \\
=&
\sum_{i=1}^{n_2(n_1+n_3-1)-1}
\nabla_i^{(e)}[F](V(\theta_\saa+\theta_\saa^{\ddagger}, 
\theta_\sff^{\ddagger}+\phi_\sff(\theta_\saa),
\phi_\sgg(\theta_\saa) )^T)v_{j}^i\nonumber \\
=&
\sum_{i=1}^{n_1-1}
\nabla_i^{(e)}[F_{{\cal M}}](\theta_\saa+\theta_\saa^{\dagger})
v_{j}^i
=(\nabla^{(e)}[F_{{\cal M}}](\theta_\saa+\theta_\saa^{\dagger}) V_1)_j. \Label{SMY3}
\end{align}
When $\eta_\saa=  \nabla^{(e)}[F_{{\cal M}}](\theta_\saa
+\theta_\saa^{\dagger})$,
\eqref{SMY3} guarantees that $\eta_\saa V_1=\nabla^{(e)}[F_{{\cal E}}](\theta_\saa+\theta_\saa^{\ddagger}, 
\theta_\sff^{\ddagger}+\phi_\sff(\theta_\saa),
\phi_\sgg(\theta_\saa) )$, which implies that
$\nabla_\saa^{(m)}[F_{{\cal E}}^*](\eta_\saa V_1 )
=\theta_\saa+\theta_\saa^{\ddagger}$.
Thus, we have
\begin{align}
\nabla^{(m)}[F_{{\cal M}}^*](\eta_\saa)-
\nabla_\saa^{(m)}[F_{{\cal E}}^*](\eta_\saa V_1 )
=\theta_\saa+\theta_\saa^{\dagger}
-(\theta_\saa+\theta_\saa^{\ddagger})
=\theta_\saa^{\dagger} - \theta_\saa^{\ddagger},\Label{XPS}
\end{align}
which implies Condition (B6).

In the following, we assume that 
the rank of $H$ is $n_1-1$.
We choose the parameter 
$\theta_{\sbb}^{\ddagger} \in \mathbb{R}^{n_1(n_2-1)+n_2(n_3-1)}$ 
such that
$(0_{n_1-1,n_1 (n_2-1)},H) \theta_{\sbb}^{\ddagger}=\theta_1^{\ddagger} $.
We choose $ (n_1(n_2-1)+n_2(n_3-1)) \times (n_1(n_2-2)+n_2(n_3-1)+1)$ 
matrix $G$ such that 
$\im G=\Ker (0_{n_1-1,n_1 (n_2-1)},H) $.
Then, $\overline{\cal E}$ and $\overline{\cal M}$ defined in \eqref{XC3} and \eqref{XC4}
are written as
\begin{align}
\overline{\cal E}=& \{ (\theta_\saa,
\theta_{\sbb}^{\ddagger}+ G \theta_\see)^T |
\theta_1 \in \mathbb{R}^{n_1-1}, \theta_\see \in \mathbb{R}^{
n_1(n_2-2)+n_2(n_3-1)+1} \} \\
\overline{\cal M}=& 
\left\{ (\theta_\saa,\theta_{\sbb})^T \left|
\begin{array}{l}
(\eta_\saa,\eta_\sbb)= 
\nabla^{(e)}[F_{{\cal E}}]((\theta_\saa,\theta_{\sbb})^T),\\
\eta^1 (0_{n_1-1,n_1 (n_2-1)},H)- \eta_\sbb=0 
\end{array}
\right.\right\} .
\end{align}

We choose the parameter $\theta_\sgg^{\ddagger} \in \mathbb{R}^{n_2(n_3-1)}$ 
such that
$H \theta_\sgg^{\ddagger}=\theta_1^{\ddagger} $.
We choose $ n_2(n_3-1) \times (n_2(n_3-1)+1-n_1)$ 
matrix $G$ such that 
$\im G=\Ker H $.
Then, $\overline{\cal E}$ and $\overline{\cal M}$ defined in \eqref{XC3T} and \eqref{XC4}
are written as
\begin{align}
\overline{\cal E}=& \{ (\theta_\saa,\theta_\sff,\theta_{\sgg}^{\ddagger}+ G \theta_\shh)^T |
\theta_\saa \in \mathbb{R}^{n_1-1} ,
\theta_\sff \in \mathbb{R}^{n_1(n_2-1)} ,
\theta_\shh \in \mathbb{R}^{n_2(n_3-1)-n_1+1} \} \\
\overline{\cal M}=& \{ (\theta_\saa,\theta_\sff,\theta_\sgg)^T |
(\eta_\saa,\eta_\sff,\eta_\sgg)= \nabla^{(e)}[F_{{\cal E}}]((\theta_\saa,\theta_\sff,\theta_\sgg)^T),~
\eta_\sff=0,~
\eta_\saa H= \eta_\sgg \} .
\end{align}

Therefore, 
when the intersection $\overline{\cal E} \cap \overline{\cal M}$ is not empty, 
due to Corollary \ref{Cor4},
the maximization \eqref{eq:Inf.st} is written 
by using an element $(\theta_{\saa,*},\theta_{\sbb,*})^T\in \overline{\cal E} \cap \overline{\cal M}$ as 
\begin{align}
C_{\sup}({\cal M},{\cal E})=D^F(\phi^{(e)}_{{\cal M}}(
\Gamma_{{\cal M}}^{(m),F}((\bar\theta_{\saa,*},\bar\theta_{\sbb,*})^T) \|
\phi^{(e)}_{{\cal E}}((\bar\theta_{\saa,*},\bar\theta_{\sbb,*})^T).
\Label{SXOY}
\end{align}

\section{Capacity of classical-quantum channel}\Label{S12}
\subsection{Problem setting}
Next, we discuss a classical-quantum channel from the classical system ${\cal C}:=\{1,\ldots, n_1\}$ to the quantum system
${\cal H}_A$ with dimension $n_2$, which is given as a set of density matrices $\{W_j\}_{j=1}^{n_1}$.
Under this classical-quantum channel,
given an input probability distribution $(p_j)$ on the classical system ${\cal C}$,
we define the classical-quantum state
$\rho[p]:= \sum_{j=1}^{n_1} p_j |j \rangle \langle j| \otimes W_j$ on ${\cal H}_A\otimes {\cal H}_C$,
where ${\cal H}_C$ is spanned by $\{ |i\rangle\}_{i=1}^{n_1}$.
Then, we denote the partial trace for ${\cal C}$ and ${\cal H}_A$ by $\Tr_C$ and $\Tr_A$, respectively.
The Hovelo quantity is defined as
\begin{align}
\sum_{j=1}^k p_j D \bigg(W_j\bigg\| \sum_{j'=1}^k p_{j'} W_{j'}\bigg)
=D\big(\rho[p]\big\| (\Tr_C \rho[p])\otimes (\Tr_A \rho[p])  \big) .
\end{align}
The capacity of the classical-quantum channel 
$\{W_j\}_{j=1}^{k}$ is defined as the maximum
\begin{align}
\max_{p \in {\cal P}_{{\cal C}}} \sum_{j=1}^k p_j D\bigg( W_j \bigg\| \sum_{j'=1}^k p_{j'} W_{j'}\bigg)
=\max_{p \in {\cal P}_{{\cal C}}} \min_{\rho \in {\cal S}({\cal H}_A), q \in {\cal P}_{{\cal C}} }
D(\rho[p]\| \rho \otimes q ) ,\Label{MOA3}
\end{align}
where the distribution $q$ is identified with the state $\sum_{i=1}^{n_1} q_i |i\rangle \langle i| $.
This quantity expresses the maximum transmission rate of classical information when 
we employ the classical-quantum channel  $\{W_j\}_{j=1}^{k}$ \cite{Holevo,SW}.

The set of $\rho'' \otimes q$ forms an exponential subfamily ${\cal E}$ and 
the set of $\rho[p]$ forms a mixture subfamily  ${\cal M}$.
Hence, the maximization problem \eqref{MOA3} is a special case of the maximization \eqref{eq:Inf.st} with
$k=n_1-1$, $l=n_1+n_2^2-2$, and $d=n_1 n_2^2-1$.
As shown as Lemma \ref{LOS3}, condition (B1) holds.
In the following, we apply Algorithm \ref{protocol2}.

\subsection{Constructions of vectors $u_1, \ldots, u_{n_1n_2^2-1},v_1,\ldots, v_{n_1+n_2^2-2}$}
For this aim, 
we need to choose a suitable coordinate to satisfy conditions (B3), (B4), and (B5)
and check Condition (B2).
For this aim, we choose $n_2^2-1$ linearly independent Hermitian matrices $X_j$ on ${\cal H}_A$ 
for $j=1, \ldots, n_2^2-1$ to satisfy the condition that
\begin{align}
\Tr X_j W_{n_1}=0
\end{align}
and the linear space spanned by $X_1, \ldots, X_{n_2^2-1}$ does not contain the identity matrix.
Then, we define the Hermitian matrices $\xi_1, \ldots, \xi_{n_1n_2^2-1}$ on 
${\cal H}_A\otimes {\cal H}_C$ as follows.
We define $\xi_i := I_A \otimes |i\rangle \langle i|$ for $i=1, \ldots, n_1-1$.
We define $\xi_{n_1-1+(i-1)(n_2-1)+j }:= 
(X_j -h_{i,j} I_A)\otimes |i\rangle \langle i| $
for $i=1, \ldots, n_1-1$ and $j=1, \ldots, n_2-1$, where
$h_{i,j}:=\Tr X_j W_{i}$.
We define $\xi_{(n_1-1)n_2+j }(x,y):=
X_j \otimes |n_1\rangle \langle n_1| $ for $j=1, \ldots, n_2-1$.
Then, we define the $C^{\infty}-$strictly convex function 
$F$ on $\mathbb{R}^{n_1n_2-1}$ as
\begin{align}
F(\theta):= \log 
\Tr \exp \Big(  \sum_{i=1}^{n_1 n_2^2-1} \theta^i \xi_i \Big).
\end{align}
We define the density matrices $\rho_\theta, \rho_{\theta,C}, \rho_{\theta,A}$ as
\begin{align}
\rho_\theta &:=e^{\sum_{i=1}^{n_1 n_2^2-1} \theta^i \xi_i-F(\theta) }, \\
\rho_{\theta,C} &:=\Tr_A e^{\sum_{i=1}^{n_1 n_2^2-1} \theta^i \xi_i-F(\theta) },\\
\rho_{\theta,A} &:=\Tr_C e^{\sum_{i=1}^{n_1 n_2^2-1} \theta^i \xi_i-F(\theta) }. 
\end{align}
Then, as a special case of \eqref{MGA2}, we have
\begin{align}
D^F(\theta\|\theta')= D(\rho_\theta\|\rho_{\theta'}).
\end{align}

Next, we choose the Matrix $U$ as the identity matrix,
and $u_1, \ldots, u_{n_1 n_2^2-1}$ are chosen as its $n_1 n_2^2-1$ column vectors.
Then, we define vector $v_1, \ldots, v_{n_1+n_2^2-2}$ as follows, 
whereas $V=(v_1, \ldots, v_{n_1+n_2^2-2})$.
We define $v_i:= u_i$ for $i=1, \ldots,n_1-1$.
We define $v_{n_1-1+j}:= 
\sum_{i=1}^{n_1} u_{n_1-1+(i-1)(n_2^2-1)+j }+ \sum_{i=1}^{n_1-1} h_{i,j} u_i$ for $j=1, \ldots,n_2^2-1$.
Then, we have
\begin{align}
\sum_{i=1}^{n_1-1}\xi_i  v_j^i=
\left\{
\begin{array}{ll}
I_A \otimes |j \rangle \langle j| & \hbox{when }j=1, \ldots, n_{1}-1 \\
X_{j-n_1+1 }\otimes I_C & \hbox{when }j=n_1, \ldots, n_{1}+n_2^2-1 .
\end{array}
\right.
\end{align}

\subsection{Parameterizations of ${\cal E}$ and ${\cal M}$}
Using 
$F_{{\cal E},\saa}(\theta^1, \ldots,\theta^{n_1-1}):= 
\log \sum_{x} e^{\sum_{i=1}^{n_1 -1} {\theta}^i \delta_i(x)}$
and
$F_{{\cal E},\sbb}(\theta^{n_1}, \ldots,\theta^{n_1+n_2^2-2})
:= \log \Tr e^{\sum_{j=1}^{n_2^2 -1} {\theta}^{n_1-1+j} X_j }
$,
we define the distributions on $\X$ and $\Y$ as
\begin{align}
\bar{\rho}_{\theta_\saa,C}:=&e^{\sum_{i=1}^{n_1 -1} 
{\theta}^i 
|i\rangle \langle i| -
F_{{\cal E},\saa}(\theta_\saa)} \\
\bar{\rho}_{\theta_\sbb,A}:=&
e^{\sum_{j=1}^{n_2 -1} {\theta}^{n_1-1+j} X_j
-F_{{\cal E},\sbb}(\theta_\sbb)}
\end{align}
for 
$\theta_\saa:=(\theta^1, \ldots,\theta^{n_1-1})$
and 
$\theta_\sbb:=(\theta^{n_1}, \ldots,\theta^{n_1+n_2^2-2})$.
Then, we have
\begin{align}
\rho_{\sum_{j=1}^{n_1-1}\theta_\saa^j v_j+\sum_{j'=1}^{n_2^2-1}\theta_\sbb^j 
v_{n_1-1+j}}
=\bar{\rho}_{\theta_\saa,C}\otimes 
\bar{\rho}_{\theta_\sbb,A}.
\end{align}
Hence, the set of product states is written as the exponential subfamily
${\cal E}:=\{ \rho_{\sum_{j=1}^{n_1+n_2^2-2} {\theta}^j v_j }\}$
generated by $v_1, \ldots, v_{n_1+n_2-2}$ at the point $(0, \ldots, 0)$.

We define the mixture family ${\cal M}$ by the constraint
$\sum_{i=1}^{n_1 n_2^2-1} u^i_{n_1-1+j'} \partial_i F(\theta)=0$ 
for $j'=1, \ldots, n_1 (n_2^2-1)$.
This constraint is equivalent to 
\begin{align}
\Tr \big((X_j -h_{i,j}I )\otimes |i\rangle \langle i|  \big)  \rho_{\theta}=0 , \quad
\Tr \big( X_j \otimes |n_1 \rangle \langle n_1| \big)  \rho_{\theta} =0 
\end{align}
for $i=1, \ldots, n_1-1$ and $j=1, \ldots, n_2^2-1$.
Hence, the mixture family ${\cal M}$ is composed of density matrices with the form $W \times q$.
Thus, the problem \eqref{MOA} is written as the problem \eqref{eq:Inf.st} with the above defined ${\cal E}$ and ${\cal M}$.

we choose $\theta_\sbb^{\dagger}=
(\theta^{n_1,\dagger}, \ldots, \theta^{n_1n_2^2-1,\dagger})$ as
\begin{align}
W_i=
\frac{ e^{\sum_{j=1}^{n_2^2-1}  \theta^{n_1-1+ (i-1) (n_2^2-1)+j,\dagger } 
(X_j-h_{i,j})
}}
{\Tr e^{\sum_{j'=1}^{n_2^2-1}  \theta^{n_1-1+ (i-1) (n_2^2-1)+j',\dagger } 
(X_{j'}-h_{i,j'})
}}.
\end{align}
In this choice, we have
\begin{align}
\log \Tr e^{\sum_{j'=1}^{n_2^2-1}  \theta^{n_1-1+ (i-1) (n_2^2-1)+j',\dagger } 
(X_{j'}-h_{i,j'})
}
=H(W_i)\Label{XMZ3}.
\end{align}
Then, ${\cal M}$ is written as
$\{( \theta_\saa, \theta_\sbb^{\dagger})| \theta_1 \in \mathbb{R}^{n_1-1}\}$.
That is,
${\cal M}$ forms an exponential subfamily generated by $u_1, \ldots, u_{n_1-1}$.
Hence, the maximization \eqref{MOA3} is rewritten as 
\begin{align}
\max_{p \in {\cal P}_{{\cal C}}} \min_{\rho \in {\cal S}({\cal H}_A), q \in {\cal P}_{{\cal C}} }
D(\rho[p]\| \rho \otimes q ) 
=&
\max_{\theta \in \mathcal{M}} D^{F}(\theta \| \Pro^{(m),F}_{\mathcal{E}} (\theta))
\nonumber \\
=&
\max_{\theta \in \mathcal{M}} 
\min_{\theta' \in \mathcal{E}} 
D^{F}(\theta \| \theta').\Label{CKP3}
\end{align}

\subsection{Check of Conditions (B2), (B3), (B4), and (B5)}
We define the $(n_1-1) \times (n_2^2-1)$ matrix $H:=(h_{i,j})$.
Then, we find that the $ (n_1-1)\times (n_1+n_2^2-2)$ matrix $V_1$ is $ (I,H)$.
That is, the $(n_1-1) \times (n_2^2-1)$ matrix $V_3$ is $H$.
Hence, conditions (B3) and (B4) hold.
In the exponential family ${\cal E}$, we have 
$F_{{\cal E}}(\bar{\theta})=
\log \sum_{x} e^{\sum_{i=1}^{n_1 -1} {\theta}^i \delta_i(x)}
+\log \Tr e^{\sum_{j=1}^{n_2^2 -1} {\theta}^{n_1-1+j} X_j }$.
Hence, the condition (B5) holds.
Therefore, we can apply 
Algorithm \ref{protocol2} with Condition (B5).

Since we have
\begin{align}
\rho_{\Pro^{(m),F}_{{\cal E}}  (\theta)}= \rho_{\theta,C}\otimes \rho_{\theta,A}.
\end{align}
for any $\theta$, we have
\begin{align}
 D^{F}(\theta'\|\theta)
=&D(\rho_\theta\|\rho_{\theta'})
= D(\rho_{\theta,C}\|\rho_{\theta',C})
 \le
 D(\rho_{\theta,C}\|\rho_{\theta',C}) + D(\rho_{\theta,A}\|\rho_{\theta',A})
 \nonumber \\
=&
 D(\rho_{\theta,A}\otimes \rho_{\theta,C}\|\rho_{\theta',A}\otimes \rho_{\theta',C})
=
 D(\rho_{ \Pro^{(m),F}_{{\cal E}}  (\theta')}\| (\rho_{ \Pro^{(m),F}_{{\cal E}}  (\theta)} ) 
\nonumber \\
=&
 D^{F}( \Pro^{(m),F}_{{\cal E}}  (\theta')\|  \Pro^{(m),F}_{{\cal E}}  (\theta) ) 
\end{align}
for $\theta,\theta' \in {\cal M}$.
Thus, the condition (B2) holds.
Therefore, Theorem \ref{theo:conv:BBem} guarantees the global convergence.
When $\theta^{(1)}$ is the uniform distribution on ${\cal X}$,
in the same way as \eqref{CKP}, we can show that
the supremum $\sup_{\theta \in \mathcal{M}} D^F(\theta \| \theta^{(1)})$
equals $ \log n_1$.
Therefore, 
when Theorem \ref{theo:conv:BBem} is applied,
we obtain the precision \eqref{NHG}
with $ \frac{\log n_1}{\epsilon}$ iterations.

Now, with the above choice of $\theta^{(1)}$, we consider the case 
when the density matrices $\{W_x\}_{x}$ are linearly independent.
In the same way as Section \ref{S10},
there exists $\alpha>0$ to satisfy the condition (B2+).
Hence, we can apply Theorem \ref{theo:conv:BBem+} instead of Theorem \ref{theo:conv:BBem}.
When $\theta^{(1)}$ is the uniform distribution on ${\cal X}$,
we obtain the precision \eqref{NHG+}
with $\frac{ \log  \log n_1  -\log \epsilon}{\log (1+\alpha)}$ iterations. 

\subsection{Non-iterative method}
Next, we characterize the maximization \eqref{MOA3} without an iterative method. 
To check Condition (B6+),
we prepare the following lemmas.
\begin{lemma}\Label{LC3B3}
The relation
\begin{align}
\rho_{(\theta_\saa +\theta_\saa^{\dagger},
\theta_\sbb^{\dagger}),C}=
\bar{\rho}_{\theta_\saa ,C}.\Label{ACU3}
\end{align}
holds, where
$\theta_\saa^{\dagger}=
(\theta^{1,\dagger}, \ldots, \theta^{n_1-1,\dagger})$
is defined as
$\theta^{i,\dagger}:=-H(W_i)+H(W_{n_1})$
for $i=1\ldots,n_1-1$.
\end{lemma}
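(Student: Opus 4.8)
The plan is to prove Lemma \ref{LC3B3} as a direct quantum analogue of Lemma \ref{LC3B}, tracking the parallel between the classical and quantum constructions. First I would recall the relevant definitions from Section \ref{S12}: the function $F$ built from the Hermitian matrices $\xi_i$, the quantum exponential subfamily ${\cal E}$ generated by $v_1,\ldots,v_{n_1+n_2^2-2}$ with $V_1 = (I,H)$, the parameter $\theta_\sbb^\dagger$ chosen so that the conditional channel outputs are recovered, and the identity \eqref{XMZ3} which is the quantum counterpart of \eqref{XMZ}. The point is that $\theta_\saa^\dagger$ with $\theta^{i,\dagger} = -H(W_i) + H(W_{n_1})$ shifts the $\X$-marginal of the mixture-family point $(\theta_\saa+\theta_\saa^\dagger, \theta_\sbb^\dagger)$ onto the pure exponential-family point $\bar{\rho}_{\theta_\saa,C}$, exactly as in the classical case.

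The key steps, in order: (1) Introduce $\theta_{(i)}^\dagger := (\theta^{n_1-1+(i-1)(n_2^2-1)+1,\dagger},\ldots,\theta^{n_1-1+i(n_2^2-1),\dagger})$, so that $W_i = \bar{\rho}_{\theta_{(i)}^\dagger, A}$ in the notation of the quantum parameterization, and observe via \eqref{XMZ3} that $H(W_i) = -\Tr W_i \log W_i = F_{{\cal E},\sbb}(\theta_{(i)}^\dagger)$ — this is the quantum analogue of \eqref{XOS}. The only subtlety relative to the classical argument is that $W_i$ is a density matrix, so I must use $\log$ of a matrix and the trace, but since $W_i = e^{\sum_j \theta^{n_1-1+(i-1)(n_2^2-1)+j,\dagger}(X_j - h_{i,j}) - F_{{\cal E},\sbb}(\theta_{(i)}^\dagger)}$ the matrix logarithm is immediate and the computation is term-by-term identical. (2) Choose $\theta_\saa' \in \mathbb{R}^{n_1-1}$ with $\rho_{(\theta_\saa+\theta_\saa^\dagger,\theta_\sbb^\dagger),C} = \bar{\rho}_{\theta_\saa',C}$ (such $\theta_\saa'$ exists since the $\X$-marginal is a full-support distribution on ${\cal C}$ and the $\delta_i$ together with the constant span the function space over ${\cal C}$). (3) Compare the $x=n_1$ components: $\rho_{(\theta_\saa+\theta_\saa^\dagger,\theta_\sbb^\dagger),C}(n_1) = e^{F_{{\cal E},\sbb}(\theta_{(n_1)}^\dagger) - F_{{\cal M}}(\theta_\saa+\theta_\saa^\dagger,\theta_\sbb^\dagger)}$ versus $\bar{\rho}_{\theta_\saa',C}(n_1) = e^{-F_{{\cal E},\saa}(\theta_\saa')}$, giving the quantum counterpart of \eqref{XOCS}. (4) For $x \ne n_1$, expand $\bar{\rho}_{\theta_\saa',C}(x) = \rho_{(\theta_\saa+\theta_\saa^\dagger,\theta_\sbb^\dagger),C}(x)$, substitute the relation from step (3) and then the identity $F_{{\cal E},\sbb}(\theta_{(x)}^\dagger) = H(W_x)$ together with the definition $\theta^{x,\dagger} = -H(W_x)+H(W_{n_1})$, and conclude $\bar{\rho}_{\theta_\saa',C}(x) = e^{\theta_\saa^x - F_{{\cal E},\saa}(\theta_\saa')}$, which forces $\theta_\saa' = \theta_\saa$ up to the normalization already fixed, hence \eqref{ACU3}.

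The main obstacle — really the only one worth flagging — is making sure the trace over ${\cal H}_A$ factors cleanly out of $\rho_{(\theta_\saa+\theta_\saa^\dagger,\theta_\sbb^\dagger),C}$. Because the $\xi_i$ for $i \le n_1-1$ act as $I_A \otimes |i\rangle\langle i|$ and the remaining generators of the mixture-family direction act within a single block $|i\rangle\langle i|$, the matrix exponential $e^{\sum_i \theta^i \xi_i}$ is block diagonal in the ${\cal C}$-index, so $\Tr_A$ of each block is just the scalar $\Tr$ of a matrix exponential on ${\cal H}_A$, which is precisely $e^{F_{{\cal E},\sbb}(\theta_{(i)}^\dagger)}$ after the shift. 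Once this block-diagonal structure is spelled out, the argument is a verbatim transcription of the classical proof of Lemma \ref{LC3B} with $\sum_{y\in\Y}$ replaced by $\Tr$, so no genuinely new estimate is needed; I would simply note at the start that the chain of equalities $(a)$ and $(b)$ in the classical proof carries over, citing \eqref{XMZ3} in place of \eqref{XMZ}.
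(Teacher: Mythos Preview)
Your proposal is correct and follows essentially the same approach as the paper's own proof: introduce $\theta_{(i)}^\dagger$, use \eqref{XMZ3} to identify $H(W_i)=F_{{\cal E},\sbb}(\theta_{(i)}^\dagger)$, match the $x=n_1$ components to obtain the analogue of \eqref{XOCS}, and then for $x\ne n_1$ conclude $\langle x|\bar\rho_{\theta_\saa',C}|x\rangle=e^{\theta^x-F_{{\cal E},\saa}(\theta_\saa')}$. Your explicit remark about the block-diagonal structure in the ${\cal C}$-index is a helpful clarification that the paper leaves implicit, but otherwise the argument is a verbatim transcription of the proof of Lemma~\ref{LC3B}.
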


\begin{proof}
We define
$\theta_{(i)}^{\dagger}=
(\theta^{n_1-1+(i-1)(n_2^2-1)+1 ,\dagger}, \ldots, 
\theta^{n_1-1+i(n_2^2-1),\dagger}) \in \mathbb{R}^{n_1-1}$.
Since $W_i=\bar{\rho}_{ \theta_{(i)}^{\dagger} ,A}$, 
we have $W_i= e^{
\sum_{j=1}^{n_2^2-1}\theta^{n_1-1+(i-1)(n_2^2-1)+j,\dagger}
(X_j-h_{i,j}) - F_{{\cal E},\sbb}(\theta_{(i)}^{\dagger})} $.
Because
\begin{align}
e^{F_{{\cal E},\sbb}(\theta_{(i)}^{\dagger})}=
\Tr e^{\sum_{j=1}^{n_2^2-1}\theta^{n_1-1+(i-1)(n_2^2-1)+j,\dagger}
(X_j -h_{i,j})},
\end{align}
\eqref{XMZ3} implies the relation
\begin{align}
H( W_i)= -\Tr W_i \log W_i  
=F_{{\cal E},\sbb}(\theta_{(i)}^{\dagger})\Label{XOS3}
\end{align}
for $i=1, \ldots, n_1$.

Now, we choose $\theta_1' \in \mathbb{R}^{n_1-1}$ such that
\begin{align}
\rho_{(\theta_\saa +\theta_\saa^{\dagger} ,\theta_\sbb^{\dagger}),C}=
\bar{\rho}_{\theta_\saa' ,C}.
\end{align}
Since we have
\begin{align}
\langle n_1 |\rho_{(\theta_\saa +\theta_\saa^{\dagger} ,\theta_\sbb^{\dagger}),C}
|n_1\rangle =
e^{ F_{{\cal E},\saa}(\theta_{(n_1)}^{\dagger})
-F_{{\cal M}}(\theta_\saa +\theta_\saa^{\dagger} ,\theta_\sbb^{\dagger})} ,
\end{align}
the relation $P_{(\theta_\saa +\theta_\sbb^{\dagger} ,\theta_\sbb^{\dagger}),X}(n_1)=
\bar{P}_{\theta_1' ,X}(n_1)$ yields 
\begin{align}
e^{ F_{{\cal E},\sbb}(\theta_{(n_1)}^{\dagger})
-F_{{\cal M}}(\theta_\saa +\theta_\saa^{\dagger} ,\theta_\sbb^{\dagger})} 
=
e^{ -F_{{\cal E},\saa}(\theta_\saa')}.\Label{XOCS3}
\end{align}
For $x\neq n_1$, we have 
\begin{align}
&\langle x| \bar{\rho}_{\theta_\saa' ,C}|x\rangle
=\langle x| \rho_{(\theta_\saa +\theta_\saa^{\dagger} ,\theta_\sbb^{\dagger}),C}|x\rangle
=
e^{\theta_\saa^x+ \theta_\saa^{x,\dagger}+ F_{{\cal E},\sbb}(\theta_{(x)}^{\dagger})
-F_{{\cal M}}(\theta_\saa +\theta_\saa^{\dagger} ,\theta_\sbb^{\dagger})} \nonumber \\
\overset{(a)}=&
e^{\theta_\saa^x+ \theta_\saa^{x,\dagger}
+ F_{{\cal E},\sbb}(\theta_{(x)}^{\dagger})
-F_{{\cal E},\sbb}(\theta_{(n_1)}^{\dagger})
-F_{{\cal E},\saa}(\theta_1')
}\overset{(b)}=e^{\theta^x-F_{{\cal E},\saa}(\theta_\saa')} ,
\end{align}
where $(a)$ and $(b)$ follow from 
\eqref{XOCS3} and
the pair of \eqref{XOS3} and the definition of 
$\theta_\saa^{x,\dagger}$, respectively.
This relation shows \eqref{ACU3}.
\end{proof}

In the same way as the end of the previous subsection, 
we assume that the distributions $\{W_x\}_{x}$ are linearly independent.
Then, the rank of $H$ is $n_1-1$.
The combination of this fact and Lemma \ref{LC3B}
guarantees 
\begin{align}
\nabla^{(m)}[F_{{\cal M}}^*](\eta^1)-
\nabla^{(m)}[F_{{\cal E},1}^*](\eta^1)
=\theta_\saa +\theta_\saa^{\dagger}-\theta_\saa 
=\theta_\saa^{\dagger},
\end{align}
which implies Condition (B6+).
We choose the parameter $\theta_\sbb^{\ddagger} \in \mathbb{R}^{n_2^2-1}$ such that
$H \theta_\sbb^{\ddagger}=\theta_\saa^{\dagger} $.
We choose $(n_2^2-1)\times (n_2^2-n_1)$ matrix $G$ such that 
$\im G=\Ker H$.
Then, ${\cal E}_\sbb$ and ${\cal M}_\sbb$ defined in \eqref{XC1T} and \eqref{XC2T}
are written as
\begin{align}
{\cal E}_\sbb &=\{ \theta_\sbb^{\ddagger}
+ G \theta_\see| \theta_\see 
\in \mathbb{R}^{n_2^2-n_1}\},\\
{\cal M}_\sbb & = \{ \theta_\sbb \in \mathbb{R}^{n_2^2-1}|
\nabla^{(e)}[F_{{\cal E},\sbb}](\theta_{2})G =0 \}.
\end{align}
As explained in Subsection \ref{S47},
the intersection ${\cal E}_\sbb\cap {\cal M}_\sbb$ is composed of a unique element.
As the solution of the following minimization \eqref{XOE2},
we choose $\theta_\see^{\ddagger}$ as
\begin{align}
\bar\theta_{\see}^{\ddagger}
:=\argmin_{\theta_\see}F_{{\cal E},\sbb}(\theta_{\sbb}^{\ddagger}+ G \theta_\see).
\Label{XOE2}
\end{align}
Then, we set
$ \bar\theta_\sbb^{\ddagger}:=\theta_\sbb^{\ddagger}
+ G \theta_\see^{\ddagger} \in {\cal E}_\sbb\cap 
{\cal M}_\sbb$.
Due to Theorem \ref{Cor6-1}, when there exists 
$\theta_\saa^{\ddagger} \in \mathbb{R}^{n_1-1} $ such that
$ \nabla^{(e)}[F_{{\cal E},\sbb}](\bar\theta_{\sbb}^{\ddagger})
=
\nabla^{(e)}[F_{{\cal E},\saa}](\theta_{\saa}^{\ddagger})H $, which is equivalent to
\begin{align}
W\cdot \bar{\rho}_{\theta_{\saa}^{\ddagger}, C} 
= \bar{\rho}_{\bar\theta_\sbb^{\ddagger} ,A}.
\Label{ACA3}
\end{align}
the maximizer in \eqref{CKP3} is 
$(\theta_{\saa}^{\ddagger}+\theta_{\saa}^{\dagger},
\theta_{\sbb}^{\ddagger} )\in {\cal M} $.
In addition, the maximum \eqref{CKP3} is 
\begin{align}
D(\rho_{(\theta_{\saa}^{\ddagger}+\theta_{\saa}^{\dagger},\theta_{\sbb}^{\dagger} )}
\|\bar{\rho}_{\theta_{\saa}^{\ddagger},C}\times
\bar{\rho}_{\bar\theta_{\sbb}^{\ddagger},A}
)
=-H(W_{n_1})
+F_{{\cal E},\sbb}(\bar\theta_{\sbb}^{\ddagger}),
\end{align}
because
\begin{align}
&D(W_x\|\bar{\rho}_{\bar\theta_{\sbb}^{\ddagger},A})
=\Tr W_x (\log W_x - 
\log \bar{\rho}_{\bar\theta_{\sbb}^{\ddagger},A})\nonumber \\
=&-H(W_x)-\Tr W_x
\Big(\sum_{j=1}^{n_2^2-1}\bar\theta^{n_1-1+j,\ddagger}X_j -F_{{\cal E},\sbb}(\bar\theta_{\sbb}^{\ddagger})\Big)\nonumber \\
=&-H(W_x)- \sum_{j=1}^{n_2^2-1}\bar\theta^{n_1-1+j,\ddagger}h_{x,j}
+F_{{\cal E},\sbb}(\bar\theta_{\sbb}^{\ddagger})\nonumber \\
=&-H(W_x)- \sum_{j=1}^{n_2^2-1}\theta^{n_1-1+j,\ddagger}h_{x,j}
+F_{{\cal E},2\sbb}(\bar\theta_{\sbb}^{\ddagger})\nonumber \\
=&-H(W_x)- \theta^{x,\dagger}
+F_{{\cal E},\sbb}(\bar\theta_{\sbb}^{\ddagger})\nonumber \\
=&-H(W_x)- (-H(W_x)+H(W_{n_1}))
+F_{{\cal E},\sbb}(\bar\theta_{\sbb}^{\ddagger})\nonumber \\
=&-H(W_{n_1})
+F_{{\cal E},\sbb}(\bar\theta_{\sbb}^{\ddagger}).
\end{align}

When $n_1=n_2^2$, we have $l=n_1-1=n_2^2-1 $, which enables us to apply 
Corollary \ref{Cor6}.
In this case,
as another typical case, we can choose the matrices $X_j$ such that 
$(\Tr X_j W_i)_{1\le i,j\le n_2^2-1}$ is is the identity matrix
Under this choice,  
the calculation of the maximization \eqref{MOA} based on Corollary \ref{Cor6} 
is done by Algorithm 1 in the reference \cite{exact}.
Therefore, the method based on Theorem \ref{Cor6-1}
can be considered as a generalization of Algorithm 2 in the reference \cite{exact}.


However, there is a case that no distribution $P_X $ on $\X$ satisfies \eqref{ACA}.
In this case, instead of a distribution on $\X$,
there exists a function $f_X$ on $\X$ such that
\begin{align}
\sum_{x \in \X} f_X(x) W_x= \bar{\rho}_{\bar\theta_\sbb^{\ddagger} ,A},~
\sum_{x \in \X} f_X(x)=1.
\end{align}
Also, 
there does not exist the maximum in \eqref{CKP3},
and 
the maximum \eqref{MOA3} is achieved in the boundary of $\P_\X$.
When we remove an element $x\in \X$, we have a subset 
$P_{\X \setminus \{x\}}$ of the boundary.
That is, the boundary is composed of this type of subsets.
Hence, to obtain the maximum \eqref{MOA3}, we need to apply the method in this subsection to the case when the channel is defined in the above type of subset.

In summary, in the same way as the capacity of the classical channel, 
the capacity of the classical-quantum channel can be calculated
with an algorithm similar to Algorithm \ref{protocol1}.

\section{Conclusion}\Label{Sec:conc}
In our study, we have tackled the reverse em-problem within the general framework of Bregman divergence. 
We have formulated this problem as the maximization of the minimum divergence between a mixture family and an exponential family, and proposed various methods to address it.

Our first method involves the development of the reverse em-algorithm using Bregman divergence. 
We have shown the convergence of this algorithm to the true value and analyzed its convergence speed under conditions that align with information-theoretical problem settings. 
We have applied this approach to problems related to channel capacity, including quantum settings. 
This method was initially proposed by Toyota in the context of calculating the classical channel capacity \cite{Shoji}. 
However, Toyota's work did not establish the existence of the inverse map of the map $\Pro^{(e),F}{{\cal M}}\circ \Pro^{(m),F}{{\cal E}}|_{{\cal M}}$. In Theorem \ref{theo:conv:BBem1}, 
we have shown that the inverse map uniquely exists under our Condition (B3) within the general framework of Bregman divergence. Furthermore, in Section \ref{S10}, 
we have shown that the case of classical channel capacity satisfies our Condition (B3). 
Consequently, we have successfully solved the problem originally proposed by Toyota \cite{Shoji}. 
Theorem \ref{theo:conv:BBem1} also provides the form of the inverse map through the minimization of a convex function. Moreover, in Section \ref{S46}, we have derived a simpler form of the inverse map under additional conditions.

In the second method, we have successfully transformed the reverse em-problem into em-problems by imposing 
the conditions introduced earlier. 
In this method, the reverse em-problem is converted to finding the intersection between an exponential family
and a mixture family.
The intersection is characterized by solving the em-problem between the exponential family and the mixture family. 

In the third method, we have strengthened the conditions and achieved an even more simplified approach. Under these stronger conditions, the reverse em-problem is converted into a convex minimization problem. The convex function involved in this minimization is a part of the function used to define the exponential family. Importantly, this convex function is simpler compared to the objective function $D^{F}(\theta | \Pro^{(m),F}_{\mathcal{E}} (\theta))$ that needs to be maximized in the original reverse em-problem. Notably, in specific cases where the original reverse em-problem satisfies certain conditions, this problem can be solved without requiring the additional minimization step. 
When applied to the classical channel capacity, this special case 
coincides with the algorithm proposed in the recent paper 
\cite{exact}. Consequently, this method can be regarded as a generalization of the approach presented in that paper \cite{exact}.

In the subsequent sections, we have shown that various concrete models, including those in the quantum setting, satisfy the conditions introduced in Section \ref{Sec:BBem}. Furthermore, we have established that these models also fulfill several conditions presented in this paper. Additionally, we have provided a detailed algorithm for calculating the classical channel capacity, which serves as a generalization of the method proposed in the recent paper \cite{exact}. Moreover, we have performed numerical calculations using this algorithm for cases that cannot be handled by the existing method \cite{exact}.

As an additional contribution, in Subsection \ref{S45}, 
we have introduced the quadratic approximation in each iteration of our proposed algorithm, Algorithm \ref{protocol2M}. 
However, we have not extensively discussed the convergence speed or computational complexity in various applications. This analysis is a topic for future research, which includes comparing our method with existing approaches.

The results obtained illustrate the effectiveness of information geometry as a conversion method for optimization problems. A key aspect of information geometry lies in the choice of parameterization associated with an exponential family and a mixture family. By leveraging this structure, we have successfully derived alternative characterizations of the original problems. Consequently, we can anticipate that the application of information geometry will lead to further valuable conversions in important optimization problems. 
In this way, our findings shed light on this novel application of information geometry, expanding its potential uses.
For example, we can consider the application of our result to 
the channel capacity of channels with Markovian memory.
This topic was studied in the preceding studies \cite{Kavcic,Vontobel,Wu}.
Since information geometry of Markovian process can be handled as a special case
of Bgregman divergence system \cite{Nakagawa,Nagaoka,HW}, 
our method can be expected to applied this topic.

\section*{Acknowledgments}
The author was supported in part by the National Natural Science Foundation of China (Grant No. 62171212). 
The author is very grateful to Mr. Shoji Toyota for helpful
discussions and for explaining the achievements of the reference \cite{Shoji}.
In particular, he explained to the author what problems were not solved in the reference \cite{Shoji}.
In addition, he pointed out that 
the secrecy capacity can be written as the reverse em-algorithm in a similar way as the channel capacity \cite{Toyota}.

\section*{Data availability}
Data sharing is not applicable to this article as no datasets were generated
or analyzed during the current study.

\section*{Conflict of interest}

There are no competing interests.

\appendix

\section{Proof of Theorem \ref{theo:conv:BBem}}\Label{A1}
Let $\theta_{(t)} $ be $\Pro^{(m),F}_{{\cal E}}(\theta^{(t)} ) $.
For any $\epsilon_1>0$, we choose an element $\theta(\epsilon_1) $ of ${\cal M}$ 
such that 
$D^F( \theta(\epsilon_1)\| \Pro^{(m),F}_{{\cal E}}(\theta(\epsilon_1)) )
\ge C_{\sup}(\mathcal{M},\mathcal{E})-\epsilon_1$.
Also, let $\theta(\epsilon_1)_* $ be $\Pro^{(m),F}_{{\cal E}}(\theta(\epsilon_1) ) $.


\begin{figure}[htbp]
\begin{center}
  \includegraphics[width=0.7\linewidth]{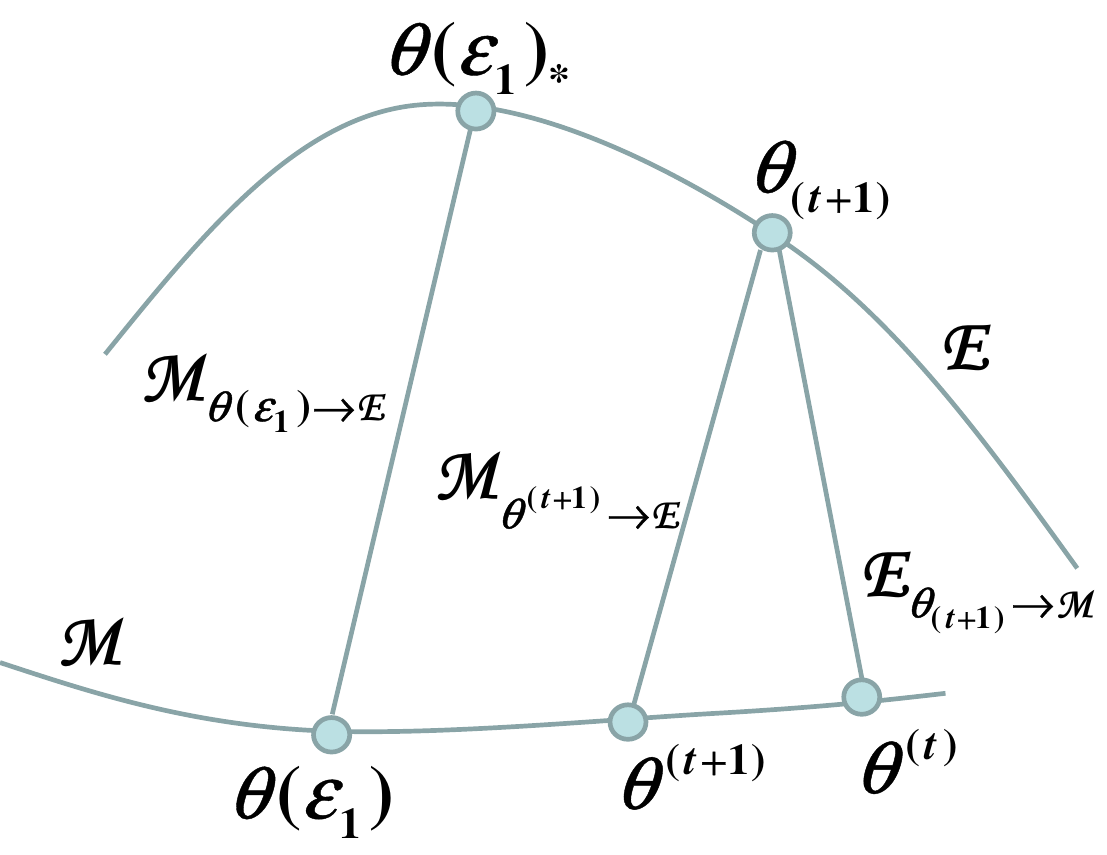}
  \end{center}
\caption{
Algorithms \ref{protocol2} and \ref{protocol2M}: 
This figure shows the topological relation among
$\theta(\epsilon_1 )_*$, $\theta(\epsilon_1 )$,
$\theta_{(t+1 )}$, $\theta^{(t+1 )}$, and $\theta^{(t)}$,
which is used in the
application of Pythagorean theorem (Proposition \ref{MNL}).
$\mathcal{M}_{\theta(\epsilon_1)\to \mathcal{E}}$
and 
$\mathcal{M}_{\theta^{(t+1)} \to \mathcal{E}}$
are the mixture subfamilies to project $\theta(\epsilon_1)$
and $\theta^{(t+1)}$
to the exponential subfamily $\mathcal{E}$, respectively.
$\mathcal{E}_{\theta_{(t+1)}\to \mathcal{M}}$
is the exponential subfamily to project $\theta_{(t+1)}$
to the mixture subfamily $\mathcal{M}$.
}
\Label{FF2}
\end{figure}   

As explained in Fig. \ref{FF2}, 
Pythagorean theorem (Proposition \ref{MNL}) guarantees that
the divergence $D^F(\theta(\epsilon_1)\| \theta_{(t+1)}) $ can be written in the following two ways;
\begin{align}
& D^F(\theta(\epsilon_1)\| \theta^{(t)})
+D^F(\theta^{(t)}\| \theta_{(t+1)})
=
D^F(\theta(\epsilon_1)\| \theta_{(t+1)}) \nonumber\\
=&
D^F(\theta(\epsilon_1)\| \theta(\epsilon_1)_*)
+D^F(\theta(\epsilon_1)_*\| \theta_{(t+1)}).
\end{align}
Hence,
\begin{align}
& 
C_{\sup}(\mathcal{M},\mathcal{E})-\epsilon_1- D^{F}(\theta^{(t)}\| \Pro^{(m),F}_{{\cal E}}  (\theta^{(t)}) )  \nonumber \\
=& 
D^F(\theta(\epsilon_1)\| \theta(\epsilon_1)_*) 
-D^{F}(\theta^{(t)}\| \Pro^{(m),F}_{{\cal E}}  (\theta^{(t)}) ) \nonumber \\
=& D^F(\theta(\epsilon_1)\| \theta(\epsilon_1)_*) 
-D^{F}(\theta^{(t)}\| \theta_{(t+1)} ) \nonumber\\
=& D^F(\theta(\epsilon_1)\| \theta^{(t)})
-D^F(\theta(\epsilon_1)_*\| \theta_{(t+1)}) \nonumber \\
=& D^F(\theta(\epsilon_1)\| \theta^{(t)})-D^F(\Pro^{(m),F}_{{\cal E}} (\theta(\epsilon_1))\| 
\Pro^{(m),F}_{{\cal E}} (\theta^{(t+1)})) \nonumber\\
\le &
 D^F(\theta(\epsilon_1)\| \theta^{(t)})-D^F(\theta(\epsilon_1)\| \theta^{(t+1)}) ,
\Label{PIT}
\end{align}
where the final inequality follows from condition (B2).
Thus,
\begin{align}
&\sum_{i=1}^{t}C_{\sup}(\mathcal{M},\mathcal{E})-\epsilon_1
- D^{F}(\theta^{(i)}\| \Pro^{(m),F}_{{\cal E}}  (\theta^{(i)}) )  \nonumber \\
\le &
\sum_{i=1}^{t}
 D^F(\theta(\epsilon_1)\| \theta^{(i)})-D^F(\theta(\epsilon_1)\| \theta^{(i+1)}) \nonumber\\
= &
 D^F(\theta(\epsilon_1)\| \theta^{(1)})-D^F(\theta(\epsilon_1)\| \theta^{(t+1)})
\le  D^F(\theta(\epsilon_1)\| \theta^{(1)}) \nonumber \\
\le & \sup_{\theta \in \mathcal{M}}  D^F(\theta\| \theta^{(1)}) .\Label{suma}
\end{align}
Taking the limit $\epsilon_1 \to 0$, we have
\begin{align}
\sum_{i=1}^{t}C_{\sup}(\mathcal{M},\mathcal{E})- D^{F}(\theta^{(i)}\| \Pro^{(m),F}_{{\cal E}}  (\theta^{(i)}) ) 
\le \sup_{\theta \in \mathcal{M}}  D^F(\theta\| \theta^{(1)}) .\Label{suma80}
\end{align}
Since
\begin{align}
&D^{F}(\theta^{(i+1)}\| \Pro^{(m),F}_{{\cal E}}  (\theta^{(i)}) ) 
=
D^{F}(\theta^{(i+1)}\| \theta_{(i+1)} ) 
\nonumber \\
\ge &
D^{F}(\theta^{(i)}\| \theta_{(i+1)} ) 
\ge
D^{F}(\theta^{(i)}\| \Pro^{(m),F}_{{\cal E}}(\theta^{(i)}) ) ,
\end{align}
for $i \le t$, we have
\begin{align}
C_{\sup}(\mathcal{M},\mathcal{E})- D^{F}(\theta^{(t)}\| \Pro^{(m),F}_{{\cal E}}  (\theta^{(t)}) ) 
\le 
C_{\sup}(\mathcal{M},\mathcal{E})- D^{F}(\theta^{(i)}\| \Pro^{(m),F}_{{\cal E}}  (\theta^{(i)}) ) . \Label{MLG1}
\end{align}
The combination of \eqref{suma80} and \eqref{MLG1} implies that
\begin{align}
C_{\sup}(\mathcal{M},\mathcal{E})- D^{F}(\theta^{(t)}\| \Pro^{(m),F}_{{\cal E}}  (\theta^{(t)}) )
\le \frac{1}{t}\sup_{\theta \in \mathcal{M}}  D^F(\theta\| \theta^{(1)}) ,\Label{suma8}
\end{align}
which implies \eqref{NHG}.

When the inequality
\begin{align}
C_{\sup}(\mathcal{M},\mathcal{E})- D^{F}(\theta^{(t)}\| \Pro^{(m),F}_{{\cal E}}  (\theta^{(t)}) ) 
\ge c(\frac{1}{t})
\end{align}
holds with a constant $c>0$, 
\eqref{suma80} yields
\begin{align}
\infty = \sum_{t=1}^{\infty} c(\frac{1}{t}) 
\le \sup_{\theta \in \mathcal{M}}D^F(\theta\| \theta^{(1)}),
\end{align}
which implies the contradiction. 
Hence, we obtain \eqref{mma}.

Indeed, when the maximum in \eqref{eq:Inf.st} exists, i.e., 
$\theta^*(\mathcal{M},\mathcal{E})$ exists,
the supremum $\sup_{\theta \in \mathcal{M}} D^F(\theta \| \theta^{(1)})$ 
in the above evaluation is replaced by
$D^F(\theta^*(\mathcal{M},\mathcal{E}) \| \theta^{(1)})$
because $\theta(\epsilon_1)$ is replaced by $\theta^*(\mathcal{M},\mathcal{E})$.

\section{Proof of Theorem \ref{theo:conv:BBem+}}\Label{A2}
We use the same notation as the proof of Theorem \ref{theo:conv:BBem}.
We denote $\theta^*(\mathcal{M},\mathcal{E})$ 
and $\Pro^{(m),F}_{{\cal E}}(\theta^*(\mathcal{M},\mathcal{E}))$
by $\theta^*$ and $\theta_*$, respectively.
Also, set $\alpha:= \alpha(\theta_{(1)})$.
Using \eqref{PIT} with $\epsilon_1=0$, we have 
\begin{align}
0 \le & 
C_{\sup}(\mathcal{M},\mathcal{E})- D^{F}(\theta^{(t)}\| \Pro^{(m),F}_{{\cal E}}  (\theta^{(t)}) )  \nonumber \\
=& D^F(\theta^*\| \theta^{(t)})-D^F(\Pro^{(m),F}_{{\cal E}} (\theta^*) \| 
\Pro^{(m),F}_{{\cal E}} (\theta^{(t+1)})) \Label{BK2}\\
\le &
 D^F(\theta^*\| \theta^{(t)})-  D^F(\theta^*\| \theta^{(t+1)}),\Label{BK1}
 \end{align}
which implies $ D^F(\theta^*\| \theta^{(t+1)}) \le D^F(\theta^*\| \theta^{(1)})$.
Thus, the condition (B2+) implies $(1+\alpha)D^F(\theta^*\| \theta^{(t+1)}) \le
D^F(\Pro^{(m),F}_{{\cal E}} (\theta^*) \| 
\Pro^{(m),F}_{{\cal E}} (\theta^{(t+1)}))$.
Combining \eqref{BK2}, we have
$0\le D^F(\theta^*\| \theta^{(t)})-  (1+\alpha)D^F(\theta^*\| \theta^{(t+1)}). $
Thus, we have
\begin{align}
D^F(\theta^*\| \theta^{(t+1)})
 \le
(1+\alpha)^{-1} D^F(\theta^*\| \theta^{(t)}), 
 \end{align}
which implies that
\begin{align}
D^F(\theta^*\| \theta^{(t)})
 \le
(1+\alpha)^{-(t-1)} D^F(\theta^*\| \theta^{(1)}). 
\end{align}
Using \eqref{PIT} with $\epsilon_1=0$, we have 
\begin{align}
& 
C_{\sup}(\mathcal{M},\mathcal{E})- D^{F}(\theta^{(t)}\| \Pro^{(m),F}_{{\cal E}}  (\theta^{(t)}) )  \nonumber \\
\le &
 D^F(\theta^*\| \theta^{(t)})-D^F(\theta^*\| \theta^{(t+1)}) 
\le
 D^F(\theta^*\| \theta^{(t)})
 \le
(1+\alpha)^{-(t-1)} D^F(\theta^*\| \theta^{(1)}).
 \end{align}
 Hence, we obtain \eqref{mma+}.

\section{Proofs of Lemmas \ref{BXPR} and \ref{BXP}}\Label{AL3}
We show Lemmas \ref{BXPR} and \ref{BXP}.
\begin{proofof}{Lemma \ref{BXPR}}
Eq. \eqref{MAF} follows from the following relation.
\begin{align}
&\eta_\saa V_1
=\nabla_{\saa}^{(m)}[F](\psi_{\mathcal{M}}^{(m)} (\eta_\saa))V_1
=\nabla^{(m)}[F](\psi_{\mathcal{M}}^{(m)} (\eta_\saa))V \nonumber\\
\stackrel{(a)}{=}&
\nabla^{(m)}[F]
(\Pro^{(m),F}_{{\cal E}}  
(\psi_{\mathcal{M}}^{(m)} (\eta_\saa)))V\nonumber\\
= &
\nabla^{(m)}[F]
(V(\psi_{\mathcal{E}}^{(e)})^{-1}
(\Pro^{(m),F}_{{\cal E}}  
(\psi_{\mathcal{M}}^{(m)} (\eta_\saa))))V\nonumber\\
\stackrel{(b)}{=}&
\nabla^{(m)}[F_{\cal E}]
((\psi_{\mathcal{E}}^{(e)})^{-1}
(\Pro^{(m),F}_{{\cal E}}  
(\psi_{\mathcal{M}}^{(m)} (\eta_\saa))))\nonumber\\
= &
(\psi_{\mathcal{E}}^{(m)})^{-1}
(\Pro^{(m),F}_{{\cal E}}  
(\psi_{\mathcal{M}}^{(m)} (\eta_\saa)))
= 
\Pro^{(m),F}_{\Xi_{{\cal M}} \to\Xi_{\cal E}}  
(\eta_\saa),
\end{align}
where
$(a)$ follows from Lemma \ref{Th5}, and $(b)$ follows from
$F_{\cal E}(\eta_{\scc})=F(V\theta_{\scc})$.
\end{proofof}

\begin{proofof}{Lemma \ref{BXP}}
To show the equivalence between (i) and (ii),
we apply the condition (C1) of Lemma \ref{Th6}. 
Hence, (i) means that
$\psi_{\mathcal{E}}^{(m)} (\eta_\saa V_1)\in {\cal E}
$ and 
$ \psi_{\mathcal{M}}^{(e)}(\theta_\saa)\in {\cal M}$
belong to the same exponential family generated by 
$u_{k+1}, \ldots, u_{d}$.
That is, these two elements have the same $k$ coefficients
on the $k$ vectors $u_{1}, \ldots, u_{k}$.
The $k$ coefficients of 
$ \psi_{\mathcal{M}}^{(e)}(\theta_\saa)\in {\cal M}$
is $\theta_\saa$.
The $k$ coefficients of 
$ \psi_{\mathcal{E}}^{(e)}(\theta_\scc)\in {\cal E}$
is $V_1\theta_\scc$.
That is, the intersection between ${\cal E}$ and the above 
exponential family is the exponential subfamily 
$
\psi_{\mathcal{E}}^{(e)} (
\{ \theta_\scc \in \Theta_\mathcal{E}
| \theta_\saa=V_1 \theta_\scc\})$.
Hence, the condition (i) is equivalent to 
$\psi_{\mathcal{E}}^{(m)} (\eta_\saa V_1)
\in \psi_{\mathcal{E}}^{(e)} (
\{ \theta_\scc \in \Theta_\mathcal{E}
| \theta_\saa=V_1 \theta_\scc\})$, i.e., the condition (ii).

Since $\nabla^{(m)}[F_{\mathcal{E}}^*]( \eta_{\saa} V_1)
=\phi_{\mathcal{E}}^{(e)}\circ\psi_{\mathcal{E}}^{(m)}( \eta_{\saa} V_1)$,
the condition (ii) is equivalent to 
\begin{align}
\theta_\saa
=V_1 \nabla^{(m)}[F_{\mathcal{E}}^*]( \eta_\saa V_1)  .
\end{align}
Since the relation \eqref{MDF} guarantees that
\begin{align}
&V_1 \nabla^{(m)}[F_{\mathcal{E}}^*]( \eta_\saa V_1)  
\nonumber \\
=& L[V_1 ]\circ
\nabla^{(m)}[F_{\mathcal{E}}^*]\circ R[V_1]( \eta_\saa)
\nonumber \\
=&\nabla^{(m)}[ F_{\mathcal{E}}^* \circ R[V_1]]( \eta_\saa ),\Label{NMJ}
\end{align}
the conditions (ii) and (iii) are equivalent.
The relation between \eqref{M1} and \eqref{M2} guarantees the equivalence between the conditions (iii) and (iv).

Since $F_{\mathcal{E}}^* \circ R[V_1 ]$ is a convex function, 
the condition (iv)
is equivalent to the condition 
\begin{align}
\eta_\saa =\argmin_{\eta_\saa' \in \mathbb{R}^{k} }F_{\mathcal{E}}^* \circ R[V_1 ]( \eta_\saa')
- \langle\eta_\saa' , \theta_\saa \rangle\Label{NAD}.
\end{align}
Since $F_{{\cal E}}^*({\eta}_\saa V_1)=
 F_{\mathcal{E}}^* \circ R[V_1 ]( \eta_\saa )$,
the condition (iv)
is equivalent to the condition (v).
\end{proofof}

\section{Proof of Theorem \ref{theo:conv:BBem1}}\Label{A3}
\begin{proofof}{Theorem \ref{theo:conv:BBem1}}
To show the statement (i) of
Theorem \ref{theo:conv:BBem1},
we choose two elements 
$\theta^{(t)},\theta^{(t+1)}\in \mathcal{E}$
as
$\theta^{(t)}=\Pro^{(e),F}_{{\cal M}}\circ \Pro^{(m),F}_{{\cal E}}(\theta^{(t+1)})$.
the input element is characterized by the mixture parameter
$\hat\eta_\saa (\theta^{(t+1)})$
and the output element is characterized by the natural parameter
$\hat\theta_\saa (\theta^{(t)})$
with respect to $\mathcal{M}$.
Then,
$\Pro^{(m),F}_{{\cal E}}(\theta^{(t+1)})$
has the mixture parameter 
$ \hat{\eta}_\saa(\theta^{(t+1)}) V_1$
with respect to ${\cal E}$ due to Lemma \ref{Th5}.
Due to the equivalence between the conditions (i) and (iv) of Lemma \ref{BXP}, 
the mixture parameter $\hat\eta_\saa (\theta^{(t+1)})$ 
and 
the natural parameter $\hat\theta_\saa (\theta^{(t)})$ 
satisfies the following condition;
\begin{align}
\hat{\eta}_\saa(\theta^{(t+1)}) 
 = \nabla^{(e)}
[(F_{\cal E}^*\circ R[V_1])^*]
( \hat{\theta}_\saa(\theta^{(t)})).\Label{M3}
\end{align}
Since ${\cal M}$ is also an exponential subfamily, the function $F_{\cal M}$ is defined.
Hence, the relation \eqref{M3} is rewritten with the natural parameter in ${\cal M}$ as
\begin{align}
\hat{\theta}_\saa(\theta^{(t+1)}) 
 = 
\nabla^{(m)} [F_{\cal M}^*] \circ
 \nabla^{(e)} [(F_{\cal E}^*\circ R[V_1])^*]
 (\hat{\theta}_\saa(\theta^{(t)})) .\Label{AKO}
\end{align}
The condition \eqref{AKO} 
is equivalent to the condition that
$\theta^{(t)}=\Pro^{(e),F}_{{\cal M}}\circ \Pro^{(m),F}_{{\cal M}\to {\cal E}}(\theta^{(t+1)})$
for $\theta^{(t)},\theta^{(t+1)} \in {\cal M}$.
Hence, 
for any $\theta^{(t)} \in \mathcal{M}$, there uniquely exists 
an element $\theta^{(t+1)} \in {\cal M}$ to satisfy the condition 
$\theta^{(t)}=\Pro^{(e),F}_{{\cal M}}\circ 
\Pro^{(m),F}_{{\cal M}\to {\cal E}}(\theta^{(t+1)})$.
Thus, $\nabla^{(m)} [F_{\cal M}^*]\circ \nabla^{(e)} [(F_{\cal E}^*\circ R[V_1])^*]$
is the unique inverse map of 
$\Pro^{(e),F}_{{\cal M}}\circ \Pro^{(m),F}_{{\cal M}\to {\cal E}}$, and is defined in ${\cal M}$.
Hence, 
$\Pro^{(e),F}_{{\cal M}}\circ \Pro^{(m),F}_{{\cal M}\to {\cal E}}$ is a bijective map from ${\cal M}$ to ${\cal M}$.
The statement (i) is obtained.

The statement (ii) follows from the equivalence between the conditions (iv) and (v) of Lemma \ref{BXP}.
\end{proofof}

The key point of the above proof is the following;
Since ${\cal M}$ is an exponential subfamily as well as a mixture subfamily,
the natural parameter is written as the Legendre transform of the mixture parameter,
which is stated as \eqref{AKO}.
 
\section{Proof of Theorem \ref{conv:BBem}}\Label{A4}
(Step 0)
We prepare several relations that are used in this proof.
In this proof, 
we use the notation
$\hat\eta_\saa^{(t+1),*}:=
\argmin_{\hat\eta_\saa \in \mathbb{R}^{k}} 
F_{{\cal E}}^*( \hat{\eta}_\saa V_1) 
-\langle\hat\eta_\saa , \hat{\theta}^{(t)}\rangle$.
We define elements $
\theta_*:=\Pro^{(m),F}_{{\cal E}}  (\theta^*), 
\theta_{(t+1)}:=
\psi_{\mathcal{E}}^{(m)}(\hat\eta_\saa^{(t+1)} V_1),
\theta_{(t+1),*}:=
\psi_{\mathcal{E}}^{(m)}(\hat\eta_\saa^{(t+1),*} V_1)
\in \mathcal{E} $.

\begin{figure}[htbp]
\begin{center}
  \includegraphics[width=0.7\linewidth]{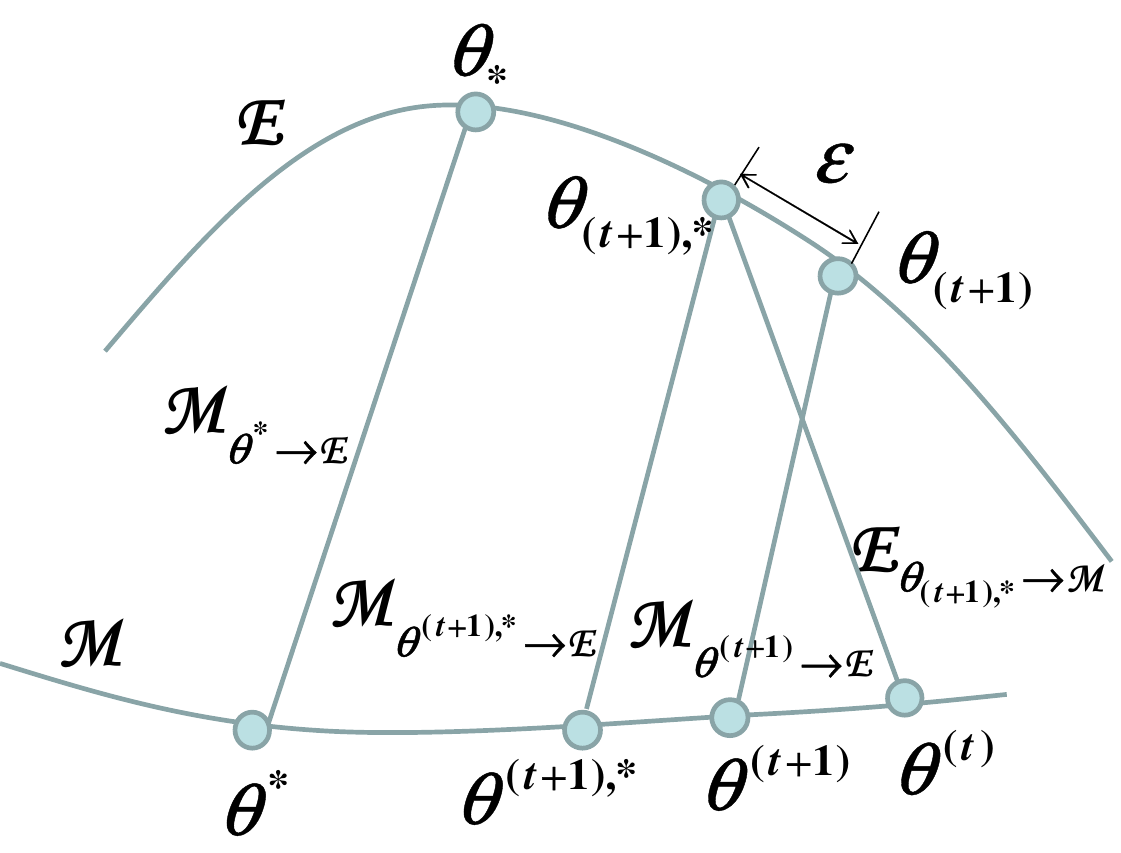}
  \end{center}
\caption{
Algorithm \ref{protocol2F}: 
This figure shows the topological relation among
$\theta_*$, $\theta^*$,
$\theta_{(t+1 )}$, $\theta^{(t+1 )}$,
$\theta_{(t+1 ),*}$, $\theta^{(t+1 ),*}$, and $\theta^{(t)}$,
which is used in the
application of Pythagorean theorem (Proposition \ref{MNL}).
$\mathcal{M}_{\theta^*\to \mathcal{E}}$,
$\mathcal{M}_{\theta^{(t+1)} \to \mathcal{E}}$,
and 
$\mathcal{M}_{\theta^{(t+1),*} \to \mathcal{E}}$
are the mixture subfamilies to project $\theta(\epsilon_1)$,
$\theta^{(t+1),*}$,
and $\theta^{(t+1)}$
to the exponential subfamily $\mathcal{E}$, respectively.
$\mathcal{E}_{\theta_{(t+1),*}\to \mathcal{M}}$
is the exponential subfamily to project $\theta_{(t+1),*}$
to the mixture subfamily $\mathcal{M}$.
}
\Label{FF2D}
\end{figure}   

\noindent{\bf (Step 1):}\quad
The aim of the first step is to show the inequality
\begin{align}
D^F(\theta^{(t+1)} \| \theta^{(t+1),*}) \le \epsilon.\Label{SOP}
\end{align}

We define the mixture subfamily 
\begin{align}
\bar{\mathcal{M}}:=
\{ \psi_{{\cal E}}^{(m)}(\hat\eta_\saa V_1)\in \mathcal{E} | 
\hat\eta_\saa \in \mathbb{R}^k\}.
\end{align}
In this mixture subfamily $\bar{\mathcal{M}}$,
we employ the mixture parameter $\hat\eta_\saa$.
That is, we have 
$\phi_{\bar{\mathcal{M}}}^{(m)}\circ 
\psi_{{\cal E}}^{(m)}(\hat\eta_\saa V_1)
=\hat\eta_\saa$.
Hence, we choose $F_{\bar{\mathcal{M}}}^*$ as
\begin{align}
F_{\bar{\mathcal{M}}}^*(\hat\eta_\saa)
=F_{\mathcal{E}}^*(\hat\eta_\saa V_1)
=F_{\mathcal{E}}^*\circ R[V_1] (\hat\eta_\saa)\Label{BXO}.
\end{align}
Since $\Pro^{(e),F}_{{\cal M}}  (\theta_{(t+1),*})=
\psi_{\mathcal{M}}^{(e)}(\hat\theta^{(t)}_\saa)$,
we have
\begin{align}
\nabla^{(m)}[F_{\bar{\mathcal{M}}}^*](\hat\eta^{(t+1),*}_\saa)
\stackrel{(a)}{=}
\nabla^{(m)}[F_{\mathcal{E}}^*\circ R[V_1]](\hat\eta^{(t+1),*}_\saa)
\stackrel{(b)}{=}
\hat\theta^{(t)}_\saa, \Label{XZL2}
\end{align}
where
$(a)$ and $(b)$ follow from \eqref{BXO} and 
the equivalence between the conditions (i) and (iii) of Lemma \ref{BXP},
 respectively.
Hence, using \eqref{AM8}, we have
\begin{align}
&F_{\bar{\mathcal{M}}}^*(\hat\eta^{(t+1)}_\saa)
-\langle\hat\eta_\saa^{(t+1)}, \hat\theta_\saa^{(t)}\rangle
=
F_{{\mathcal{E}}}^*(\hat\eta_\saa^{(t+1)}V_1)
-\langle\hat\eta_\saa^{(t+1)}, \hat\theta_\saa^{(t)}\rangle \nonumber \\
\le &
F_{{\mathcal{E}}}^*(\hat\eta_\saa^{(t+1),*}V_1)
-\langle\hat\eta_\saa^{(t+1),*}, \hat\theta_\saa^{(t)}\rangle+\epsilon
=F_{\bar{\mathcal{M}}}^*(\hat\eta_\saa^{(t+1),*})
-\langle\hat\eta_\saa^{(t+1),*}, \hat\theta_\saa^{(t)}\rangle
+\epsilon. \Label{XLP}
\end{align}
Therefore,
we have
\begin{align}
& D^F(\theta^{(t+1)} \| \theta^{(t+1),*}) \stackrel{(a)}{=}
D^{F_{\mathcal{E}}^*}(\eta^{(t+1),*}V_1 \|
\eta^{(t+1)}V_1 ) \nonumber \\
\stackrel{(b)}{=}& 
D^{F_{\bar{\mathcal{M}}}^*}(\eta^{(t+1),*} \|
\eta^{(t+1)}) \nonumber \\
\stackrel{(c)}{=}& 
\langle\nabla^{(m)}[F_{\bar{\mathcal{M}}}^*](\hat\eta_\saa^{(t+1),*})
, (\hat\eta_\saa^{(t+1),*}-\hat\eta_\saa^{(t+1)})\rangle
- F_{\bar{\mathcal{M}}}^*(\hat\eta_\saa^{(t+1),*})
+F_{\bar{\mathcal{M}}}^*(\hat\eta_\saa^{(t+1)}) \nonumber \\
\stackrel{(d)}{=}& 
\langle\hat\theta_{\saa,(t),*}
, (\hat\eta_\saa^{(t+1),*}-\hat\eta_\saa^{(t+1)})\rangle
- F_{\bar{\mathcal{M}}}^*(\hat\eta_\saa^{(t+1),*})
+F_{\bar{\mathcal{M}}}^*(\hat\eta_\saa^{(t+1)}) \nonumber \\
\stackrel{(e)}{\le}& \epsilon,
\end{align}
where $(a)$, $(b)$, $(c)$, $(d)$, and $(e)$  
follow from the combination of \eqref{XI1} and \eqref{NBSO},
the application of \eqref{NBSO} to the substitution
of $F_{\mathcal{E}}^*$ and $F_{\bar{\mathcal{M}}}^*=F_{\mathcal{E}}^*\circ R[V_1]$
into $F$ and $F_{\mathcal{E}}$,
\eqref{XZL}, 
\eqref{XZL2}, 
and
\eqref{XLP}, respectively. 
Hence, we obtain \eqref{SOP}.

\noindent{\bf (Step 2):}\quad
The aim of this step is showing 
\begin{align}
 D^F(\theta^* \| \theta^{(t),*}  )-
 D^F(\theta^* \| \theta^{(t+1),*}  )
 \ge 0 \Label{BXV}
\end{align}
for $t =2, \ldots,  t_0$, by induction
when we assume that $t_0$ satisfies the following condition with 
for $t =2, \ldots,  t_0$;
\begin{align}
D^F(\theta^*\|\theta_{*}) 
-
 D^F(\theta^{(t)} \| \theta_{(t+1),*}  )
\ge 
2\gamma \sqrt{D^F(\theta_{*,-}\|\theta_{(1)})
\epsilon}+
\gamma \epsilon.\Label{BLA7}
\end{align}

Applying the Pythagorean Theorem (Proposition \ref{MNL})
to $D^F(\theta^* \|\theta_{(t+1),*})$,
for $t =1, \ldots,  t_0$
we have
\begin{align}
& D^F(\theta^* \|\theta_{(t+1),*})
=
D^F(\theta^* \|\theta_*)
+D^F(\theta_* \|\theta_{(t+1),*})
=
D^F(\theta^* \|\theta^{(t)})
+D^F(\theta^{(t)} \|\theta_{(t+1),*}) 
\Label{XAC}. 
\end{align}
Thus, we have
\begin{align}
&D^F(\theta^* \|\theta^{(t)}) -D^F(\theta_* \|\theta_{(t+1),*})
=
D^F(\theta^* \|\theta_*) -D^F(\theta^{(t)} \|\theta_{(t+1),*})
\nonumber \\
\ge &
D^F(\theta^* \|\theta_*)  -D^F(\theta^{(t+1),*} \|\theta_{(t+1),*})
\ge 0.\Label{NMC}
\end{align}
Due to the assumption of induction,
we have 
\begin{align}
D^F(\theta^* \|\theta^{(t),*}) \le D^F(\theta^*\| \theta^{(2),*})
\stackrel{(a)}{\le}
D^F(\theta^* \|\theta^{(1)})
\Label{XAO},
\end{align}
where $(a)$ follows from \eqref{NMC} with $t=1$.

Since the set $\mathcal{M}_0$ is a star subset for $\theta^*$,
we can apply Theorem \ref{XAM} to 
the set $\mathcal{M}_0$ as a star subset of $\mathcal{M}$ for $\theta_{*,-}$.
Hence, using the above relations, for $t=2, \ldots, t_0$,
we have
\begin{align}
&
D^F(\theta^*\|\theta_{*}) - D^F(\theta^{(t)} \| \theta_{(t+1),*}  )
\stackrel{(a)}{=}  D^F(\theta^* \|\theta^{(t)}) -D(\theta_* \|\theta_{(t+1),*}) \nonumber \\
\stackrel{(b)}{\le} &
D^F(\theta^* \|\theta^{(t),*}) 
+2\gamma \sqrt{D^F(\theta^* \|\theta^{(t),*}) 
D^F(\theta^{(t)} \|\theta^{(t),*})}
\nonumber \\
&+\gamma D^F(\theta^{(t)} \|\theta^{(t),*})
-D^F(\theta_* \|\theta_{(t+1),*}) \nonumber \\
\stackrel{(c)}{\le} &
D^F(\theta^* \|\theta^{(t),*}) 
+2\gamma \sqrt{D^F(\theta^* \|\theta^{(1)}) 
\epsilon
}
+\gamma \epsilon
-D^F(\theta_* \|\theta_{(t+1),*}) \nonumber \\
\stackrel{(d)}{\le} &
D^F(\theta^* \|\theta^{(t),*}) 
+2\gamma \sqrt{D^F(\theta^* \|\theta^{(1)}) 
\epsilon
}
+\gamma \epsilon
-D^F(\theta^* \|\theta^{(t+1),*}) ,
\Label{NO}
\end{align}
where
$(a)$, $(b)$, $(c)$, and $(d)$ follow from 
\eqref{XAC}, 
Theorem \ref{XAM}, the combination of \eqref{XAO} and \eqref{SOP}, and
the condition (B2), respectively.

Thus,
\begin{align}
& D^F(\theta^*\|\theta_{*}) - D^F(\theta^{(t)} \| \theta_{(t+1),*}  )
-2\gamma \sqrt{D^F(\theta^* \|\theta^{(1)}) 
\epsilon}
-\gamma \epsilon\nonumber \\
{\le} &
 D^F(\theta^* \| \theta^{(t),*}  )-
 D^F(\theta^* \| \theta^{(t+1),*}  ).\Label{NXL}
\end{align}
The combination of \eqref{BLA7} and \eqref{NXL} implies the relation \eqref{BXV}.

\noindent{\bf (Step 3):}\quad
The aim of this step is showing 
\begin{align}
&
D^F(\theta^*\|\theta_{*}) - D^F(\theta_f^{(t_1)}\| \Pro^{(m),F}_{{\cal E}} (\theta_f^{(t_1)} ) ) 
\nonumber  \\
\le & 
\max \Big(
 \frac{D^F(\theta^{*}\| \theta^{(1)}  )}{t_1-1}
 + 2\gamma \sqrt{D^F(\theta^{*}\|\theta^{(1)})
\epsilon}+
(\gamma+1) \epsilon  ,\nonumber \\
&\qquad 2\gamma \sqrt{D^F(\theta^{*}\|\theta^{(1)})
\epsilon}+
(\gamma+1) \epsilon
\Big).
\end{align}
To this aim, it sufficient to show 
\begin{align}
&
D^F(\theta^*\|\theta_{*})- 
D^F(\theta_f^{(t_1)}\| \Pro^{(m),F}_{{\cal E}} (\theta_f^{(t_1)} ) ) \nonumber \\
\le & 
 \frac{D^F(\theta^{*}\| \theta^{(1)}  )}{t_1-1}
 + 2\gamma \sqrt{D^F(\theta^{*}\|\theta^{(1)})
\epsilon}+
(\gamma+1) \epsilon    \Label{XOA2}
\end{align}
under the assumption 
\begin{align}
D^F(\theta^*\|\theta_{*})- 
D^F(\theta_f^{(t_1)}\| \Pro^{(m),F}_{{\cal E}} (\theta_f^{(t_1)} ) ) 
\ge 
2\gamma \sqrt{D^F(\theta^{*}\|\theta^{(1)})\epsilon}+(\gamma+1) \epsilon
\Label{JBX}.
\end{align}

The assumption \eqref{JBX} implies that
\begin{align}
D^F(\theta^*\|\theta_{*})- 
D^F(\theta^{(t)}\| \theta_{(t)}  ) 
\ge 
2\gamma \sqrt{D^F(\theta^{*}\|\theta^{(1)})\epsilon}+(\gamma+1) \epsilon
\Label{JBX2}.
\end{align}
for $t=2, \ldots, t_1$.
We have the following relation with $t=1, \ldots, t_1-1$;
\begin{align}
& D^F(\theta^{(t+1)}\| \theta_{(t+1)}  ) +\epsilon
\stackrel{(a)}{\ge} 
D^F(\theta^{(t+1)}\| \theta_{(t+1)}  ) 
+D^F(\theta^{(t+1)}\| \theta_{(t+1),*}  ) 
\nonumber \\
\stackrel{(b)}{=} &
D^F(\theta^{(t+1)}\| \theta_{(t+1),*}  ) 
\stackrel{(c)}{\ge} 
D^F(\theta^{(t)}\| \theta_{(t+1),*}  ) ,
\Label{JBX3}
\end{align}
where $(a)$, $(b)$, and $(c)$ follow from 
\eqref{SOP}, 
Pythagorean theorem (Proposition \ref{MNL}), and
the fact that $\theta^{(t)}=\Pro^{(e),F}_{{\cal M}}(\theta_{(t+1),*}) $,
respectively.
The combination of \eqref{JBX2} and \eqref{JBX3} implies the condition 
\eqref{BLA7} with $t =2, \ldots,  t_1-1$.
Due to the conclusion of (Step 2), we have \eqref{BXV}
for $t=2, \ldots, t_1-1$.
Since (Step 2) derived the relation \eqref{NO} with the same condition, 
the relation \eqref{NO} holds with $t=2, \ldots, t_1-1$.
Therefore, we have
\begin{align}
& D^F(\theta^*\|\theta_{*})- D^F(\theta^{(t+1)}\| \theta_{(t+1)}  ) \nonumber  \\
\stackrel{(a)}{\le}  &
 D^F(\theta^*\|\theta_{*})- D^F(\theta^{(t)}\| \theta_{(t+1),*}  ) +\epsilon\nonumber  \\
\stackrel{(b)}{\le} &
D^F(\theta^* \|\theta^{(t),*}) 
+2\gamma \sqrt{D^F(\theta^* \|\theta^{(1)}) 
\epsilon
}
+(\gamma+1) \epsilon
-D^F(\theta^* \|\theta^{(t+1),*}) ,
\Label{NO2}
\end{align}
where $(a)$ and $(b)$ follow from \eqref{JBX3} and \eqref{NO}, respectively.

Taking the sum for \eqref{NO2}, we have
\begin{align}
&
(t_1-1) \Big(
D^F(\theta^*\|\theta_{*})- 
D^F(\theta_f^{(t_1)}\| \Pro^{(m),F}_{{\cal E}} (\theta_f^{(t_1)} ) ) 
\Big)\nonumber \\
\stackrel{(a)}{\le} &
\sum_{t=1}^{t_1-1}
\Big(
D^F(\theta^*\|\theta_{*})- 
D^F(\theta^{(t+1)}\| \theta_{(t+1)}  ) 
\Big)\nonumber  \\
= & 
D^F(\theta^*\|\theta_{*})- 
D^F(\theta^{(2)}\| \theta_{(2)}  ) 
+\sum_{t=2}^{t_1-1}
\Big(
D^F(\theta^*\|\theta_{*})- 
D^F(\theta^{(t+1)}\| \theta_{(t+1)}  ) 
\Big)\nonumber  \\
\stackrel{(b)}{\le} &
D^F(\theta^*\| \theta^{(1)})
-D^F(\theta^*\| \theta^{(2),*})
\nonumber \\
&+ \sum_{t=2}^{t_1-1}
\Big( D^F(\theta^* \|\theta^{(t),*}) 
+2\gamma \sqrt{D(\theta^* \|\theta^{(1)}) 
\epsilon}
+(\gamma+1) \epsilon
-D^F(\theta^* \|\theta^{(t+1),*})  \Big) \nonumber \\
= & 
 D^F(\theta^* \| \theta^{(1)}  )-  D^F(\theta^* \|\theta^{(t_1),*})
 + 2(t_1-2)\gamma \sqrt{D^F(\theta^*\|\theta^{(1)})
\epsilon}+
(t_1-2)(\gamma+1) \epsilon \nonumber  \\
\le & 
 D^F(\theta^* \| \theta^{(1)}  )
 + 2(t_1-2)\gamma \sqrt{D^F(\theta^*\|\theta^{(1)})
\epsilon}+
(t_1-2)(\gamma+1) \epsilon  ,
\end{align}
where $(a)$ and $(b)$ follow from
the relation $
D^F(\theta_f^{(t_1)}\| \Pro^{(m),F}_{{\cal E}} (\theta_f^{(t_1)} ) ) 
\le D^F(\theta^{(t+1)}\| \theta_{(t+1)}  ) $
and \eqref{NO2}, respectively.
Hence, we have \eqref{XOA2}.

\noindent{\bf (Step 4):}\quad
Finally, we show \eqref{NAC} from \eqref{ANC}.
The condition 
$t_1-1 \ge \frac{2 D^F(\theta^{*} \| \theta^{(1)})}{\epsilon'}$
implies 
$ \frac{D^F(\theta^{*}\| \theta^{(1)}  )}{t_1-1}
\le \frac{\epsilon'}{2}$.
The condition
$\epsilon \le 
\frac{{\epsilon'}^2}{4
(3\gamma+1)^2 {D^F(\theta^{*}\|\theta^{(1)})}}$
implies
$
(3\gamma+1) \sqrt{D^F(\theta^{*}\|\theta^{(1)})
\epsilon}
\le \frac{\epsilon'}{2}$.
Since $D^F(\theta^{*}\|\theta^{(1)})
\ge \epsilon$, we have
$2\gamma \sqrt{D^F(\theta^{*}\|\theta^{(1)}) \epsilon}
+(\gamma+1) \epsilon
\le \frac{\epsilon'}{2}$.
Hence, we have \eqref{NAC}.

\section{Proof of Theorem \ref{VCF}}\Label{A5}
To characterize $\nabla^{(e)} [(F_{\cal E}^*\circ R[V_1])^*](\hat{\theta})$ for 
$\hat{\theta}_\saa \in \Theta_{{\cal M}}$,
we apply \eqref{VCA}.
For $(\theta_\saa,\theta_\sbb)^T\in \Theta_{{\cal M}}$ with $
\theta_\saa \in \mathbb{R}^{k}, \theta_\sbb\in \mathbb{R}^{l-k}$, 
the condition $\hat{\theta}_\saa=(I,V_3) (\theta_\saa,\theta_\sbb)^T$ is equivalent to 
$\theta_\saa= \hat{\theta}_\saa -V_3 \theta_\sbb$.
Hence, \eqref{VCA} implies that
\begin{align}
 (F_{{\cal E}}^* \circ R[(I,V_3)])^* (\hat{\theta}_\saa)
=\min_{\theta_\sbb} F_{\cal E}( \hat{\theta}_\saa 
-V_3 \theta_\sbb,\theta_\sbb) .
\end{align}
The element $\theta_\sbb^*= \argmin _{\theta_\sbb} 
F_{\cal E}( \hat{\theta}_\saa-V_3 \theta_\sbb,\theta_\sbb) $
satisfies the following;
\begin{align}
\nabla^{(e)} [F_{\cal E}]( \hat{\theta}_\saa-V_3 \theta_\sbb,\theta_\sbb)
\left(
\begin{array}{c}
-V_3\\
I_{l-k}
\end{array}
\right)
=0.\Label{SKO}
\end{align}
That is, when the element $\theta_\sbb$ satisfying \eqref{SKO} is written as $\theta_\sbb^*(\hat{\theta}_\saa)$,
we have
\begin{align}
 (F_{{\cal E}}^* \circ R[(I,V_3)])^* (\hat{\theta}_\saa)
= F_{\cal E}( \hat\theta_\saa-V_3 \theta_\sbb^*(\hat{\theta}_\saa),\theta_\sbb^*(\hat{\theta}_\saa)) .\Label{VGK}
\end{align}
Taking the derivative for $\hat\theta_\saa$ in \eqref{VGK} 
and using the relation \eqref{SKO}, we have 
\begin{align}
 \nabla^{(e)} [(F_{\cal E}^*\circ R[V_1])^*](\hat{\theta}_\saa) 
 =&\nabla^{(e)}[F_{\cal E}](\bar{\theta}_\scc)
  \left( 
\begin{array}{c}
 I_k \\
 0
\end{array}
 \right),\Label{VBI2}
 \end{align}
which implies \eqref{VBI}.

\end{document}